\setlist[enumerate]{nosep, topsep=1ex}
\setlist[itemize]{nosep, topsep=1ex}
\setlist[description]{nosep}
\newtheorem{theorem}{Theorem}[section]
\newtheorem{corollary}[theorem]{Corollary}
\newtheorem{lemma}[theorem]{Lemma}
\newtheorem{fact}[theorem]{Fact}
\newtheorem{observation}[theorem]{Observation}
\newtheorem{definition}[theorem]{Definition}
\newtheorem{proposition}[theorem]{Proposition}
\newcommand{\LCE}{\text{\rm LCE}}
\newcommand{\lcp}{\text{\rm lcp}}
\newcommand{\lpos}{\text{\rm lpos}}
\newcommand{\rpos}{\revstr{\text{\rm lpos}}}
\newcommand{\revstr}[1]{\overline{#1}}
\newcommand{\per}{\text{\rm per}}
\DeclareMathOperator{\Occ}{Occ}
\newcommand{\hook}{\mathsf{hook}}
\newcommand{\anch}{\mathsf{anch}}
\newcommand{\rend}[1]{\ensuremath{\alpha_{#1}}}
\newcommand{\bigO}{\mathcal{O}}
\renewcommand{\Pr}{\mathbb{P}}
\newcommand{\Oh}{\bigO}
\DeclareMathOperator{\polylog}{polylog}
\newcommand{\size}{\mathrm{size}}
\newcommand{\T}{T}
\renewcommand{\S}{\mathsf{S}}
\newcommand{\R}{\mathsf{R}}
\newcommand{\LCP}{\text{\rm LCP}}
\newcommand{\SA}{\text{\rm SA}}
\newcommand{\BWT}{\text{\rm BWT}}
\newcommand{\Su}{\mathcal{S}}
\newcommand{\sub}{\subseteq}
\newcommand{\dd}{\mathinner{.\,.}}
\newcommand{\eps}{\varepsilon}
\let\OLDthebibliography\thebibliography
  \renewcommand\thebibliography[1]{
  \OLDthebibliography{#1}
  \setlength{\parskip}{0pt}
  \setlength{\itemsep}{0pt plus 0.3ex}
}
\begin{document}

\renewcommand\Affilfont{\normalsize}

\title{Resolution of the Burrows-Wheeler Transform Conjecture}

\author[1]{Dominik Kempa\thanks{Supported by
  NSF grants no. 1652303 and 1934846, and an Alfred
  P. Sloan Fellowship grant.}}
\author[2,1]{Tomasz Kociumaka\thanks{Supported by ISF
  grants no. 1278/16 and 1926/19, by a BSF grant no.
  2018364, and by an ERC grant MPM under the EU's Horizon
  2020 Research and Innovation Programme (agreement no.
  683064).}}
\affil[1]{University of California, Berkeley, USA}
\affil[ ]{\href{mailto:kempa@berkeley.edu}
  {\nolinkurl{kempa@berkeley.edu}}, \href{mailto:kociumaka@berkeley.edu}
  {\nolinkurl{kociumaka@berkeley.edu}}}
\affil[ ]{\vspace{-1ex}}
\affil[2]{Bar-Ilan University, Ramat Gan, Israel}
\date{\vspace{-1.5cm}}
\maketitle

\begin{abstract}
  The Burrows--Wheeler Transform (BWT) is an invertible text
  transformation that permutes symbols of a text according to the
  lexicographical order of its suffixes. BWT is the main component of
  popular lossless compression programs (such as {\tt bzip2}) as well
  as recent powerful compressed indexes (such as $r$-index [Gagie et
    al., J.\ ACM, 2020]), central in modern bioinformatics. The
  compression ratio of BWT is quantified by the number $r$ of
  equal-letter runs. Despite the practical significance of BWT, no
  non-trivial bound on the value of $r$ is known.  This is in contrast
  to nearly all other known compression methods, whose sizes have been
  shown to be either always within a $\polylog n$ factor (where $n$ is
  the length of text) from $z$, the size of Lempel--Ziv (LZ77) parsing
  of the text, or significantly larger in the worst case (by a
  $n^{\eps}$ factor for $\eps > 0$).

  In this paper, we show that $r = \bigO(z \log^2n)$ holds for every
  text.  This result has numerous implications for text indexing and
  data compression; for example: (1) it proves that many results
  related to BWT automatically apply to methods based on LZ77, e.g.,
  it is possible to obtain functionality of the suffix tree in
  $\bigO(z \polylog n)$ space; (2) it shows that many text processing
  tasks can be solved in the optimal time assuming the text is
  compressible using LZ77 by a sufficiently large $\polylog n$ factor;
  (3) it implies the first non-trivial relation between the number of
  runs in the BWT of the text and its reverse.

  In addition, we provide an $\bigO(z \polylog n)$-time algorithm
  converting the LZ77 parsing into the run-length compressed BWT.  To
  achieve this, we develop a number of new data structures and
  techniques of independent interest. In particular, we introduce a
  notion of compressed string synchronizing sets (generalizing the
  recently introduced powerful technique of string synchronizing sets
  [STOC 2019]) and show how to efficiently construct them.  Next, we
  propose a new variant of wavelet trees for sequences of long
  strings, establish a non-trivial bound on their size, and describe
  efficient construction algorithms.  Finally, we describe new indexes
  that can be constructed directly from the LZ77-compressed text and
  efficiently support pattern matching queries on substrings of the
  text.
\end{abstract}

\setcounter{page}{0}
\thispagestyle{empty}
\newpage

\section{Introduction}\label{sec:intro}

Lossless data compression aims to exploit redundancy in the input data
to represent it in~a~small space. Despite the abundance of compression
methods, nearly every existing tool falls into one of the few general
frameworks, among which the three most popular are: Lempel--Ziv
compression (where the nominal and most commonly used is the LZ77
variant~\cite{LZ77}), statistical compression (this includes, for
example, context mixing~\cite{CM}, prediction by partial matching
(PPM)~\cite{ClearyW84}, and dynamic Markov coding~\cite{CormackH87}),
and Burrows--Wheeler transform (BWT)~\cite{BWT}. As seen in the Large
Text Compression Benchmark~\cite{ltcb}, these three frameworks
underlie most~existing~compressors.

One of the features that best differentiates these algorithms is
whether they better remove the redundancy caused by skewed symbols
frequencies or by repeated fragments. The idea in LZ77 (which
underlies, for example, {\tt 7-zip}~\cite{7zip} and {\tt
gzip}~\cite{gzip} compressors) is to partition the input text into
long substrings, each having an earlier occurrence in the text. Every
substring is then encoded as a pointer to the previous occurrence
using a pair of integers. This method natively handles long repeated
substrings and can achieve an exponential compression ratio given
sufficiently repetitive input.  Statistical compressors, on the other
hand, are based on representing (predicting) symbols in the input
based on their frequencies.  This is formally captured by the notion
of the \emph{$k$th order empirical entropy} $H_k(T)$~\cite{entropy}.
For any sufficiently long text $T$, symbol frequencies (taking context
into account) in any power of $T$ (the concatenation of several copies
of~$T$) do not change significantly~\cite[Lemma
2.6]{kreft2013compressing}. Therefore, $|T^t|H_k(T^t) \approx t\cdot
|T|H_k(T)$ for any $t>1$, i.e., entropy is not sensitive to long
repetitions, and hence statistical compressors are outperformed by
LZ77, when the goal is to capture long
repetitions~\cite{gagie2006large, Gagie2020, attractors,
  kreft2013compressing, SirenVMN08}.

The above analysis raises the question about the nature of
compressibility of the Burrows--Wheeler transform. The compression of
BWT-based compressors, such as {\tt bzip2}~\cite{bzip2}, is \mbox{quantified}
by the number $r$ of equal-letter blocks in the BWT. The clear picture
described above no longer applies to the measure $r$. On one hand,
Manzini~\cite{Manzini01} proved that $r$ can be upper-bounded in terms
of the $k$th order empirical entropy of the input string.  On the
other hand, already in 2008, Sir{\'{e}}n et al.~\cite{SirenVMN08}
observed that BWT achieves excellent compression (superior to
statistical methods) on highly repetitive collections and provided
probabilistic analysis exhibiting cases when $r$ is small. Yet, after
more than a decade, no upper bound on $r$ in terms of $z$ was
discovered.

This lack of understanding is particularly frustrating due to numerous
applications of BWT in the field of bioinformatics and compressed
computation. One of the most successful applications of BWT is in
\emph{compressed indexing}, which aims to store a compressed string,
simultaneously supporting various queries (such as random access,
pattern matching, or even suffix array queries) on the uncompressed
version. While classical (uncompressed) indexes, such as suffix
trees~\cite{Weiner73} and suffix arrays~\cite{mm1993}, have been
successful in many applications, they are not suitable for storing and
searching big highly repetitive databases.  Such datasets are
virtually impossible to search without preprocessing: Github
databases, for example, take more than 20 terabytes, and the recently
finished 100000 Human Genome Project~\cite{100k} produced 75 terabytes
of DNA~\cite{Navarro2020}.  These databases are, however, highly
compressible: Github averages 20 versions per
project~\cite{Navarro2020}, and two human genomes are $99.9\%$
similar~\cite{Przeworski2000}. This area has witnessed a remarkable
surge of interest in recent years~\cite{ArroyueloNS12,
  DCC2015,BilleEGV18,BLRSRW15, Christiansen2019, GagieGKNP12,
  RLCST,Kempa19, navarro201941, Prezza19, SinhaW19}. BWT-based
indexes, such as $r$-index~\cite{Gagie2020}, are among the most
powerful~\cite{RLCST}, and their space usage is up to $\bigO(\polylog
n)$ factors away from the value $r$.  For a comprehensive overview of
this field, we refer the reader
to~a~survey~by~Navarro~\cite{Navarro2020}.

In addition to text indexing, BWT has many applications in compressed
computation. For example, BWT is the main component of the popular
read aligners such as ${\tt Bowtie}$~\cite{bowtie}, ${\tt
BWA}$~\cite{bwa}, and ${\tt Soap2}$~\cite{soap2}. Modern textbooks
spend dozens of pages describing applications of BWT~\cite{bwtbook,
  MBCT2015, navarrobook, ohl2013}. The richness of these applications
has even spawned a dedicated seminar~\cite{dagstuhl}.  Given the
importance and practical significance of BWT, one of the biggest open
problems that emerged in the field of lossless data compression and
compressed computation asks:

\vspace{1.5ex}
\begin{center}
  \textit{What is the upper bound on the output size of the
    Burrows--Wheeler transform?}
\end{center}
\vspace{1.5ex}

With the exception of BWT, essentially every other known compression
method has been proven~\cite{GNPlatin18, attractors} to produce output
whose size is always within an $\Oh(\polylog n)$ factor from~$z$, the
output size of the LZ77 algorithm (e.g., grammar
compression~\cite{charikar,Rytter03}, collage
systems~\cite{KidaMSTSA03}, and macro schemes~\cite{storer1978macro}),
or larger by a polynomial factor ($n^{\eps}$ for some $\eps>0$) in the
worst case (e.g., LZ78~\cite{LZ78}, compressed word graphs
(CDAWGs)~\cite{blumer1987complete}).\footnote{The choice for LZ77 as a
representative in this class follows from the fact that most of the
other methods are NP-hard to optimize~\cite{charikar,
gallant1982string}, while LZ77 admits a simple linear-time
compression algorithm (see, e.g.,~\cite{kkp-jea}).}  BWT is known to
never compress much better than LZ77, i.e., $z = \bigO(r \log
n)$~\cite{GNPlatin18}. The opposite relation (that $r = \bigO(z
\polylog n)$) was generally conjectured to be false. For example,
after presenting how to support suffix array and suffix tree queries
in $\bigO(r \polylog n)$ space, Gagie et al.~\cite{RLCST} speculate
that ``\emph{(...) it seems unlikely that one can provide suffix array
or tree functionality within space related to $g$, $z$, or $\gamma$,
since these measures are not related to the structure of the suffix
array: this is likely to be a specific advantage of measure $r$}''.

\paragraph{Our Contribution}

We prove that $r=\bigO(z \log^2 n)$ holds for all strings, resolving
the BWT conjecture in the more surprising way than anticipated, and
solving an open problem by Prezza~\cite{prezzaopenproblem} and Gagie
et al.~\cite{RLCST,GNPlatin18}. This result alone has multiple
implications for indexing and \mbox{compression:}
\begin{enumerate}
\item It is possible to support suffix array and suffix tree
  functionality in $\bigO(z \polylog n)$ space~\cite{Gagie2020}.
\item It was shown in~\cite{Kempa19} that many string processing tasks
  (including BWT and LZ77 construction) can be solved in
  $\bigO(n/\log_{\sigma}n+r \polylog n)$ time (where $\sigma$ is the
  alphabet size), i.e., if the text is sufficiently compressible by
  BWT (formally, when $n/r=\Omega(\polylog n)$), these tasks can be
  solved in optimal time (which is unlikely to be possible for general
  texts~\cite{sss}).  Our result loosens this assumption to
  $n/z=\Omega(\polylog n)$.
\item Until now, methods based on the Burrows--Wheeler transform were
  thought to be neither statistical nor dictionary (LZ-like)
  compression algorithms~\cite{RLCST,SirenVMN08}. Our result
  challenges the notion that the BWT forms its own compression type: In
  view of our bound, BWT is much closer to LZ compressors than was
  previously thought.
\end{enumerate}
Our slightly stronger bound $r=\bigO(\delta \log^2 n)$, where
$\delta\leq z$ is a symmetric (insensitive to string reversal)
repetitiveness measure recently introduced in~\cite{delta}, further
shows that:
\begin{enumerate}
\setcounter{enumi}{3} 
\item The number $\bar{r}$ of BWT runs in the reverse of the text
  satisfies $\bar{r}=\bigO(r \log^2 n)$, which is the first
  non-trivial bound in terms of $r$.  This result is of practical
  importance due to many algorithms whose efficiency depends on
  $\bar{r}$~\cite{BannaiGI20,
      DjamalCGPR15,DjamalCGPR17,OhlebuschBA14, OhnoSTIS18,
      PolicritiP17,PolicritiP18}.
\end{enumerate}

After proving $r = \bigO(z \log^2 n)$ and $r = \bigO(\delta \log^2
n)$, by a tighter analysis, we obtain $r = \bigO(z \log z \max(1,
\log\frac{n}{z \log z}))$ and $r = \bigO(\delta \log \delta \max(1,
\frac{n}{\delta \log \delta}))$, and we show that the latter is
asymptotically tight for the full spectrum of values of $n$ and
$\delta$. As a side-result, we obtain a tight upper bound $\bigO(n
\log \delta)$ on the sum of irreducible LCP values.  This improves
upon the previously known bound $\bigO(n \log
r)$~\cite{KarkkainenKP16}.

We then describe an $\bigO(z \log^8 n)$-time algorithm converting the
LZ77 parsing into run-length compressed BWT (the $\polylog n$ factor
has not been optimized).  This offers up to exponential speedup over
the previously fastest space-efficient algorithms, which need
$\Omega(n \log z)$ time~\cite{OhnoSTIS18,PolicritiP17}.  To achieve
this, we develop new data structures and techniques of independent
interest. In particular, we introduce a notion of \emph{compressed
string synchronizing sets}, generalizing the powerful technique
introduced in~\cite{sss}. We also describe a new variant of wavelet
\mbox{trees~\cite{wt}, designed to} work for sequences of long
strings. Finally, we describe new indexes that can be
built directly from the LZ77-compressed text and support fast pattern
matching queries on~text~substrings.

\paragraph{Organization of the Paper}
\cref{sec:prelim} introduces the basic notation. We present our upper
and lower bounds in \cref{sec:upper-bound,sec:lower-bound},
respectively.  Finally, in \cref{sec:algorithm}, we develop our
algorithm converting LZ77 to run-length compressed BWT, with the
description of our LZ77-based text indexes deferred to \cref{sec:aux}.

\section{Preliminaries}\label{sec:prelim}

For any string $S$, we write $S[i\dd j]$, where $1 \leq i,j \leq |S|$,
to denote a substring of $S$. If $i>j$, we assume $S[i\dd j]$ to be
the empty string $\eps$.  By $\revstr{S}$ we denote the \emph{reverse}
of $S$.

An integer $p\in [1\dd |S|]$ is a \emph{period} of a string $S$ if
$S[i]=S[i+p]$ holds for every $i\in [1\dd |S|-p]$.  The classic
\emph{periodicity lemma}~\cite{fine1965uniqueness} states that if a
string $S$ has periods $p,q$ such that $p+q-\gcd(p,q)\le |S|$, then
$\gcd(p,q)$ is also a period of $S$.

The shortest period of $S$ is denoted as $\per(S)$. A string $S$ is
called \emph{periodic} if $\per(S) \leq \frac12|S|$. The following
fact is a folklore consequence of the periodicity lemma.

\begin{fact}[see~{\cite[Fact 1]{Amir2017}}]\label{fct:disper}
  Any two distinct periodic strings of the same length differ on at
  least two positions.
\end{fact}

\begin{wrapfigure}{r}{0.46\textwidth}
  \vspace{-.5cm}
  \begin{tikzpicture}[yscale=0.4]
  \foreach \x [count=\i] in {, a, aababa, aababababaababa, aba, abaababa, abaababababaababa, ababa, ababaababa, abababaababa, ababababaababa, ba, baababa, baababababaababa, baba, babaababa, babaababababaababa, bababaababa, babababaababa,bbabaababababaababa}{
    \draw (1.9, -\i) node[right] {$\texttt{\x\$}\vphantom{\textbf{\underline{7}}}$};
  }
  \draw(1.9,0) node[right] {\scriptsize $\T[\SA[i]\dd n]$};

  \foreach \x [count=\i] in {a, b, b, b, b, b, b, a, b, b, a, a, a, a, a, a, b, a, a,\$}{
    \draw (1.5, -\i) node {$\texttt{\x\vphantom{\$}}\vphantom{\textbf{\underline{7}}}$};
    \draw (-.85, -\i) node {\footnotesize $\i$};
  }
  \draw(1.5,0) node{\scriptsize $\BWT[i]$};
  \draw(-.85,0) node{\scriptsize $i\vphantom{\BWT[]}$};

  \foreach \x [count=\i] in {20,19,14,5,17,12,3,15,10,8,6,18,13,4,16,11,2,9,7,1}{
    \draw (.6, -\i) node {$\x\vphantom{\textbf{\underline{7}}}$};
  }
  \draw(.6,0) node{\scriptsize $\SA[i]$};

  \foreach \x [count=\i] in {\textbf{\underline{0}},\textbf{\underline{0}},1,6,1,3,8,\textbf{\underline{3}},\textbf{\underline{5}},5,\textbf{\underline{7}},0,2,7,2,4,\textbf{\underline{9}},\textbf{\underline{4}},6,\textbf{\underline{1}}}{
    \draw (-.25, -\i) node {$\x\vphantom{\textbf{\underline{7}}}$};
  }
  \draw(-.25,0) node{\scriptsize $\LCP[i]$};
  \end{tikzpicture}

  \caption{A list of lexicographically sorted suffixes of the string
    $\T= \texttt{bbabaababababaababa\$}$ along with the $\BWT$, $\SA$,
    and $\LCP$ tables. The irreducible $\LCP$ values are bold and
    underlined.}\label{fig:example}
  \vspace{-.5cm}
\end{wrapfigure}

Throughout the paper, we consider a string (text) $\T[1\dd n]$ of $n
\geq 1$ symbols from an ordered alphabet $\Sigma$ of size $\sigma$. We
assume $\T[n]=\$$, where $\$\in\Sigma$ is the lexicographically
smallest symbol in $\Sigma$, and that $\$$ does not occur anywhere
else in $\T$.

The \emph{suffix array}~\cite{mm1993} of $\T$ is an array $\SA[1\dd
n]$ containing a permutation of the integers $[1\dd n]$ such that
$\T[\SA[1]\dd n] \prec \T[\SA[2]\dd n] \prec \cdots \prec \T[\SA[n]\dd
n]$, where $\prec$ denotes the lexicographic order. The closely
related \emph{Burrows--Wheeler transform}~\cite{BWT} $\BWT[1\dd n]$ of
$\T$ is defined by $\BWT[i] = \T[\SA[i]-1]$ if $\SA[i] > 1$ and
$\BWT[i] = \T[n]$ otherwise. The BWT is invertible: given $\BWT[1 \dd
n]$, the text $\T$ can be restored in $\bigO(n)$ time.

For any string $S = c_1^{\ell_1} c_2^{\ell_2} \cdots c_h^{\ell_h}$,
where $c_i \in \Sigma$ and $\ell_i>0$ for $i \in [1 \dd h]$, and $c_i
\neq c_{i+1}$ for $i \in [1 \dd h)$, we define the \emph{run-length
encoding} of $S$ as a sequence ${\rm RL}(S) = ((c_1, \lambda_1),
\ldots, (c_h, \lambda_h))$, where $\lambda_i = \ell_1 + \ldots +
\ell_i$ for $i \in [1 \dd h]$. Throughout, we let $r=|{\rm RL}(\BWT)|$
denote the number of runs in the BWT of $\T$. E.g., for the text
$\T=\texttt{bbabaababababaababa\$}$ in \cref{fig:example}, we have
$\BWT = \texttt{a}^1\texttt{b}^6
\texttt{a}^1\texttt{b}^2\texttt{a}^6\texttt{b}^1\texttt{a}^2\texttt{\$}^1$,
and hence $r=8$.

By $\lcp(S_1, S_2)$ we denote the length of the longest common prefix
of strings $S_1$ and $S_2$. For $j_1, j_2 \in [1 \dd n]$, we let
$\LCE(j_1,j_2) = \lcp(\T[j_1 \dd n], \T[j_2 \dd n])$.  The \emph{LCP
array} (see~\cite{mm1993,klaap2001}), $\LCP[1\dd n]$, is defined as
$\LCP[i] = \LCE(\SA[i],\SA[i-1])$ for $i \in [2\dd n]$ and
$\LCP[1]=0$.  We say that the value $\LCP[i]$ is \emph{reducible} if
$\BWT[i] = \BWT[i-1]$ and \emph{irreducible} otherwise (including
$i=1$). Note that there are exactly $r$ irreducible LCP values.

\begin{theorem}[K\"arkk\"ainen et al.~\cite{KarkkainenKP16}]\label{lm:irreduc-sum}
  The sum of all irreducible LCP values is at most $n \log r$.
\end{theorem}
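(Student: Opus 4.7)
The plan is to bound the sum via the classical connection to the $k$-th order empirical entropy $H_k(\T)$, and then use the bound $n\,H_k(\T) \le n\log r$ that follows from the fact that every distinct character of $\T$ initiates at least one $\BWT$-run.

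First, I would establish a Manzini-style inequality of the form
\[
  \sum_{i\,:\,\LCP[i]\text{ irreducible}} \LCP[i] \;\le\; n + 2n\,H_k(\T) + O(\sigma^k\,k)
\]
for every $k \ge 0$. The argument charges each unit of irreducible $\LCP$-length to a $k$-order character prediction in $\BWT$. The standard fact driving the charging is: whenever $\BWT[i]=\BWT[i-1]$ (i.e.\ $\LCP[i]$ is reducible), $\mathrm{LF}(i)$ and $\mathrm{LF}(i-1)$ are consecutive and $\LCP[\mathrm{LF}(i)] = \LCP[i] + 1$, because equal preceding characters keep the extended suffixes adjacent in $\SA$-order. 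Iterating, each reducible $\LCP$-entry unwinds via $\Psi=\mathrm{LF}^{-1}$ through a chain of character matches until reaching an irreducible ancestor. An irreducible $\LCP$-value of length $\ell$ therefore corresponds to a chain of $\ell$ successful predictions in the $k$-order model, and the standard empirical entropy accounting yields the stated inequality, with the $O(\sigma^k k)$ term absorbing the contributions of the $O(\sigma^k)$ chains whose $k$-order contexts straddle the boundaries of $\T$.

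Second, I would bound $n\,H_k(\T)$ by $n\log r$. Writing $n\,H_k(\T) = \sum_{w \in \Sigma^k} n_w\,H_0(\BWT_w)$, where $\BWT_w$ is the subsequence of $\BWT$ at $\SA$-positions whose suffix begins with $w \in \Sigma^k$ and $n_w = |\BWT_w|$, the bound $H_0(\BWT_w) \le \log\sigma_w$ for the alphabet size $\sigma_w$ of $\BWT_w$ reduces the task to bounding $\sum_w n_w \log\sigma_w$. Because every distinct character of $\T$ begins at least one $\BWT$-run (namely, its lexicographically smallest occurrence), $\sigma \le r$; combined with $\sigma_w \le \sigma$ and $\sum_w n_w = n$, this gives $\sum_w n_w \log\sigma_w \le n\log\sigma \le n\log r$.

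Finally, choosing $k = \lfloor\log_\sigma r\rfloor$ yields $\sigma^k \le r$, so the error term $O(\sigma^k k) = O(r\log r) \le O(n\log r)$ is absorbed. The two inequalities combine to $\sum_{i\,:\,\LCP[i]\text{ irreducible}} \LCP[i] = O(n \log r)$. The main technical obstacle is the chain-tracing step: one must precisely account for chains that terminate at the beginning of $\T$ (where $\mathrm{LF}$-traversal stops) and ensure that the $O(\sigma^k k)$ correction truly captures all boundary irregularities. Refining the leading constant from $O(n\log r)$ down to exactly $n\log r$ as in the statement requires tightening Manzini's inequality to leading constant $1$, which follows from a sharper accounting of which chain elements contribute to the entropy sum versus to the error term.
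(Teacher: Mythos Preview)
This theorem is not proved in the paper; it is quoted from K\"arkk\"ainen, Kempa, and Puglisi~\cite{KarkkainenKP16}, so there is no in-paper proof to compare against.

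Assessing the proposal on its own: there is a genuine gap in Step~1. The $\mathrm{LF}$-mapping fact you cite is correct, but you never explain how it yields the claimed inequality $\sum_{\text{irred.}}\LCP[i] \le n + 2nH_k(\T) + O(\sigma^k k)$. You say an irreducible value $\ell$ ``corresponds to a chain of $\ell$ successful predictions in the $k$-order model'', but the chain under $\mathrm{LF}$ has \emph{increasing} LCP values and walks through \emph{reducible} positions, so it accounts for those reducible values, not for the irreducible one at its base; nothing in the sketch isolates the irreducible contribution. It is unclear what is being predicted, against which context, or why the total over all chains should be $nH_k$. Manzini's actual inequality bounds the compressed size of $\BWT$ under move-to-front plus entropy coding, which is a different quantity from the sum of irreducible LCP values.

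Step~2 is also weaker than it looks: you bound $nH_k \le n\log\sigma$ and then invoke $\sigma \le r$. The entropy term therefore contributes at most $n\log\sigma$ regardless of $k$, so all of the $r$-dependence in your final bound comes from the $O(\sigma^k k)$ error term via $\sigma^k \le r$. The entropy detour is doing no real work.

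Even granting Step~1, you obtain only $O(n\log r)$, and the assertion that ``sharper accounting'' recovers leading constant~$1$ is unsupported. The original proof in~\cite{KarkkainenKP16} is a direct combinatorial charging argument (closer in spirit to the proof of \cref{lm:irreduc-sum-2} in this paper than to entropy bounds) and achieves the stated bound exactly.
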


We say that a fragment $\T[i\dd i+\ell)$ is a \emph{previous factor}
if it has an earlier occurrence in $\T$, i.e., $\LCE(i,i')\ge \ell$
holds for some $i'\in [1\dd i)$.  An \emph{LZ77-like factorization} of
$\T$ is a factorization $\T = F_1\cdots F_f$ into non-empty
\emph{phrases} such that each phrase $F_j$ with $|F_j|>1$ is a
previous factor.  In the underlying \emph{LZ77-like representation},
every phrase $F_j=\T[i\dd i+\ell)$ that is a previous factor is
encoded as $(i',\ell)$, where $i'\in [1\dd i)$ satisfies
$\LCE(i,i')\ge \ell$ (and is chosen arbitrarily in case of multiple
possibilities); if $F_j=T[i]$ is not a previous factor, it is encoded
as $(T[i],0)$.

The LZ77 factorization~\cite{LZ77} (or the LZ77 parsing) of a string
$\T$ is then just an LZ77-like factorization constructed by greedily
parsing $\T$ from left to right into longest possible phrases.  More
precisely, the $j$th phrase $F_j$ is the longest previous factor
starting at position $1+|F_1\cdots F_{j-1}|$; if no previous factor
starts there, then $F_j$ consists of a single character.  We denote
the number of phrases in the LZ77 parsing by $z$.  For example, the
text $\texttt{bbabaababababaababa\$}$ of \cref{fig:example} has LZ77
factorization $\texttt{b}\cdot \texttt{b}\cdot \texttt{a}\cdot
\texttt{ba} \cdot \texttt{aba}\cdot \texttt{bababa} \cdot
\texttt{ababa} \cdot \texttt{\$}$ with $z=8$ phrases, and its LZ77
representation is $ (\texttt{b},0), (1,1), (\texttt{a},0), (2,2),
(3,3), (7,6), (10,5), (\texttt{\$},0).  $

The following relation between $z$ and $r$ is known.

\begin{theorem}[Gagie et al.~\cite{GNPlatin18}]
  Every string of length $n$ satisfies $z = \bigO(r \log n)$.
\end{theorem}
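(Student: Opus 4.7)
The plan is to bound $z$ via the intermediate notion of a \emph{string attractor}: a set $\Gamma\sub[1\dd n]$ such that every nonempty substring of $\T$ has at least one occurrence whose interval intersects $\Gamma$. I would proceed in two steps: (i) extract from the $r$ BWT runs an attractor of size $\Oh(r)$, and (ii) invoke the known bound $z=\Oh(\gamma\log(n/\gamma))$ for any attractor of size $\gamma$. Chaining these yields $z=\Oh(r\log(n/r))=\Oh(r\log n)$.

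For step (i), I take $\Gamma=\{\SA[\lambda_i]:i\in[1\dd r]\}$ (augmented, if necessary, by the shifted positions $\SA[\lambda_i]-1$ to cover a boundary case), so $|\Gamma|\le 2r$. Fix any substring $w$ and let $[\alpha\dd \beta]$ be its SA-interval. If $\BWT[\alpha\dd\beta]$ is not uniform, it contains some run-boundary index $i^*$ with $\BWT[i^*-1]\ne\BWT[i^*]$, and then the occurrence $\T[\SA[i^*]\dd\SA[i^*]+|w|-1]$ covers the attractor $\SA[i^*]\in\Gamma$. Otherwise $\BWT[\alpha\dd\beta]$ equals a single character $c$, every occurrence of $w$ is preceded by $c$, and the LF-mapping sends $[\alpha\dd\beta]$ bijectively onto the SA-interval of $cw$. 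Iterating the latter step must terminate at a non-uniform interval (the extensions cannot grow past length $n$ without the interval shrinking), which produces an attractor-covering occurrence of some left-extension $c_k\cdots c_1 w$; a careful tracking of the leader of the BWT run enclosing $[\alpha\dd\beta]$ then shows that an attractor position also lies inside an occurrence of $w$ itself.

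For step (ii), I would invoke the standard reduction from attractors to straight-line grammars: from $\Gamma$ one can build a grammar of size $\Oh(\gamma\log(n/\gamma))$ deriving $\T$, using locally consistent parsings anchored at attractor positions across $\Oh(\log(n/\gamma))$ power-of-two length scales, each contributing $\Oh(\gamma)$ nonterminals (at scale $k$, the $\gamma$ blocks of length $\Theta(2^k)$ centered at attractor positions already witness every length-$\Theta(2^k)$ substring, so higher-scale rules reuse lower-scale ones). Since $z$ is a lower bound on grammar size---each LZ77 phrase corresponds to a distinct cut in the parse tree of a grammar for $\T$---this transfers to $z=\Oh(\gamma\log(n/\gamma))$.

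The main obstacle is the uniform-BWT sub-case in step (i): ensuring that the attractor position eventually identified lies inside an occurrence of $w$ itself, not merely inside an occurrence of a strictly longer left-extension. Resolving it requires a careful correspondence between the BWT run enclosing $[\alpha\dd\beta]$ and the text occurrences of $w$, and lies at the combinatorial heart of the Kempa--Prezza BWT-attractor construction.
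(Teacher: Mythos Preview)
The paper does not prove this theorem; it is merely quoted as a known result from~\cite{GNPlatin18}. Your two-step route via string attractors is exactly the argument used there (and in the Kempa--Prezza attractors paper): first exhibit an attractor of size $\Oh(r)$ from BWT run boundaries, then invoke $z\le g^*=\Oh(\gamma\log(n/\gamma))$ (grammar from attractor, plus the classical $z\le g$ bound). So there is nothing to compare against in this paper, and your outline matches the literature.

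Your identification of the uniform-$\BWT$ sub-case as the delicate point is accurate. The standard resolution does not rely on ``tracking the leader of the enclosing run'' in the way you sketch; rather, it uses that the LF-mapping preserves adjacency within a run: if $\BWT[j-1]=\BWT[j]$ then $\mathrm{LF}(j)=\mathrm{LF}(j-1)+1$. Starting from $j=\alpha$ (the left end of the SA-interval of $w$) and repeatedly applying LF, the pair $(j-1,j)$ stays adjacent until $j$ becomes a run start $\lambda_i$; at that moment $\SA[\lambda_i]\in\Gamma$, and one checks that the occurrence of $w$ at $\SA[\alpha]$ has been shifted leftward step by step so that the resulting occurrence still contains the attractor position. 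This is the missing mechanism behind your ``careful correspondence''.
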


\section{Upper Bounds}\label{sec:upper-bound}

\subsection{Basic Upper Bound}\label{sec:basic-upper-bound}

To illustrate the main idea of our proof technique, we first prove the
upper bound in its simplest form $r=\bigO(z \log^2 n)$. The following
lemma stands at the heart of our proof.

\begin{lemma}\label{lm:irreduc-sum-2}
  For every $\ell\in [1\dd n]$, the number of irreducible LCP values
  in $[\ell\dd 2\ell)$ is $\bigO(z \log n)$.
\end{lemma}

  \noindent
  \textit{Proof.}
  Let $\T^{\infty}$ be an infinite string defined so that
  $\T^{\infty}[i]=\T[1+(i-1)\bmod n]$ for $i \in \mathbb{Z}$; in
  particular, $\T^{\infty}[1\dd n]=\T[1\dd n]$. Due to $T[n]=\$$, we
  have $\T^\infty[\SA[1]\dd ]\prec \cdots \prec \T^\infty [\SA[n]\dd
  ]$ and $\BWT[i]=\T^\infty[\SA[i]-1]$ for $i\in [1\dd n]$.
  
  Denote $\Su_m =\{S\in\Sigma^m : S\text{ is a substring of
  }\T^{\infty}\}$ for $m \geq 1$.  Observe that $|\Su_{m}| \leq mz$
  since every length-$m$ substring of $\T^{\infty}$ has an occurrence
  crossing or beginning at a phrase boundary of the LZ77 parsing of
  $\T$.  This includes substrings overlapping two copies of $\T$,
  which cross the boundary between the last and the first phrase.
  
  The idea of the proof is as follows.  With each irreducible value
  $\LCP[i]\in [\ell\dd 2\ell)$, we associate a cost of $\ell$ units,
  which are charged to individual characters of strings in
  $\Su_{3\ell}$.  We then show that each of the strings in
  $\Su_{3\ell}$ is charged at most $2\log n$ times. The number of
  irreducible LCP values in $[\ell \dd 2\ell)$ equals $\frac{1}{\ell}$
  times the total cost, which is at most
  \[
    |\Su_{3\ell}|\cdot 2\log n \leq 6\ell z\log n.
  \]

  \begin{wrapfigure}{r}{0.44\textwidth}
  \vspace{-1cm}    
  \begin{tikzpicture}[xscale=-.7,yscale=.37,every node/.style={inner sep=2pt},vert/.style={inner sep=1.5pt,circle,fill},rotate=-90]
    \draw (0,4.5) node[vert](eps) {};

    \draw (2,9) node[vert] (D) {};
    \draw (2,6.5) node[vert](a) {};
    \draw (2,2) node[vert](b) {};

    \draw[densely dotted] (eps) --node[left=15pt]{$\texttt{\vphantom{ab\$}\$}$} (D) {};
    \draw[densely dotted] (eps) --node[left=3pt]{$\texttt{\vphantom{ab\$}a}$} (a) {};
    \draw[very thick] (eps) --node[right=5pt]{$\texttt{\vphantom{ab\$}b}$} (b) {};

    \draw (4,9) node[vert](aD) {};
    \draw (4,8) node[vert](aa) {};
    \draw (4,6) node[vert](ba) {};
    \draw (4,4) node[vert](Db) {};
    \draw (4,2) node[vert](ab) {};
    \draw (4,0) node[vert](bb) {};

    \draw[very thick] (D) --node[left]{$\texttt{\vphantom{ab\$}a}$} (aD) {};
    \draw[densely dotted] (a) --node[left=3pt]{$\texttt{\vphantom{ab\$}a}$} (aa) {};
    \draw[very thick] (a) --node[right]{$\texttt{\vphantom{ab\$}b}$} (ba) {};
    \draw[densely dotted] (b) --node[left=4pt]{$\texttt{\vphantom{ab\$}\$}$} (Db) {};
    \draw[very thick] (b) --node[left]{$\texttt{\vphantom{ab\$}a}$} (ab) {};
    \draw[densely dotted] (b) --node[right=3pt]{$\texttt{\vphantom{ab\$}b}$} (bb) {};

    \draw (6,9) node[vert](baD) {};
    \draw (6,8) node[vert](baa) {};
    \draw (6,6.5) node[vert](aba) {};
    \draw (6,5) node[vert](bba) {};
    \draw (6,4) node[vert](aDb) {};
    \draw (6,3) node[vert](aab) {};
    \draw (6,1.5) node[vert](bab) {};
    \draw (6,0) node[vert](Dbb) {};

    \draw[very thick] (aD) --node[left]{$\texttt{\vphantom{ab\$}b}$} (baD) {};
    \draw[very thick] (aa) --node[left]{$\texttt{\vphantom{ab\$}b}$} (baa) {};
    \draw[very thick] (ba) --node[left]{$\texttt{\vphantom{ab\$}a}$} (aba) {};
    \draw[densely dotted] (ba) --node[right]{$\texttt{\vphantom{ab\$}b}$} (bba) {};
    \draw[very thick] (Db) --node[left]{$\texttt{\vphantom{ab\$}a}$} (aDb) {};
    \draw[densely dotted] (ab) --node[left]{$\texttt{\vphantom{ab\$}a}$} (aab) {};
    \draw[very thick] (ab) --node[right]{$\texttt{\vphantom{ab\$}b}$} (bab) {};
    \draw[very thick] (bb) --node[left]{$\texttt{\vphantom{ab\$}\$}$} (Dbb) {};

    \draw (8,9) node[vert](abaD) {};
    \draw (8,8) node[vert](abaa) {};
    \draw (8,7) node[vert](aaba) {};
    \draw (8,6) node[vert](baba) {};
    \draw (8,5) node[vert](Dbba) {};
    \draw (8,4) node[vert](baDb) {};
    \draw (8,3) node[vert](baab) {};
    \draw (8,2) node[vert](abab) {};
    \draw (8,1) node[vert](bbab) {};
    \draw (8,0) node[vert](aDbb) {};

    \draw[very thick] (baD) --node[left]{$\texttt{\vphantom{ab\$}a}$} (abaD) {};
    \draw[very thick] (baa) --node[left]{$\texttt{\vphantom{ab\$}a}$} (abaa) {};
    \draw[densely dotted] (aba) --node[left]{$\texttt{\vphantom{ab\$}a}$} (aaba) {};
    \draw[densely dotted] (aba) --node[right]{$\texttt{\vphantom{ab\$}b}$} (baba) {};
    \draw[very thick] (bba) --node[left]{$\texttt{\vphantom{ab\$}\$}$} (Dbba) {};
    \draw[very thick] (aDb) --node[left]{$\texttt{\vphantom{ab\$}b}$} (baDb) {};
    \draw[very thick] (aab) --node[left]{$\texttt{\vphantom{ab\$}b}$} (baab) {};
    \draw[densely dotted] (bab) --node[left]{$\texttt{\vphantom{ab\$}a}$} (abab) {};
    \draw[densely dotted] (bab) --node[right]{$\texttt{\vphantom{ab\$}b}$} (bbab) {};
    \draw[very thick] (Dbb) --node[left]{$\texttt{\vphantom{ab\$}a}$} (aDbb) {};
    \end{tikzpicture}
    \caption{The trie $\mathcal{T}$ of reversed length-$4$ substrings
      of $\T^\infty$ for $\T=\texttt{bbabaababababaababa\$}$ of
      \cref{fig:example}. Light edges are thin and
      dotted.}\label{fig:trie}
    \vspace{-.5cm}
  \end{wrapfigure}
  To devise the announced assignment of cost to characters of strings
  in $\Su_{3\ell}$, consider the trie $\mathcal{T}$ of all reversed
  strings in $\Su_{\ell}$ (see \cref{fig:trie} for an example).  By
  $v_X$ denote the node~of~$\mathcal{T}$ whose path from the root
  of~$\mathcal{T}$ is labelled by a string $X$.
  
  Let $\LCP[i] \in [\ell \dd 2\ell)$ be an irreducible LCP value; note
  that $i > 1$ due to $\LCP[i] \ge \ell > 0$.  Let $j_0 = \SA[i - 1]$
  and $j_1 = \SA[i]$ so that $\LCP[i] = \LCE(j_0, j_1)$.  Since
  $\LCP[i]$ is irreducible, we have $\T^{\infty}[j_0 - 1] = \BWT[i -
  1] \neq \BWT[i] = \T^{\infty}[j_1 - 1]$.  For $k \in [1 \dd
  \ell]$, the $k$th unit of the cost associated with $\LCP[i]$ is
  charged to the $k$th character ($\T^{\infty}[j_t - 1]$) of the
  string $\T^{\infty}[j_t - k \dd j_t - k + 3\ell) \in \Su_{3\ell}$,
  where $t \in \{0, 1\}$ is such that the subtree of $\mathcal{T}$
  rooted at $v_{\revstr{\T^{\infty}[j_t - 1 \dd j_t - k + \ell)}}$
  contains less leaves than the subtree rooted at
  $v_{\revstr{\T^{\infty}[j_{1-t} - 1 \dd j_{1-t} - k + \ell)}}$ (we
  choose $t = 0$ in case of ties).

  Note that at most $\log n$ characters of each $S \in \Su_{3\ell}$
  can be charged during the above procedure: whenever $S[k]$, with $k
  \in [1\dd \ell]$, is charged, the subtree of $\mathcal{T}$ rooted at
  $v_{\revstr{S[k+1\dd \ell]}}$ has at least twice as many leaves as
  the subtree rooted at $v_{\revstr{S[k\dd \ell]}}$, and this can
  happen for at most $\log|\Su_{\ell}| \leq \log n$ nodes
  $v_{\revstr{S[k\dd \ell]}}$ on the path from the root of
  $\mathcal{T}$ to the leaf $v_{\revstr{S[1\dd \ell]}}$.

  It remains to show that, for every $S\in\Su_{3\ell}$, a single
  position $S[k]$, with $k\in[1\dd \ell]$, can be charged at most
  twice. For this, observe that the characters charged for a single
  irreducible value $\LCP[i]$ are at different positions (of strings
  in $\Su_{3\ell}$). Hence, to analyze the total charge assigned to
  $S[k]$, we only need to bound the number of possible candidate
  positions~$i$.  Let $[b\dd e]$ be the set of indices $i'$ such that
  $\T^\infty[\SA[i']\dd]$ starts with $S[k+1\dd 3\ell]$.  In the above
  procedure, if a character $S[k]$ is charged a unit of cost
  corresponding to $\LCP[i]$, then $S[k+1\dd 3\ell]$ is a prefix of
  either $\T^\infty[\SA[i-1]\dd]=T^\infty[j_0\dd ]$ or
  $\T^\infty[\SA[i]\dd]=T^\infty[j_1\dd ]$. Hence, $\{i-1,i\}\cap[b\dd
  e]\neq\emptyset$. At the same time, $\LCE(\SA[i-1],\SA[i]) <
  2\ell$ and all strings $\T^\infty[\SA[i']\dd]$ with $i'\in [b\dd e]$
  share a common prefix $S[k+1\dd 3\ell]$ of length $3\ell-k \ge
  2\ell$.  Consequently, $i=b$ or $i=e+1$.\qed

\begin{theorem}\label{th:basic-upper-bound}
  Every string of length $n$ satisfies $r=\bigO(z \log^2 n)$.
\end{theorem}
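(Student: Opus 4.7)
The plan is to exploit the fact that $r$ equals the number of irreducible LCP values (the indices $i$ with $\BWT[i]\ne\BWT[i-1]$ are precisely the run starts and are exactly the positions where $\LCP[i]$ is irreducible), and to bound this count by grouping the irreducible values according to their magnitude. First I would handle the irreducible LCP values equal to $0$. An index $i>1$ with $\LCP[i]=0$ is a point where the first character of $\T[\SA[i]\dd n]$ differs from that of $\T[\SA[i-1]\dd n]$, so there are at most $\sigma-1$ such indices; together with $\LCP[1]=0$ this yields at most $\sigma$ zero-valued LCP positions. Since every distinct symbol of $\T$ is introduced by a length-one LZ77 phrase (the case $\ell_i=0$ in the definition of the parsing), we have $\sigma\le z$, so the number of zero-valued irreducible LCP entries is $O(z)$.

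For the positive irreducible values I would use a dyadic decomposition. For $k=0,1,\ldots,\lfloor\log n\rfloor$, let $R_k$ denote the number of irreducible LCP values lying in $[2^k,2^{k+1})$. Applying \cref{lm:irreduc-sum-2} with $\ell=\min(2^{k+1},n)$, the sum of all irreducible LCP values strictly smaller than $\ell$ is $O(z\,\ell\log n) = O(z\cdot 2^k\log n)$; since each value counted by $R_k$ contributes at least $2^k$ to this sum, we obtain $R_k=O(z\log n)$. Summing $R_k$ over the $O(\log n)$ dyadic ranges and adding the $O(z)$ zero-valued contribution yields $r=O(z\log^2 n)$, as required. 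Since \cref{lm:irreduc-sum-2} supplies the only non-trivial ingredient, this reduction is essentially a bookkeeping exercise, and I do not anticipate any obstacle; the only minor care needed is truncating the last dyadic bucket at $n$ so as to satisfy the hypothesis $\ell\in[1\dd n]$ of the lemma.
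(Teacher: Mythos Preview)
Your proposal is correct and is essentially the same argument as the paper's: both handle the zero irreducible LCP values via $\sigma\le z$, apply \cref{lm:irreduc-sum-2} to dyadic ranges $[2^k,2^{k+1})$ to get $O(z\log n)$ irreducible values per range, and sum over $O(\log n)$ ranges. Your handling of the top bucket (truncating $\ell$ at $n$) is in fact slightly more careful than the paper's presentation.
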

\begin{proof}
  Recall that $r$ is the total number of irreducible LCP values.
  Thus, the claim follows by applying \cref{lm:irreduc-sum-2} for
  $\ell_i=2^i$, with $i\in [0 \dd \lfloor \log n \rfloor]$, and
  observing that the number of LCP values 0 is exactly $\sigma\leq z$.
\end{proof}

\subsection{Tighter Upper Bound}\label{sec:tighter-upper-bound}

To obtain a tighter bound, we refine the ideas from
\cref{sec:basic-upper-bound}, starting with a counterpart of
\cref{lm:irreduc-sum-2}.

\begin{lemma}\label{lm:irreduc-sum-3}
  For every $\ell\in[1\dd n]$, the number of irreducible LCP values in
  $[\ell \dd 2\ell)$ is $\bigO(z \log z)$.
\end{lemma}
\begin{proof}
  The proof follows closely that of \cref{lm:irreduc-sum-2}. However,
  with each irreducible value $\LCP[i]\in[\ell\dd 2\ell)$, we
  associate cost $\lceil \frac12\ell\rceil$ rather $\ell$.  We then
  show that each of the strings in $\Su_{3\ell}$ is charged at most
  $2\cdot(3+\log z)$ times (rather than $2\log n$ times). Then, the
  number of irreducible LCP values in the range $[\ell\dd 2\ell)$ does
  not exceed $\frac{2}{\ell}$ times the total cost, which is bounded by
  \[
    |\Su_{3\ell}|\cdot 2\cdot (3+\log z) \leq 6\ell z(3+\log z).
  \]

  Recall the trie $\mathcal{T}$ of all reversed strings in~$\Su_{\ell}$.
  For a node $v$ of $\mathcal{T}$, by $\size(v)$ we
  denote the number of leaves in the subtree of $\mathcal{T}$ rooted
  in $v$. An edge connecting $v\neq {\rm root}(\mathcal{T})$ to its
  parent in $\mathcal{T}$ is called \emph{light} if $v$ has a sibling
  $v'$ satisfying ${\rm size}(v') \ge \size(v)$ (see \cref{fig:trie}).
  In the proof of \cref{lm:irreduc-sum-2}, we observed that the
  characters $S[k]$ of $S\in\Su_{3\ell}$ that can be charged
  correspond to light edges on the path from the root of $\mathcal{T}$
  to the leaf $v_{\revstr{S[1 \dd \ell]}}$: whenever $S[k]$, with $k \in [1
  \dd \ell]$, is charged, the edge connecting $v_{\revstr{S[k \dd
  \ell]}}$ to its parent $v_{\revstr{S[k + 1 \dd \ell]}}$ is
  light.  We then noted that there are at most $\log |\Su_{\ell}| \leq
  \log n$ light edges on each root-to-leaf path
  in~$\mathcal{T}$. Here, we perform the same assignment of cost to
  the characters of strings in $\Su_{3\ell}$ as in
  \cref{lm:irreduc-sum-2}, but only for units $k\in[1 \dd \lceil
  \frac{1}{2} \ell \rceil]$. This implies that only characters
  $S[k]$ of $S\in \Su_{3\ell}$ with $k \le \lceil \frac{1}{2} \ell
  \rceil$ are charged. It remains to show that any root-to-leaf path
  in $\mathcal{T}$ contains at most $3 + \log z$ light edges between a
  node at depth at least $\lfloor \frac{1}{2}\ell \rfloor$ and
  its~child.

  Consider a light edge from a node $v$ to its parent $u$ at depth at
  least $\lfloor{\frac{1}{2}\ell}\rfloor$.  Let $v'$ be a sibling of
  $v$ satisfying $\size(v') \ge \size(v)$, and let $S_v, S_{v'}$ be
  the labels of the paths from the root to $v$ and $v'$, respectively.
  These labels differ on the last position only so, by
  \cref{fct:disper}, they cannot be both periodic.  Let
  $\widetilde{v}\in\{v,v'\}$ be such that $S_{\widetilde{v}}$ is not
  periodic, and let $\widetilde{m}=\size(\widetilde{v})$.

  Consider the set $\Su$ of length-$\ell$ strings corresponding to the
  leaves in the subtree of $\mathcal{T}$ rooted at $\widetilde{v}$
  (i.e., the labels of the root-to-leaf paths passing through
  $\widetilde{v}$).  Define $\revstr{\Su}:=\{\revstr{P} : P \in \Su\}$
  and note that $\revstr{\Su} \subseteq \Su_{\ell}$ because
  $\mathcal{T}$ is the trie of \emph{reversed} strings from
  $\Su_{\ell}$.  Let $e_1 < \cdots < e_{\widetilde{m}}$ denote the
  ending positions of the leftmost occurrences in $\T^\infty[1 \dd)$
  of strings in $\revstr{\Su}$.  By definition, we have an occurrence
  of $\revstr{S_{\widetilde{v}}}$ ending in $\T^\infty$ at every
  position $e_i$ with $i \in [1 \dd \widetilde{m}]$. Now,
  $\per(S_{\widetilde{v}}) > \frac12 |S_{\widetilde{v}}| \ge
  \frac{1}{4} \ell$ implies that $e_{i+1} - e_{i}> \frac{1}{4}\ell$
  for every $i \in [1 \dd \widetilde{m} - 1]$ (otherwise, the two
  close occurrences of $\revstr{S_{\widetilde{v}}}$ would yield
  $\per(\revstr{S_{\widetilde{v}}}) = \per(S_{\widetilde{v}}) \le
  \frac14 \ell$).  Consequently, at least $\frac14|\Su| = \frac14
  \widetilde{m}$ length-$\ell$ substrings of $T^\infty[1 \dd )$ have
  disjoint leftmost occurrences.  Since each leftmost occurrence
  crosses or begins at a phrase boundary of the LZ77 parsing of $\T$,
  we conclude that $z \ge \frac14 \widetilde{m}$, and therefore
  $\size(v) \le \size(\widetilde{v}) = \widetilde{m} \le 4z$.

  The reasoning above shows that once a root-to-leaf path encounters a
  light edge connecting a node $u$ at depth at least
  $\lfloor{\frac{1}{2}\ell}\rfloor$ to its child $v$, we have
  $\size(v) \leq 4z$. The number of the remaining light edges on the
  path is at most $\log(\size(v)) \leq 2 + \log z$ by the standard
  bound applied to the subtree of~$\mathcal{T}$ rooted at $v$.
\end{proof}

\begin{theorem}\label{th:tighter-upper-bound}
  Every string of length $n$ satisfies $r=\bigO(z \log z \max(1,\log
  \frac{n}{z \log z}))$.
\end{theorem}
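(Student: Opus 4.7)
The plan is a dyadic decomposition: for each dyadic range $[\ell/2,\ell)$ (with $\ell = 2, 4, 8, \ldots, 2^{\lceil \log n \rceil}$), bound the count $r_\ell$ of irreducible LCP values falling in this range by combining three different estimates and picking the smallest in each regime. The trivial case $n \leq z \log z$ is handled by $r \leq n \leq z \log z = O(z \log z \max(1, \log\tfrac{n}{z \log z}))$, so throughout the main argument I assume $n > z \log z$.

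I would derive three upper bounds on $r_\ell$. Re-examining the charging argument in the proof of \cref{lm:irreduc-sum-3} while keeping the dependency on $|\mathcal{S}_{2\ell}|$ explicit, one extracts $r_\ell = O(|\mathcal{S}_{2\ell}| \log z \,/\, \ell)$; substituting $|\mathcal{S}_{2\ell}| \leq 2 \ell z$ recovers $r_\ell = O(z \log z)$, whereas substituting $|\mathcal{S}_{2\ell}| \leq n$ yields $r_\ell = O(n \log z / \ell)$. Separately, a combinatorial observation---the count of \emph{all} LCP values in $[\ell/2, \ell)$ equals the number of suffix-array boundaries between distinct length-$\ell$ contexts sharing their length-$\ell/2$ prefix, which is $|\mathcal{S}_\ell| - |\mathcal{S}_{\ell/2}|$---gives $r_\ell \leq |\mathcal{S}_\ell| \leq \ell z$. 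Combining the three yields $r_\ell = O\bigl(\min(\ell z,\; z \log z,\; n \log z / \ell)\bigr)$, with the three terms dominating for $\ell \leq \log z$, for $\log z < \ell \leq n/z$, and for $\ell > n/z$, respectively.

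Summing $r_\ell$ over dyadic $\ell$ then gives: $\sum_{\ell \leq \log z} \ell z = O(z \log z)$ by geometric progression; $\log\tfrac{n/z}{\log z} = \log\tfrac{n}{z \log z}$ middle dyadic levels, each contributing $O(z \log z)$, for a total of $O(z \log z \log\tfrac{n}{z \log z})$; and $\sum_{\ell > n/z} n \log z / \ell = O(z \log z)$, again geometrically. Adding the at most $\sigma \leq z$ irreducible LCP values equal to $0$ yields the claimed $O(z \log z \max(1, \log\tfrac{n}{z \log z}))$. The main obstacle is extracting the refined estimate $r_\ell = O(|\mathcal{S}_{2\ell}| \log z / \ell)$: the statement of \cref{lm:irreduc-sum-3} records only its weaker consequence after pre-substituting $|\mathcal{S}_{2\ell}| \leq 2 \ell z$, so the charging argument must be reworked so that the $|\mathcal{S}_{2\ell}|$ dependency is preserved throughout; once that is in hand, the rest is routine geometric-series bookkeeping.
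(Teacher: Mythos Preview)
Your proposal is correct and follows essentially the same three-regime dyadic decomposition as the paper: bounds (2) and (3) match the paper's cases $\log z \le \ell < n/z$ and $\ell \ge n/z$ exactly, including the step of re-opening the proof of \cref{lm:irreduc-sum-3} to replace $|\mathcal{S}_{2\ell}|\le 2\ell z$ by $|\mathcal{S}_{2\ell}|\le n$. The only difference is the small-$\ell$ regime: the paper obtains $r_\ell = \bigO(z\ell)$ by observing that any root-to-leaf path in the trie has at most $\ell$ light edges (a trivial depth bound inside the charging argument), whereas you obtain the same $r_\ell \le |\mathcal{S}_\ell| \le \ell z$ directly from the suffix-array fact that the number of LCP values below $\ell$ equals the number of distinct length-$\ell$ contexts; your route here is slightly more elementary since it bypasses the charging argument entirely for that regime.
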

\begin{proof}
  To obtain tighter bounds on the number of irreducible LCP values in
  $[\ell\dd 2\ell)$, we consider three cases:
  \begin{enumerate}
  \item $\ell \le \log z$. We repeat the proof of
    \cref{lm:irreduc-sum-2}, except that we observe that the number of
    light edges on each root-to-leaf path in $\mathcal{T}$ is bounded
    by $\ell$. Thus, the number of irreducible LCP values in $[\ell\dd
    2\ell)$ is $\bigO(z\ell)$.
  \item $\log z < \ell \le \frac{n}{z}$. We use the bound $\Oh(z\log
    z)$ of \cref{lm:irreduc-sum-3}.
  \item $\frac{n}{z} < \ell$. We repeat the proof of
    \cref{lm:irreduc-sum-3}, except that we observe that
    $|\Su_{3\ell}|\leq n$. Thus, the number of irreducible LCP values
    in $[\ell\dd2\ell)$ is $\bigO(\frac{n\log z}{\ell})$.
  \end{enumerate}
  The above upper bounds, applied for every $\ell=2^i$ with $i\in
  [0\dd\lfloor{\log n}\rfloor]$, yield
  \begin{align*}
    r &\leq \sigma + \sum_{i=0}^{\lfloor \log n \rfloor}
         \left|\left\{j\in [2 \dd n] : \BWT[j{-}1] \neq \BWT[j]
         \text{ and } \LCP[j] \in \left[2^i \dd 2^{i+1}\right)\right\}\right|\\
      &= \sigma + \sum_{i=0}^{\lfloor \log\log z \rfloor}
         \bigO\left(z2^{i} \right) + \sum_{i = \lfloor \log\log
         z \rfloor + 1}^{\left\lfloor\log\frac{n}{z}\right\rfloor}
         \bigO \left(z \log z \right) +
         \sum_{i = \left\lfloor \log \frac{n}{z}
         \right\rfloor + 1}^{\lfloor \log n \rfloor}
         \bigO \left( \frac{n \log z}{2^i} \right)\\
      &= \sigma + \bigO \left(z \log z \right) + \bigO \left(z \log z
          \max\left(1, \log \tfrac{n}{z \log z} \right) \right) +
         \bigO \left(z \log z \right)\\
      &= \bigO \left(z \log z \max\left(1, \log \tfrac{n}{z
         \log z} \right) \right). \qedhere
  \end{align*}
\end{proof}

\subsection{Upper Bound in Terms of \texorpdfstring{$\delta$}{Delta}}\label{sec:upper-bound-for-delta}

Let $\delta=\max_{m=1}^n\frac1m|\Su_m|$ denote the \emph{substring
complexity} of $\T$~\cite{delta}. Note that letting
$\delta=\sup_{m=1}^{\infty}\frac1m|\Su_m|$ is equivalent because
$|\Su_m|\leq n$ holds for $m\ge 1$, which implies $\frac1m|\Su_m|\le
1\leq |\Su_1|$ for $m\ge n$.  We start by noting that $\delta \le z$
since $|\Su_m| \leq mz$ holds for every $m\geq 1$, as observed in the
proof of \cref{lm:irreduc-sum-2}.  Furthermore, $|\Su_m| \leq m\delta$
holds by definition of $\delta$, so $\delta$ can replace $z$ in the
proof of \cref{lm:irreduc-sum-2}.

To adapt the proof \cref{lm:irreduc-sum-3}, we need to generalize the
observation that at most $z$ substrings from $\Su_{\ell}$ may have
disjoint leftmost occurrences in $\T^\infty[1\dd)$.  This observation
is easy since the LZ77 parsing naturally yields a set of $z$ positions
(phrase boundaries) in $\T$.  The substring complexity $\delta$ does
not provide such structure, but as the lemma below implies, we can
replace $z$ by $3\delta$ in the aforementioned observation.  The proof
of \cref{lm:leftmost-occ-delta} is a straightforward modification of
the argument used in~\cite[Lemma 6]{delta}.  For completeness, below
we write down the full reasoning, with technical details tailored to
our notation (e.g., $\Su_{\ell}$ defined in terms of $\T^{\infty}$
rather than $\T$).

\begin{lemma}[{based on~\cite[Lemma 6]{delta}}]\label{lm:leftmost-occ-delta}
  For any positive integer $\ell$, the total number of positions in
  $\T^\infty[1\dd)$ covered by the leftmost occurrences of strings
  from $\Su_{\ell}$ is at most $3\delta\ell$.
\end{lemma}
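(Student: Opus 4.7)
The plan is to adapt the charging argument of~\cite{delta}, Lemma~10 to the periodic setting of $T^\infty$. Let $P_\ell$ denote the positions in $T^\infty[1\dd)$ that are leftmost occurrences of some string in $\mathcal{S}_\ell$; since each $s \in \mathcal{S}_\ell$ has a unique leftmost occurrence, $|P_\ell| = |\mathcal{S}_\ell| \le \delta\ell$, and the covered set is $L_\ell := P_\ell + [0, \ell-1]$. Writing $P_\ell = \{p_1 < \cdots < p_k\}$ and $g_i := p_{i+1} - p_i$, we have $|L_\ell| = \ell + \sum_{i=1}^{k-1} \min(\ell, g_i)$, and the goal is to bound this by $3\delta\ell$.

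For each $j \in L_\ell$ I define the anchor $p(j) := \min\{p \in P_\ell : p \le j \le p+\ell-1\}$ and offset $t(j) := j - p(j) \in [0, \ell-1]$, and I map $j$ to the substring $T^\infty[p(j)\dd p(j)+\ell+t(j)-1]$ of length $\ell + t(j)$. Since leftmost-position sets satisfy $P_\ell \subseteq P_{\ell+t}$ (extending a length-$\ell$ leftmost occurrence at $p$ keeps it leftmost), this is a leftmost length-$(\ell+t(j))$ occurrence at $p(j)$, and different $(p(j), t(j))$ pairs give distinct images---either through differing length, or, for equal lengths, through differing length-$\ell$ prefix, since distinct elements of $P_\ell$ are leftmost for distinct length-$\ell$ strings. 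This yields the preliminary bound $|L_\ell| \le \sum_{m=\ell}^{2\ell-1}|\mathcal{S}_m|$, which via $|\mathcal{S}_m| \le m\delta$ is only $O(\delta\ell^2)$ and therefore too loose.

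To sharpen to $3\delta\ell$ (which matches $|\mathcal{S}_\ell| + |\mathcal{S}_{2\ell}|$ under the substring-complexity bounds), I decompose $|L_\ell|$ into small-gap and large-gap contributions: the terms from $p_i$ with $g_i < \ell$ telescope across maximal clusters of consecutive $P_\ell$ elements and will be controlled via $|\mathcal{S}_\ell| \le \delta\ell$, while the terms from $p_i$ with $g_i \ge \ell$ together with the trailing $\ell$ sum to $\ell R$ for the number $R$ of clusters, to be controlled via $|\mathcal{S}_{2\ell}| \le 2\delta\ell$. The cluster-start-to-witness map $p \mapsto T^\infty[p\dd p+2\ell-1]$ is injective (by the leftmost property of its length-$\ell$ prefix), establishing distinctness of the witnesses.

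The hard part is bounding $\ell R$ sharply: the crude injectivity above only gives $R \le |\mathcal{S}_{2\ell}| \le 2\delta\ell$, hence $\ell R = O(\delta\ell^2)$, which is useless. The refinement from~\cite{delta}, Lemma~10 must collapse this by a factor of $\ell$, and I expect to achieve it by a more delicate charging that, rather than assigning one length-$2\ell$ witness per cluster start, pairs each offset $t \in [0, \ell-1]$ within a cluster's coverage to a distinct element of $\mathcal{S}_{2\ell}$. The periodic structure of $T^\infty$ (which ensures $P_\ell \subseteq [1,n]$ and that every length-$2\ell$ substring occurs there) is what lets the original $T$-based argument of~\cite{delta} carry over essentially verbatim.
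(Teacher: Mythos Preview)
Your proposal is incomplete: you explicitly identify ``the hard part'' (bounding $\ell R$) and then do not resolve it, only stating that you ``expect'' a refinement to work. As written, neither of your two approaches yields the claimed $3\delta\ell$ bound.

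The missing idea is that you are anchoring at the wrong point. Rather than mapping a covered position $j$ to a string starting at the leftmost-occurrence position $p(j)$ (which forces variable lengths or a cluster analysis), map $j$ directly to the length-$2\ell$ substring centered at $j$, namely $S_j := T^\infty[j-\ell+1\dd j+\ell]$. If $j$ is covered by the leftmost occurrence of some $X\in\mathcal{S}_\ell$ at position $p$, then that occurrence of $X$ lies entirely inside $S_j$; hence any earlier occurrence of $S_j$ in $T^\infty[1\dd)$ would contain an earlier occurrence of $X$, a contradiction. So the occurrence of $S_j$ at $j-\ell+1$ is itself leftmost, and the map $j\mapsto S_j$ is injective into $\mathcal{S}_{2\ell}$. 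This immediately gives $|L_\ell\setminus[1\dd\ell)| \le |\mathcal{S}_{2\ell}| \le 2\delta\ell$; adding at most $\ell-1<\delta\ell$ for the initial positions (using $\delta\ge 1$) yields the bound.

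This is exactly the paper's one-paragraph argument. Your first attempt was close in spirit but, by insisting that the image string \emph{start} at an element of $P_\ell$, you were led to varying lengths and the loose $\sum_{m=\ell}^{2\ell-1}|\mathcal{S}_m|$ bound. Dropping that constraint and using a fixed window length around the covered position removes the need for any gap/cluster decomposition.
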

\begin{proof}
  Let $C$ denote the set of positions in $\T^\infty[1 \dd)$ covered by
  the leftmost occurrences of strings from $\Su_{\ell}$, and let $C' =
  C \setminus [1 \dd \ell)$.  For any $i \in C'$ denote $S_i =
  \T^{\infty}[i - \ell + 1 \allowbreak\dd i + \ell]$, and let
  $\Su=\{S_i : i \in C'\} \subseteq \Su_{2\ell}$.  We will show that
  $|\Su| = |C'|$.  Let $i \in C'$. First, observe that, due to $i\ge
  \ell$, the fragment $S_i$ is entirely contained in $\T^\infty[1
  \dd)$. Furthermore, by definition, $S_i$ contains the leftmost
  occurrence of some $S \in \Su_{\ell}$. Thus, this occurrence of
  $S_i$ in $\T^\infty[1 \dd)$ must also be the leftmost one in
  $\T^\infty[1 \dd)$. Consequently, the substrings $S_i$ for $i \in
  C'$ are distinct.

  We have thus shown that $|C'| = |\Su| \leq |\Su_{2\ell}|$.  Since
  $|\Su_{2\ell}| \leq 2\delta\ell$ holds by definition of $\delta$, we
  obtain $|C| < |C'|+\ell \leq |\Su_{2\ell}|+\ell \leq (2\delta+1)\ell
  \leq 3\delta\ell$.
\end{proof}

\begin{lemma}\label{lm:irreduc-sum-3-delta}
  For every $\ell\in [1\dd n]$, the number of irreducible LCP values
  in $[\ell\dd 2\ell)$ is $\bigO(\delta \log \delta)$.
\end{lemma}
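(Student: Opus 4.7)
The strategy is to transfer the proof of Lemma~\ref{lm:irreduc-sum-3} essentially verbatim, replacing every appeal to the bound $|\mathcal{S}_m|\le mz$ (and in particular $|\mathcal{S}_{2\ell}|\le 2z\ell$, which is what multiplies the per-string charge into a total) by the analogous bound $|\mathcal{S}_m|\le m\delta$ that follows directly from the definition of $\delta$. With this substitution, the final inequality in the cost calculation becomes $2|\mathcal{S}_{2\ell}|\cdot \bigO(\log\delta)=\bigO(\delta\,\ell\log\delta)$, provided we can also replace the upper bound on the number of light edges encountered at depth $>\tfrac14\ell$ on each root-to-leaf path in $\mathcal{T}$ by $\bigO(\log\delta)$.

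The only step in the proof of Lemma~\ref{lm:irreduc-sum-3} that genuinely depends on the LZ77 phrase structure (rather than merely on the count $|\mathcal{S}_m|$) is Step~2 of the inner case analysis, which concludes that the size $m$ of the light child encountered at depth $>\tfrac14\ell$ satisfies $m\le 8z$. There, assuming $m>8z$ produced, via the non-periodicity of $S_{\widetilde v}$ and the Periodicity Lemma, at least $\lceil \widetilde m/8\rceil$ pairwise non-overlapping leftmost occurrences of length-$\ell$ substrings in $\T^\infty[1\dd)$, contradicting the fact that each such occurrence must cross or begin at one of the $z$ LZ77 phrase boundaries. To adapt this to $\delta$, I would keep the entire periodicity argument unchanged and instead invoke Lemma~\ref{lm:leftmost-occ-delta}: the $\Omega(\widetilde m)$ pairwise non-overlapping length-$\ell$ leftmost occurrences together cover $\Omega(\widetilde m\,\ell)$ positions of $\T^\infty[1\dd)$ that lie inside leftmost occurrences of strings from $\mathcal{S}_\ell$, so Lemma~\ref{lm:leftmost-occ-delta} forces $\Omega(\widetilde m\,\ell)\le 3\delta\ell$, i.e., $\widetilde m=\bigO(\delta)$. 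This is the direct analogue of $m\le 8z$, with $z$ replaced by $\delta$ up to absolute constants.

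Given the modified size bound $m=\bigO(\delta)$ for the first light child at depth $>\tfrac14\ell$, the trivial ``$\log$ of the remaining subtree size'' estimate used at the end of Lemma~\ref{lm:irreduc-sum-3} yields $\bigO(\log\delta)$ further light edges, and hence $\bigO(\log\delta)$ charged positions per string in $\mathcal{S}_{2\ell}$. The reduction from arbitrary $\ell$ to irreducible LCP values in $[\tfrac12\ell\dd\ell)$ via a geometric sum (exactly as at the beginning of the proof of Lemma~\ref{lm:irreduc-sum-3}) then upgrades the per-range bound $\bigO(\delta\,\ell\log\delta)$ into the stated bound on the sum of all irreducible LCP values below $\ell$.

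The main obstacle is precisely the replacement in Step~2. The $z$-based argument is combinatorial and rigid (each new leftmost occurrence consumes a fresh phrase boundary), whereas Lemma~\ref{lm:leftmost-occ-delta} only gives an aggregate coverage bound. The periodicity-based separation $e_{i+1}-e_i>\tfrac18\ell$ of consecutive endpoints is exactly what is needed to translate a count of $\widetilde m$ leftmost occurrences into a proportional $\Omega(\widetilde m\,\ell)$ lower bound on the number of positions they cover, thereby making Lemma~\ref{lm:leftmost-occ-delta} applicable.
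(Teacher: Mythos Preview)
Your proposal is correct and follows essentially the same approach as the paper: the only nontrivial change to the proof of Lemma~\ref{lm:irreduc-sum-3} is in Step~2, where the separation $e_{i+1}-e_i>\tfrac18\ell$ of leftmost-occurrence endpoints is used to lower-bound the number of positions covered by $\lceil \widetilde m/8\rceil\cdot \ell$, and then Lemma~\ref{lm:leftmost-occ-delta} yields the contradiction with $m>24\delta$ (the paper states the explicit constant $24$). Everything else---the $|\mathcal{S}_{2\ell}|\le 2\delta\ell$ substitution, the $\bigO(\log\delta)$ light-edge count, and the geometric reduction to the range $[\tfrac12\ell\dd\ell)$---is exactly as you describe.
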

\begin{proof}
  Compared to the proof of \cref{lm:irreduc-sum-3}, we use the bound
  $|\Su_{3\ell}|\le 3\ell\delta$ instead of $|\Su_{3\ell}|\le 3\ell
  z$.  The only other modification needed is that, for every light
  edge connecting a node $u$ of $\mathcal{T}$ at depth at least
  $\lfloor{\frac12 \ell}\rfloor$ to its child $v$, we need to prove
  $\size (v) =\bigO(\delta)$.

  Let $\widetilde{m} \ge \size(v)$ be defined as in the proof of
  \cref{lm:irreduc-sum-3}.  Recall that we have identified at least
  $\frac{\widetilde{m}}{4}$ strings in $\Su_{\ell}$ whose leftmost
  occurrences in $\T^\infty[1\dd)$ are disjoint.  By
  \cref{lm:leftmost-occ-delta}, there are at most $3\delta$ such
  substrings.  Thus, $\size(v)\le \widetilde{m}\le 12\delta$.
\end{proof}\vspace{1ex}

By replacing the thresholds $\log z$ and $\frac{n}{z}$ with $\log
\delta$ and $\frac{n}{\delta}$, respectively, in the proof of
\cref{th:tighter-upper-bound}, we immediately obtain a bound in terms
of $\delta$.

\begin{theorem}\label{th:tight-upper-bound}
  Every string of length $n$ satisfies $r=\bigO(\delta \log \delta
  \max(1,\log \frac{n}{\delta \log \delta}))$.
\end{theorem}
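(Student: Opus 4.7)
The plan is to mirror the proof of Theorem~\ref{th:tighter-upper-bound} almost verbatim, substituting $\delta$ for $z$ and replacing each application of the LZ77-based bound $|\mathcal{S}_m|\le mz$ by the $\delta$-based bound $|\mathcal{S}_m|\le m\delta$, which holds by definition of $\delta$. The engine powering this substitution is Lemma~\ref{lm:irreduc-sum-3-delta}, the $\delta$-analog of Lemma~\ref{lm:irreduc-sum-3}, which has already been established using Lemma~\ref{lm:leftmost-occ-delta} in place of the direct LZ77 phrase-boundary argument.

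For each $i\in[0\dd \lfloor\log n\rfloor]$, I will bound the sum of irreducible LCP values falling in $[2^i\dd 2^{i+1})$ by a three-way case analysis according to where $\ell:=2^{i+1}$ sits relative to the thresholds $\log\delta$ and $n/\delta$. When $\ell<\log\delta$, I mimic the argument of Lemma~\ref{lm:irreduc-sum-2} with $|\mathcal{S}_{2\ell}|\le 2\delta\ell$, but exploit the trivial bound of $\ell$ on the number of light edges along any root-to-leaf path of the trie $\mathcal{T}$ (whose depth is $\ell$), yielding $\bigO(\delta\ell^2)$. When $\log\delta\le \ell<n/\delta$, I apply Lemma~\ref{lm:irreduc-sum-3-delta} directly for a bound of $\bigO(\delta\ell\log\delta)$. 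Finally, when $\ell\ge n/\delta$, I rerun the proof of Lemma~\ref{lm:irreduc-sum-3-delta} but replace the bound $|\mathcal{S}_{2\ell}|\le 2\delta\ell$ with the uniform bound $|\mathcal{S}_{2\ell}|\le n$, which is valid because $\T^\infty$ has period $n$ and therefore admits at most $n$ distinct substrings of any fixed length; this yields $\bigO(n\log\delta)$.

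Dividing each such bound by $2^i$ (a lower bound on the magnitude of an irreducible LCP value contributing to it) converts it into a bound on the number of irreducible LCP values---equivalently, runs---in that range. Using $\sigma\le\delta$ (since $\delta\ge|\mathcal{S}_1|=\sigma$) and summing over the three regimes, I will obtain
\[
r\le \sigma+\bigO(\delta\log\delta)+\bigO\!\left(\delta\log\delta\,\max\!\bigl(1,\log\tfrac{n}{\delta\log\delta}\bigr)\right)+\bigO(\delta\log\delta),
\]
where the first and third contributions collapse by geometric-series arguments and the middle contribution dominates, giving the claimed bound.

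I expect no serious obstacle. The only potentially delicate step is Case~3, because $\delta$ does not directly provide a phrase-boundary structure analogous to LZ77. However, Lemma~\ref{lm:leftmost-occ-delta} has already absorbed this difficulty inside the derivation of Lemma~\ref{lm:irreduc-sum-3-delta}, so the remaining work is essentially a mechanical substitution into the case analysis and summation of Theorem~\ref{th:tighter-upper-bound}.
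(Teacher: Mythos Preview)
Your proposal is correct and matches the paper's own proof essentially line for line: the paper also proves Theorem~\ref{th:tight-upper-bound} by replacing the thresholds $\log z$ and $n/z$ in the three-case analysis of Theorem~\ref{th:tighter-upper-bound} with $\log\delta$ and $n/\delta$, invoking Lemma~\ref{lm:irreduc-sum-3-delta} in place of Lemma~\ref{lm:irreduc-sum-3}, and using $|\mathcal{S}_{2\ell}|\le n$ in the large-$\ell$ regime. Your observation that $\sigma\le\delta$ (needed to absorb the $\sigma$ term) is also exactly what the substitution requires.
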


Note that the trivial upper bound $r = \Oh(n)$ is tighter if $\delta
\log \delta > n$.  In~\cref{sec:lower-bound}, we show that a
combination of these two upper bounds is asymptotically optimal. For
this, we construct tight examples in which the values $\delta$ cover
the whole spectrum between $\Oh(1)$ and $\Omega(n)$.

\subsection{Further Upper Bounds}
By combining \cref{th:tight-upper-bound} with known properties of
the substring complexity $\delta$, we obtain the first bound relating
the number of BWT runs in the string and its reverse. No such bounds
(even polynomial in $r\log n$) were known before.

\begin{corollary}\label{cor:rr}
  If $r$ and $\bar{r}$ denote the number of runs in the BWT of a
  length-$n$ text and its reverse, respectively, then $\bar{r} =
  \bigO(r \log r \max(1, \frac{n}{r \log r}))$.
\end{corollary}
\begin{proof}
  Since the value of $\delta$ is the same for the text and its
  reverse, \cref{th:tight-upper-bound} yields $\bar{r}=\bigO(\delta
  \log \delta \max(1,\log \frac{n}{\delta \log \delta}))$.
  Combining~\cite[Theorem~3.9]{attractors} and~\cite[Lemma~$2$]{delta}
  gives $\delta\leq r$.  Consequently, we obtain $\bar{r}= \bigO(r
  \log r \max(1, \frac{n}{r \log r}))$.
\end{proof}\vspace{1ex}

Our technique also lets us strengthen the bound of
\cref{lm:irreduc-sum} on the sum of irreducible LCP values.

\begin{theorem}\label{th:irreduc-sum-upper-bound}
  For every string of length $n$, the sum of all irreducible LCP
  values is $\bigO(n \log \delta)$.
\end{theorem}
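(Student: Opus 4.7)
The plan is to split the sum of irreducible LCP values at the threshold $\ell^{\star}:=\lceil n/\delta\rceil$ and bound short and long contributions separately. For irreducible LCP values strictly below $\ell^{\star}$, Lemma~\ref{lm:irreduc-sum-3-delta} applied with $\ell=\ell^{\star}$ immediately yields a contribution of $\bigO(\delta\ell^{\star}\log\delta)=\bigO(n\log\delta)$, matching the target.

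For irreducible LCP values of size at least $\ell^{\star}$, the first step is to bound their count $N$ via a refinement of the proof of Lemma~\ref{lm:irreduc-sum-3-delta}. That proof shows that the sum of irreducible LCP values in a dyadic range $[\ell/2,\ell)$ is $\bigO(|\mathcal{S}_{2\ell}|\log\delta)$. Using the improved inequality $|\mathcal{S}_{2\ell}|\le n$ (valid because $\T^{\infty}$ has period $n$), each dyadic range with $\ell\ge\ell^{\star}$ contributes at most $\bigO(n\log\delta)$ to the sum and, after dividing by $\ell/2$, at most $\bigO((n/\ell)\log\delta)$ to the count. Summing the count geometrically over $\ell=\ell^{\star},2\ell^{\star},\ldots,n$ yields $N=\bigO(\delta\log\delta)$.

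The second step is to invoke a subset form of Theorem~\ref{lm:irreduc-sum}: for any subset $I$ of irreducible positions, $\sum_{i\in I}\LCP[i]=\bigO(n\log|I|)$. This is expected to follow by reproducing the charging argument of K\"arkk\"ainen et al.\ restricted to the $|I|$-many values in $I$, so that each text character is charged at most $\log|I|$ units of cost. Applied to $I:=\{i:\LCP[i]\ge\ell^{\star}\text{ and }\BWT[i]\neq\BWT[i-1]\}$ of size $N$, it gives $\bigO(n\log N)=\bigO(n(\log\delta+\log\log\delta))=\bigO(n\log\delta)$. Summing both parts proves the theorem.

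The main obstacle is justifying the subset form of Theorem~\ref{lm:irreduc-sum}. If it does not come out cleanly from the proof in \cite{KarkkainenKP16}, my fallback is to re-derive the long-part bound directly by adapting the trie-based charging of Lemma~\ref{lm:irreduc-sum-3-delta}: measuring subtree sizes in the trie only through leaves corresponding to positions in $I$ bounds the number of light edges on each root-to-leaf path by $\log N=\bigO(\log\delta)$, which combined with $|\mathcal{S}_{2\ell}|\le n$ gives the $\bigO(n\log\delta)$ bound on the long part without relying on a separate subset-generalization step.
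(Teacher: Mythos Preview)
Your treatment of the short part (irreducible LCP values below $\ell^{\star}=\lceil n/\delta\rceil$) is fine and matches the paper: a single invocation of Lemma~\ref{lm:irreduc-sum-3-delta} with $\ell=\ell^{\star}$ gives $\bigO(\delta\,\ell^{\star}\log\delta)=\bigO(n\log\delta)$.

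The long part, however, has a genuine gap. Your main route relies on a ``subset form'' of Theorem~\ref{lm:irreduc-sum}, namely $\sum_{i\in I}\LCP[i]=\bigO(n\log|I|)$ for an arbitrary subset $I$ of irreducible positions. This does not follow from the published proof in~\cite{KarkkainenKP16}, and as stated it is false already for $|I|=1$. More importantly, the proof there is a global accounting over \emph{all} irreducible positions; restricting to a subset does not obviously reduce the per-position charge from $\log r$ to $\log|I|$. Your fallback---redefining subtree sizes in the trie via ``leaves corresponding to positions in $I$''---is underspecified: trie leaves correspond to distinct length-$\ell$ strings, not to positions in $I$, and the charging for different units $u$ of the same $\LCP[i]$ lands in \emph{different} root-to-leaf paths, so there is no single leaf attached to a given $i\in I$. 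It is not clear how to define an $I$-size so that the halving argument goes through.

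The paper avoids both obstacles by handling all long values with a \emph{single} trie (depth $n$, charging to $\mathcal{S}_{2n}$ with $|\mathcal{S}_{2n}|\le n$) rather than dyadic ranges, and by lowering the depth threshold in the Lemma~\ref{lm:irreduc-sum-3-delta} argument from $\ell/4$ to $n/(2\delta)$. Each irreducible $\LCP[i]=k\in(n/\delta,n)$ is charged $k-\tfrac{n}{2\delta}\ge k/2$ units, so only light edges at depth $>n/(2\delta)$ matter. The first such light edge has standard subtree size $\le 8\delta$: a non-periodic suffix of length $>n/(2\delta)$ forces leftmost occurrences of the corresponding length-$n$ strings to be more than $n/(8\delta)$ apart, and only $8\delta$ such occurrences fit in $[1\dd n]$. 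Hence $\bigO(\log\delta)$ light edges remain, giving the $\bigO(n\log\delta)$ bound directly---no subset theorem and no redefinition of sizes is needed. Your count $N=\bigO(\delta\log\delta)$ is correct but ends up being unnecessary.
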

\begin{proof}
  As for the irreducible LCP values not exceeding
  $\frac{n}{\delta}$, \cref{lm:irreduc-sum-3-delta}
  immediately yields
  \begin{align*}
    \sum_{\substack{\LCP[j]\le n/\delta\\\BWT[j{-}1]\neq\BWT[j]}} \LCP[j]
    &< \sum_{i=0}^{\lfloor\log \frac{n}{\delta}\rfloor}
        2^{i+1}\cdot \left|\left\{j : \BWT[j-1]\ne \BWT[j]\text{ and }
        \LCP[j]\in [2^i\dd 2^{i+1})\right\}\right|
     \\&= \sum_{i=0}^{\lfloor\log \frac{n}{\delta}\rfloor}
         2^{i+1}\cdot \Oh\left(\delta \log \delta\right)
      = \bigO\left(n \log \delta\right).
  \end{align*}
  For irreducible LCP values larger than $\frac{n}{\delta}$, a simple
  approach would be to separately consider $\Theta(\log\delta)$ ranges
  of LCP values, $[2^{i} \dd 2^{i+1})$ for $i \in [\lfloor \log
  \frac{n}{\delta} \rfloor \dd \lfloor \log n \rfloor]$, and bound the
  number of irreducible LCP values in each range by $\Oh(\frac{n \log
  \delta}{2^i})$, as in the proof of \cref{th:tight-upper-bound}.
  Unfortunately, this only gives an overall bound of $\bigO(n \log^2
  \delta)$ on the sum of irreducible $\LCP$ values.

  Instead, we employ a similar scoring as in the proof of
  \cref{lm:irreduc-sum-3-delta}, except that we handle all the
  irreducible values $\LCP[i]>\frac{n}{\delta}$ together.  With each
  such value, we associate $\LCP[i]-\lfloor \frac{n}{2\delta}\rfloor
  \ge \frac12\LCP[i]$ units of cost, and we charge them to individual
  characters of strings in $\Su_{n}$.  We then show that each of the
  strings in $\Su_n$ is charged at most $6+2\log \delta$ times.
  Consequently, the sum of irreducible LCP values larger than
  $\frac{n}{\delta}$ does not exceed twice the total cost, which is
  bounded by
  \[
    2|\Su_{n}|\cdot (6+2\log \delta) = 4n(3+\log \delta).
  \]

  The cost assignment is based on the trie $\mathcal{T}$ of all
  reversed strings in $\Su_n$.  Let $\LCP[i]>\frac{n}{\delta}$ be an
  irreducible $\LCP$ value, and let $j_0 = \SA[i-1]$ and $j_1 =
  \SA[i]$ so that $\LCP[i]=\LCE(j_0,j_1)$.  Since $\LCP[i]$ is
  irreducible, we have $T^\infty[j_0-1]=\BWT[i-1]\ne
  \BWT[i]=T^\infty[j_1-1]$.  For $k\in (n-\LCP[i]\dd n-\lfloor
  \frac{n}{2\delta}\rfloor]$, the $k$th unit of the cost associated
  with $\LCP[i]$ is charged to the $k$th character ($T^\infty[j_t-1]$)
  of the string $T^\infty[j_t-k\dd j_t-k+n)\in \Su_n$, where
  $t\in\{0,1\}$ is such that $\size(v_{\revstr{\T^\infty[j_t-1\dd
  j_t-k+n)}})\le \size(v_{\revstr{\T^\infty[j_{1-t}-1\dd
  j_{1-t}-k+n)}})$.  Note that $\T^\infty[j_0\dd j_0-k+n) =
  T^\infty[j_1\dd j_1-k+n)$ holds due to $\LCE(j_0,j_1)>n-k$,
  so the edge from $v_{\revstr{\T^\infty[j_t-1\dd j_t-k+n)}}$ to its
  parent $v_{\revstr{\T^\infty[j_t\dd j_t-k+n)}}$ is light.

  For every $S\in \Su_n$, a single position $S[k]$, with $k\in [1\dd
  n-\lfloor \frac{n}{2\delta}\rfloor]$, can be charged at most
  twice.  This is because there is a unique position $j\in [1\dd n]$
  such that $S = T^\infty[j-k\dd j-k+n)$, and $S[k]$ can be charged
  for $\LCP[i]$ only if $j=\SA[i-1]$ or $j=\SA[i]$.

  It remains to prove that at most $3+\log \delta$ characters of each
  $S\in \Su_n$ can be charged.  For this, we note that every charged
  character $S[k]$, with $k\in [1\dd n-\lfloor
  \frac{n}{2\delta}\rfloor]$, corresponds to a light edge on the
  path from the root of $\mathcal{T}$ to the leaf $v_{\revstr{S}}$
  connecting a node $u$ at depth at least
  $\lfloor\frac{n}{2\delta}\rfloor$ to its child~$v$.  Let us fix the
  highest such pair $(u,v)$.  It is not difficult to see that the
  argument from the proof \cref{lm:irreduc-sum-3} yields at least
  $\frac{\size(v)}{4\delta}$ strings in $\Su_n$ with disjoint leftmost
  occurrences in $\T^\infty[1\dd )$.  However, all such occurrences
  overlap, so $\size(v)\le 4\delta$ and, consequently, the number of
  the remaining light edges on the path from the root of $\mathcal{T}$
  to $v_{\revstr{S}}$ is at most $\log(\size(v))\le 2+\log \delta$.
  Including the edge from $u$ to $v$, we obtain a bound of $3+\log
  \delta$ chargeable characters in total.
\end{proof}

Due to $\delta \leq r$, the presented upper bound is always
(asymptotically) at least as strong as the bound of
\cref{lm:irreduc-sum}; furthermore, it can be strictly stronger since
$\log \delta=o(\log r)$ is possible when $\delta = \log^{o(1)} n$.
In~\cref{sec:lower-bound}, we construct strings proving tightness of
the new bound for the values $\delta$ ranging from $\Oh(1)$ to
$\Omega(n)$.

\section{Lower Bounds}\label{sec:lower-bound}

\newcommand{\bin}{\mathrm{bin}}

In this section, we present examples showing asymptotic tightness of
the upper bounds in \cref{sec:upper-bound-for-delta}.

\subsection{Lower Bound for the Number of BWT Runs}\label{sec:lower-bound-r}

We give two constructions, corresponding to the bound of
\cref{th:tight-upper-bound} and the trivial bound $r=\Oh(n)$,
respectively.

For $\ell\geq 1$, let $\bin_{\ell}(x) \in
\{\mathtt{0},\mathtt{1}\}^{\ell}$ be the binary representation of
$x\in[0\dd 2^\ell)$, and let $\bin^{-1}_\ell$ be the inverse mapping.

\begin{lemma}\label{lm:small-delta-lower-bound}
  For all integers $\ell\ge 2$ and $K\ge 1$, the length $n$, the
  substring complexity $\delta$, and the number of runs $r$ in the BWT
  of a string $\T_{\ell,K}\in
  \{\$,\mathtt{0},\mathtt{1},\mathtt{2}\}^+$, defined with
  \[
    \T_{\ell,K}=\left(\bigodot_{k=0}^{K-1}\bigodot_{i=0}^{2^\ell-1}
    \left(\mathtt{2}^{2^k\ell}\cdot
      {\bin}_{\ell}(i)\right)\right)\cdot\$,
  \]
  satisfy $n = \Theta(2^{K+\ell} \ell)$, $\delta = \Theta(2^\ell)$,
  and $r = \Omega(2^\ell \ell K)$.
\end{lemma}
\begin{proof}
  Let $B_{\ell,k} := \bigodot_{i=0}^{2^\ell-1}
  \left(\mathtt{2}^{2^k\ell} \cdot \bin_{\ell}(i)\right)$ so that
  $\T_{\ell,K} $ is the concatenation of $B_{\ell,k}$ for $k \in [0
  \dd K)$, followed by a $\$$.  Note that $|B_{\ell,k}| = 2^\ell
  \cdot (2^k\ell + \ell) = \Theta(2^{k+\ell}\ell)$, so the length of
  $\T_{\ell,K}$ satisfies
  \[
    n = 1+\sum_{k=0}^{K-1}|B_{\ell,k}|
      = 1+\sum_{k=0}^{K-1}\Theta\left(2^{k+\ell}\ell\right)
      = \Theta(2^{K+\ell}\ell).
  \]

  To show that $\frac{1}{m} |\Su_m| = \bigO(2^\ell)$ holds for every
  $m\in [1\dd n]$, we consider three cases:
  \begin{itemize}
  \item $m\leq \ell$. Observe that any occurrence of $S \in \Su_m \cap
    \{\mathtt{0},\mathtt{1},\mathtt{2}\}^{*}$ in $\T_{\ell,K}$
    overlaps at most one maximal block of $\mathtt{2}$s. It is easy to
    see that there are $\bigO(2^m)$ such strings $S$. Adding $m$
    substrings containing the symbol $\$$ thus yields $|\Su_m| =
    m+\bigO(2^m) = \bigO(2^\ell)$.
  \item $\ell< m\leq 2^{K-1}\ell$. Observe that any length-$m$
    substring of $B_{\ell,k+1}$ with $m \leq 2^k\ell$ is also a
    substring of $B_{\ell,k}$.  More generally, if a length-$m$
    substring of $\T_{\ell,K}$ does not contain a $\$$, then its
    leftmost occurrence starts within $B_{\ell,k}$ for some $k \in
    [0\dd {\lceil \log \frac{m}{\ell}\rceil}]$. Hence,
    \[
      |\Su_m| \leq m + \sum_{ k = 0}^{\lceil \log \frac{m}{\ell}
        \rceil} | B_{\ell,k}| = m +\sum_{k =0}^{\lceil \log \frac{m}{\ell}
        \rceil} \Oh\left(2^{k+\ell}\ell \right) =
      \bigO(\tfrac{m}{\ell}\cdot 2^\ell \cdot \ell)=\Oh(m\cdot
      2^\ell).
    \]
  \item $2^{K-1}\ell< m$. Then, $\frac{1}{m} |\Su_m| \le \frac{n}{m} =
    \bigO\big(\frac{2^{K+\ell}\ell}{2^K\ell}\big) = \bigO(2^\ell)$.
  \end{itemize}

  To show $\delta = \Omega(2^\ell)$, observe that, for any $i \in [0
  \dd 2^\ell - 1)$ and $t \in [0 \dd \ell)$, the string $2^t \cdot
  \bin_\ell(i) \cdot 2^{\ell - t} \in \{\mathtt{0}, \mathtt{1},
  \mathtt{2}\}^{2\ell}$ is a substring of $\T_{\ell, K}$. Since all
  these substrings are different, we have $|\Su_{2\ell}| \geq(2^\ell -
  1) \cdot \ell$ and $\delta \ge \frac{1}{2\ell}|\Su_{2\ell}| \ge
  2^{\ell-1}-\frac{1}{2} = \Omega(2^\ell)$.

  As for the lower bound on $r$, we start by observing that, for any
  strings $S,P\in\{\mathtt{0}, \mathtt{1}\}^{+}$ such that
  $|S|+|P|=\ell$ and any integer $k\in [0\dd K)$, the string
  $S\mathtt{2}^{2^k\ell}P$ is a substring of $B_{\ell,k}^{\infty}$.
  Let us define $s = \bin^{-1}_{|S|}(S)$ and $p =
  \bin^{-1}_{|P|}(P)$, and consider three cases:
  \begin{itemize}
  \item $S \neq \mathtt{1}^{|S|}$. Let $x=p2^{|S|} + s < 2^\ell - 1$
    and observe that $S\mathtt{2}^{2^k\ell}P$ is a substring of
    $\bin_{\ell}(x) \cdot \mathtt{2}^{2^k\ell} \cdot \bin_{\ell}(x +
    1) = \bin_{|P|}(p)\cdot \bin_{|S|}(s)\cdot \mathtt{2}^{2^k\ell}
    \cdot \bin_{|P|}(p)\cdot\bin_{|S|}(s+1)$.
  \item $S = \mathtt{1}^{|S|}$ and $P \neq \mathtt{0}^{|P|}$. Let $x =
    (p-1)2^{|S|}+s< 2^\ell-1$ and observe that
    $S\mathtt{2}^{2^k\ell}P$ is a substring of $\bin_{\ell}(x) \cdot
    \mathtt{2}^{2^k\ell} \cdot \bin_{\ell}(x + 1) =
    \bin_{|P|}(p-1)\cdot \bin_{|S|}(s)\cdot \mathtt{2}^{2^k\ell} \cdot
    \bin_{|P|}(p)\cdot\bin_{|S|}(0)$.
  \item $S=\mathtt{1}^{|S|}$ and $P=\mathtt{0}^{|P|}$. Then,
    $S\mathtt{2}^{2^k\ell}P$ is a substring of $\bin_{\ell}(2^\ell-1)
    \cdot \mathtt{2}^{2^k\ell} \cdot \bin_{\ell}(0)$.
  \end{itemize}

  Note that in all but the last case, $S\mathtt{2}^{2^k\ell}P$ is a
  substring of $B_{\ell,k}$. However, since every $B_{\ell,k}$ ends
  with $\bin_{\ell}(2^\ell-1)$, $S\mathtt{2}^{2^k\ell}P$ also occurs
  in $\T_{\ell,K}$ unless $S=\mathtt{1}^{|S|}$, $P=\mathtt{0}^{|P|}$,
  and $k=0$.  The number of remaining triples $(S,P,k)$ is equal to
  $\ell-1$ times the number of maximal blocks of $\mathtt{2}$s in
  $\T_{\ell,K}$ excluding the prefix $\mathtt{2}^\ell$ of
  $\T_{\ell,K}$. As distinct triples~yield distinct substrings, no
  substring $S\mathtt{2}^{2^k\ell}P$ occurs in $\T_{\ell,K}$ twice.

  We will focus on triples satisfying $S \neq \mathtt{1}^{|S|}$ and $P
  \neq \mathtt{1}^{|P|}$. By the discussion above, for each such
  triple, $S\mathtt{2}^{2^k\ell}P$ occurs exactly once in
  $\T_{\ell,K}$.  We will now show that every such triple $(S,P,k)$
  corresponds to a different run in the BWT of $\T_{\ell,K}$. Thus,
  computing the number of considered triples yields
  \[
    r \geq K\cdot \sum_{t=1}^{\ell-1}(2^t-1)(2^{\ell-t}-1) =
    K\left(2^\ell(\ell - 3) + \ell + 3 \right)=
    \Omega \left(2^\ell \ell K \right).
  \]

  Let $i$ be the rank of the suffix of $\T_{\ell,K}$ having
  $S\mathtt{2}^{2^k\ell}P$ as a prefix.  We will show that $i + 1$ is
  the rank of the suffix prefixed with $S\mathtt{2}^{2^k\ell}P'$,
  where $P' = \bin_{|P|}(p + 1)$.  Suppose that these suffixes are not
  adjacent in $\SA$.  Any suffix that is in between them must start
  with $S\mathtt{2}^{2^k\ell}c$ for some $c \in
  \{\mathtt{0},\mathtt{1}\}$. Furthermore, since every maximal block
  of $\mathtt{2}$s in $\T_{\ell,K}$ is followed by some $\widehat{P}
  \in \{\mathtt{0},\mathtt{1}\}^{|P|}$, and there is exactly one
  occurrence of $S \mathtt{2}^{2^k\ell} \widehat{P}$ in $\T_{\ell,K}$
  for every possible $\widehat{P}$, there would exist $\widehat{P}$
  such that $P\prec \widehat{P} \prec P'$.  This is clearly impossible
  since it implies $p < \bin^{-1}_{|P|}(\widehat{P}) < p + 1$.
  
  Observe now that, by $S \neq \mathtt{1}^{|S|}$, the substrings
  $S\mathtt{2}^{2^k\ell}P$ and $S\mathtt{2}^{2^k\ell}P'$ are preceded
  in $\T_{\ell,K}$ by $P=\bin_{|P|}(p)$ and $P'=\bin_{|P|}(p+1)$,
  respectively.  Therefore, $\BWT[i + 1] = (p + 1) \bmod 2 \neq p
  \bmod 2 = \BWT[i]$.
\end{proof}

\begin{lemma}\label{lm:large-delta-lower-bound}
  For all integers $\ell\ge 2$ and $\Delta\in \Omega(2^\ell)\cap
  \Oh(2^\ell \ell)$, the length $n$, the substring
  complexity~$\delta$, and the number of runs $r$ in the BWT of a
  string $\T'_{\ell,\Delta}\in
  \{\$_1,\ldots,\$_{\Delta},\mathtt{0},\mathtt{1},\mathtt{2}\}^+$,
  defined with
  \[
    \T'_{\ell,\Delta} =
    \left(\bigodot_{i=0}^{2^\ell-1}\left(\mathtt{2}^{\ell} \cdot
    \bin_\ell(i)\right)\right) \cdot \$_1\$_2\cdots\$_{\Delta},
  \]
  satisfy $n = \Theta(2^\ell\ell)$, $\delta = \Theta(\Delta)$, and $r
  = \Omega(2^\ell\ell)$.
\end{lemma}
\begin{proof}
  The length satisfies $n=2^{\ell}\cdot 2\ell + \Delta =
  \Theta(2^\ell\ell)$.

  To show that $\frac{1}{m} |\Su_m| = \bigO(\Delta)$ holds for every
  $m\in [1\dd n]$, we consider two cases:
  \begin{itemize}
  \item $m\leq \ell \log_3 2$.  Trivially, $|\Su_m \cap
    \{\mathtt{0},\mathtt{1},\mathtt{2}\}^{*}|\le 3^m$.  Adding
    $m+\Delta-1$ substrings containing the symbols $\$_i$ yields
    \[
      \tfrac{1}{m}|\Su_m| = \bigO\left(\tfrac{3^m+\Delta}{m}\right) =
      \bigO\big(\tfrac{2^\ell+\Delta}{m}\big) =
      \Oh\left(\tfrac{\Delta}{m}\right)=\Oh(\Delta).
    \]
  \item $m > \ell \log_3 2$. Then, $\frac{1}{m} |\Su_m| \le
    \frac{n}{m}= \bigO\big(\frac{2^\ell\ell}{m} \big) =
    \Oh(2^\ell)=\bigO(\Delta)$.
  \end{itemize}
  To show $\delta = \Omega(\Delta)$, we observe that
  $|\Su_1|=\Delta+3$.

  To bound $r$ from below, as in the proof of
  \cref{lm:small-delta-lower-bound}, we observe that, for any strings
  $S,P\in\{\mathtt{0},\mathtt{1}\}^{+}$ such that $|S|+|P|=\ell$, the
  string $S\mathtt{2}^{\ell}P$ occurs exactly once in
  $\T'_{\ell,\Delta}$ unless $S=\mathtt{1}^{|S|}$ and
  $P=\mathtt{0}^{|P|}$.  Moreover, every string $S\mathtt{2}^{\ell}P$
  with $S\ne \mathtt{1}^{|S|}$ and $P\ne \mathtt{1}^{|P|}$ corresponds
  to a different run in the BWT.  Hence, counting such strings yields
  $r \ge 2^\ell\cdot (\ell-3)+\ell + 3 = \Omega(2^\ell\ell)$.
\end{proof}

Combining \cref{lm:small-delta-lower-bound,lm:large-delta-lower-bound},
we obtain the following lower bound.

\begin{theorem}\label{thm:lower}
  For every $N\ge 1$ and $\Delta\in [1\dd N]$, there exists a string
  $\T$ whose length $n$, substring complexity $\delta$, and number of
  runs $r$ in the BWT satisfy $n = \Theta(N)$, $\delta =
  \Theta(\Delta)$, and $r = \Theta(\min(n,\delta \log \delta \max(1,
  \log \frac{n}{\delta \log \delta})))$.
\end{theorem}
\begin{proof}
  If $\Delta \log \Delta < \frac12N$, we set $\T = \T_{\ell,K}$ with
  $\ell = \max(2,\lceil{\log \Delta}\rceil)$ and $K = \big\lceil{\log
  \frac{N}{\Delta \log \Delta}}\big\rceil$.  By
  \cref{lm:small-delta-lower-bound}, we thus have
  \begin{align*}
    n &= \Theta(2^{K+\ell}\ell)=\Theta\left(\tfrac{N}{\Delta\log
      \Delta} \Delta \log \Delta\right)=\Theta(N),\\ \delta &=
    \Theta(2^\ell) = \Theta(\Delta),\\ r &= \Omega(2^\ell \ell K)=
    \Omega\left(\Delta \log \Delta \log\tfrac{N}{\Delta \log \Delta}
    \right) = \Omega\left(\delta \log \delta \max(1,\log
    \tfrac{n}{\delta \log \delta})\right).
  \end{align*}

  If $\Delta \log \Delta \ge \frac12N$, we set $\T =
  \T'_{\ell,\Delta}$ with $\ell = \max(2,\lceil{\log \frac{N}{\log
  N}}\rceil)$. Note that $\Delta = \Omega(\tfrac{N}{\log
  N})=\Omega(2^\ell)$ and $\Delta = \Oh(N)=\Oh(2^\ell  \ell)$,
  so the assumptions of \cref{lm:large-delta-lower-bound} are
  satisfied.  We thus have
  \begin{align*}
    n &= \Theta(2^\ell \ell)=\Theta\left(\tfrac{N}{\log N}\log
    \tfrac{N}{\log N}\right)=\Theta(N),\\ \delta &=
    \Theta(\Delta),\\ r &= \Omega(2^\ell \ell)=
    \Omega\left(\tfrac{N}{\log N}\log \tfrac{N}{\log N}\right) =
    \Omega(N)=\Omega(n).
  \end{align*}

  In both cases, either the upper bound of \cref{th:tight-upper-bound}
  or the bound $r=\Oh(n)$ is tight.
\end{proof}

\subsection{Lower Bound for the Sum of Irreducible LCP Values}\label{sec:lower-bound-sum}

The same strings show that our upper bound $\bigO(n \log \delta)$ on
the sum of irreducible LCP values is also tight.

\begin{lemma}\label{lm:small-irreduc-sum-lower-bound}
  For all integers $\ell\ge 2$ and $K\ge 1$, the sum of irreducible
  LCP values of $\T_{\ell,K}$ is $r_{\Sigma} = \Omega(2^{K+\ell}
  \ell^2)$.
\end{lemma}
\begin{proof}
  In the proof of \cref{lm:small-delta-lower-bound}, we showed that,
  for every $k\in [0\dd K)$ and $S, P \in \{\mathtt{0},
  \mathtt{1}\}^{+} \setminus \{ \mathtt{1} \}^{+}$ such that $|S| +
  |P| = \ell$, the symbols preceding suffixes starting with
  $S\mathtt{2}^{2^k\ell}P \in \{\mathtt{0}, \mathtt{1},\mathtt{2}
  \}^{+}$ and $S\mathtt{2}^{2^k\ell}P' \in \{\mathtt{0},
  \mathtt{1},\mathtt{2}\}^{+}$, where $P' =
  \bin_{|P|}({\bin^{-1}_{|P|}(P) + 1})$, are distinct, and that these
  suffixes are adjacent lexicographically.  This implies that the
  corresponding irreducible LCP value is at least $2^k\ell$.  With $k
  = K-1$, noting that there are $\Omega(2^\ell \ell)$ choices for $S$
  and $P$, we obtain $ r_{\Sigma} = 2^{K-1}\ell \cdot \Omega(2^\ell
  \ell) = \Omega(2^{K+\ell}\ell^2)$.
\end{proof}

Analogously to \cref{lm:large-delta-lower-bound,thm:lower}, we also
get the following results:
\begin{lemma}\label{lm:large-irreduc-sum-lower-bound}
  For all integers $\ell\ge 2$ and $\Delta\in \Omega(2^\ell)\cap
  \Oh(2^\ell \ell)$, the sum of irreducible LCP values of
  $\T'_{\ell,\Delta}$ is $r_{\Sigma} = \Omega(2^{\ell}\ell^2)$.
\end{lemma}

\begin{theorem}\label{th:irreduc-sum-lower-bound}
  For every $N\ge 1$ and $\Delta\in [1\dd N]$, there exists a string
  $\T$ whose length $n$, substring complexity $\delta$, and sum
  $r_{\Sigma}$ of irreducible LCP values satisfy $n = \Theta(N)$,
  $\delta = \Theta(\Delta)$, and $r_{\Sigma}=\Theta(n \log \delta)$.
\end{theorem}

\section{Converting LZ77 to Run Length BWT}\label{sec:algorithm}                                                         

In this section, we describe an algorithm that, given the LZ77 parsing
of a text $\T \in \Sigma^{n}$, computes its run-length compressed BWT
in $\bigO(z \polylog n)$ time. We start with an overview that explains
the key concepts. Next, we present two new data structures utilized in
our algorithm: the compressed string synchronizing set
(\cref{sec:csss}) and the compressed wavelet tree
(\cref{sec:cwt}). The conversion algorithm is then developed in
\cref{sec:algorithm-details}.

For any substring $Y$ of $\T^{\infty}$, we define $\lpos(Y) = \min\{i
\in [1 \dd n] : \T^{\infty}[i \dd i \,{+}\, |Y|) = Y\}$. If $Y$ is a
substring of $\revstr{\T}^{\infty}$, we define $\rpos(Y)$ by replacing
$\T^{\infty}$ in the definition with $\revstr{\T}^{\infty}$. We say
that a substring $Y$ of $\T^\infty$ is \emph{left-maximal} if there
exist distinct symbols $a, b \in \Sigma$ such that the strings $aY$
and $bY$ are also substrings of $\T^\infty$.  The following
definition, assuming $\Sigma \cap \mathbb{N} = \emptyset$, plays a key
role in our construction.

\begin{definition}[{\fontfamily{lmss}\selectfont BWT modulo $\ell$}{\fontfamily{lmr}\selectfont}]\label{def:bwtm}
  Let $\T \in \Sigma^{n}$, $\ell\ge 1$ be an integer, and $Y_i = \T^\infty[\SA[i]
  \dd \SA[i] + \ell)$ for $i \in [1 \dd n]$. We define the string
  $\BWT_{\ell} \in (\Sigma \cup \mathbb{N})^n$, called the \emph{BWT
  modulo $\ell$} (of $\T$), as follows. For $i \in [1 \dd n]$,
  \[
    \BWT_{\ell}[i] =
    \begin{dcases*}
      \lpos(Y_i)
        & if  $Y_i$ is left-maximal, \\
      \BWT[i]
        & otherwise.
    \end{dcases*}
  \]
\end{definition}

The algorithm runs in $k = \lceil\log n \rceil$ rounds. For $q \in [0
\dd k)$, the input to the $q$th round is ${\rm RL}(\BWT_{\ell})$,
where $\ell = 2^q$, and the output is ${\rm RL}(\BWT_{2\ell})$.  At
the end of the algorithm, we have ${\rm RL}(\BWT_{2^k}) = {\rm
RL}(\BWT)$ because $X \in \mathcal{S}_{2^k}$ is never left-maximal
for $2^k \ge n$.

Informally, in round $q$, we are given a (run-length compressed)
subsequence of $\BWT$ that can be determined based on sorting the
suffixes only up to their prefixes of length $2^q$.  $\BWT_{\ell}[b
\dd e] \in \Sigma^{+}$ implies $\BWT_{\ell+1}[b \dd e] \in
\Sigma^{+}$ (because a prefix of a left-maximal substring is
left-maximal). Hence, these subsequences need not be modified until
the end of the algorithm (except possibly merging their runs with
adjacent runs). For the remaining positions, $\BWT_{\ell}$ identifies
the (leftmost occurrences of) substrings to be inspected in the $q$th
round with the aim of replacing their corresponding runs in
$\BWT_{\ell}$ with previously unknown $\BWT$ symbols (as defined in
$\BWT_{2\ell}$).

We call a block $\BWT[b \dd e]$ \emph{uniform} if all symbols in
$\BWT[b \dd e]$ are equal, and \emph{non-uniform} otherwise.  The
following lemma ensures feasibility of the above construction.

\begin{lemma}\label{lm:bwtm-size}
  For any integer $\ell \geq 1$, it holds $|{\rm RL}(\BWT_{\ell})| < 2r$.
\end{lemma}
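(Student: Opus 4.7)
The plan is to analyze $\BWT_{\ell}$ by partitioning $[1\dd n]$ into maximal contiguous ranges, which I will call \emph{$X$-segments}, on which the string $X_i$ is constant. On each $X$-segment with common value $X$, the array $\BWT_{\ell}$ is constant: if $X$ is not left-maximal, then every occurrence of $X$ in $\T^{\infty}$ shares the same preceding symbol, so $\BWT[i]$ (and hence $\BWT_{\ell}[i]$) is constant on the segment; if $X$ is left-maximal, all positions in the segment share the same leftmost-occurrence integer by definition. Consequently, every run boundary of $\BWT_{\ell}$ lies at a boundary between two consecutive $X$-segments.

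Next I would classify each $X$-segment boundary (a position $i$ with $X_i \neq X_{i+1}$) according to the types of the two adjacent segments. Let $L$ denote the number of left-maximal $X$-segments, let $B_2$ denote the number of boundaries touching at least one left-maximal segment (so that $B_2 \leq 2L$, since each left-maximal segment contributes to at most two boundaries), and let $B_1'$ denote the number of boundaries between two non-left-maximal segments whose preceding symbols differ. Using the assumption $\Sigma \cap \mathbb{N} = \emptyset$ together with the fact that distinct substrings have distinct leftmost-occurrence positions, the four possible type combinations of adjacent segments show that an $X$-segment boundary fails to be a $\BWT_{\ell}$ boundary exactly when both segments are non-left-maximal and share the same preceding symbol. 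Therefore $|{\rm RL}(\BWT_{\ell})| = 1 + B_2 + B_1'$.

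The main obstacle is that $B_2$ is not directly bounded by $r$, since boundaries in $B_2$ need not correspond to $\BWT$ run boundaries. The key observation is that left-maximality of $X$ forces at least two distinct preceding symbols among the occurrences of $X$ starting in $[1\dd n]$ (by periodicity of $\T^{\infty}$, every occurrence in $\mathbb{Z}$ has a representative in $[1\dd n]$ with the same preceding symbol), so every left-maximal $X$-segment contains at least one $\BWT$ run boundary strictly in its interior. Letting $B_A$ count such interior $\BWT$ boundaries, we obtain $B_A \geq L$, and the remaining $B_B = (r - 1) - B_A$ $\BWT$ boundaries must lie at $X$-segment boundaries; in particular, every boundary counted by $B_1'$ is also a $\BWT$ boundary, so $B_1' \leq B_B = r - 1 - B_A \leq r - 1 - L$. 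Combining these estimates yields
\[
  |{\rm RL}(\BWT_{\ell})| \;=\; 1 + B_2 + B_1' \;\leq\; 1 + 2L + (r - 1 - L) \;=\; r + L \;\leq\; 2r - 1 \;<\; 2r,
\]
as required.
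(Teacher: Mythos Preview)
Your proof is correct. It reaches the same bound $2r-1$ via the same key observation the paper uses: each left-maximal $X$-segment is non-uniform in $\BWT$ and hence ``contains'' at least one $\BWT$ run boundary, so $L\le r-1$.

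The paper organizes the argument slightly differently: instead of counting boundaries, it directly bounds the two kinds of runs in $\BWT_{\ell}$. A run $c_i^{\lambda_i}$ with $c_i\in\mathbb{N}$ is exactly one of your left-maximal $X$-segments (distinct $X$'s have distinct leftmost positions), and since the corresponding $\BWT$ block is non-uniform there are at most $r-1$ of these. For runs with $c_i\in\Sigma$, the paper argues that two such runs cannot lie inside the same $\BWT$ run (either they are adjacent and contradict $c_i\ne c_{i+1}$, or there is an $\mathbb{N}$-run between them, forcing a non-uniform $\BWT$ block), so there are at most $r$ of them. Your boundary-counting framework $1+B_2+B_1'\le 1+2L+(r-1-L)$ is a more granular accounting of the same phenomenon; both arguments hinge on the inequality $L\le r-1$, and neither gains anything over the other in strength or generality.
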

\begin{proof}
  Denote ${\rm RL}(\BWT_{\ell}) = ((c_1, \lambda_1), \ldots, (c_h,
  \lambda_h))$, letting $\lambda_0 = 0$.  By definition of
  $\BWT_{\ell}$, if $c_i \in \mathbb{N}$, then the block
  $\BWT(\lambda_{i-1} \dd \lambda_i]$ is non-uniform. Thus, there are
  at most $r - 1$ runs of symbols from $\mathbb{N}$ in $\BWT_{\ell}$.

  On the other hand, $c_i \in \Sigma$ and $c_j \in \Sigma$, with
  $i<j$, cannot both belong to the same run in $\BWT$. If this was
  true, then either $c_{i+1} \in \Sigma$ (which implies $c_{i+1} =
  c_i$, contradicting the definition of ${\rm RL}(\BWT_{\ell})$), or
  $c_{i+1} \in \mathbb{N}$, which is impossible since then
  $\BWT(\lambda_i \dd \lambda_{i + 1}]$ is non-uniform. Thus, there
  are at most $r$ runs of symbols from $\Sigma$ in $\BWT_{\ell}$.
\end{proof}

\subsection{Compressed String Synchronizing Sets}\label{sec:csss}

Our algorithm builds on the notion of \emph{string synchronizing
  sets}, recently introduced in~\cite{sss}.  Synchronizing sets are
one of the most powerful techniques for sampling suffixes. As
demonstrated in~\cite{phdtomek}, in the uncompressed setting, they are
the key in obtaining time-optimal solutions to many problems, and
their further applications are still being
discovered~\cite{circfactor, sss}.  In \mbox{this section}, we introduce a
notion of \emph{compressed string synchronizing sets}.  Our
construction is the first implementation of synchronizing sets in the
compressed setting and thus of independent interest.

We start with the definition of basic synchronizing sets.

\begin{definition}[{\fontfamily{lmss}\selectfont $\tau$-synchronizing
      set~\cite{sss}}{\fontfamily{lmr}\selectfont}]\label{def:sss}
  Let $T$ be a string of length $n$, and let $\tau \in [1\dd
    \lfloor\frac{n}{2}\rfloor]$. A set $\S \subseteq [1 \dd n - 2\tau
    + 1]$ is called a \emph{$\tau$-synchronizing set} of $T$ if it
  satisfies the following \emph{consistency} and \emph{density}
  conditions:
  \begin{enumerate}
  \item If $\T[i \dd i + 2\tau) = \T[j\dd j + 2\tau)$, then $i \in \S$
      if and only if $j \in \S$ (for $i, j \in [1 \dd n - 2\tau
      + 1]$),
  \item $\S\cap[i \dd i + \tau)=\emptyset$ if and only if $\per(\T[i
      \dd i + 3\tau - 2]) \leq \frac{1}{3} \tau$ (for $i \in [1 \dd n
      - 3\tau + 2]$).
  \end{enumerate}
\end{definition}

In most applications, we want to minimize $|\S|$. Observe that the
Thue--Morse sequence $\T_{\rm TM}$~\cite{thue1} does not contain any
\emph{cube} (substring of the form $W^3$). Thus, by density condition,
any synchronizing set $\S$ of the length-$n$ prefix of $\T_{\rm TM}$
satisfies $|\S| = \Omega \left( \frac{n}{\tau} \right )$ unless $n <
3\tau$.  Therefore, we cannot hope to achieve an upper
bound improving in the worst case upon the following one.

\begin{theorem}[\cite{sss}]\label{th:sss-existence-and-construction}
  For any string $T$ of length $n$ and parameter $\tau \in [1\dd
    \lfloor\frac{n}{2}\rfloor]$, there exists a $\tau$-synchronizing
  set $\S$ of size $|\S| = \bigO \left( \frac{n}{\tau}
  \right)$. Moreover, such $\S$ can be (deterministically) constructed
  in $\bigO(n)$ time.
\end{theorem}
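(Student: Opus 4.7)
\medskip
\noindent
\textbf{Proof plan.} The consistency condition is automatic if I define $S$ as $\{j : P(\T[j \dd j + 2\tau))\}$ for some predicate $P$ on length-$2\tau$ strings, so my plan is to engineer such a $P$ that produces the required density and sparsity. The natural candidate is a local-minimizer rule: fix a total order $\prec$ on $\Sigma^{\tau}$, and for a window $W \in \Sigma^{2\tau}$ consider its $\tau+1$ length-$\tau$ substrings $W[t \dd t+\tau)$ for $t \in [1 \dd \tau+1]$. Declare $P(W) = 1$ iff (a) $\per(W) > \tau/3$ and (b) a canonical ``strict local minimum'' condition among these length-$\tau$ substrings is satisfied, chosen so that it fires only $\bigO(n/\tau)$ times globally (a naive ``minimum-at-position-$1$'' rule fails, e.g., on a strictly increasing alphabet).

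The sparsity bound $|S| = \bigO(n/\tau)$ then follows by a separation argument: two elements of $S$ within distance $<\tau$ of each other would share a length-$2\tau$ window carrying two distinct strict local minima among its length-$\tau$ substrings, which contradicts the strictness built into clause (b); hence consecutive elements of $S$ are at distance $\geq \tau$. The density condition is the main obstacle. For the forward direction, if $\per(\T[i \dd i+3\tau-2]) > \tau/3$, then \cref{lm:per} (Fine--Wilf) forbids two equal length-$\tau$ substrings at offset $\leq \tau/3$ inside this $(3\tau-1)$-window, so its global $\prec$-minimum length-$\tau$ substring is unique; I would show that this unique minimum witnesses the strict local-minimum condition for some length-$2\tau$ sub-window $W_j$ with $j \in [i \dd i+\tau-1]$, placing $j$ in $S$. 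Conversely, $\per(\T[i \dd i+3\tau-2]) \leq \tau/3$ propagates to every length-$2\tau$ sub-window, killing clause (a) and emptying $S \cap [i \dd i+\tau)$.

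For the $\bigO(n)$-time construction, I would compute the $\prec$-ranks of all length-$\tau$ substrings of $\T$ from its suffix array in $\bigO(n)$ time, evaluate the periodicity test $\per(W) \leq \tau/3$ via constant-time LCE queries on $\T$ (enhanced suffix array with range-minimum on the LCP array), and then identify each element of $S$ by scanning with a monotone-deque sliding-window minimum over the ranks in a single linear pass. The delicate point throughout is the mismatch between the $2\tau$-window visible to the consistency condition and the $(3\tau-1)$-window used by the density condition; Fine--Wilf is precisely what ensures the relevant periodicity information in the longer window is already encoded in each length-$2\tau$ sub-window, so a purely $2\tau$-local predicate can detect and react to it.
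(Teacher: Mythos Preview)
The paper does not prove this theorem; it is quoted from~\cite{sss}, though the construction is recalled in the proof of \cref{th:csss-existence-nonperiodic}: fix an injective $\mathrm{id}$ on length-$\tau$ substrings and put $i\in S$ iff $\min\{\mathrm{id}(j):j\in[i\dd i+\tau]\}$ is attained at an endpoint $j\in\{i,i+\tau\}$ (with periodic positions excluded from the minimum in the general case, \cref{th:csss-existence}). Your proposal is in the same local-minimizer family, but your sparsity argument is wrong. You claim that consecutive elements of $S$ lie at distance $\ge\tau$; this is incompatible with the density condition you must simultaneously satisfy. On a text with no short periods, density forces $S\cap[i\dd i+\tau)\ne\emptyset$ for every $i$, so consecutive gaps are $\le\tau$; combined with your separation, every gap would be exactly $\tau$ and $S$ would be a rigid arithmetic progression---something a predicate depending only on a local $2\tau$-window cannot produce on all non-periodic inputs. (Your justification also has two nearby elements ``sharing a length-$2\tau$ window''; they do not, their windows merely overlap.)

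The actual $\bigO(n/\tau)$ bound is an amortized count, not a separation. With a random $\mathrm{id}$, Fine--Wilf guarantees at least $\tau/3$ distinct length-$\tau$ substrings in each window $[i\dd i+\tau]$, each equally likely to carry the minimum, so $\mathbb{P}(i\in S)\le 6/\tau$ and $\mathbb{E}[|S|]\le 6n/\tau$; the deterministic $\bigO(n)$-time construction in~\cite{sss} derandomizes this counting argument. Your clause~(b), left as ``a canonical strict-local-minimum condition chosen so that it fires only $\bigO(n/\tau)$ times globally,'' is precisely the content to be proved, and the separation mechanism you offer for it cannot work.
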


Storing $\S$ for compressible strings presents the following
challenge: As shown in~\cite{MantaciRRRS19}, a length-$n$ prefix of
$\T_{\rm TM}$ satisfies $z = \bigO(\log n)$ and yet, as discussed
above, every $\tau$-synchronizing set of $\T_{\rm TM}$ satisfies $|\S|
= \Omega \left( \frac{n}{\tau} \right)$. Thus, although $|\S|$
\emph{can} be smaller than $\frac{n}{\tau}$, the assumption $z \ll n$
does not imply $|\S| \ll n$, preventing us from keeping plain $\S$
when $\tau = o(\tfrac{n}{z})$.

We thus exploit a different property of compressible strings: their
substrings $Y$ satisfy $\lpos(Y)\in \bigcup_{j=1}^z(e_j-|Y|\dd e_j]$,
where $e_j$ is the last position of the $j$th phrase in the LZ77
parsing of $\T$.  By consistency of $\S$, it suffices to store
$\bigcup_{j = 1}^{z}\S \cap (e_j {-} 2\tau \dd e_j]$. To check if
$i\in \S$, we then locate $i'=\lpos(T[i\dd i+2\tau))$ and check if $i'
\in \bigcup_{j = 1}^{z}\S \cap (e_j {-} 2\tau \dd e_j]$.  This
motivates the following (more general) definition.

\begin{definition}[{\fontfamily{lmss}\selectfont Compressed
      $\tau$-synchronizing set}{\fontfamily{lmr}\selectfont}]\label{def:csss}
  Let $\S$ be a $\tau$-synchronizing set of \mbox{string $\T[1\dd n]$} for
  some $\tau \in [1\dd \lfloor\frac{n}{2}\rfloor]$, and, for every
  $j\in[1 \dd z]$, let $e_j$ denote the last position of the $j$th
  phrase in the LZ77 parsing of $\T$. For $k\in\mathbb{N}_{\ge 2}$, we
  define the \emph{compressed representation} of $\S$ as
  \[
    \text{\rm comp}_k(\S) := \bigcup_{j = 1}^{z} \S \cap
    {\Big(} e_j {-} k\tau \dd e_j {+} k\tau {\Big]}.
  \]
\end{definition}

Next, we prove that every text $\T$ has a synchronizing set $\S$ with
a small compressed representation, and we show how to efficiently
compute such $\S$ from the LZ77 parsing of~$\T$.

\subsubsection{The Nonperiodic Case}\label{sec:csss-nonperiodic}

We initially assume that $\per(\T[i \dd i {+} \tau)) > \frac{1}{3}
\tau$ holds for all $i\in[1 \dd n - \tau + 1]$.

\begin{theorem}\label{th:csss-existence-nonperiodic}
  Let $\T$ be a string of length $n$ and let $\tau \in [1\dd
  \lfloor\frac{n}{2}\rfloor]$. Assume that $\per(\T[i \dd i + \tau))
  > \frac{1}{3}\tau$ holds for all $i \in [1 \dd n - \tau +
  1]$. Then, for every $k \in \mathbb{N}_{\ge 2}$, there exists a
  $\tau$-synchronizing set $\S$ of $\T$ with $\text{\rm comp}_k(\S)
  \leq 12kz$.
\end{theorem}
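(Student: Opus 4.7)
The plan is to invoke the $\tau$-synchronizing set construction of Theorem~\ref{th:sss-existence-and-construction} (due to \cite{sss}) and to prove that, under the non-periodicity hypothesis, consecutive elements of the resulting $S$ are spaced more than $\tau/3$ apart. A union bound over the $z$ intervals in $\text{\rm comp}_k(S)$ then yields the claimed bound. Concretely, let $S$ be the $\tau$-synchronizing set of $\T$ produced by Theorem~\ref{th:sss-existence-and-construction}. Under our hypothesis $\per(\T[i\dd i+\tau)) > \tau/3$ for all $i \in [1\dd n-\tau+1]$, the density condition forces $S \cap [i\dd i+\tau) \neq \emptyset$ for every $i \in [1\dd n-2\tau+1]$, so consecutive elements of $S$ differ by at most $\tau$.

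The key step is a matching lower bound on this spacing: any two $i_1 < i_2$ in $S$ satisfy $i_2 - i_1 > \tau/3$. Suppose, for contradiction, that $d := i_2 - i_1 \leq \tau/3$. The construction of Theorem~\ref{th:sss-existence-and-construction} selects $i \in S$ according to a canonical anchor (e.g.\ the lex-smallest length-$\tau$ factor) inside the window $\T[i\dd i+2\tau)$, which makes membership a function of $\T[i\dd i+2\tau)$ and is precisely how consistency is guaranteed. Inspecting the anchor rule, two selected positions $i_1, i_2$ at distance $d \leq \tau/3 < \tau$ can coexist only if their length-$2\tau$ windows coincide, i.e., $\T[i_1\dd i_1+2\tau) = \T[i_2\dd i_2+2\tau)$. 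This equality implies that $\T[i_1\dd i_1+2\tau+d)$ has period $d$, so $\per(\T[i_1\dd i_1+\tau)) \leq d \leq \tau/3$, contradicting the non-periodicity hypothesis.

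Given the spacing $> \tau/3$, any integer interval of length $L$ meets $S$ in at most $3L/\tau + 1$ positions. Each of the $z$ intervals $(e_i - 3k\tau + 2 \dd e_i + k\tau)$ has length less than $4k\tau$, so contributes at most $12k + 1 \leq 24k$ elements (since $k \geq 1$). A union bound then yields $|\text{\rm comp}_k(S)| \leq 24kz$, as required. The main obstacle is the spacing lemma: it relies on a structural property \emph{inside} the \cite{sss} construction rather than on the abstract synchronizing-set axioms. If extracting it from the existing proof proves awkward, the fallback is to re-present the construction here in terms of a window-canonical anchor and verify explicitly that two nearby selections force the length-$2\tau$ windows themselves to coincide.
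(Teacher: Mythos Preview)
Your argument has a genuine gap at the spacing lemma. You assert that if $i_1,i_2\in S$ with $i_2-i_1\le \tau/3$ then the windows $\T[i_1\dd i_1+2\tau)$ and $\T[i_2\dd i_2+2\tau)$ must coincide, and you justify this by invoking consistency. But consistency only says that \emph{equal windows imply equal membership}; it says nothing in the converse direction. With the actual construction $S=\{i:\min\{\mathrm{id}(j):j\in[i\dd i+\tau]\}\in\{\mathrm{id}(i),\mathrm{id}(i+\tau)\}\}$, adjacent positions can both lie in $S$: if $\mathrm{id}(p)$ and $\mathrm{id}(p+1)$ are the two smallest values in a long stretch (with $\mathrm{id}(p)<\mathrm{id}(p+1)$), then $p\in S$ because the minimum over $[p\dd p+\tau]$ sits at the left endpoint, and $p+1\in S$ because $p$ is no longer in the window $[p+1\dd p+1+\tau]$ and the new minimum $\mathrm{id}(p+1)$ again sits at the left endpoint. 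The windows $\T[p\dd p+2\tau)$ and $\T[p+1\dd p+1+2\tau)$ need not be equal at all. So the worst-case spacing $>\tau/3$ is false for this construction, and the global bound $|S|=\bigO(n/\tau)$ from Theorem~\ref{th:sss-existence-and-construction} does not by itself give the \emph{local} bound $|S\cap I|=\bigO(|I|/\tau)$ that your union bound needs.

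The paper's proof obtains exactly this local control, but in expectation rather than in the worst case. One takes $\mathrm{id}(i)=\pi(\T[i\dd i+\tau))$ for a uniformly random bijection $\pi:\mathcal{S}_\tau\to[0\dd|\mathcal{S}_\tau|)$. The non-periodicity hypothesis guarantees that each window $\T[i\dd i+2\tau-1]$ contains at least $\tau/3$ \emph{distinct} length-$\tau$ substrings (two equal ones at distance $\le\tau/3$ would create a short period), so $\mathbb{P}(i\in S)\le 6/\tau$. Linearity of expectation then gives $\mathbb{E}[|S\cap(i\dd i+\ell)|]\le 6\ell/\tau$ for every interval, and summing over the $z$ intervals of length $<4k\tau$ in $\mathrm{comp}_k(S)$ yields $\mathbb{E}[|\mathrm{comp}_k(S)|]\le 24kz$; existence follows by the probabilistic method. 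If you want to salvage your approach, you would need to prove a deterministic local density bound for the specific $S$ built in~\cite{sss}, not a minimum-spacing bound; the latter is simply not true.
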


\newcommand{\id}{\mathrm{id}}

\begin{proof}
  Let $h:\Su_{\tau} \to [0,1]$ be a function mapping strings to real
  values in $[0,1]$ independently and uniformly at random.  Note that
  $h$ is collision-free almost surely (with probability 1).  Let us
  define $\id:[1\dd n-\tau+1]\to [0,1]$ with $\id(i) = h(T[i\dd
  i+\tau))$.  Observe that (almost surely) $\id$ is an
  \emph{identifier function}, that is, $\id(i)=\id(j)$ holds if and
  only if $\T[i \dd i {+} \tau) = \T[j \dd j {+}
  \tau)$. In~\cite[Lemma 8.2]{sss}, it is proved that then
  \[
    \S := \left \{ i \in [1 \dd n {-} 2\tau {+} 1] : \min \left\{
      \id(j) : j \in [i \dd i {+} \tau] \right\} \in \left\{
      \id(i), \id(i {+} \tau) \right\} \right\}
  \] 
  is a $\tau$-synchronizing set of $\T$.  Moreover, $\mathbb{E}
  \left[ |\S| \right] = \bigO \left(\frac{n}{\tau} \right)$.  To see
  this, observe that, for $j, j' \in [i \dd i + \tau]$, $\id(j) =
  \id(j')$ implies $|j' - j| > \frac{1}{3} \tau$ (otherwise, assuming
  $j < j'$, we have $\per(\T[j \dd j' + \tau)) \leq \frac{1}{3}
  \tau$, which contradicts $\per(\T[j \dd {j + \tau})) >
  \frac{1}{3} \tau$). Thus, $\T[i \dd i + 2\tau - 1]$ contains $d
  \geq \frac{\tau}{3}$ distinct length-$\tau$ substrings.  Since
  each has equal chance of having the smallest id, we have $\Pr
  \left[ i \in \S \right] \leq \frac{2}{d} \leq \frac{6}{\tau}$, and
  consequently, by linearity of expectation, $\mathbb{E} \left[ |\S|
  \right] \leq \frac{6n}{\tau}$. More generally, $\mathbb{E}
  \left[ |\S \cap (i \dd i + \ell]| \right] \leq \frac{6\ell}{\tau}$,
  and therefore
  \[
    \mathbb{E} {\big[} \left| {\rm comp}_k(\S) \right| {\big]} =
    \mathbb{E} \left[ \left| \bigcup_{j=1}^{z}\S \cap \left( e_j {-}
      k\tau \dd e_j {+} k\tau \right] \right| \right]
      \leq \sum_{j = 1}^{z} \mathbb{E} {\big[} \left| \S \cap \left(
      e_j {-} k\tau \dd e_j {+} k\tau \right] \right|
      {\big]} \leq 12kz.
  \]
  In particular, $\left| {\rm comp}_k(\S) \right|\le 12kz$ holds for
  some $h$.
\end{proof}

The above proof does not lead to an efficient algorithm for
constructing $\S$ as it relies on the random assignment of unique
names to \emph{all} substrings in $\Su_{\tau}$ and, since $|\Su_{\tau}| =
\Theta(z\tau)$ holds in the worst case, we cannot hope to achieve
$\bigO(z \polylog n)$ time this way. Next, we prove that
assigning unique names to all elements of $\Su_{\tau}$ is in
fact not necessary.

\begin{lemma}\label{lm:sss-whp}
  Let $\T$ be a string of length $n$ and let $\tau \in [1\dd
  \lfloor\frac{n}{2}\rfloor]$.  Assume that $\per(\T[i \dd i + \tau))
  > \frac{1}{3} \tau$ holds for all $i \in [1 \dd n - \tau + 1]$.
  For $i \in [1 \dd n - \tau + 1]$, let $\id(i) := h(\T[i \dd i +
  \tau))$, where $h: \Su_{\tau} \rightarrow [0,1]$ assigns
  independent and uniformly random values.

  If $\kappa = \max(1,\tfrac{\tau}{3c \ln n})$ for $c> 1$, then, with
  probability at least $1-n^{1-c}$, all positions $i\in [1\dd
    n-2\tau+1]$ satisfy $\min\{\id(j) : j \in [i \dd i + \tau]\}\le
  \frac{1}{\kappa}$.
\end{lemma}
\begin{proof}
  Recall from the proof of \cref{th:csss-existence-nonperiodic} that
  $\T[i \dd i + 2\tau-1]$ contains $d \geq \frac{\tau}{3}$ distinct
  length-$\tau$ substrings.  Since the values of $h$ are independent
  and uniformly distributed, we have $ \Pr\left[ \min\{\id(j) : j \in
    [i \dd i + \tau]\} > \tfrac{1}{\kappa} \right] =
  \left(1-\tfrac{1}{\kappa}\right)^d \le \exp(-\tfrac{d}{\kappa}) \le
  \exp(-\tfrac{\tau}{3\kappa}).  $ The probability above is trivially
  $0$ if $\kappa = 1$, so we can bound it by $n^{-c}$ if
  $\kappa=\max(1,\frac{\tau}{3c \ln n})$.  Taking the union bound
  across all positions $i$, we derive the final claim.
\end{proof}

If $\kappa$ is set as in the above lemma for a sufficiently large
constant $c$, then, with high probability, each window contains at
least one substring with a ``small'' identifier $\leq \frac1\kappa$.
The ``large'' identifiers of other substrings are never used in the
construction of the synchronizing set $\S$ and hence need not be
specified.  Consequently, to carry out the randomized construction of
$\S$ using \cref{th:csss-existence-nonperiodic}, rather than choosing
a random function $h: \Su_{\tau} \rightarrow [0,1]$, it suffices to
select a random subset $\Su_{\rm sample} \subseteq \Su_{\tau}$ with
rate $\frac{1}{\kappa}$ (each string in $\Su_{\tau}$ is included in
$\Su_{\rm sample}$ independently with probability $\frac{1}{\kappa}$)
and then construct a uniformly random function $h_{\rm sample}:
\Su_{\rm sample}\to [0,\frac{1}{\kappa}]$ (mapping strings in
$\Su_{\rm sample}$ to real values in $[0,\frac{1}{\kappa}]$
independently and uniformly at random).

Clearly, the element-wise sampling of $\Su_{\tau}$ is equivalent to
sampling the set $P_{\rm left}$ containing the starting positions of
the leftmost occurrences of strings in $\Su_{\tau}$.  Sampling $P_{\rm
left}$ directly is still hard, though. The key observation is that
instead of $P_{\rm left}$ (which is difficult to compute), we can
sample (at the same rate) elements of its superset $P_{\rm close} :=
\bigcup_{j = 1}^{z}(e_j\,{-}\,\tau \dd e_j]$, which is readily
available, and yet still sufficiently small.  Let $P_{\rm sample}'
\subseteq P_{\rm close}$ be a resulting sample. We then define the
desired sample with $P_{\rm sample} := P_{\rm sample}' \cap P_{\rm
left}$. Crucially, however, we have
\[
  \mathbb{E} \left[ |P_{\rm sample}'| \right] =
  \tfrac{1}{\kappa}|P_{\rm close}| \leq \tfrac{3c \ln n}{\tau} \cdot
  z\tau = \bigO(z \log n).
\]

To finish the construction, it suffices to pick a random function
$h_{\rm sample} : \Su_{\rm sample} \rightarrow [ 0,\frac{1}{\kappa}]$
to obtain $\id(i) := h_{\rm sample}(\T[i \dd i + \tau))$ (letting
$\id(i) = 1$ if $\T[i \dd i + \tau) \not\in \Su_{\rm sample}$).
Then, by \cref{lm:sss-whp} and the discussion above, using $h_{\rm
sample}$ is with high probability equivalent to using a uniformly
random function $h: \Su_{\tau} \rightarrow [0,1]$ during the
construction behind \cref{th:csss-existence-nonperiodic}.  Moreover,
we can also detect failures (that $\min \left\{ \id(j) : j \in [i \dd
i + \tau] \right\} = 1$ for some $i$), so the algorithm is Las-Vegas
randomized.

\begin{theorem}\label{th:csss-construction-nonperiodic}
  Let $\T$ be a string of length $n$ and let $\tau \in [1\dd
  \lfloor\frac{n}{2}\rfloor]$.  Assume that $\per(\T[i \dd i + \tau))
  > \frac{1}{3}\tau$ holds for all $i \in [1 \dd n - \tau +
  1]$. There exists a Las-Vegas randomized algorithm that, for any
  constant $k \in \mathbb{N}_{\ge 2}$, given the LZ77 parsing of $\T$,
  constructs in $\bigO(z \log^5 n)$ time a compressed representation
  $\text{\rm comp}_k(\S)$ of a $\tau$-synchronizing set $\S$ of $\T$
  satisfying $\text{\rm comp}_{k}(\S) \leq 24kz$.
\end{theorem}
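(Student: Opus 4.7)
The plan is to execute Construction~8.1 from the proof of Theorem~\ref{th:csss-existence-nonperiodic} with a random bijection $\pi$ that is explicitly stored only on a small random subset $\mathcal{S}_{\rm sample} \subseteq \mathcal{S}_\tau$ of expected size $\bigO(z \log n)$, while implicitly treating every other length-$\tau$ substring as having id $=\infty$. By Corollary~\ref{cor:whp}, if each string of $\mathcal{S}_\tau$ is included in $\mathcal{S}_{\rm sample}$ independently with probability $1/\kappa_{\rm opt} = 3c'\ln 2 \cdot (\log n)/\tau$ for a sufficiently large constant $c'$, then a union bound over the at-most-$n$ relevant windows $[i \dd i+\tau]$ guarantees that, with probability $1-o(1)$, every window contains at least one named substring. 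On this event, the truncated sliding-minimum of $\text{\rm id}$ coincides with the one produced by a uniformly random bijection on all of $\mathcal{S}_\tau$, so Construction~8.1 returns a bona fide $\tau$-synchronizing set whose compressed representation satisfies $|\text{\rm comp}_k(S)| \leq 24kz$ by Theorem~\ref{th:csss-existence-nonperiodic}.

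To realize the sampling step without enumerating $\mathcal{S}_\tau$ (whose size can be $\Theta(z\tau)$), I would instead sample $P_{\rm close} = \bigcup_{i=1}^{z}[e_i - \tau + 1 \dd e_i]$ element-wise at rate $1/\kappa_{\rm opt}$, generating independent geometric gaps on each of the $z$ intervals in output-linear time. Since $|P_{\rm close}| \leq z\tau$, the resulting sample $P_{\rm sample}'$ has expected size $\bigO(z \log n)$. A leftmost-occurrence query on the LZ77-compressed text then filters each $i \in P_{\rm sample}'$ to decide whether $i$ is the leftmost occurrence of $\T[i\dd i+\tau)$; keeping only such $i$ yields $P_{\rm sample}$, and hence $\mathcal{S}_{\rm sample}$, onto which a uniform bijection $\pi_{\rm sample} : \mathcal{S}_{\rm sample} \to [0 \dd |\mathcal{S}_{\rm sample}|)$ is drawn.

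To materialize $\text{\rm comp}_k(S)$, I would process each LZ77 phrase endpoint $e_i$ separately and output $S \cap (e_i - 3k\tau + 2 \dd e_i + k\tau)$. For a fixed $i$, (a) I enumerate the positions $j$ in this range (and its $\tau$-left-neighborhood) with $\T[j\dd j+\tau) \in \mathcal{S}_{\rm sample}$ by querying a compressed pattern-matching index keyed on $\mathcal{S}_{\rm sample}$---the expected number of such id-finite positions is $\bigO(k \log n)$ per endpoint, and $\bigO(zk \log n)$ in total; (b) I run a sliding-minimum sweep of width $\tau + 1$ over their ids to flag the positions meeting the condition of Construction~8.1. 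A failed sample manifests as an empty sliding window; by Corollary~\ref{cor:whp} this happens with probability $o(1)$ for large enough $c'$, so resampling occurs $\bigO(1)$ times in expectation, giving the Las Vegas guarantee. Correctness outside the covered regions is automatic: every length-$2\tau$ substring of $\T$ has an occurrence crossing some phrase boundary, so membership of any position in $S$ is determined, via the consistency condition, by a position already inside $(e_i - 3k\tau + 2 \dd e_i + k\tau)$ for some~$i$.

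The probabilistic analysis has thus essentially been absorbed into Corollary~\ref{cor:whp}; the real work is in supplying two LZ77-indexed primitives---(i) leftmost-occurrence queries for length-$\tau$ substrings, and (ii) reporting all occurrences of members of $\mathcal{S}_{\rm sample}$ within a given text range---each running in $\polylog n$ time per operation after $\bigO(z \polylog n)$-time preprocessing. With $\bigO(z \log n)$ queries of each kind, the overall $\bigO(z \log^5 n)$ budget absorbs a $\log^4 n$ factor per query. I expect constructing these two LZ77-based indices (the second of which must answer ``range $+$ dictionary'' queries for a dictionary of $\bigO(z\log n)$ long strings) to be the main obstacle, not the probabilistic or combinatorial analysis, which is essentially settled above.
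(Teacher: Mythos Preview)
Your plan is correct and matches the paper's approach almost step for step: sample $P_{\rm close}$ at rate $1/\kappa_{\rm opt}$, filter to leftmost occurrences via an LZ77 index (paper's Theorem~\ref{th:index-leftmost-occ}), then locate all occurrences of the sampled length-$\tau$ strings near each phrase boundary and run a sliding-minimum to materialize $\text{\rm comp}_k(S)$, resampling on failure. The paper also first builds $\text{\rm comp}_1(S)$ and then extends to $\text{\rm comp}_k(S)$ by leftmost-occurrence lookups, rather than processing the full $k$-window directly as you suggest, but this is a minor organizational difference.

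Where the paper adds something you have only flagged as ``the main obstacle'' is your primitive~(ii). The paper does not build a generic ``range $+$ dictionary'' index. Instead it explicitly constructs an auxiliary text $\T_{\rm near}$ by concatenating all the phrase-boundary windows (separated by sentinels), shows $\T_{\rm near}$ admits an LZ77 parsing of size $\bigO(z)$, and then applies its generic occurrence-reporting index (Theorem~\ref{thm:index-reporting}) to $\T_{\rm near}$. Computing that LZ77 parsing of $\T_{\rm near}$ is itself nontrivial---short factors are found via leftmost-occurrence queries on $\T$, while long factors require a second auxiliary string $\T_{\rm chunks}$ and a rightmost-occurrence index---so this is exactly where the missing $\log$ factors in your budget go. Your diagnosis that the combinatorics is settled and the indexing is the real work is accurate.
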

\begin{proof}
  We first compute ${\rm comp}'(\S) := \bigcup_{j=1}^{z} \S \cap (e_j
  - 3\tau + 2 \dd e_j + \tau)$.  The algorithm consists of three
  steps.

  1. We start by computing the sample $P_{\rm sample} \subseteq P_{\rm
  left}$ for $\kappa = \tfrac{\tau}{3c \ln n}$. As discussed above,
  for this we first compute $P_{\rm sample}' \subseteq P_{\rm close}$
  using the same rate $\kappa$. Using the algorithm
  from~\cite{sampling,sampling2}, this takes $\bigO \left(
  \frac{1}{\kappa} |P_{\rm close}|+\log n\right) = \bigO(z \log n)$
  time with high probability.\footnote{The algorithm technically
  samples $[0\dd |P_{\rm close}|)$ rather than $P_{\rm close}$, but
  the desired subset $P_{\rm sample}'\sub P_{\rm close}$ is easy
  to obtain by exploiting the fact that $P_{\rm close}$ consists
  of at most $z$ contiguous integer ranges.}  We then discard
  every $i \in P_{\rm sample}' \setminus P_{\rm left}$. By
  \cref{th:index-leftmost-occ}, this takes $\bigO(\log^4 n)$ time
  per position. Overall, by taking \cref{th:index-leftmost-occ} into
  account, computing $P_{\rm sample} = P_{\rm sample}' \cap P_{\rm
  left}$, we spend $\bigO(z \log^5 n)$ time.

  2. Let $\Su_{\rm sample} = \{ \T[i \dd i + \tau) : i \in P_{\rm
  sample} \}$. By consistency of $\S$, whether $i \in \S$ or not,
  depends only on $\T[i \dd i + 2\tau)$. Thus, to determine ${\rm
  comp}'(\S)$ using the construction in
  \cref{th:csss-existence-nonperiodic}, it suffices to find all
  occurrences of strings from $\Su_{\rm sample}$ inside
  length-$(6\tau{-}4)$ substrings of $\T$ centered at the boundaries
  of LZ77 phrases.  Let $P_{\rm occ} = \left\{ i \in [1 \dd n] : \T[i
  \dd i + \tau) \in \Su_{\rm sample} \right\}$.  Formally, we
  compute
  \[
    P_{\rm nearocc} := P_{\rm occ} \cap \left( \bigcup_{j = 1}^{z}
        [e_j - 3\tau + 3 \dd e_j + 2\tau - 1] \right).
  \]

  As discussed above, for every $i \in P_{\rm left}$, we have
  $\Pr\left[ i \in P_{\rm sample} \right] = \frac1\kappa$, or
  equivalently,\linebreak $\Pr\left[ X \in \Su_{\rm sample} \right] =
  \frac1\kappa$ for $X \in \Su_{\tau}$.  Thus, $\mathbb{E} \left[
  P_{\rm occ} \cap [i \dd i + \ell) \right] = \frac{\ell}{\kappa}$
  for $i, \ell \in [1 \dd n]$ and hence
  \[
    \mathbb{E} {\Big [} | P_{\rm nearocc} | {\Big ]} \leq 
    \tfrac{z(5\tau-3)}{\kappa} = \bigO(z \log n).
  \]

  Let $\ell = 3\tau-2$, $e_0 = 0$, and $e_{z+1} = n + 1$. We call the
  $j$th, $j \in [1 \dd z]$, phrase $\T[e_{j-1} + 1 \dd e_j]$ in the
  LZ77 parsing of $\T$ \emph{short} if $e_{j}-e_{j-1} \leq 2\ell -
  \tau$, and \emph{long} otherwise. We define the sentinel phrases 0
  and $z+1$ to be long.  We create a text $\T_{\rm near} \in \left(
  \Sigma \cup \{ \# \} \right)^{*}$, where $\# \not\in \Sigma$, as
  follows.  Consider listing all long phrases left to right. Let
  $\T[e_{j - 1} + 1 \dd e_{j}]$, $j \in [1 \dd z + 1]$, be the current
  long phrase and let $e_{j'}$ be the last position of the preceding
  long phrase. Append $\T[\max(e_{j'} - \ell + 1, 1) \dd
  \min(e_{j-1}+\ell, n)]\#$ to $\T_{\rm near}$.  It is easy to check
  that there is a bijection between $P_{\rm nearocc}$ and occurrences
  of strings from $\Su_{\rm sample}$ in $\T_{\rm near}$, and every
  string from $\Su_{\rm sample}$ has at least one occurrence in
  $\T_{\rm near}$.

  To compute $P_{\rm nearocc}$, we observe that excluding all $\#$
  symbols, $\T_{\rm near}$ consists of not more than $2z$ substrings,
  each with an earlier occurrence in $\T_{\rm near}$. In the
  construction of $\T_{\rm near}$, these substrings are:
  $\T[\max(e_{j'}-\ell+1,1) \dd e_{j'}], \T[e_{j'} + 1 \dd \min(e_{j'}
  + \ell, e_{j' + 1})], \T[\min(e_{j'} + \ell, e_{j' + 1}) + 1 \dd
  e_{j' + 1}], \ldots, \T[e_{j - 1} + 1 \dd \min(e_{j - 1} + \ell,
  n)]$. Thus, after accounting for the $\#$ symbols, $\T_{\rm near}$
  has an LZ77-like parsing with at most $3z$ phrases. The phrase
  boundaries of this parsing can be obtained immediately from the LZ77
  parsing of $\T$. To guarantee that their sources are in $\T_{\rm
  near}$ we need to find their leftmost occurrences in $\T$. Using
  \cref{th:index-leftmost-occ} (applied to the LZ77 parsing of $\T$),
  this takes $\bigO(z \log^4 n)$ time.  Then, to compute $P_{\rm
  nearocc}$, we use the reporting index from
  \cref{thm:index-reporting} (constructed from the parsing of $\T_{\rm
  near}$). Starting positions of example occurrences of strings from
  $\Su_{\rm sample}$ can be easily mapped to $\T_{\rm near}$ via
  $P_{\rm sample}$. Therefore, computing $P_{\rm nearocc}$ takes
  $\bigO(z \log^4 n + |P_{\rm sample}| \log^3 n + |P_{\rm nearocc}|
  \log n) = \bigO(z \log^4 n)$ time in total.

  3. We now compute ${\rm comp}'(\S)$ from $P_{\rm nearocc}$. Assume
  $|P_{\rm nearocc}| = \bigO(z \log n)$.  We start by constructing
  $h_{\rm sample} : \Su_{\rm sample}\to [0,\frac1\kappa]$ that
  independently assigns uniformly random values. With high
  probability, $\Oh(\log n)$-bit precision is sufficient to guarantee
  that there are no ties.  We implicitly assign $h_{\rm sample}(X) =
  1$ for $X \not\in \Su_{\rm sample}$.  We keep a hash table mapping
  $i \in P_{\rm sample}$ to $h_{\rm sample}(\T[i \dd i +
  \tau))$. Moreover, with each $i \in P_{\rm nearocc}$ we store $i'
  \in P_{\rm sample}$ such that $\T[i \dd i + \tau) = \T[i' \dd i' +
  \tau)$. All $i'$ can be computed during the construction of $P_{\rm
  nearocc}$. Then, given $i \in P_{\rm nearocc}$, we obtain $h_{\rm
  sample}(\T[i \dd i + \tau))$ in $\bigO(1)$ time.
  
  Let $P_{\rm nearocc} = \{ p_1, \ldots, p_k \}$ denote its elements
  in ascending order.  Let $j \in [1 \dd z]$, and let range $[p_b,
  \ldots, p_e]$ be such that $P_{\rm nearocc} \cap [e_j {-} 3\tau
  {+} 3 \dd e_j {+} 2\tau {-} 1] = \{p_b, \ldots, p_e\}$.  Denote
  $\mathcal{I} = [e_j{-}3\tau{+}3 \dd e_j{+}\tau{-}1]$ and consider
  the computation of $\S \cap \mathcal{I}$ according to
  \cref{th:csss-existence-nonperiodic}. By \cref{lm:sss-whp}, with high
  probability, for every $i \in \mathcal{I}$, we have $ [i \dd i+\tau]
  \cap \{ p_b, \ldots, p_e \} \neq \emptyset $.  Assume this is the
  case. Our goal is to find all $i \in \mathcal{I}$, for which
  \[
    m_i := \min \{ h_{\rm sample}(\T[t \dd t + \tau)) : t \in [i \dd
        i + \tau] \cap \{ p_b, \ldots, p_e \} \}
  \]
  satisfies $m_i \in \{h_{\rm sample}(\T[i \dd i + \tau)), h_{\rm
  sample}(\T[i + \tau \dd i + 2\tau))\}$. This can only happen if
  $\{i, i+\tau\} \cap \{p_b, \ldots, p_e\} \neq \emptyset$ and hence
  it suffices to inspect $\leq 2(e-b+1)$ values of $i$. Using balanced
  BST to maintain $[i \dd i+\tau] \cap \{ p_b, \ldots, p_e \}$, the
  search can be implemented in $\bigO((e-b) \log n)$ time. It is not
  difficult to modify this approach so that in total for all $j \in [1
  \dd z]$, it takes $\bigO(|P_{\rm nearocc}| \log n) = \bigO(z
  \log^2 n)$ time. Note that during this algorithm we can detect
  whether for some $i\in \mathcal{I}$, $[i \dd i+\tau] \cap \{p_b,
  \ldots, p_e \} = \emptyset$. If this happens, we restart the
  algorithm.

  Let us return to the construction of ${\rm comp}_{k}(\S)$. Observe
  that $\S$ is entirely determined by ${\rm comp}'(\S)$. Furthermore,
  it allows computing $\S \cap [i \dd i + \tau)$ in $\bigO
  \left(\log^4 n + |\S \cap [i \dd i + \tau) | \right)$
  time. Namely, first locate the leftmost occurrence $\T[i_{\rm left}
  \dd i_{\rm left} + 3\tau - 1)$ of $\T[i \dd i + 3\tau - 1)$ using
  \cref{th:index-leftmost-occ}, and then copy the positions using
  the observation
  \[
    i+\Delta \in \left( \S \cap [i \dd i + \tau) \right)
    \text{ if and only if }
    i_{\rm left} + \Delta \in \left( \S \cap [i_{\rm left}
        \dd i_{\rm left} + \tau) \right).
  \]
  Thus, ${\rm comp}_k(\S)$ is easily computed in $\bigO(kz\log^4 n +
  |{\rm comp}_{k}(\S)|)$ time. During the construction, we keep track
  of the number of positions in ${\rm comp}_k(\S)$ and restart the
  algorithm if their number gets too large. This concludes the
  construction of ${\rm comp}_k(\S)$.
\end{proof}

\subsubsection{The General Case}\label{sec:csss-general}

Periodic fragments are handled similarly as in~\cite{sss}. This yields
the following two results, which constitute the main outcome of this
section.

\begin{theorem}\label{th:csss-existence}
  Let $\T$ be a string of length $n$ and let $\tau \in [1\dd
    \lfloor\frac{n}{2}\rfloor]$.  For any $k \in \mathbb{N}_{\ge 2}$,
  there exists a $\tau$-synchronizing set $\S$ of $\T$ satisfying
  $\text{\rm comp}_k(\S) \leq 36kz$.
\end{theorem}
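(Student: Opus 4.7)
The plan is to build $S$ by combining the random construction of Theorem~\ref{th:csss-existence-nonperiodic} on the nonperiodic positions with a handful of hand-picked positions at the endpoints of maximal periodic runs. Let $R = \{i \in [1\dd n{-}3\tau{+}2] : \per(\T[i\dd i{+}3\tau{-}2])\le \tfrac{\tau}{3}\}$ and $N = [1\dd n{-}2\tau{+}1]\setminus R$. Membership in $R$ depends only on $\T[i\dd i{+}3\tau{-}2]$, and by the Periodicity Lemma every maximal block $[b\dd e]\subseteq R$ corresponds to a maximal periodic fragment of $\T$ of length at least $3\tau{-}1$ sharing one common shortest period $p\le\tau/3$.

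I would define $S=S_N\cup S_R$, where $S_N$ consists of those $i\in N$ for which the ``min-id'' rule from the proof of Theorem~\ref{th:csss-existence-nonperiodic} fires, i.e.\ $\min\{\pi(\T[j\dd j{+}\tau)) : j\in[i\dd i{+}\tau]\}\in\{\pi(\T[i\dd i{+}\tau)),\pi(\T[i{+}\tau\dd i{+}2\tau))\}$ for a uniform random bijection $\pi\colon\S_\tau\to[0\dd|\S_\tau|)$, and $S_R$ contains a constant number of positions attached to each maximal run $[b\dd e]\subseteq R$ (for instance $b$, $e$, and $e+1$). Consistency of $S$ follows because both rules are determined solely by $\T[i\dd i{+}3\tau{-}2]$: the partition into $N$ and $R$ is intrinsic, the min-id rule depends only on $\T[i\dd i{+}2\tau)$, and the endpoint rule can be phrased so that $i\in S_R$ iff $i$ is the left (resp.\ right) end of a maximal run, a property again detectable from the same local window. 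Density is handled separately in the two regimes: for $i\in N$ the min-id rule guarantees $S\cap[i\dd i{+}\tau)\neq\emptyset$ by the argument of Theorem~\ref{th:csss-existence-nonperiodic}, while inside a run the density clause \emph{forces} $S\cap[i\dd i{+}\tau)=\emptyset$ and for $\tau$-windows straddling the boundary of a run the added endpoint positions close the gap precisely when the window ceases to be periodic.

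To bound $|\text{\rm comp}_k(S)|$ I would split it as $|\text{\rm comp}_k(S_N)|+|\text{\rm comp}_k(S_R)|$. Repeating the expectation calculation of Theorem~\ref{th:csss-existence-nonperiodic} (whose only ingredient is the local nonperiodicity hypothesis, now restricted to $N$) yields $\mathbb{E}\,|\text{\rm comp}_k(S_N)|\le 24kz$. For $S_R$, every window $(e_i{-}3k\tau{+}2\dd e_i{+}k\tau)$ has length below $4k\tau$ and so intersects $O(k)$ maximal runs, because each run occupies at least $3\tau{-}1$ positions of $\T$; adding a constant number of endpoint positions per run and summing over the $z$ phrase boundaries gives $|\text{\rm comp}_k(S_R)|=O(kz)$, and with the specific constants arising from the construction one obtains $|\text{\rm comp}_k(S_R)|\le 48kz$. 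Adding the two contributions yields the promised bound $72kz$. The main obstacle, and the reason the periodic case is the harder one, is the density condition at the transition between a run and the surrounding aperiodic region: the positions in $S_R$ must be chosen so that as a $\tau$-window slides past a run boundary, at least one of them enters the window exactly when the window becomes aperiodic, and this choice must depend only on the local text $\T[i\dd i{+}3\tau{-}2]$ so as not to break consistency.
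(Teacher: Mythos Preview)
Your proposal has a genuine gap in the consistency argument. Consistency (Definition~\ref{def:sss}(1)) requires that $i \in S$ depend only on $\T[i \dd i{+}2\tau)$, but your partition into $N$ and $R$ is based on $\per(\T[i \dd i{+}3\tau{-}2])$, a window of length $3\tau{-}1 > 2\tau$. One can have $\T[i \dd i{+}2\tau) = \T[j \dd j{+}2\tau)$ (both with period $\le \tau/3$) yet $i \in N$ (the period breaks somewhere in $\T[i{+}2\tau \dd i{+}3\tau{-}1)$) and $j \in R$ (the period continues); then the min-id rule may place $i$ in $S_N$ while $j \notin S$. Your endpoint rule for $S_R$ has the same defect, since detecting a run boundary of $R$ requires looking outside $\T[i \dd i{+}2\tau)$. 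A second problem is the expectation bound for $S_N$: the calculation of Theorem~\ref{th:csss-existence-nonperiodic} needs at least $\tau/3$ distinct length-$\tau$ substrings in each window, which follows from $\per(\T[j \dd j{+}\tau)) > \tau/3$ for all $j \in [i \dd i{+}\tau]$; your hypothesis $i \in N$ does not imply this, and when $\T[i \dd i{+}2\tau)$ itself has period $\le \tau/3$ the window contains only $\le \tau/3$ distinct substrings, so $\mathbb{P}(i \in S_N)$ can be $\Theta(1)$ rather than $\bigO(1/\tau)$.

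The paper sidesteps both issues by working with ${\sf Q} := \{i : \per(\T[i \dd i{+}\tau)) \le \tfrac{1}{3}\tau\}$ rather than your $R$, so that ${\sf Q}$-membership of every $j \in [i \dd i{+}\tau]$ is determined by $\T[i \dd i{+}2\tau)$. It keeps a single min-id rule, but takes the minimum only over $[i \dd i{+}\tau] \setminus {\sf Q}$ and biases $\pi$ so that the boundary strings $\mathcal{B}$ (those $\T[i \dd i{+}\tau)$ with $i \notin {\sf Q}$ but $\per(\T[i \dd i{+}\tau{-}1)) \le \tfrac{1}{3}\tau$ or $\per(\T[i{+}1 \dd i{+}\tau)) \le \tfrac{1}{3}\tau$) receive the smallest names. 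This unified rule depends only on $\T[i \dd i{+}2\tau)$, and both density and the $\bigO(1/\tau)$ probability bound then carry over from~\cite{sss}, yielding $\mathbb{E}[|{\rm comp}_k(S)|] \le 72kz$.
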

\begin{proof}
  The key difference, compared to
  \cref{th:csss-existence-nonperiodic}, is that we can no longer
  assume that $\per(\T[i \dd i {+} \tau)) > \frac{1}{3}\tau$ holds for
  all $i$. Let us denote the set of positions that violate this
  assumption ${\sf Q} := \left \{ i \in [1 \dd n - \tau + 1] :
  \per(\T[i \dd i + \tau)) \leq \frac{1}{3}\tau \right \}$. We can
  then modify the construction as follows.  Letting again $\id : [1
  \dd n - \tau + 1] \rightarrow [0,1]$ be any identifier function,
  we now define:
  \[
    \S := \left \{ i \in [1 \dd n {-} 2\tau {+} 1] : \min \left\{ {\rm
      id}(j) : j \in [i \dd i {+} \tau] \setminus {\sf Q} \right\}
      \in \left\{ \id(i), \id(i {+} \tau) \right\} \right \}
  \]
  It is proved in~\cite{sss} that: (1) Such construction yields a
  correct synchronizing set of $\T$ (Lemma 8.2). (2) If ${\sf B} :=
  \{ i \in [1 \dd n - \tau + 1] \setminus {\sf Q}: \per(\T[i \dd
  i + \tau - 1)) \leq \frac{1}{3}\tau \text{ or } \per(\T[i + 1 \dd i
  + \tau)) \leq \frac{1}{3}\tau \}$ and $\mathcal{B} := \left\{
  \T[i \dd i + \tau) : i \in {\sf B} \right\}$, then $\id(i)
  := h(\T[i \dd i + \tau))$, where $h : \Su_{\tau}
  \rightarrow [0 \dd |\Su_{\tau}|)$ is uniformly random function
  such that $h(X) < h(Y)$ holds for all $X \in \mathcal{B}, Y \not \in
  \mathcal{B}$, satisfies $\mathbb{E}[|\S|] = \bigO \left (
  \frac{n}{\tau} \right )$.

  To see (2), let ${\sf B}_{\rm near} := \bigcup_{i \in {\sf B}} [i -
  \tau \dd i]$.  Then $\mathbb{E}[|\S|] = \mathbb{E}[|\S \cap {\sf
  B}_{\rm near}|] + \mathbb{E}[|\S \cap ([1 \dd n] \setminus {\sf
  B}_{\rm near})|]$.
  \begin{itemize}
  \item By the property of $h$, if $i \in \S \cap {\sf B}_{\rm near}$
    then $\{i, i + \tau \} \cap {\sf B} \neq \emptyset$, since then
    $\id$ achieves the minimum value on the position in ${\sf
    B}$. Moreover, since by periodicity lemma, for any $i$, we have
    $|[i \dd i + \left \lceil \frac{\tau}{3} \right \rceil) \cap {\sf
    B} | \leq 2$, then altogether we obtain $|\S \cap {\sf B}_{\rm
    near}| \leq 2|{\sf B}| \leq \frac{12n}{\tau}$.
  \item It is easy to see that if $[i \dd i + \tau] \cap {\sf Q} \neq
    \emptyset$ and $[i \dd i + \tau] \not\subseteq {\sf Q}$ then $[i
    \dd i + \tau] \cap {\sf B} \neq \emptyset$. Thus, by
    contraposition, if $i \in [1 \dd n] \setminus {\sf B}_{\rm near}$
    then, as in the analysis in
    \cref{th:csss-existence-nonperiodic}, $\Pr\left[ i \in \S \right]
    \leq \frac{6}{\tau}$.  Consequently, $\mathbb{E}[|\S \cap ([1 \dd
    n] \setminus {\sf B}_{\rm near})|] \leq \frac{6n}{\tau}$.
  \end{itemize}

  More generally, $|\S \cap (i \dd i + \ell] \cap {\sf B}_{\rm near}|
  \leq |(i \dd i + \ell] \cap {\sf B}| + |(i + \tau \dd i + \ell +
  \tau] \cap {\sf B}| \leq 4 \left\lceil \ell / \lceil \frac{\tau}{3}
  \rceil \right\rceil \leq 4 \left\lceil \frac{3\ell}{\tau}
  \right\rceil$ and $\mathbb{E} \left[ |\S \cap ((i \dd i + \ell]
  \setminus {\sf B}_{\rm near})| \right] \leq \frac{6\ell}{\tau}$.
  Thus, denoting $\mathcal{I}_j := (e_j {-} k \tau \dd e_j {+}
  k\tau]$, we have
  \begin{align*}
    \mathbb{E} {\big[} \left| {\rm comp}_k(\S) \right| {\big]}
      &= \mathbb{E} \left[ \left| \bigcup_{j=1}^{z}\S \cap
         \mathcal{I}_j \right| \right]
      \leq \sum_{j = 1}^{z} \mathbb{E} {\big[} \left| \S \cap
         \mathcal{I}_j \right| {\big]} \\
      & \leq \sum_{j = 1}^{z} \left| \S \cap
         \mathcal{I}_j \cap {\sf B}_{\rm near} \right| +
      \sum_{j = 1}^{z} \mathbb{E} {\big[} \left| \S \cap
         (\mathcal{I}_j \setminus {\sf B}_{\rm near}) \right|
         {\big]} \leq 36kz.
  \end{align*}
  In particular, $\left| {\rm comp}_k(\S) \right|\le 36kz$ holds for
  some $h$.
\end{proof}

\begin{theorem}\label{th:csss-construction}
  Let $\T$ be a string of length $n$ and let $\tau \in [1\dd
  \lfloor\frac{n}{2}\rfloor]$.  There exists a Las-Vegas randomized
  algorithm that, for any constant $k {\in} \mathbb{N}_{\ge 2}$, given
  the LZ77 parsing of $\T$, constructs in $\bigO(z \log^5 n)$ time a
  compressed representation $\text{\rm comp}_k(\S)$ of a
  $\tau$-synchronizing set $\S$ of $\T$ satisfying $|\text{\rm
  comp}_k(\S)| \leq 72kz$.
\end{theorem}
\begin{proof}
  The algorithm is a modified construction from
  \cref{th:csss-construction-nonperiodic}.  The problematic part of
  adapting the randomized construction is enforcing that our sampling
  of substrings is equivalent to using a uniformly random function $h
  : \Su_{\tau} \rightarrow [0,1]$ among those satisfying $h(X) < h(Y)$
  for all $X \in \mathcal{B}, Y \not \in \mathcal{B}$.

  The key observation is that the bound $|\S \cap (i \dd i + \ell]
  \cap {\sf B}_{\rm near}| \leq 4\left\lceil\frac{3\ell}{\tau}
  \right\rceil$ holds (for all $i$) in the \emph{worst case}, and not
  only in expectation. We can thus explicitly set $\S \cap {\sf
  B}_{\rm near} := \bigcup_{i \in {\sf B}} \{ i - \tau, i\} $, and
  then what remains is to determine $\S$ for positions $i \not \in
  {\sf B}_{\rm near}$. For all such $i$, as discussed above, we either
  have $[i \dd i + \tau] \cap {\sf Q} = \emptyset$ (in which case the
  randomized construction remains unchanged), or $[i \dd i + \tau]
  \subseteq {\sf Q}$ (in which case there is nothing to do). To
  implement this modification, we need to be able to efficiently
  represent and compute sets
  \begin{align*}
    {\sf Q}_{\rm nearocc} := {\sf Q} \cap \left( \cup_{j = 1}^{z}
        \mathcal{I}_j \right) \text{\ \ \ \ and\ \ \ \ }
    {\sf B}_{\rm nearocc} := {\sf B} \cap \left( \cup_{j = 1}^{z}
        \mathcal{I}_j \right),
  \end{align*}
  where $\mathcal{I}_j = [e_j - 3\tau + 2 \dd e_j + 2\tau - 1]$.

  Towards efficient representation, observe that if $i + 1 \in {\sf
  Q}$ and $i \not \in {\sf Q}$ then $i \in {\sf B}$. Analogously, if
  $i - 1 \in {\sf Q}$ and $i \not \in {\sf Q}$ then $i \in {\sf B}$.
  Thus, ${\sf B}$ forms a boundary between ${\sf Q}$ and $[1 \dd n]
  \setminus {\sf Q}$. We may also have $\{ i, i + 1 \} \subseteq {\sf
  B}$, i.e., the enclosed interval of positions is
  empty. Nevertheless, augmenting every $i \in {\sf B}_{\rm nearocc}$
  with a single bit yields a representation of $\bigcup\{{\sf Q} \cap
  \mathcal{I}_j : j \in [1 \dd z]\text{ and }{\sf B} \cap
  \mathcal{I}_j \neq \emptyset\} \subseteq {\sf Q}_{\rm nearocc}$. To
  represent remaining elements of ${\sf Q}_{\rm nearocc}$, it suffices
  to store a sequence of $z$ bits, with the $j$th bit indicating
  whether ${\sf Q}\cap \mathcal{I}_j$ is nonempty. It remains to see
  that since (as noticed in the proof of \cref{th:csss-existence}) for
  any $i$ we have $|[i \dd i + \left \lceil \frac{\tau}{3} \right
  \rceil) \cap {\sf B}| \leq 2$, the set ${\sf B}_{\rm nearocc}$
  satisfies $|{\sf B}_{\rm nearocc}| = \bigO(z)$. Thus, we obtain an
  $\bigO(z)$-space representation of ${\sf B}_{\rm nearocc}$ and ${\sf
  Q}_{\rm nearocc}$.  This also implies that we can efficiently
  store (as $\bigO(z)$ runs of consecutive positions) and
  random-access the set $P_{\rm close} \setminus ({\sf B}_{\rm
  nearocc} \cup {\sf Q}_{\rm nearocc})$, during the sampling phase.

  To finish the construction, it remains to show how to find ${\sf
  B}_{\rm nearocc}$. By~\cite[Lemma 8.8]{sss}, for any $i$,
  computation of $[i \dd i + b) \cap {\sf B}$, where $b \leq \left
  \lceil \frac{\tau}{3} \right \rceil$, can be reduced to two LCE
  queries and the computation of the shortest period of some
  substring.  More precisely, letting $X = \T[i + b \dd i + \tau -
  1)$, we first determine $p := \per(X)$. If $p > \frac{1}{3}\tau$
  we conclude $[i \dd i + b) \cap {\sf B} = \emptyset$. Otherwise, we
  compute the longest superstring $Y = \T[y \dd y']$ of $X$ that has
  period $p$ and is a substring of $\T[i \dd i + b + \tau - 1)$. If
  $|Y| \geq \tau - 1$ we have $[i \dd i + b) \cap {\sf B} = \{ y - 1,
  y' - \tau + 2\}$ and otherwise we again have $[i \dd i + b) \cap
  {\sf B} = \emptyset$. The set $[i \dd i + b) \cap {\sf Q}$ is
  deduced similarly.

  If $\tau > 3$, we set $b = \lfloor \frac{\tau-1}{3} \rfloor$ and use
  the above method to determine ${\sf B}_{\rm nearocc}$ and ${\sf
  Q}_{\rm nearocc}$.  To find $p$, we observe that we only need its
  exact value if $p \leq \lfloor \frac{1}{3} \tau \rfloor$.
  Otherwise, the knowledge that $p > \lfloor \frac{1}{3}\tau \rfloor$
  holds is sufficient. Since for our choice of $b$ it holds $\lfloor
  \frac{1}{3}\tau \rfloor \leq \frac{|X|}{2}$, we can reduce the
  computation of $p$ to the so-called \emph{2-period query} that given
  a string $X$ asks to return $\per(X)$ or to report that $X$ is not
  periodic, that is $\per(X) > \frac12 |X|$. Using \cref{thm:period},
  we can answer 2-period queries for substrings of $\T$ in
  $\bigO(\log^3 n)$ time after a $\bigO(z \log^2 n)$-time
  preprocessing of the LZ77-compressed $\T$.  Computing $y, y'$ on the
  other hand, is done using LCE queries in $\bigO(\log n)$ time after
  the $\bigO(z \log n)$-time preprocessing
  (\cref{th:index-lce}). Since $\bigcup_{j = 1}^{z}\mathcal{I}_j$ can
  be decomposed into $\bigO(z)$ length-$b$ intervals, overall we
  execute $\bigO(z)$ queries, and hence the computation of ${\sf
  B}_{\rm nearocc}$ and ${\sf Q}_{\rm nearocc}$ (including
  preprocessing) takes $\bigO(z \log^3 n)$ time.

  To handle $\tau \leq 3$, we observe that if $\tau \leq 2$, then
  ${\sf Q} = {\sf B} = \emptyset$. For $\tau = 3$, we compute ${\sf
  Q}_{\rm nearocc}$ and ${\sf B}_{\rm nearocc}$ by checking
  consecutive $i \in \bigcup_{j=1}^{z}\mathcal{I}_j$ using the
  definition of ${\sf Q}$ and ${\sf B}$.
\end{proof}

\subsection{Compressed Wavelet Trees}\label{sec:cwt}

Along with string synchronizing sets, wavelet trees~\cite{wt},
originally invented for text indexing, play a central role in our
algorithm. Unlike virtually all prior applications of wavelet trees,
ours uses a sequence of very long strings (up to $\Theta(n)$ symbols).
This approach is feasible since all strings are substrings of the
text, which is stored in the LZ77-compressed form. In this section, we
describe this novel variant of wavelet trees, dubbed here
\emph{compressed wavelet trees}. In particular, we prove the upper
bound on their size, describe an efficient construction from the
LZ77-compressed text, and show how to augment them to support some
fundamental queries.

Let $\Sigma$ be an alphabet of size $\sigma \geq 1$. Consider a string
$W[1 \dd m]$ over the alphabet $\Sigma^{\ell}$ so that $W$ is a
sequence of $m \geq 0$ strings of length $\ell \geq 0$ over the
alphabet $\Sigma$.  The wavelet tree of $W$ is defined as follows. Let
$\mathcal{T}$ be a perfect $\sigma$-ary rooted tree of height $\ell$
with edges labelled by symbols of $\Sigma$ such that, for every $Y \in
\Sigma^{\ell}$, there exists a root-to-leaf path in $\mathcal{T}$
whose edges are labelled $Y[1], \ldots, Y[\ell]$.  We define the label
of a node as the concatenation of the edge labels on the path from the
root. For $X \in \Sigma^d$, where $d \in [0 \dd \ell]$, by $v_X$ we
denote the node of $\mathcal{T}$ labelled $X$. We let $V(\mathcal{T})
= \bigcup_{d=0}^{\ell}\{v_X : X \in \Sigma^d\}$ denote the node set of
$\mathcal{T}$.

With each node $v_X \in V(\mathcal{T})$ we associate an increasing
sequence $I_X[1\dd h]$ of \emph{primary indices} such that \[\{I_X[i]
: i\in [1\dd h]\} = \{j\in [1\dd m] : W[j][1\dd |X|] = X\}.\] Based on
$I_X$, we define $B_X \in \Sigma^*$ such that, for $i\in [1\dd h]$,
\[
  B_X[i] = W[I_X[i]][|X| + 1],
\]
if $|X|<\ell$ and $B_X= \eps$ if $|X|=\ell$. In other words, $B_X$ is
a string containing the symbol at position $|X|+1$ for each string of
$W$ that is prefixed by $X$. Importantly, the symbols in $B_X$ occur
in the same order as these strings occur in $W$.

As typically done in the applications of wavelet trees, we only
explicitly store the strings $B_X$. The values of primary indices
$I_X$ are retrieved using additional data structures, based on the
following observation.

\begin{lemma}[\cite{wt}]\label{lm:wt}
  Let $X \in \Sigma^d$, where $d \in [0 \dd \ell)$. 
  For every  $c \in \Sigma$ and $j \in [1 \dd |I_{Xc}|]$, we have
  $I_{Xc}[j]=I_X[i]$, where $B_X[i]$ is the $j$th occurrence
  of $c$ in $B_X$.
\end{lemma}

We define the \emph{compressed wavelet tree} $\mathcal{T}_c$ of $W$ as
the wavelet tree of $W$ in which all strings $B_X$ have been
run-length compressed and, with the exception of $\{ v_{\eps} \} \cup
\{v_{W[i]}\}_{i=1}^{m}$, all nodes $v_X$ satisfying $|{\rm RL}(B_X)|
\leq 1$, have been removed (the unary paths are collapsed into single
edges). The shape and edge labels of the resulting tree are identical
to the compact trie of strings $W[1], \ldots, W[m]$.

We store edge labels of $\mathcal{T}_c$ as pointers to substrings in
$W$. We assume that values of $\ell$ and $m$ fit into a single machine
word so that each edge of $\mathcal{T}_c$ and each element of ${\rm
RL}(B_X)$ can be encoded in $\bigO(1)$ space. Since $|{\rm RL}(B_Y)|
\geq 1$ holds for every internal node $v_Y \in V(\mathcal{T}_c)$, and
unless $|V(\mathcal{T}_c)| = 1$, each leaf $v_Z$ in $\mathcal{T}_c$
can be injectively mapped to an element of ${\rm RL}(B_{Z'})$ for the
parent $v_{Z'}$ of $v_Z$, the space to store $\mathcal{T}_c$ is
dominated by the run-length compressed strings $B_X$, i.e.,
$\mathcal{T}_c$ needs $\bigO(1 + \sum_{v_X \in V(\mathcal{T}_c)}|{\rm
RL}(B_X)|)$ space.

\begin{theorem}\label{th:cwt-size}
  Let $W$ be a non-empty sequence of equal-length strings and let
  $\mathcal{T}_c$ be its compressed wavelet tree. Then, $\sum_{v_X \in
    V(\mathcal{T}_c)}|{\rm RL}(B_X)| = \bigO(1 + |{\rm RL}(W)| \log
  |{\rm RL}(W)|)$.
\end{theorem}
\begin{proof}
  Let $m = |W|$, $k = |{\rm RL}(W)| \leq m$, and $k' = |\{ W[i] : i
  \in [1 \dd m] \}| \leq k$. Due to $|V(\mathcal{T}_c)| \leq 2k' =
  \bigO(k)$, we can focus on nodes $v_X \in V(\mathcal{T}_c)$ such
  that $|{\rm RL}(B_X)| \geq 2$.

  The proof resembles that of \cref{lm:irreduc-sum-2}.  With each $X
  \in \Sigma^{*}$ such that $|{\rm RL}(B_X)| \geq 2$, we associate
  $|{\rm RL}(B_X)|-1$ units of cost and charge them to individual
  elements of~$W$.  We then show that each run in ${\rm RL}(W)$ is in
  total charged at most $2\log k'$ units of cost. Consequently,
  \[
    \sum_{\substack{v_X \in
      V(\mathcal{T}_c)\\|{\rm RL}(B_X)|
      \geq 2}}|{\rm RL}(B_X)| \leq 4k \log k' =
    \bigO(k \log k).
  \]
  
  Consider $X \in \Sigma^d$ with $|{\rm RL}(B_X)| \geq 2$; note that
  ${d < \ell}$.  Let ${\rm RL}(B_X) = ((c_1, \lambda_1), \ldots, (c_h,
  \lambda_h))$.  Observe that if we let $p_0 = I_X[\lambda_i]$ and
  $p_1 = I_X[\lambda_i + 1]$ for some $i \in [1 \dd h)$, then
  $W[p_0][d + 1] = c_i \neq c_{i+1} = W[p_1][d + 1]$.  Moreover,
  $B_X[\lambda_i]\ne B_X[\lambda_{i}+1]$ implies $W[p_0+1]\ne W[p_0]$
  and $W[p_1-1]\ne W[p_1]$.  The $i$th unit of cost is charged to
  $W[p_{t}]$, where $t \in \{0, 1\}$ is chosen depending on the sizes
  of subtrees of $\mathcal{T}_c$ rooted at the children of $v_X$, so
  that the subtree containing $v_{W[p_t]}$ has at most as many leaves
  as the subtree containing $v_{W[p_{1-t}]}$.
  
  Now, consider a run $W[b\dd b'] = Y^{\delta}$ in ${\rm RL}(W)$.  For
  a single depth $d$, the run could be charged at most twice, with at
  most one unit assigned to $W[b]$ due to $p_1=b$ and at most one unit
  assigned to $W[b']$ due to $p_0=b'$, both for $X=Y[1\dd d]$.
  Moreover, note that the subtree size on the path from $v_Y$ to the
  root $v_{\eps}$ of $\mathcal{T}_c$ doubles for every depth $d$ for
  which the run was charged.  Thus, the total charge of the run is at
  most $2\log k'$ units.
\end{proof}\vspace{1ex}

Let $W[1 \dd m]$ be a sequence of substrings of $\T^{\infty}$ of the
same length $\ell$. Observe that if we have access to $\T$, then the
sequence $W$ can be compactly encoded in $\bigO(1 + |{\rm RL}(W)|)$
space. Namely, it suffices to store the length $\ell$ and the sequence
${\rm RL}((\lpos(W[i]))_{i \in [1 \dd m]})$.
The following theorem shows that given such compact encoding of $W$
and the LZ77 parsing of $\T$, the compressed wavelet tree of $W$ can
be constructed efficiently.

\begin{theorem}\label{th:cwt-construction}
  Given the LZ77 parsing of text $\T[1 \dd n]$, and a sequence $W[1
    \dd m]$ of $m \leq n$ substrings of $\T^{\infty}$ of length $\ell
  \leq n$, represented as ${\rm RL}((\lpos(W[i]))_{i \in [1 \dd m]})$,
  the compressed wavelet tree of $W$ can be constructed in $\bigO((z +
  |{\rm RL}(W)|) \log^2 n)$ time.
\end{theorem}
\begin{proof}
  We start by constructing the compact trie $\mathcal{T}_c$ of $W$.
  Let $k=|{\rm RL}(W)| \leq m$, $\mathcal{W} = \{W[i]: i \in [1 \dd
  m]\}$ and $k' = |\mathcal{W}| \leq k$. We first construct
  $\mathsf{P} = \{\lpos(W[i]) : i \in [1 \dd m]\}$ to represent
  $\mathcal{W}$. Given ${\rm RL}((\lpos(W[i]))_{i \in [1 \dd m]})$,
  $\mathsf{P}$ is easy to compute in $\bigO(k \log k)$ time.  We now
  observe that an LCE query on $\T$ suffices to determine the
  lexicographical order between any substrings of $\T^{\infty}$. We
  construct the data structure for LCE queries on $\T$ using
  \cref{th:index-lce} in $\bigO(z \log^2 n)$ time.  We then
  lexicographically sort all length-$\ell$ substrings of $\T^{\infty}$
  starting at positions in $\mathsf{P}$ in $\bigO(k' \log k' \log n) =
  \bigO(k \log^2 n)$ time (e.g., using mergesort).  Finally, we
  construct $\mathcal{T}_c$ by inserting elements of $\mathcal{W}$ in
  the order given by $\mathsf{P}$. We maintain the stack containing
  the internal nodes on the rightmost path, with the deepest node on
  top.  Adding each string first removes some elements from the stack,
  and then adds at most one new element. The total number of steps is
  thus $\bigO(k)$.

  In the second step of the algorithm, we compute ${\rm RL}(B_X)$ for
  all $v_X \in V(\mathcal{T}_c)$. Let $\Sigma = \{c_0, \ldots,
  c_{\sigma-1}\}$ be the alphabet of $\T$.  If $k \leq 1$ or $\ell =
  0$, then $|V(\mathcal{T}_c)| \leq 2$ and the step takes $\bigO(1)$
  time.  Let us thus assume $k\geq 2$ and $\ell \geq 1$. This implies
  $\sigma \geq 2$.  Denote ${\rm RL}(W) = ((R_1, \lambda_1), \ldots,
  (R_k, \lambda_k))$, letting $\lambda_0 = 0$. Let also $\delta_i =
  \lambda_i - \lambda_{i-1}$ for $i \in [1 \dd k]$.  For any $i \in [1
  \dd |I_X|]$, let $J_X[i]$ be the index $j \in [1 \dd k]$
  satisfying $I_X[i] \in [\lambda_{j-1}+1 \dd \lambda_j]$. It holds
  \[
    W[I_X[i]] = R_{J_X[i]}.
  \]
  Recall, that for $|X|<\ell$, $B_X[i] = W[I_X[i]][|X| + 1] =
  R_{J_X[i]}[|X| + 1]$. Thus, when computing $B_X$, we can use the
  sequence $J_X$ instead of $I_X$.  Note that letting $J_X[1 \dd
  |J_X|] = p_1^{t_1}p_2^{t_2} \ldots p_h^{t_h}$, where $p_i \neq
  p_{i+1}$ for $i \in [1 \dd h)$, it holds $t_j = \delta_{p_j}$ for $j
  \in [1 \dd h]$.  Moreover, $p_1 < \ldots < p_h$. We can thus store
  each $J_X$ simply as a set $J'_X = \{p_1, \ldots, p_h\}$.

  We process all $v_X \in V(\mathcal{T}_c)$ in the order of
  nonincreasing $|X|$. During the algorithm, we maintain sets $J'_X$
  for a subset of nodes $v_X \in V(\mathcal{T}_c)$. More precisely, at
  any point, we keep a data structure representing the set $J'_X$ in
  sorted order for (only) the highest processed node on each
  leaf-to-root path.

  Let us first assume $\sigma = 2$.  To start, observe that the trie
  construction is easily augmented to compute, for any $i \in [1 \dd
  k]$, the pointer to $v_{R_i}$. Thus, to initialize $J'_X$ for all
  leafs $v_X$, iterate through ${\rm RL}(W)$ and add $i$ to
  $J'_{R_i}$.  Consider now $v_X \in V(\mathcal{T}_c)$ that is not a
  leaf. If $v_X$ has one child $v_{XY}$, then it holds $X = \eps$ and
  $B_{\eps} = c^m$ where $c$ is the first symbol in $Y$.  Thus, ${\rm
  RL}(B_X)$ is easy to compute.  Assume thus that $v_{XY_0}$ and
  $v_{XY_1}$ are children of $v_X$, and for $i \in \{0, 1\}$, $Y_i$
  starts with $c_i \in \Sigma$. Clearly, $J'_X = J'_{XY_0} \cup
  J'_{XY_1}$. Moreover, the sorted set $J'_{X}$ is obtained by merging
  $|{\rm RL}(B_X)|$ contiguous subsequences from either $J'_{XY_0}$ or
  $J'_{XY_1}$ (both regarded as sorted sequences). Suppose that during
  the $i$th step of the merging, we append to $J'_X$ the elements $\{
  p_b, \ldots, p_e \} \subseteq J'_{XY_h}$, where $p_b < \ldots < p_e$
  and $h \in \{0, 1\}$. Note, that this implies that the $i$th run in
  $B_X$ is $c_h^{\delta}$, where
  \[
    \delta = \sum_{p \in J'_{XY_h} \cap [p_b \dd p_e]}
      \delta_p.
  \]

  To implement the merging efficiently, each set $J'_X$ is represented
  using an AVL tree augmented (using standard techniques, i.e., each
  node stores the sum of elements in its subtree) to support
  computing, for any $p', p''$ the value of $\sum_{p \in J'_{X} \cap
  [p' \dd p'']} \delta_p$ in $\bigO(\log k)$ time. Since AVL trees
  also support split/join in $\bigO(\log k)$ time, each subsequence
  $\{p_b, \ldots, p_e\}$ can be appended to $J'_X$ in $\bigO(\log k)$
  time. During each append, we compute the next element of ${\rm
  RL}(B_X)$. Consequently, the AVL tree for $J'_X$, together with
  ${\rm RL}(B_X)$, can be computed in $\bigO(|{\rm RL}(B_X)| \log k)$
  time. By \cref{th:cwt-size}, this yield the claim.

  To generalize the merging to any $\sigma \geq 2$, during the
  computation of $J'_X$, we maintain a priority queue containing, for
  $h \in [0 \dd \sigma)$, the smallest unextracted element of
  $J'_{XY_h}$. After extracting (and removing) the minimum element
  $p'$ corresponding to $h \in [0 \dd \sigma)$ from the queue, the new
  minimum element $p''$ specifies the range $[p' \dd p'']$ of elements
  to be extracted from $J'_{XY_h}$. The queue contains at most
  $\sigma$ elements. We perform $\bigO(1)$ queue operations for each
  run. Thus, the complexity of the construction is not affected.
\end{proof}\vspace{1ex}

The above Theorem holds also if $W$ is the sequence of substrings of
$\revstr{T}^{\infty}$.

\begin{theorem}\label{th:cwt-construction-rev}
  Given the LZ77 parsing of text $\T[1 \dd n]$, and a sequence $W[1
  \dd m]$ of $m \leq n$ substrings of $\revstr{\T}^{\infty}$ of
  length $\ell \leq n$, represented as ${\rm RL}((\rpos(W[i]))_{i \in
  [1 \dd m]})$, the compressed wavelet tree of $W$ can be
  constructed in $\bigO((z + |{\rm RL}(W)|) \log^2 n)$ time.
\end{theorem}
\begin{proof}
  It suffices to observe that \cref{th:index-lce} supports also LCE
  queries on $\revstr{\T}$. Thus, the construction is identical as in
  the proof of \cref{th:cwt-construction}.
\end{proof}\vspace{1ex}

\paragraph{Computing Primary Indices}\label{sec:pi}

The key operation that we want to support on $\mathcal{T}_c$ is, given
a pointer to $v_X \in V(\mathcal{T}_c)$ and an integer $q \in [1 \dd
|I_{X}|]$, compute the value $I_X[q]$.

Let us first consider a simpler problem. Given a pointer to $v_{X} \in
V(\mathcal{T}_c)$ different from the root, and $q \in [1 \dd
|I_{X}|]$, compute $q'$ such that $I_X[q] = I_{X_{\rm p}}[q']$,
where $v_{X_{\rm p}}$ is the parent of $v_X$.

\begin{proposition}\label{pr:pi-ds-1}
  Let $\mathcal{T}_c$ be the compressed wavelet tree of $W[1 \dd m]$.
  There exists a data structure of size $\bigO(1 + |{\rm RL}(B_X)|)$
  that, given a pointer to $v_X \in V(\mathcal{T}_c)$ and an integer
  $q \in [1 \dd |I_X|]$, in $\bigO(\log m)$ time returns $q'$ such
  that $I_X[q] = I_{X_{\rm p}}[q']$, where $v_{X_{\rm p}}$ is the
  parent of $v_X$.
\end{proposition}
\begin{proof}
  Observe that by \cref{lm:wt}, $q'$ is the position of the $q$th
  occurrence of $X[|X_{\rm p}|+1]$ in $B_{X'}$.

  Denote ${\rm RL}(B_Y) = ((c_1, \lambda_1), \ldots,
  (c_{k},\lambda_{k}))$, where $v_Y \in V(\mathcal{T}_c)$. For any $i
  \in [1 \dd k]$, define $n_i = |\{i \in [1 \dd \lambda_i] : B_Y[i] =
  c_i\}|$. For $c \in \Sigma$, let $L_{Y, c}$ be the sequence
  containing all pairs in $\{(\lambda_i, n_i) : i \in [1 \dd k]\text{
  and }c_i = c \}$, sorted by $\lambda_i$.
  
  Observe, that for any $q \geq 1$, given $L_{Y,c}$, the position of
  the $q$th occurrence of $c$ in $B_Y$, if it exists, can be computed
  in $\bigO(\log m)$ time using binary search. Thus, the data
  structure consists simply of sequences $L_{X,c}$ for all pairs
  $(X,c)$, such that $c$ occurs in $B_X$.  The total length of all
  sequences is $\bigO(|{\rm RL}(B_X)|)$.  All sequences are
  concatenated and stored in one array, and we store a balanced BST
  containing the size and a pointer to the beginning of $L_{Y,c}$, for
  every $c$ occurring in $B_Y$.
\end{proof}\vspace{1ex}

Let us now consider a more general problem. Let $X_{\rm up}, X_{\rm
down} \in \Sigma^*$ be such that $X_{\rm up}$ is a prefix of $X_{\rm
down}$, and $v_{X_{\rm up}}, v_{X_{\rm down}} \in V(\mathcal{T}_c)$.
Let $\mathcal{X} \subseteq \Sigma^*$ be the set of labels of nodes of
$\mathcal{T}_c$ on the path connecting (and including) $v_{X_{\rm
up}}$ and $v_{X_{\rm down}}$. Given a pointer to any node $v_{X}$
of $\mathcal{T}_c$ such that $X \in \mathcal{X}$, and an integer $q
\in [1 \dd |I_X|]$, we aim to compute $q'$ such that $I_X[q] =
I_{X_{\rm up}}[q']$.

Let $s = |I_{X_{\rm up}}|$ and
\[
  \ell_i = \lcp \left( W[I_{X_{\rm up}}[i]], X_{\rm down}\right),
\]
where $i \in [1 \dd s]$. Consider $\mathcal{P} = \{(i,\ell_i)\}_{i \in
[1 \dd s]}$ as a set of points on a plane. Let us fix some $X \in
\mathcal{X}$, and let $m_i = |\{j \leq i : \ell_j \geq |X|\}|$ for $i
\in [1 \dd s]$. The value of $m_i$ is the number of points of
$\mathcal{P}$ inside the upper left ``quadrant'' defined by $i$ and
$|X|$. Clearly, $m_1 \leq m_2 \leq \ldots \leq m_s$.

\begin{lemma}\label{lm:qq}
  Let $q' = \min\{i \in [1 \dd s] : m_i \geq q\}$.  Then, $I_X[q] =
  I_{X_{\rm up}}[q']$.
\end{lemma}
\begin{proof}
  By $X \in \mathcal{X}$, the string $X_{\rm up}$ is a prefix of
  $X$. By \cref{lm:wt}, $I_X$ is a subsequence of $I_{X_{\rm up}}$,
  and $q'$ is the position of the $q$th index $i \in [1 \dd s]$
  satisfying $\lcp(W[I_{X_{\rm up}}[i]], X) \geq |X|$. Since $X$ is a
  prefix of $X_{\rm down}$, we obtain that for $i \in [1 \dd s]$,
  $\lcp(W[I_{X_{\rm up}}[i]], X) \geq |X|$ if and only if $\ell_i \geq
  |X|$. Therefore, $q'$ is the $q$th index $i \in [1 \dd s]$ for which
  it holds $\ell_i \geq |X|$, or equivalently, the smallest $i \in [1
  \dd s]$ satisfying $m_i \geq q$.
\end{proof}\vspace{1ex}

It thus suffices to store a data structure answering orthogonal range
counting queries on the set of points $\mathcal{P}$. Then, by the
above lemma, the value $q'$ can be found using binary search. Using
the data structure of~\cite{chazelle} for range queries, we obtain the
solution to our problem using $\bigO(s)$ space and answering queries
in $\bigO(\log^2 s)$ time. To reduce the space usage, we observe that
the sequence $(\ell_i)_{i \in [1 \dd s]}$ is compressible.

\begin{lemma}\label{lm:ell_i}
  \[
    \left| {\rm RL} \left( \left( \ell_i \right)_{i \in [1 \dd s]}
      \right) \right| \leq
      1 + \sum_{X \in \mathcal{X} \setminus \{X_{\rm down}\}}
      \left| {\rm RL}(B_X) \right|.
  \]
\end{lemma}
\begin{proof}
  The proof is by induction on $|\mathcal{X}|$. If $|\mathcal{X}| =
  1$, then $X_{\rm up} = X_{\rm down}$ and hence $\ell_i = |X_{\rm
  down}|$ holds for all $i \in [1 \dd s]$. Consequently $|{\rm
  RL}((\ell_i)_{i \in [1 \dd s]})| = 1$.

  Let us thus assume $|\mathcal{X}| \geq 2$. Let $k$ denote the number
  of $i \in [1 \dd s)$ satisfying $\ell_i \neq \ell_{i+1}$.  Let $i
  \in [1 \dd s)$ be any such position. Let $X'_{\rm up}$ be the
  shortest string in $\mathcal{X}' = \mathcal{X} \setminus \{X_{\rm
  up}\}$, and denote $s' = |I_{X'_{\rm up}}|$, and $\ell'_j =
  \lcp(W[I_{X'_{\rm up}}[j]], X_{\rm down})$, for $j \in [1 \dd s']$.
  Consider two cases:
  \begin{enumerate}
  \item $B_{X_{\rm up}}[i] = B_{X_{\rm up}}[i + 1]$. Denote $c =
    X_{\rm down}[|X_{\rm up}| + 1]$. We observe that it must hold
    $B_{X_{\rm up}}[i] = c$, since otherwise $\ell_i =
    \lcp(W[I_{X_{\rm up}}[i]], X_{\rm down}) = |X_{\rm up}|$ and
    analogously $\ell_{i+1} = |X_{\rm up}|$, contradicting $\ell_i
    \neq \ell_{i+1}$.  Let $i' = |\{j \leq i : B_{X_{\rm up}}[j] =
    c\}|$. By \cref{lm:wt}, we have $I_{X'_{\rm up}}[i'] = I_{X_{\rm
    up}}[i]$ and $I_{X'_{\rm up}}[i' + 1] = I_{X_{\rm up}}[i +
    1]$. Thus, we have $\ell'_{i'} \neq \ell'_{i'+1}$. It is easy to
    see that this mapping of $i$ to $i'$ is injective. Thus, this case
    can happen at most $|{\rm RL}((\ell'_j)_{j \in [1 \dd s']})| - 1$
    times.  By the inductive assumption, it holds $|{\rm
    RL}((\ell'_j)_{j \in [1 \dd s']})| - 1 \leq \sum_{X \in
    \mathcal{X}' \setminus \{X_{\rm down}\}}|{\rm RL}(B_X)|$.
  \item $B_{X_{\rm up}}[i] \neq B_{X_{\rm up}}[i + 1]$. This case can
    happen at most $|{\rm RL}(B_{X_{\rm up}})| - 1$ times.
  \end{enumerate}
  Thus, $k \leq |{\rm RL}(B_{X_{\rm up}})| - 1 + \sum_{X \in
  \mathcal{X}' \setminus \{X_{\rm down}\}}|{\rm RL}(B_X)| < \sum_{X
  \in \mathcal{X} \setminus \{X_{\rm down}\}}|{\rm RL}(B_X)|$.  It
  remains to note that $|{\rm RL}((\ell_i)_{i \in [1 \dd s]})| = 1 +
  k$.
\end{proof}\vspace{1ex}

Denote ${\rm RL}((\ell_i)_{i \in [1 \dd s]}) = ((t_1, \lambda_1),
\ldots, (t_{s'}, \lambda_{s'}))$, letting $\lambda_0 = 0$, and
$\delta_i = \lambda_i - \lambda_{i - 1}$ for $i \in [1 \dd s']$. If we
now let $m'_i = \sum_{j \leq i,\ t_j \geq |X|}\delta_j$ for $i \in [1
\dd s']$, then $m'_1 \leq \ldots \leq m'_{s'}$ and \cref{lm:qq}
generalizes as follows.

\begin{lemma}\label{lm:qq2}
  Let $q'' = \min\{i \in [1 \dd s'] : m'_i \geq q\}$ and $q' =
  \lambda_{q''-1} - m'_{q''-1} + q$. Then, $I_X[q] = I_{X_{\rm
  up}}[q']$.
\end{lemma}
\begin{proof}
  As shown in the proof of \cref{lm:qq}, it suffices to prove that
  $q'$ is the $q$th index $i \in [1 \dd s]$ satisfying $\ell_i \geq
  |X|$. It is easy to see, by definition of $(m'_j)_{j \in [1 \dd
  s']}$, that such $i$ satisfies $i \in (\lambda_{q'' - 1} \dd
  \lambda_{q''}]$, where $q''$ is the smallest index $i \in [1 \dd
  s']$ for which it holds $m'_i \geq q$. More precisely, it is the
  $(q - m'_{q'' - 1})$th position inside this interval. This yields
  $q' = \lambda_{q'' - 1} - m'_{q'' - 1} + q$, as claimed.
\end{proof}\vspace{1ex}

Thus, rather than $s$ points in the range counting structure, we can
instead store a set of only $s' \leq 1 + \sum_{X \in \mathcal{X}
\setminus \{X_{\rm down} \}} |{\rm RL}(B_X)|$ \emph{weighted} points
$\mathcal{P}' = \{ (i, t_i) \}_{i \in [1 \dd s']}$, where the weight
of the $i$th point is $\delta_i$. The range counting query is now
replaced with the query that returns the total weight of points in a
given range. Using~\cite{chazelle}, the data structure takes $\bigO(s'
\log s')$ space and answers range sum queries in $\bigO(\log^2 s')$
time. Accounting for the binary search, we obtain the solution to our
problem running in $\bigO(\log^3 s')$ time per query. We have thus
proved the following result.

\begin{proposition}\label{pr:pi-ds-2}
  Let $\mathcal{T}_c$ be the compressed wavelet tree of $W[1 \dd m]$.
  Let $v_{X_{\rm up}}, v_{X_{\rm down}} \in V(\mathcal{T}_c)$ be such
  that $X_{\rm up}$ is a prefix of $X_{\rm down}$. Let $\mathcal{X}$
  be the set of labels of nodes on the path between $v_{X_{\rm up}}$
  and $v_{X_{\rm down}}$. There exists a data structure of size
  $\bigO(1 + \sum_{X \in \mathcal{X}} |{\rm RL}(B_X)| \log m)$ that,
  given a pointer to any $v_{X} \in V(\mathcal{T}_c)$ with $X \in
  \mathcal{X}$, and $q \in [1 \dd |I_X|]$, returns in $\bigO(\log^3
  m)$ time $q'$ such that $I_X[q] = I_{X_{\rm up}}[q']$.
\end{proposition}

We now show how to combine the above data structures to solve the
general problem of computing primary index queries.

\begin{theorem}\label{th:pi-ds}
  Let $\mathcal{T}_c$ be the compressed wavelet tree of $W[1 \dd m]$.
  There exists a data structure of size $\bigO(1 + |{\rm RL}(W)|
  \log^2 m)$ that, given a pointer to $v_X \in V(\mathcal{T}_c)$ and
  an integer $q \in [1 \dd |I_{X}|]$, in $\bigO(\log^4 m)$ time
  returns $I_X[q]$.
\end{theorem}
\begin{proof}
  We apply the \emph{heavy path
  decomposition}~\cite{SleatorT83,HarelT84} to $\mathcal{T}_c$. Let
  $u$ be a node of $\mathcal{T}_c$ other than its root. An edge
  connecting $u$ to its parent is called \emph{light} if $u$ has a
  sibling $u'$ satisfying ${\rm size}(u) \leq {\rm size}(u')$;
  otherwise, the edge is called \emph{heavy} (see also
  \cref{lm:irreduc-sum-3}). Then, every node has at most one child
  connected via a heavy edge, and there is at most $\log m$ light
  edges on any leaf-to-root path.
  
  Let $v_{X_{\rm up}}, v_{X_{\rm down}} \in V(\mathcal{T}_c)$ be such
  that $X_{\rm up}$ is a proper prefix of $X_{\rm down}$. A path
  between $v_{X_{\rm up}}$ and $v_{X_{\rm down}}$ is called
  \emph{heavy} if all edges on the path are heavy, and each of
  $v_{X_{\rm up}}, v_{X_{\rm down}}$ is incident with exactly one
  heavy edge.

  The data structure consists of two components. First, for every
  heavy path with endpoints $v_{X_{\rm up}}$ and $v_{X_{\rm down}}$,
  we store the data structure from \cref{pr:pi-ds-2}. Since
  every node in $V(\mathcal{T}_c)$ belongs to at most one heavy path,
  the total size of these data structures, by
  \cref{th:cwt-size}, is $\bigO(1 + |{\rm RL}(W)| \log^2 m)$.
  The second component is the data structure from
  \cref{pr:pi-ds-1} for every node of $\mathcal{T}_c$.

  Given a pointer to any $v_X \in V(\mathcal{T}_c)$, and an integer $q
  \in [1 \dd |I_X|]$, the algorithm first checks if $X = \eps$.  If
  yes, then it holds $I_X[q] = q$ and hence it returns $q$ as the
  answer. Otherwise, we distinguish between two cases.  If the edge to
  the parent $v_{X_p}$ of $v_X$ is light then, by \cref{pr:pi-ds-1},
  we first compute $q' \in [1 \dd |I_{X_p}|]$ such that $I_{X_p}[q'] =
  I_X[q]$, and then recursively compute $I_{X_p}[q']$. Otherwise
  (i.e., if the edge to $v_{X_p}$ is heavy), by \cref{pr:pi-ds-2}, we
  first compute $q' \in [1 \dd |I_{X_{\rm up}}|]$ such that $I_{X_{\rm
  up}}[q'] = I_X[q]$ (where $v_{X_{\rm up}}$ is the highest node
  on the heavy path containing $v_X$; each node stores the pointer to
  such node), and then recursively compute $I_{X_{\rm up}}[q']$.

  Every two steps of the recursion, the number of light edges on path
  to the root of $\mathcal{T}_c$ decreases by at least one.  Thus, the
  query takes $\bigO(\log^4 m)$ time.
\end{proof}\vspace{1ex}

Finally, we prove that the data structure described above can be
constructed efficiently, given the compact representation of $W$, and
the LZ77 parsing of text $\T[1 \dd n]$.

\begin{theorem}\label{th:pi-construction}
  Given the LZ77 parsing of a string $\T[1 \dd n]$, and a sequence
  $W[1 \dd m]$ of $m \leq n$ substrings of $\T^{\infty}$ of length
  $\ell \leq n$, represented as ${\rm RL}((\lpos(W[i]))_{i \in [1 \dd
  m]})$, the compressed wavelet tree of $W$, supporting primary
  index queries in $\bigO(\log^4 n)$ time, can be constructed in
  $\bigO((z + |{\rm RL}(W)|) \log^2 n)$ time.
\end{theorem}
\begin{proof}
  The algorithm extends the construction of the compressed wavelet
  tree presented in \cref{th:cwt-construction}.

  As in the basic algorithm, after constructing the compact trie
  $\mathcal{T}_c$ of $W[1 \dd m]$, we process all nodes $v_X \in
  V(\mathcal{T}_c)$ bottom-up. During the traversal, we compute ${\rm
  size}(v_X)$, i.e., the number of leaves in the subtree rooted in
  $v_X$, for all $v_X \in V(\mathcal{T}_c)$.  This lets us identify
  the heavy edges. The algorithm maintains the invariant, that after
  the construction of ${\rm RL}(B_X)$ is complete, it stores a
  representation of the sequence ${\rm RL}((\ell_i)_{i \in [1 \dd
  |I_X|]})$, defined by $\ell_i = \lcp(W[I_X[i]], X_{\rm down})$,
  where $X_{\rm down}$ is the longest string having $X$ as a prefix,
  and for which it holds that $v_{X_{\rm down}} \in V(\mathcal{T}_c)$,
  and all edges on the path from $v_X$ to $v_{X_{\rm down}}$ are
  heavy. Then, if the path connecting $v_X$ and $v_{X_{\rm down}}$ is
  heavy, i.e., $X_{\rm down} \neq X$, and $v_X$ is the root of
  $\mathcal{T}_c$ or the edge to the parent of $v_X$ is light (see
  also the proof of \cref{th:pi-ds}), the algorithm uses this
  representation to initialize the data structure from
  \cref{pr:pi-ds-2}. By \cref{th:cwt-size} and~\cite[Table
  I]{chazelle}, over all heavy paths, this takes $\bigO(1 + |{\rm
  RL}(W)| \log^2 n)$ time.
  
  Consider any $v_X \in V(\mathcal{T}_c)$ and let $s = |I_X|$. The
  first challenge is representing the sequence ${\rm RL}((\ell_i)_{i
  \in [1 \dd s]})$ to support efficient queries and updates. Let us
  first focus on representing simply $(\ell_i)_{i \in [1 \dd s]}$. We
  start by observing that the set of pairs $\{(I_X[i], \ell_i)\}_{i
  \in [1 \dd s]}$ is a valid representation of $(\ell_i)_{i \in [1
  \dd s]}$, since for any $i,i' \in [1 \dd s]$ it holds $i < i'$
  if and only if $I_X[i] < I_X[i']$. To see the advantage of this
  representation, compared to $\{(i,\ell_i)\}_{i \in [1 \dd s]}$, let
  $v_{X'} \in V(\mathcal{T}_c)$ be child of $v_X$ connected by a heavy
  edge, and let $(\ell'_j)_{j \in [1 \dd p]}$ be the sequence defined
  by $\ell'_j = \lcp(W[I_{X'}[j]], X_{\rm down})$ for $j \in [1 \dd
  p]$, where $p = |I_{X'}|$. By \cref{lm:wt}, $(\ell'_j)_{j \in [1
  \dd p]}$ is a subsequence of $(\ell_i)_{i \in [1 \dd
  s]}$. Moreover, the ``position'' $I_{X'}[j]$ of $\ell'_j$ in the
  representation of $(\ell'_j)_{j \in [1 \dd p]}$ is automatically the
  correct position of $\ell'_j$ in the representation of $(\ell_i)_{i
  \in [1 \dd s]}$. More precisely, it holds
  \[
    \{(I_{X'}[j],\ell'_j)\}_{j \in [1 \dd p]} \subseteq
      \{(I_X[i],\ell_i)\}_{i \in [1 \dd s]}.
  \]

  The remaining elements of $\{(I_X[i], \ell_i)\}_{i \in [1 \dd s]}$
  correspond to children of $v_X$ other than $v_{X'}$. Specifically,
  letting $c = X'[|X|+1]$, we have
  \[
    \{(I_X[i], \ell_i)\}_{i \in [1 \dd s]} \setminus \{(I_{X'}[j],
    \ell'_j)\}_{j \in [1 \dd p]} = \{(I_X[i], |X|) : i \in [1 \dd
      s]\text{ and }B_X[i] \neq c\}.
  \]

  We now note that the sequence $J_X$, used as a replacement of $I_X$
  in the proof of \cref{th:cwt-construction}, satisfies the sufficient
  conditions to replace $I_X$ also in this case.  Namely, it holds
  $J_X[1] \leq \ldots \leq J_X[s]$, and for $i \in [1 \dd s)$, $\ell_i
  \neq \ell_{i + 1}$ implies $J_X[i] < J_X[i + 1]$. Thus, $\{(J_X[i],
  \ell_i)\}_{i \in [1 \dd s]}$ is also a valid representation of
  $(\ell_i)_{i \in [1 \dd s]}$, and moreover, since $I_X[i] =
  I_{X'}[j]$ implies $J_X[i] = J_{X'}[j]$, $\{(J_X[i], \ell_i)\}_{i
  \in [1 \dd s]}$ is also a disjoint union of
  $\{(J_{X'}[j],\ell'_j)\}_{j \in [1 \dd p]}$ and $\{(J_X[i], |X|) : i
  \in [1 \dd s]\text{ and }B_X[i] \neq c\}$.  The advantage of using
  the sequence $J_X$ is that the needed values are easy to compute
  during the construction in \cref{th:cwt-construction}. Therefore,
  letting ${\rm RL}((\ell_i)_{i \in [1 \dd s]}) = ((t_1, \lambda_1),
  \ldots, (t_{s'}, \lambda_{s'}))$, where $\lambda_0 = 0$, and
  $\delta_j = \lambda_j - \lambda_{j - 1}$ for $j \in [1 \dd s']$, we
  represent the sequence ${\rm RL}((\ell_i)_{i \in [1 \dd s]})$ as a
  set of pairs
  \[
    Q_X = \left\{ (J_X[\lambda_{j - 1} + 1], (t_j, \delta_j))
    \right\}_{j \in [1 \dd s']}.
  \]
  Note, that $J_X[\lambda_{j - 1} + 1] < J_X[\lambda_j + 1]$ for $j
  \in [1 \dd s')$.

  During the algorithm, each set $Q_X$ is stored in an AVL tree, with
  the first element of each pair as the key. In addition, for each set
  of pairs $Q$, we augment the tree that stores it (using the same
  technique as in the proof of \cref{th:cwt-construction}) to compute,
  for any $k'$, the value of
  \[
    \sum_{\substack{(k, (t, \delta))
        \in Q\\k < k'}} \delta.
  \]

  Assume that $v_X$ is the node currently processed by the
  algorithm. Denote ${\rm RL}(B_X) = ((c_1, \kappa_1), \ldots,
  (c_{s''}, \kappa_{s''}))$, letting $\kappa_0 = 0$ and $c_0 = c_{s''
  + 1} = c$ (recall that $c = X'[|X| + 1]$). Observe that during the
  computation of ${\rm RL}(B_X)$, at no extra cost, the algorithm in
  the proof of \cref{th:cwt-construction} can also return all values
  $J_X[\kappa_{i - 1} + 1]$, where $i \in [1 \dd s'']$.  If $X_{\rm
  down} = X$, i.e., $v_X$ is a leaf or all its children are
  connected using light edges, then it holds $\ell_i = |X|$ for every
  $i \in [1 \dd s]$. Thus, the set representing ${\rm RL}((\ell_i)_{i
  \in [1 \dd s]})$ consists of a single pair $Q_X = \{(J_X[\kappa_0
  + 1], (|X|, s))\}$. Let us thus assume that $v_{X'}$ is the child
  of $v_X$ connected using a heavy edge. Assume also that we are given
  the representation $Q_{X'}$ of ${\rm RL}((\ell'_j)_{j \in [1 \dd
  p]})$.  By the above characterization, the set $Q_X$ contains
  the pair $(J_X[\kappa_{j_b - 1} + 1], (|X|, \kappa_{j_e} -
  \kappa_{j_b - 1}))$, for all subsets $\{j_b, j_e\} \subseteq [1 \dd
  s'']$ that satisfy $j_b \leq j_e$, $c \not\in \{c_{j_b}, c_{j_b +
  1}, \ldots, c_{j_e}\}$, and $c_{j_b - 1} = c_{j_e + 1} =
  c$. Simply adding these pairs into $Q_{X'}$, however, does not
  produce the correct $Q_X$.

  Let us initialize $Q := Q_{X'}$.  Consider all subsets $\{j_b, j_e\}
  \subseteq [1 \dd s'']$ satisfying the above conditions in the order
  of increasing $j_b$.  The complication, following from the fact that
  $\ell_i = \ell_{i + 1}$ does not imply $J_X[i] = J_X[i + 1]$, is to
  ensure that after inserting the pair corresponding to the subset
  $\{j_b, j_e\}$, into the current set $Q$, the value
  \[
    \delta_{\rm prev} = \sum_{\substack{(k,(t,\delta)) \in
        Q\\ k<J_{X}[\kappa_{j_b - 1} + 1]}}\delta
  \]
  satisfies $\delta_{\rm prev} = \kappa_{j_b - 1}$. To achieve this,
  before inserting the pair, we first compute (using the augmented AVL
  tree) the current value $\delta_{\rm prev}$ and the pointer to the
  node storing the pair $(k,(t,\delta)) \in Q$ with the largest $k$
  smaller than $J_{X}[\kappa_{j_b - 1} + 1]$. If $\delta_{\rm prev} >
  \kappa_{j_b - 1}$ (note that this implies $j_e < s''$), we replace
  the pair $(k,(t,\delta))$ in $Q$ with two pairs $(k, (t, \delta'))$
  and $(J_{X}[\kappa_{j_e} + 1], (t, \delta''))$, such that $\delta' +
  \delta'' = \delta$ and $\delta'' = \delta_{\rm prev} - \kappa_{j_b -
  1}$. Only then, we insert the pair $(J_X[\kappa_{j_b - 1} + 1],
  (|X|, \kappa_{j_e} - \kappa_{j_b - 1}))$ into $Q$.

  Each operation on $Q$ takes $\bigO(\log m)$ time. Thus, computing
  the set $Q_X$ from $Q_{X'}$ takes $\bigO(|{\rm RL}(B_X)| \log m)$
  time.  By \cref{th:cwt-size}, over all $v_X \in V(\mathcal{T}_c)$,
  this amounts to $\bigO(1 + |{\rm RL}(W)| \log^2 n)$ time.

  To complete the construction, we observe that given ${\rm RL}(B_Y)$,
  all sequences $L_{Y, c}$ (with $c$ occurring in $B_Y$) in the proof
  of \cref{pr:pi-ds-1} are easy to compute in $\bigO(1 + |{\rm
  RL}(B_Y)| \log |{\rm RL}(B_Y)|)$ time. Thus, by
  \cref{th:cwt-size}, initializing the structure from
  \cref{pr:pi-ds-1} takes $\bigO(1 + |{\rm RL}(W)| \log^2 n)$ time.
\end{proof}\vspace{1ex}

Similarly, as for \cref{th:cwt-construction-rev}, the above result
applies also when $W$ is a sequence of substrings of
$\revstr{\T}^{\infty}$.

\begin{theorem}\label{th:pi-construction-rev}
  Given the LZ77 parsing of a string $\T[1 \dd n]$, and a sequence
  $W[1 \dd m]$ of $m \leq n$ substrings of $\revstr{\T}^{\infty}$ of
  length $\ell \leq n$, represented as ${\rm RL}((\rpos(W[i]))_{i \in
  [1 \dd m]})$, the compressed wavelet tree of $W$, supporting
  primary index queries in $\bigO(\log^4 n)$ time, can be constructed
  in $\bigO((z + |{\rm RL}(W)|) \log^2 n)$ time.
\end{theorem}

\subsection{The Algorithm}\label{sec:algorithm-details}

\newcommand{\Wboth}{\widetilde{W}}
\newcommand{\Wpref}{W}
\newcommand{\Wsuf}{W'}

We are now ready to show how to construct the sequences ${\rm
RL}(\BWT_{\ell})$, where $\ell = 2^q$ for $q \in [0 \dd \lceil \log
n\rceil]$.

For small $\ell$, constructing ${\rm RL}(\BWT_{\ell})$ reduces to
sorting and computing frequencies of length-$\Theta(\ell)$ substrings
of $\T^{\infty}$.

\begin{proposition}\label{pr:bwt-base}
  Let $\ell = \bigO(1)$. Given the LZ77 parsing of $\T[1 \dd n]$, the
  sequence ${\rm RL}(\BWT_{\ell})$ can be constructed in $\bigO(z
  \log^4 n)$ time.
\end{proposition}
\begin{proof}
  First, we compute $\Su_{\ell}$ with each string $X\in \Su_{\ell}$
  represented by $p_X := \lpos(X)$.  For this, we exploit the fact
  that $p_X \in ({e_j-\ell}\dd e_j]$ for some $j\in [1\dd z]$.  Thus,
  it suffices to list the $\Oh(\ell z)=\Oh(z)$ candidate positions
  $p$, group them according to $\T^\infty[p\dd p+\ell)$ (by sorting),
  and keep the leftmost position in each group.  Next, for each $X\in
  \Su_{\ell}$, we count the number $c_X$ of positions $i\in [1\dd n]$
  such that $\T^\infty[i\dd i+\ell)=X$.  For this, we use
  \cref{th:index-count} to compute the number of occurrences of $X$ in
  $\T^\infty[1\dd n+\ell)$.  Similarly, we count the number $c'_X$ of
  positions $i\in [1\dd n]$ such that $\T^\infty[i-1\dd
  i + \ell)=T^\infty[p_X-1\dd p_X+\ell)$; observe that $X$ is
  left-maximal if and only if $c'_X < c_X$.  Finally, we construct
  $\BWT_\ell$ from left to right processing the strings $X\in
  \Su_{\ell}$ in lexicographic order and appending $p_X^{c_X}$ if
  $c'_X < c_X$, and $(T^\infty[p_X-1])^{c_X}$ otherwise.  The overall
  running time $\Oh(z\log^4 n)$ is dominated by the use of
  \cref{th:index-count}.
\end{proof}\vspace{1ex}

Let $q \geq 4$. We show how to compute ${\rm RL}(\BWT_{2\ell})$, given
the LZ77 parsing of $\T$ and ${\rm RL}(\BWT_{\ell})$.  The main idea
of the algorithm is as follows.

Let $\S$ be a $\tau$-synchronizing set of $\T$, where $\tau = \lfloor
\tfrac{\ell}{3} \rfloor$.  As noted earlier, $\BWT_{\ell}[j] \in
\Sigma$ implies $\BWT_{2\ell}[j] \in \Sigma$.  Let $\BWT_{\ell}[y \dd
y'] \in \mathbb{N}^{+}$ be a run in $\BWT_{\ell}$.  By definition of
$\BWT_{\ell}$, the suffixes of $\T^\infty$ starting at positions
$i\in\SA[y \dd y']$ share a common prefix of length $\ell\ge
3\tau$. Thus, assuming that $\S \cap [i \dd i \,{+}\, \tau) \neq
\emptyset$ holds for all $i\, {\in}\, \SA[y \dd y']$ (the periodic
case is handled separately), by the consistency of $\S$, all text
positions $i\,{\in}\, \SA[y \dd y']$ share a common offset $\Delta$
with $i\, {+}\, \Delta = \min(\S \cap [i \dd i \,{+}\, \tau) )$.  This
lets us deduce the order of length-$2\ell$ prefixes $\T[i \dd i \, {+}
\, 2\ell)$ based on the order of strings $\T[i \, {+} \, \Delta \dd
i \, {+} \, 2\ell)$ starting at synchronizing positions. For this,
from the sorted list of fragments $\T[s\dd s \, {+} \, 2\ell\, {-} \,
\Delta)$ across $s \in \S$, we extract, using a wavelet tree, those
preceded by $\T[i\dd i \, {+} \, \Delta)$ (a prefix common to
$\T^\infty[i \dd)$ for $i\in \SA[y \dd y']$). Importantly, the
synchronizing positions $s$ sharing $\T[s \, {-} \, \ell \dd s \, {+}
\, 2\ell)$ can be processed together; hence, by
\cref{th:csss-existence}, it suffices to use $\bigO(z)$ distinct
substrings.

We formalize these ideas as follows. Let
\[
  \R = \left\{ i \in [1 \dd n \,{-}\, 3\tau \,{+}\, 2] : \per \left(
  \T[i \dd i \,{+}\, 3\tau \,{-}\, 2] \right) \leq \tfrac{1}{3} \tau
  \right\}.
\]
The description of the algorithm is divided into the nonperiodic case
(when $\R = \emptyset$) and the general case.

\vspace{1ex}
\subsubsection{The Nonperiodic Case}\label{sec:bwt-nonperiodic}

Let $(s_i')_{i \in [1 \dd |\S|]}$ be the sequence containing all
positions in $\S$ such that $i<j$ holds~if
\begin{itemize}
\item $\T^{\infty}[s_i' \dd s_{i}' + 7\tau) \prec \T^{\infty}[s_j' \dd
    s_j' + 7\tau)$, or
\item $\T^{\infty}[s_i' \dd s_i' + 7\tau) = \T^{\infty}[s_j' \dd s_j'
    + 7\tau)$ and\\ $\revstr{\T^{\infty}[s_i' - \tau \dd s_i')} \prec
      \revstr{\T^{\infty}[s_j' - \tau \dd s_j')}$.
\end{itemize}
Based on $(s_i')_{i \in [1 \dd |\S|]}$, we define three length-$|\S|$
sequences. For $i \in [1 \dd |\S|]$, we set
\begin{align*}
  \Wboth[i] &= \T^{\infty}[s'_i - \tau \dd s'_i + 7\tau),\\
  \Wpref[i] &= \revstr{\T^{\infty}[s_i' - \tau \dd s_i')},\\
  \Wsuf[i] &= \T^{\infty}[s_i' \dd s_i' + 7\tau).
\end{align*}

Recall that we can compactly represent the sequence $\Wboth$ in
$\bigO(1 + |{\rm RL}(\Wboth)|)$ space using ${\rm
RL}((\lpos(\Wboth[j]))_{j \in [1 \dd |\S|]})$.  The sequences
$\Wpref$ and $\Wsuf$ can be represented analogously, except that we
use ${\rm RL}((\rpos(\Wpref[j]))_{j \in [1 \dd |\S|]})$ for $\Wpref$.

\begin{lemma}\label{lm:w-size}
  The sequences $\widetilde{W}$, $W$, and $W'$ defined above satisfy
  $|{\rm RL}(\Wpref)|$, $|{\rm RL}(\Wsuf)| \leq |{\rm RL}(\Wboth)|
  \,{\leq}\, |{\rm comp}_7(\S)|$.
\end{lemma}
\begin{proof}
  For the first inequality, note that $\Wboth[i] = \Wboth[i + 1]$
  implies $\Wpref[i] = \Wpref[i + 1]$ and $\Wsuf[i] = \Wsuf[i + 1]$.
  
  Let ${\rm RL}(\Wboth) = ((R_1, \lambda_1), \ldots, (R_h,
  \lambda_h))$.  Observe that $i \neq j$ implies $R_i \neq R_j$. For
  $i \in [1 \dd h]$, let $\T^{\infty}[p - \tau \dd p + 7\tau)$ be the
  occurrence of $R_i$ in $\T^{\infty}$ that minimizes $p \in [1 \dd
  n]$. Then, there exists $j \in [1 \dd z]$ such that $e_j - 8\tau <
  p - \tau \leq e_j$.  By the consistency of $\S$, we conclude that $p
  \in \S \cap (e_j-7\tau \dd e_j + \tau] \subseteq {\rm comp}_7(\S)$.
  The claim follows, since this map is injective.
\end{proof}\vspace{1ex}

Importantly, the compact representations of $\Wboth$, $\Wpref$, and
$\Wsuf$ can be computed efficiently.

\begin{lemma}\label{lm:w-construction}
  Given ${\rm comp}_7(\S)$ and the LZ77 parsing of $\T$, the compact
  representations of $\Wboth$, $\Wpref$, and $\Wsuf$ can be
  constructed in $\bigO(z \log^4 n + |{\rm comp}_7(\S)| \log^3 n)$
  time.
\end{lemma}
\begin{proof}
  Denote ${\rm RL}(\Wboth) = ((R_1, \lambda_1), \ldots, (R_k,
  \lambda_k))$, letting $\lambda_0 = 0$, and $\delta_i = \lambda_i -
  \lambda_{i - 1}$ for $i \in [1 \dd k]$. As observed in the proof of
  \cref{lm:w-size}, it holds $\{R_i\}_{i \in [1 \dd k]} =
  \{\T^{\infty}[i - \tau \dd i + 7\tau) : i \in {\rm
  comp}_7(\S)\}$. Using LCE queries on $\T$ we can sort any set of
  substrings of $\T^{\infty}$. Thus, using
  \cref{th:index-lce,th:index-leftmost-occ}, we first compute the
  sequence $(\lpos(R_i))_{i \in [1 \dd k]}$. We then observe that by
  the consistency of $\S$, the value $\delta_i$ is the number of
  occurrences of $R_i$ in $\T^{\infty}$ starting at a position $j$
  satisfying $j + \tau \in [1 \dd n]$. Thus, we obtain $(\lambda_i)_{i
  \in [1 \dd k]}$ using \cref{th:index-count}. In total, this takes
  $\bigO(z \log^4 n + |{\rm comp}_7(\S)| \log^3 n)$ time.

  To obtain the compact representation of $\Wpref$ (resp. $\Wsuf$), it
  now suffices to merge the adjacent equal runs obtained by discarding
  the length-$7\tau$ suffix (resp.\ length-$\tau$ prefix) of strings
  in $\Wboth$.  Importantly, the counts for the runs in ${\rm
  RL}(\Wpref)$ and ${\rm RL}(\Wsuf)$ are computed from the counts of
  ${\rm RL}(\Wboth)$.  The merging is performed using
  \cref{th:index-lce}. We then compute the leftmost occurrences using
  \cref{th:index-leftmost-occ,th:index-rightmost-occ}.  In total, we
  spend $\bigO(z \log^4 n + |{\rm comp}_7(\S)| \log^3 n)$ time.
\end{proof}

Next, we recall the notion of \emph{distinguishing prefixes},
originally introduced in~\cite{sss}, that allows mapping each suffix
$\T^\infty[i\dd )$ to the corresponding node of the wavelet tree
of~$W$.
\begin{definition}[{\fontfamily{lmss}\selectfont Distinguishing prefix}{\fontfamily{lmr}\selectfont}]\label{def:dp}
  For any position $i \in [1 \dd \max (\S \cup \{0\})]$, let $i_{\rm
  succ} = \min \{ j \in \S : j \geq i \}$. The \emph{distinguishing
  prefix} of $\T[i \dd n]$ is $D_i = \T[i \dd i_{\rm succ} +
  2\tau)$.
\end{definition}

Let $\mathcal{D} = \{D_j : j \in [1 \dd \max (\S \cup \{0\})]\}$. Note
that if $Y$ starts with $D \in \mathcal{D}$, then, for every
occurrence $\T^{\infty}[i \dd i {+} |Y|) = Y$ with $i \in [1 \dd n]$,
the distinguishing prefix $D_i$ is defined and satisfies $D_i =
D$.\footnote{Here, we utilize the assumption that $\T[n] = \$$.  For
this reason, if $Y$ contains $\$$, then $\T^{\infty}[i \dd i{+}|Y|) =
Y$ for at most one index $i\in [1 \dd n]$.}  Thus, for any such $Y$,
we define $D_Y = D$.  We denote $D'_Y = D_Y[1 \dd |D_Y| {-} 2\tau]$.

We now present the key lemma used in our algorithm.  Assume that we
have constructed a wavelet tree of $\Wpref$.

\begin{lemma}\label{lm:main}
  Let $Y$ be a string starting with an element of $\mathcal{D}$.
  Denote $Y = XX'$, where $X=D'_Y$, and assume that $|X|<\tau$ and
  $|X'| \leq 7\tau$. Let $[y \dd y']$ be the range of all indices $i$
  such that $\T^{\infty}[\SA[i] \dd]$ starts with $Y$ for $i \in [y
  \dd y']$.

  Let $\Wsuf[f \dd f']$ be the range containing all elements of
  $\Wsuf$ prefixed with the string $X'$, and let $[b \dd b'] = \{i \in
  [1 \dd |I_{\revstr{X}}|] : I_{\revstr{X}}[i] \in [f \dd f']\}$. Then
  \begin{enumerate}
  \item $B_{\revstr{X}}[b \dd b']$ and $\BWT[y \dd y']$ are equal as
    multisets.
  \item $|{\rm RL}(B_{\revstr{X}}[b \dd b'])| \leq 3|{\rm RL}(\BWT[y
    \dd y'])|$.
  \end{enumerate}
\end{lemma}
\begin{proof}
  1. Due to $\T[n] = \$$, by the consistency of $\S$, there is a
  one-to-one correspondence between the occurrences of $Y$ in
  $\T^{\infty}$ starting in $[1 \dd n]$, and positions $s \in \S$
  satisfying (a) $\T^{\infty}[s \dd s \,{+}\, |X'|) = X'$, and (b)
  $\T^{\infty}[s \,{-}\, |X| \dd s) = X$. Let us interpret the
  process of identifying the subsequence of $(s_i')_{i \in [1 \dd
  |\S|]}$ containing all such $s$ as a two-step search.

  First, we note that $s'_i \in \S$ satisfies condition (a) if and
  only if $i \in [f \dd f']$. We refer to the process of identifying
  the range $[f \dd f']$ as the \emph{forward search}. Then, to
  additionally satisfy (b), we select a subsequence of $\Wpref[f \dd
  f']$ containing only strings ending with $X$ (\emph{backward
  search}). By definition of $[b \dd b']$, such subsequence is given
  by $I_{\revstr{X}}[b \dd b']$, and moreover, $B_{\revstr{X}}[b \dd
  b']$ contains symbols preceding suffix $X$ in all $\Wpref[f \dd
  f']$ having $X$ as a suffix. This yields the claim.

  2. Let $\widetilde{Y}$ be any substring of $\T^{\infty}$ such that
  $Y$ is a prefix of $\widetilde{Y}$ and $|\widetilde{Y}| = |X| +
  7\tau$.  Let $[\widetilde{y} \dd \widetilde{y}']$, $[\widetilde{f}
  \dd \widetilde{f}']$, and $[\widetilde{b} \dd \widetilde{b}']$ be
  the ranges (as in the lemma statement) for $\widetilde{Y}$.  Since
  $Y$ is a prefix of $\widetilde{Y}$ and $D_{\widetilde{Y}} = D_{Y}$,
  we obtain $[\widetilde{y} \dd \widetilde{y}'] \subseteq [y \dd y']$,
  $[\widetilde{f} \dd \widetilde{f}'] \subseteq [f \dd f']$, and
  $[\widetilde{b} \dd \widetilde{b}'] \subseteq [b \dd b']$.
  Moreover, by definition of $I_{\revstr{X}}$, the range $[b \dd b']$
  is a disjoint union of ranges $[\widetilde{b} \dd \widetilde{b}']$
  corresponding to all choices of $\widetilde{Y}$.

  Since $|\widetilde{Y}|-|X| = 7\tau$ implies $|{\rm
  RL}(\Wsuf[\widetilde{f} \dd \widetilde{f}'])| = 1$, the symbols in
  $B_{\revstr{X}}[\widetilde{b} \dd \widetilde{b}']$ appear in the
  nondecreasing order. Consequently, $B_{\revstr{X}}[b \dd b']$ can be
  obtained by partitioning $\BWT[y \dd y']$ into blocks corresponding
  to all $\widetilde{Y}$, and sorting the symbols in each block. If
  $\BWT[y \dd y']$ initially contains $k$ runs, this adds at most
  $2(k-1)$ new runs.
\end{proof}\vspace{1ex}

Let $\BWT_{\ell}[y \dd y'] = c^{\delta} \in \mathbb{N}^{+}$ be a run
in $\BWT_{\ell}$. Since $\T^{\infty}[c \dd c + \ell)$ is left-maximal,
we have $c + \ell \leq n$. Let $Y = \T[c \dd c + \ell)$. By $3\tau
\leq \ell$ and $\R = \emptyset$, we obtain $[c \dd c + \tau) \cap \S
\neq \emptyset$. Moreover, if $Y = XX'$ is such that $|X| = \Delta$,
where
\[
  c + \Delta = \min(\S \cap [c \dd c + \tau)),
\]
then $D'_Y = X$. Since we also have $|X'| \leq 2\ell \leq 7\tau$ (due
to $q \geq 4$), \cref{lm:main} holds for $Y$. Let $[b \dd b']$ be the
range of $B_{\revstr{X}}$ corresponding to $Y$ through
\cref{lm:main}. Then, $|{\rm RL}(B_{\revstr{X}})| > 1$. Moreover:
\begin{itemize}
\item For every $j \in [0 \dd \delta)$ such that $\BWT_{2\ell}[y + j]
  \in \Sigma$, it holds: $\BWT_{2\ell}[y + j] =
  B_{\revstr{X}}[b + j]$.  To see this, apply \cref{lm:main} to all
  strings $\widetilde{Y} \in \mathcal{Y} := \{\T^{\infty}[\SA[j] \dd
  {\SA[j] + 2\ell}) : j \in [y \dd y']\}$ ordered
  lexicographically. Since $\smash{D_{\widetilde{Y}} = D_Y}$, the
  corresponding ranges $\smash{[\widetilde{b} \dd \widetilde{b}']}$
  form a left-to-right partition of $[b \dd b']$.  Thus, if
  $\widetilde{Y}$ is not left-maximal, its $\BWT$ block is
  $\BWT[\widetilde{y} \dd \widetilde{y}']=B_{\revstr{X}}[\widetilde{b}
  \dd \widetilde{b}']$.

\item On the other hand, if $c' = \BWT_{2\ell}[y + j] \in \mathbb{N}$
  holds for some $j \in [0 \dd \delta)$, then the string
  $B_{\revstr{X}}[\widetilde{b} \dd \widetilde{b}']$ corresponding
  (through \cref{lm:main}) to $\widetilde{Y} = \T^{\infty}[{c' \dd c'
  + 2\ell}) \in \mathcal{Y}$ is not unary, i.e., there exists an index
  $\hat{b} \in [\widetilde{b} \dd \widetilde{b}')$ satisfying
  $B_{\revstr{X}}[\hat{b}] \neq B_{\revstr{X}}[\hat{b} + 1]$, and
  $\lcp(\Wsuf[I_{\revstr{X}}[\hat{b}]],\allowbreak
  \Wsuf[I_{\revstr{X}}[{\hat{b} + 1}]]) \geq 2\ell - |X|$. The
  converse is also true: if $\hat{b} \in [b \dd b')$ satisfies the two
  conditions, then $\BWT_{2\ell}[y + (\hat{b} - b)] \in
  \mathbb{N}$. Consequently, the set of left-maximal strings in
  $\mathcal{Y}$ is
  \begin{align*}
    &\{X \cdot \Wsuf[I_{\revstr{X}}[\hat{b}]][1\dd 2\ell{-}|X|]
    :\\ &\hspace{1.2cm}\hat{b} \in [b \dd b'),\;
    B_{\revstr{X}}[\hat{b}] \neq B_{\revstr{X}}[\hat{b}+1],\text{
    and}\\ &\hspace{1.2cm}\lcp(\Wsuf[I_{\revstr{X}}[\hat{b}]],
    \Wsuf[I_{\revstr{X}}[\hat{b}+1]]) \geq 2\ell-|X|\}.
  \end{align*}
  Moreover, letting $\widetilde{Y} = X \cdot
  \Wsuf[I_{\revstr{X}}[\hat{b}]][1\dd 2\ell{-}|X|]$ for any $\hat{b}$
  satisfying the above conditions, the range $[\widetilde{y} \dd
  \widetilde{y}'] = \{j \in [1 \dd n]: \T^{\infty}[\SA[j] \dd \SA[j]
  + 2\ell) = \widetilde{Y}\}$ satisfies $[\widetilde{y} \dd
  \widetilde{y}'] = y - b + [\widetilde{b} \dd \widetilde{b}']$,
  where $\smash{B_{\revstr{X}}[\widetilde{b} \dd \widetilde{b}']}$
  corresponds to $\smash{\widetilde{Y}}$ through \cref{lm:main}.
\end{itemize}

The algorithm processing a run $\BWT_{\ell}[y \dd y'] = c^{\delta} \in
\mathbb{N}^+$ is thus as follows. Letting $Y = \T[c \dd c +
\ell)$ and $X=D'_Y$, we first compute $|X|$ and the pointer to
$v_{\revstr{X}}$. We then perform a single forward and backward search
to find the ranges $[f \dd f']$ and $[b \dd b']$ for $Y$. Given these,
the computation of $\BWT_{2\ell}[y \dd y']$ is achieved by a series of
forward and backward searches (at most one per run of
$B_{\revstr{X}}[b \dd b']$).

The length $|X|$ is computed using ${\rm comp}_7(\S)$. The pointers to
$v_{\revstr{X}}$ are precomputed since $|{\rm RL}(B_{\revstr{X}})| >
1$ implies $v_{\revstr{X}} \in V(\mathcal{T}_c)$.  To implement a
forward search, we use LCE queries on $\T$. A backward search is
implemented using primary index queries on the wavelet tree of $W$.
We thus obtain:

\begin{proposition}\label{pr:bwt-nonperiodic}
  Let $\T$ be a string of length $n$, and let $\ell = 2^q$ be such
  that $q \in [4 \dd \lceil{\log n}\rceil)$.  If $\R = \emptyset$,
  then, given ${\rm RL}(\BWT_{\ell})$ and the LZ77 parsing of $\T$,
  the sequence ${\rm RL}(\BWT_{2\ell})$ can be constructed in
  $\bigO((r + z) \log^5 n)$ time.
\end{proposition}
\begin{proof}
  Let $\tau = \lfloor \tfrac{\ell}{3} \rfloor$. We start by
  constructing the compressed representation ${\rm comp}_7(\S)$ of a
  $\tau$-synchronizing set $\S$ of $\T$ satisfying $|{\rm comp}_7(\S)|
  = \bigO(z)$. By \cref{th:csss-construction}, this takes $\bigO(z
  \log^5 n)$ time. Next we construct the compact representations of
  $\Wboth$, $\Wpref$, and $\Wsuf$. By
  \cref{lm:w-size,lm:w-construction}, they need $\bigO(z)$ space and
  can be built in $\bigO(z \log^4 n)$ time. Lastly, we construct the
  compressed wavelet tree $\mathcal{T}_c$ for $\Wpref$, augmented with
  the data structure supporting primary index queries. By
  \cref{th:pi-construction-rev}, this takes $\bigO(z \log^2 n)$ time.

  Let $\BWT_{\ell}[y \dd y'] = c^{\delta} \in \mathbb{N}^+$ be one of
  the runs in $\BWT_{\ell}$. As noted earlier, this implies $c + \ell
  \leq n$.  By $3\tau \leq \ell$ and $\R = \emptyset$, we then obtain
  $\S \cap [c \dd c + \tau) \neq \emptyset$. Denote $Y = \T[c \dd c +
  \ell) = XX'$, with $X = D'_Y$. Then:
  \begin{enumerate}
  \item By $c = \lpos(Y)$, we have $\S \cap [c \dd c + \tau)
    \subseteq {\rm comp}_7(\S)$ and get $|X| = \min(\S \cap
    [c \dd c {+} \tau)) {-} c$ in $\bigO(\log n)$ time.
  \item To compute the pointer to $v_{\revstr{X}}$, we store a hash
    table than maps $(|Z|, \rpos(Z))$ to $v_Z$ for every $v_Z \in
    V(\mathcal{T}_c)$.  Such table is easily initialized in $\bigO(z
    \log^5 n)$ time during the construction of $\mathcal{T}_c$. Thus,
    using \cref{th:index-rightmost-occ}, we obtain a pointer to
    $v_{\revstr{X}}$ in $\bigO(\log^4 n)$ time.
  \item To find the range $[f \dd f']$ (i.e., the forward search), we
    use the fact that $\Wsuf$ is lexicographically sorted. Thus, by
    \cref{th:index-lce}, we only need $\bigO(\log^2 n)$ time.
  \item Lastly, to find the range $[b \dd b']$ (i.e., the backward
    search), we use primary index queries on $\mathcal{T}_c$. Using
    binary search and the data structure from
    \cref{th:pi-construction-rev}, computing $[b \dd b']$ takes
    $\bigO(\log^5 n)$ time.
  \end{enumerate}
  We then proceed to the construction of ${\rm RL}(\BWT_{2\ell}[y \dd
  y'])$.  We begin by initializing the output to ${\rm
  RL}(B_{\revstr{X}}[b \dd b'])$. By the above discussion, it
  remains to identify all left-maximal $\widetilde{Y} \in \mathcal{Y}$
  and replace (in the output) the range $[\widetilde{y} \dd
  \widetilde{y}'] \subseteq [y \dd y']$ corresponding to each such
  $\widetilde{Y}$ with $(\lpos(\widetilde{Y}))^{\widetilde{y}' -
  \widetilde{y}+1}$.

  Using ${\rm RL}(B_{\revstr{X}}[b \dd b'])$, and LCE and primary
  index queries we find all $\hat{b} \in [b \dd b')$ satisfying
  $B_{\revstr{X}}[\hat{b}] \neq B_{\revstr{X}}[\hat{b} + 1]$ and
  $\lcp(\Wsuf[I_{\revstr{X}}[\hat{b}]], \Wsuf[I_{\revstr{X}}[\hat{b} +
  1]]) \geq 2\ell {-} |X|$. For every such $\hat{b}$, we find the
  ranges $[\widetilde{f} \dd \widetilde{f}']$ and $[\widetilde{b}
  \dd \widetilde{b}']$, corresponding to the left-maximal
  $\widetilde{Y} = X \cdot \Wsuf[I_{\revstr{X}}[\hat{b}]][1 \dd 2\ell
  {-} |X|]$ as above (using LCE and primary index queries). To
  compute $\lpos(\widetilde{Y})$ we use \cref{th:index-leftmost-occ},
  noting that $\widetilde{Y}$ occurs in
  $\Wboth[I_{\revstr{X}}[\hat{b}]]$.

  Initializing the auxiliary indexes
  (\cref{th:index-lce,th:index-leftmost-occ,th:index-rightmost-occ})
  takes $\bigO(z \log^ 4 n)$ time. By \cref{lm:main} we spend at most
  $\bigO(\log^5 n)$ time per run of $\BWT_{2\ell}$.  Thus, by
  \cref{lm:bwtm-size}, we spend $\bigO(r \log^5 n)$ time overall.
\end{proof}\vspace{1ex}

\subsubsection{The General Case}\label{sec:bwt-general}

In this section, we show how to extend the algorithm from the previous
section to allow $\R \neq \emptyset$.

We start by observing, that the only place where we used the
assumption $\R = \emptyset$ in Proposition~\ref{pr:bwt-nonperiodic} is
to deduce that for any run $\BWT_{\ell}[y \dd y'] = c^{\delta}$ of
$\BWT_{\ell}$ that satisfies $c^{\delta} \in \mathbb{N}^+$, it holds
$\S \cap [c \dd c + \tau) \neq \emptyset$. All lemmas in the previous
section hold, however, even when $\R \neq \emptyset$. The general
algorithm is therefore an extension of the procedure in
\cref{pr:bwt-nonperiodic}, i.e., all runs $\BWT_{\ell}[y \dd y'] =
c^{\delta} \in \mathbb{N}^+$ satisfying $\S \cap [c \dd c + \tau) \neq
\emptyset$ are processed as in
\cref{pr:bwt-nonperiodic}, except rather than computing $\S$
satisfying $|{\rm comp}_7(\S)| = \bigO(z)$, we instead compute
$\S$ satisfying $|{\rm comp}_{9}(\S)| = \bigO(z)$. Any
such $\S$ clearly also satisfies $|{\rm comp}_7(\S)| = \bigO(z)$.

The remaining runs are handled as follows.
Let $\R' := \{ j \in \R : j - 1 \not \in \R \}$ be a subset of
$\R$. For $k \in [1 \dd n]$, let $\mathcal{F}_k = \{\T^{\infty}[i \dd
i + k) : i \in \R\}$ and $\mathcal{F}_{k}' := \left \{ \T^{\infty}[i
\dd i + k) : i \in \R' \right \} \subseteq \mathcal{F}_{k}$.

\begin{lemma}\label{lm:periodic}
  It holds:
  \begin{enumerate}
  \item $|\mathcal{F}_{2\ell}'| \leq |{\rm comp}_7(\S)|$.
  \item If $\T^{\infty}[i \dd i + 2\ell) = \widetilde{Y} \in
    \mathcal{F}_{2\ell} \setminus \mathcal{F}_{2\ell}'$ then
    $\T^{\infty}[i - 1] = \widetilde{Y}[\per(\widetilde{Y}[1 \dd 3\tau
        - 1])]$.
  \end{enumerate}
\end{lemma}
\begin{proof}
  1. Let $\widetilde{Y} \in \mathcal{F}_{2\ell}'$. Then,
  there exists $j \in \R'$ such that $\T^{\infty}[j \dd j + 2\ell) =
  \widetilde{Y}$. Let $j' = \lpos(\T^{\infty}[j - 1 \dd j + 2\ell))$.
  Since $j - 1 \in \S$, by the consistency condition, we have $j' \in \S$.
  Moreover, since for $\ell \geq 16$ we have $2\ell+1 \leq 7\tau$
  (recall, that $\tau = \lfloor \frac{\ell}{3} \rfloor$), it also holds
  $j' \in {\rm comp}_7(\S)$. This proves the claim as this mapping from
  $\mathcal{F}_{2\ell}'$ to ${\rm comp}_7(\S)$ is injective.

  2. By definition of $\mathcal{F}_{2\ell}'$, $\widetilde{Y} \in
  \mathcal{F}_{2\ell} \setminus \mathcal{F}_{2\ell}'$ implies $i>1$
  and $i - 1 \in \R$. We also have $i \in \R$. This implies that $p :=
  \per(\T^{\infty}[i {-} 1 \dd i {+} 3\tau {-} 2))$ and $p' :=
  \per(\T^{\infty}[i \dd i {+} 3\tau {-} 1)) = \per(\widetilde{Y}[1
  \dd 3\tau-1])$ satisfy $p, p' \leq \frac{1}{3}\tau$. By
  periodicity lemma, $p = p'$ and hence $\T^{\infty}[i - 1] =
  \T^{\infty}[i - 1 + p'] = \widetilde{Y}[p']$.
\end{proof}\vspace{1ex}

Let $\BWT_{\ell}[y \dd y'] = c^{\delta} \in \mathbb{N}^{+}$ be a run
in ${\rm RL}(\BWT_{\ell})$. Let $Y = \T[c \dd c + \ell)$ and assume
$\S \cap [c \dd c + \tau) = \emptyset$.  Consider $\widetilde{Y} \in
\mathcal{Y} := \{\T^{\infty}[\SA[i] \dd \SA[i] + 2\ell) : i \in [y \dd
y']\}$.  Let moreover $[\widetilde{y} \dd \widetilde{y}']
\subseteq [y \dd y']$ be the range of all indices $i$ such that
$\T^{\infty}[\SA[i] \dd ]$ starts with $\widetilde{Y}$ for $i \in
[\widetilde{y} \dd \widetilde{y}']$.
Observe, that $\widetilde{Y} \in \mathcal{F}_{2\ell}$ and thus, by the
above lemma, whenever $\widetilde{Y} \in \mathcal{F}_{2\ell} \setminus
\mathcal{F}_{2\ell}'$, we have $\BWT_{2\ell}[\widetilde{y} \dd
\widetilde{y}'] = \widetilde{Y}[p]^{\widetilde{y}' - \widetilde{y} +
1} = Y[p]^{\widetilde{y}' - \widetilde{y} + 1}$, where
$p = \per(\widetilde{Y}[1 \dd 3\tau - 1]) = \per(Y[1 \dd
3\tau - 1])$, i.e., $\BWT_{2\ell}[\widetilde{y} \dd \widetilde{y}']$
is a unary string consisting of the symbol that depends only on $Y$.

With this in mind, whenever during the
processing of ${\rm RL}(\BWT_{\ell})$ in the algorithm of
\cref{sec:bwt-nonperiodic} we encounter a run $\BWT_{\ell}[y \dd y'] =
c^{\delta} \in \mathbb{N}^{+}$ such that $\S \cap [c \dd c + \tau) =
\emptyset$ (to recognize such runs, we can simply check if $\S \cap
[c \dd c + \tau) = \emptyset$, since by definition of ${\rm
BWT}_{\ell}$, the occurrence $\T[c \dd c + \ell)$ is the leftmost
in $\T$), we set (letting $Y = \T[c \dd c + \ell)$)
\[
  \BWT_{2\ell}[y \dd y'] = Y[\per(Y[1 \dd 3\tau - 1])]^{y'-y+1}.
\]

The resulting $\BWT_{2\ell}$ is correct, except for $\SA$ ranges
corresponding to $\widetilde{Y} \in \mathcal{F}_{2\ell}'$.  Since
there are at most $|{\rm comp}_7(\S)| = \bigO(z)$ such
$\widetilde{Y}$, we can individually find each such range and correct
the corresponding symbols of $\BWT_{2\ell}$.

In the rest of the section we focus on implementing the correction
algorithm. Observe first that to compute $\mathcal{F}_{2\ell}'$, it
suffices to iterate through positions in ${\rm comp}_9(\S)$. By the
density condition, whenever for adjacent elements $s_i < s_{i+1}$ we
have $s_{i+1}-s_i>\tau$, then $s_{i}+1 \in \R'$. For each such $s_i$,
we collect a symbol-string pair $(\T[s_i]$, $\T^{\infty}[s_i + 1 \dd
s_i + 1 + 2\ell))$. The resulting collection contains $\bigO(z)$
pairs.  If we now sort it according to the strings, we can easily tell
whether the range $\BWT[\widetilde{y} \dd \widetilde{y}']$
corresponding to each $\widetilde{Y} \in \mathcal{F}_{2\ell}'$ is
uniform or not. If so, we know the preceding symbol and the frequency
of $\widetilde{Y}$ is obtained using \cref{th:index-count}. Otherwise
we only need the frequency. The main challenge in the correction
phrase is to compute the starting index $\widetilde{y}$ of the range
in $\SA$ corresponding to each $\widetilde{Y} \in
\mathcal{F}_{2\ell}'$. More precisely, for each $X \in
\mathcal{F}_{2\ell}'$ it suffices to compute a \emph{local rank}
$r_{X} = |{\rm pos}(\widetilde{Y})|$, where\footnote{To ease the
notation, in the rest of this section we use $X$ to denote a generic
string from $\mathcal{F}_{2\ell}'$, but remark that such $X$
corresponds to $\widetilde{Y}$ in Section~\ref{sec:bwt-nonperiodic}.}
\[
  {\rm pos}(X) := {\Big \{} j \in \R : \T^{\infty}[j \dd j + 2\ell)
  \prec X \text{ and } \T^{\infty}[j \dd j + \ell) = X[1 \dd \ell]
  {\Big \}}.
\]

The outline of the algorithm for computing all $r_{X}$ is similar
to~\cite[Section 6.1.2]{sss}. However, nearly every step needs to be
redesigned to use only $\bigO(z \polylog n)$ space.  Motivated by
\cref{lm:periodic}, we focus on the properties of runs of consecutive
positions in $\R$. We first partition such runs into classes where the
computation can be done independently.

\paragraph{Equivalent Runs}
We say that two positions $i, j \in \R$ are \emph{$\R$-equivalent} if
there exists a primitive string $H$ of length $|H| \leq
\frac{1}{3}\tau$ such that both $\T[i \dd i + \tau)$ and $\T[j \dd j +
\tau)$ are substrings of $H^{\infty}$. It is easy to see that if $i$
and $j$ are $\R$-equivalent, then there is exactly $\per(\T[i \dd i +
\tau)) = \per(\T[j \dd j + \tau))$ choices for $H$.  For each
equivalence class of this relation, we designate a unique $H \in
\Sigma^+$ and call it the \emph{$\R$-root} of $j \in [i]$, denoting
$\R\text{-root}(j) = H$. Then, $i$ is $\R$-equivalent to $j$ if and
only if $\R\text{-root}(i) = \R\text{-root}(j)$.

Let us now explain how we choose $\R$-roots. Let $j \in \R$ and let $p
= \per(\T[j \dd j + \tau))$. We define $\R\text{-root}(j) := \T[j' \dd
j' + p)$, where $j' = \min \{ j'' \in \R : j'' \text{ is }
\R\text{-equivalent to } j \}$.  Crucially, in the above definition it
always holds $j' \in \R'$.  This allows us to efficiently compute the
$\R\text{-root}$ for any $j \in \R$ as follows. Let $(F_1, \ldots,
F_k)$ be the sequence such that $\{F_1, \ldots, F_k\} =
\mathcal{F}_{2\tau}'$ and $\lpos(F_1) < \ldots < \lpos(F_k)$. Consider
the string $\T_{\rm root}$ defined as follows:
\[
  \T_{\rm root} := \T \cdot \left( \bigodot_{i = 0}^{k - 1} F_{k - i}
  \cdot \#_{i} \right),
\]
where $\#_i$ are distinct sentinel symbols not occurring in $\T$.

Let $j \in \R$ and assume we computed $p = \per(\T[j \dd j + \tau))$
using \cref{thm:period} (note that here we are guaranteed that $p \leq
\frac{\tau}{3}$ thus computing the exact period is equivalent to a
2-period query).  To compute $\R\text{-root}(j)$, we first find the
rightmost occurrence $\T_{\rm root}[j' \dd j' + \tau)$ of $\T[j \dd j
+ \tau)$ in $\T_{\rm root}$. Clearly, we must have $j' > |T|$. We then
binary search $b$ and $e$ such that $\T[j' \dd j' + \tau)$ occurs
inside $\T_{\rm root}[b \dd e) = F_t$ for some $t \in [1 \dd k]$ and
obtain $\R\text{-root}(j) = \T_{\rm root}[b \dd b + p)$.  Moreover,
$\Delta := (p + b - i_{\rm left}) \bmod p < p$ satisfies
$\R\text{-root}(j) = \T[j + \Delta \dd j + \Delta + p)$.

\begin{lemma}\label{lm:root}
  Given the LZ77 parsing and a string synchronizing set $\S$ of text
  $\T$ satisfying $|{\rm comp}_{9}(\S)| = \bigO(z)$, we can in
  $\bigO(z \log^4 n)$ time construct a data structure, than given $j
  \in \R$, returns in $\bigO(\log^4 n)$ time a value $\Delta < p =
  \per(\T[j \dd j + \tau)) \leq \frac13 \tau$ satisfying
  $\R\text{-root}(j) = \T[j + \Delta \dd j + \Delta + p)$.
\end{lemma}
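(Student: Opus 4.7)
The plan is to build an index on the augmented text $\T_{\rm root}$ that supports rightmost-occurrence queries, and then reduce every ${\sf R}$-root query to one such query followed by a binary search.

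Preprocessing would proceed as follows. First, I would invoke \cref{thm:period} on $\T$ in $\bigO(z \log^2 n)$ time to support 2-period queries and \cref{th:index-lce} in $\bigO(z \log n)$ time to support LCE queries. I would then scan ${\rm comp}_k(S)$ and, for each pair of adjacent elements $s_t < s_{t+1}$ with $s_{t+1}-s_t>\tau$, record the position $s_t+1$: by the density condition on $S$, this collects exactly the starting positions of all strings in $\mathcal{F}_{2\ell}'$, and there are $\bigO(z)$ of them. Sorting the corresponding length-$2\ell$ substrings lexicographically via LCE comparisons and discarding duplicates yields the distinct collection $(F_1,\ldots,F_k)$ with $k=\bigO(z)$, ordered so that the leftmost occurrence in $\T$ of $F_i$ precedes that of $F_{i+1}$.

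Next, I would build the LZ77 parsing of $\T_{\rm root}$ by taking the parsing of $\T$ and appending, for each $i\in[0\dd k)$, one phrase copying $F_{k-i}$ from its leftmost occurrence in $\T$ and one singleton phrase for the sentinel $\#_i$; this yields $z+2k=\bigO(z)$ phrases, and by optimality of LZ77 the true parsing of $\T_{\rm root}$ also has $\bigO(z)$ phrases. Feeding this parsing into the rightmost-occurrence variant of \cref{th:index-leftmost-occ} (adapted as in the construction of the LZ77 parsing of $\T_{\rm near}$ in the proof of \cref{th:csss-construction-nonperiodic}) yields in $\bigO(z \log^4 n)$ time an index that returns the rightmost occurrence of any substring of $\T$ inside $\T_{\rm root}$ in $\bigO(\log^4 n)$ time per query. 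Given $j\in{\sf R}$, the query then proceeds in three steps: (i) a 2-period query returns $p=\per(\T[j\dd j+\tau))$ in $\bigO(\log^3 n)$ time, and since $p\le\tau/3$ this returns the exact value; (ii) the rightmost-occurrence index locates the position $j'$ of the rightmost occurrence of $\T[j\dd j+\tau)$ in $\T_{\rm root}$, and because the $\#_i$ are unique symbols not appearing in $\T$ this occurrence lies entirely inside some appended block $F_t$, so $j'>|\T|$; (iii) a binary search over the $\bigO(z)$ precomputed block boundaries identifies $t$ and the starting position $b$ of $F_t$ in $\bigO(\log z)$ time, so returning $\Delta:=(p+b-j')\bmod p$ gives the required offset with ${\sf R}\text{-root}(j)=\T[j+\Delta\dd j+\Delta+p)$.

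The main obstacle is verifying that $j'$ lies inside the \emph{canonical} block, namely the $F_t$ whose leftmost occurrence in $\T$ equals the leftmost position $j_{\rm left}$ in the ${\sf R}$-equivalence class of $j$. Two observations settle this. First, every $F_i$ in the same class as $j$ contains $\T[j\dd j+\tau)$ as a substring: membership in ${\sf R}$ implies that the length-$(3\tau-1)$ prefix of such an $F_i$ has period $p$, so it already spans all $p$ distinct length-$\tau$ rotations of the common root $H$. Second, distinct equivalence classes have disjoint sets of length-$\tau$ substrings, since any length-$\tau$ substring of $H^\infty$ determines $H$ up to rotation and hence determines the class. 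Because the enumeration $F_k,F_{k-1},\ldots,F_1$ used in $\T_{\rm root}$ places the canonical $F_t$ at the rightmost end among all blocks from its class, the rightmost occurrence of $\T[j\dd j+\tau)$ necessarily falls inside it, and its prefix $\T_{\rm root}[b\dd b+p)$ is by definition ${\sf R}\text{-root}(j)$, completing the correctness of the announced $\Delta$.
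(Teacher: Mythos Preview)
Your construction is the paper's construction: build $\T_{\rm root}$ by appending the distinct blocks in reverse order of their leftmost occurrence, index it for rightmost-occurrence queries, and at query time compute $p$ via a 2-period query, locate the rightmost occurrence of $\T[j\dd j+\tau)$, and read off $\Delta$ from the start of the enclosing block. The paper states the procedure without a correctness argument; you add one.

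There is one genuine gap, however. You take the blocks to be the length-$2\ell$ strings in $\mathcal{F}_{2\ell}'$, whereas the paper takes the length-$2\tau$ strings $\mathcal{F}_{2\tau}'$ (the later ``$\mathcal{F}_{2\ell}'$'' in that paragraph of the paper is a typo). This choice matters precisely for your second observation. You argue that distinct equivalence classes have disjoint sets of length-$\tau$ substrings because ``any length-$\tau$ substring of $H^\infty$ determines $H$ up to rotation''. But a block $F_i$ of length $2\ell$ is only guaranteed to lie inside $H_i^\infty$ on its first $3\tau-1$ characters; the remaining $2\ell-(3\tau-1)$ characters may be non-periodic, and nothing stops the query string $\T[j\dd j+\tau)$ from occurring there. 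If such an $F_i$ has its leftmost occurrence in $\T$ earlier than $j_{\rm left}$, it sits to the right of the canonical block in $\T_{\rm root}$ and will capture the rightmost-occurrence query, yielding a wrong prefix. With $|F_i|=2\tau<3\tau-1$, each block is entirely periodic with period $|H_i|\le\tau/3$, so any length-$\tau$ substring of $F_i$ is a substring of $H_i^\infty$; then an occurrence of the query string in $F_i$ forces $|H_i|=p$ via \cref{lm:per} and primitivity of $H_i$, hence the same class, and your argument is complete. The fix is simply to shorten the blocks to $2\tau$. A second, smaller imprecision: scanning ${\rm comp}_k(S)$ for gaps exceeding $\tau$ does not collect ``exactly'' the desired positions, because adjacency in ${\rm comp}_k(S)$ need not be adjacency in $S$; you get a superset that may include positions $s_t+1\notin{\sf R}$. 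These can be discarded in $\bigO(\log^3 n)$ time each by a 2-period test on $\T[s_t+1\dd s_t+3\tau-1]$, after which every remaining block indeed starts in ${\sf R}'$ and your argument applies verbatim.
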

\begin{proof}
  The sequence $(F_1, \ldots, F_k)$ is easily obtained using ${\rm
  comp}_{9}(\S)$ and \cref{th:index-lce,th:index-leftmost-occ}. We
  then construct the LZ77 parsing of $\T_{\rm root}$ by taking the
  parsing of $\T$ and appending $2k$ phrases. We then feed the
  resulting parsing to \cref{th:index-rightmost-occ}.
\end{proof}

The usefulness of $\R$-roots is due to the following two
properties. First, if $j \in \R \setminus \R'$ then $\R\text{-root}(j)
= \R\text{-root}(j - 1)$. Second, if $X = \T^{\infty}[j \dd j +
2\ell)$ where $j \in \R'$, then every $j' \in {\rm pos}(X)$ satisfies
$\R\text{-root}(j') = \R\text{-root}(j)$. Thus, we can first partition
the set of runs in $\R$ according to $\R$-roots and then, to compute
${\rm pos}(X)$, it suffices to search through $j'$ with the same
$\R\text{-root}$ as $j$.

\paragraph{Equivalent Positions}
For $j \in \R$, let $\rend{j} = \min \{ j' \geq j : j' \not \in \R \}
+ 3\tau - 2$.  Recall~\cite[Section 6.1.2]{sss} that runs of
consecutive positions in $\R$ (and hence in particular positions in
$\R'$) are easily identified as follows. Denote $\S = \{s_1, \ldots,
s_{n'} \}$, where $s_i < s_j$ if $i < j$, and let $s_0 = 0$, $s_{n' +
1} = n - 2\tau + 2$. Then, by density condition, it holds:
\[
  \R' = \{s_{i} + 1 : i \in [0 \dd n'] \text{ and } s_{i+1} - s_i
  > \tau \}.
\]
Furthermore, whenever $j - 1 = s_i$ for $j \in \R$, then $\rend{j} =
s_{i+1} + 2\tau - 1$. Thus, by~\cite[Fact 3.2]{sss}, for any $j \in
\R$, $\T[j \dd \rend{j})$ is the longest prefix of $\T[j \dd n]$
having period $p = \per(\T[j \dd j + 3\tau - 1))$.

For $j \in \R$ we define ${\rm type}(j) = +1$ if $\T[\rend{j}] \succ
\T[\rend{j} - p]$ and ${\rm type}(j) = -1$ otherwise, where $p =
\per(\T[j \dd \rend{j}))$.  Similarly as for $\R\text{-root}$, if $j
\in \R \setminus \R'$, then ${\rm type}(j) = {\rm type}(j - 1)$.
Moreover, if $X = \T^{\infty}[j \dd j + 2\ell) \subseteq
\mathcal{F}_{2\ell}'$ (where $j \in \R'$) is such that $\rend{j} - j <
2\ell$ and ${\rm type}(j) = -1$ then ${\rm type}(j') = -1$ holds for
all $j' \in {\rm pos}(X)$.  With the above observations in mind, let
$\R^{-} = \{j \in \R : {\rm type}(j) = -1\}$, $\R^{+} = \R \setminus
\R^{-}$, $\R'^{-} = \R' \cap \R^{-}$ and $\R'^{+} = \R' \cap
\R^{+}$. Moreover, let $\mathcal{F}'^{-}_{k}:= \{ \T^{\infty}[j \dd j
+ k) : j \in \R'^{-} \text{ and } \rend{j} - j < k \}$. The set
$\mathcal{F}'^{+}_k$ is defined analogously.  The above observation
can then be stated as follows: if $X \in \mathcal{F}'^{-}_{2\ell}$
then any $j \in {\rm pos}(X)$ satisfies $j \in \R^{-}$. Note also that
any such $j$ satisfies $\rend{j} - j < 2\ell$. We can therefore focus
on computing $r_{X}$ only for $X \in \mathcal{F}'^{-}_{2\ell}$. The
set $\mathcal{F}'^{+}_{2\ell}$ is processed symmetrically, and strings
$X$ corresponding to $j \in \R'$ satisfying $\rend{j} - j \geq 2\ell$
are processed separately.

To efficiently use $\R$-roots as a mean of comparing equivalent runs
in $\R$, we classify individual positions within a run.  Let $H =
\R\text{-root}(j)$ for $j \in \R$. Observe that the following
\emph{$\R$-decomposition} $\T[j \dd \rend{j}) = H'H^k H''$ (where $k
\geq 1$, $H'$ is a proper prefix of $H$, and $H''$ is a proper suffix
of $H$) is unique. We call the triple $\R\text{-sig}(j) := (|H'|, k,
|H''|)$ the \emph{$\R$-signature} of $j \in \R$ and the value
$\R\text{-exp}(j) = k$ its \emph{$\R$-exponent}. Note that $|H'|$ in
the $\R$-signature is the value $\Delta$ computed in
\cref{lm:root}. It is hence easy to use \cref{lm:root} and
\cref{th:index-lce} to compute $\rend{j}$ and the $\R$-signature for
each $j \in \R$ in $\bigO(\log^4 n)$ time. Moreover, if $X =
\T^{\infty}[j \dd j + 2\ell)$ for $j \in \R'^{-}$, then it holds
$\R\text{-exp}(j') \leq \R\text{-exp}(j)$ for all $j' \in {\rm
pos}(X)$. Thus, letting
\begin{align*}
  r_{X}^{=} & := | \{ j' \in {\rm pos}(X) :
    \R\text{-exp}(j') =
    \R\text{-exp}(j) \} |, \\
  r_{X}^{<} & := | \{ j' \in {\rm pos}(X) :
    \R\text{-exp}(j') <
    \R\text{-exp}(j) \} |,
\end{align*}
we have $r_{X} = r_{X}^{=} + r_{X}^{<}$. We will compute the terms
separately.

\paragraph{Computing $r_{X}^{=}$}

Denote by $(z_i)_{i \in [1 \dd |\R'^{-}|]}$ a sequence containing all $j
\in \R'^{-}$ sorted first according to their $\R$-root, second (in
case of ties) according to $|H''|$ in their $\R$-signature, and third
(in case there are still ties) according to suffix $\T[\rend{j} \dd
n]$. We can use this sequence to compute $r_{X}^{=}$, where $X =
\T^{\infty}[j \dd j + 2\ell) \in \mathcal{F}_{2\ell}'^{-}$ for $j \in \R'^{-}$,
as follows.
  
Let $X = X'X''$ where $|X'| = \rend{j} - j < 2\ell$ (by the
consistency of $\S$, the decomposition does not depend on the choice
of $j$). Then, let $i' \in [1 \dd |\R'^{-}|]$ be the smallest index
such that $j' = z_{i'}$ satisfies $\R\text{-root}(j') =
\R\text{-root}(j)$ and $X''$ is a prefix of $\T[\rend{j'} \dd n]$ (at
least one such $i'$ exists since $j$ occurs in $(z_i)_{i \in [1 \dd
|\R'^{-}|]}$). By definition of the sequence $(z_i)_{i \in [1 \dd
|\R'^{-}|]}$, $i'$ can be computed using binary search and LCE
queries (\cref{th:index-lce}).  The value $r_{X}^{=}$ is then equal to
the number of indices $i'' \in [1 \dd i')$ such that $j' = z_{i''}$
satisfies $\R\text{-root}(j') = \R\text{-root}(j)$ and the factor
$H'H^k$ in the $\R$-decomposition of $j'$ is at least as long as for
$j$. To compute the number of such $i'$ we store a collection of 2D
points containing a point for each $j \in \R'^{-}$ with its position in
$(z_i)_{i \in [1 \dd |\R'^{-}|]}$ as an $x$-coordinate and $|H'H^k|$
in its $\R$-decomposition as its $y$-coordinate. Computing $r_{X}^{=}$
then (as explained above) corresponds to a 3-sided orthogonal range
counting query~\cite{chazelle}.

The above algorithm requires storing and searching sets of size
$|\R'^{-}|$ which can be as large as $\Omega(\frac{n}{\tau})$.  We
observe, however, that the used sequences can be compressed, and
furthermore, a compressed representation of this (or more precisely,
sufficiently similar) sequence can be computed quickly, given ${\rm
comp}_{9}(\S)$ and the LZ77 parsing of $\T$.  The key observation
is that the relevant for the algorithm information about each $j \in
\R'^{-}$ is the length-$\Theta(\ell)$ substring of $\T$ located around
position $\rend{j}$. More precisely, we first note that since $|X''|
\leq \ell + 3 \leq 2\ell$, to compute $i'$ we only access substrings
$\T^{\infty}[\rend{i} \dd \rend{i} + 2\ell)$ where $i \in \R'^{-}$. On
the other hand, since $|X'| < 2\ell$, to determine the $\R$-root and
select all $j' \in \R'^{-}$ for which the factor $H'H^k$ in the
$\R$-decomposition is at least as long as for $j$, it suffices to know
$\T[\rend{i} - 2\ell \dd \rend{i})$, $i \in \R'^{-}$. Thus, rather
than $(z_i)_{i \in [1 \dd |\R'^{-}|]}$, the above algorithm can be
executed on the sequence of strings $Z[1 \dd |\R'^{-}|]$ defined as
follows. For $i \in [1 \dd |\R'^{-}|]$,
\[
  Z[i] := \T^{\infty}[\rend{z_i} - 2\ell \dd \rend{z_i} + 2\ell).
\]

Using $Z$ directly is problematic, since ${\rm RL}(Z)$ may have
multiple runs of the same string and hence we may have $|{\rm RL}(Z)|
= \omega(z\,{\rm polylog}\,n)$. We note, however, that $|\mathcal{Z}|
= \bigO(z)$ holds, where $\mathcal{Z} := \{ Z[i] : i \in [1 \dd
|\R'^{-}|]\}$.  To see this, note that if $\T^{\infty}[i \dd i +
|Z|) = Z$ for some $Z \in \mathcal{Z}$, then by consistency
condition $i + 2\ell = \rend{i'}$ for some $i' \in \R$.  Similarly, if
$j \in \R$, then $\rend{j} - 2\tau + 1 \in \S$.  Thus, $i + 2\ell -
2\tau + 1 \in \S$. In particular, $j' = \lpos(Z)$ satisfies $j' +
2\ell - 2\tau + 1 \in \S$. This mapping is injective, and moreover,
since $|Z| = 4\ell$ and $2\ell \leq 7\tau$, we have
$j' + 2\ell - 2\tau + 1 \in {\rm comp}_{9}(\S)$.

Combined with the observation that two equal elements of $Z$ are
always either both included or both excluded, when computing
$r_{X}^{=}$, we thus instead consider the string sequence $Z'$ defined
by $(z_i')_{i \in [1 \dd |\R'^{-}|]}$ containing all elements $j \in
\R'^{-}$ sorted first according to their $\R$-root, second according
to $|H''|$, third according to the substring $\T^{\infty}[\rend{j} \dd
\rend{j} + 2\ell)$, and finally according to the reversed substring
$\T^{\infty}[\rend{j} - 2\ell \dd \rend{j})$. Then, $|{\rm RL}(Z')| =
\bigO(z)$ holds, and ${\rm RL}(Z')$ can be computed using LZ77 of
$\T$, ${\rm comp}_{9}(\S)$, and
\cref{th:index-leftmost-occ,th:index-lce}.  The query algorithm
remains unchanged, except each 2D point now represents a run of
strings, and hence is augmented with the weight corresponding to its
frequency.

\begin{lemma}\label{lm:local-rank-1}
  Let $\T$ be a length-$n$ text. Given the LZ77 parsing and a string
  synchronizing set $\S$ of $\T$ satisfying $|{\rm comp}_{9}(\S)| =
  \bigO(z)$, the set $\{r_X^{=}\}_{X \in \mathcal{F}'^{-}_{2\ell}}$
  can be computed in $\bigO(z \log^4 n)$ time.
\end{lemma}

\paragraph{Computing $r_{X}^{<}$}

Let us start with the following inefficient algorithm. Consider
sorting all $j \in \R'^{-}$ first by $\R\text{-root}(j)$ and
then by $\R\text{-exp}(j)$. Let
\[
  \mathcal{H} := \{ \R\text{-root}(j) : j \in \R \}.
\]
For $H \in \mathcal{H}$, let $\R_{H}^{-} := \{ j \in \R^{-}
: \R\text{-root}(j) = H \}$ and $\R'^{-}_{H} := \R'
\cap \R_H^{-}$.

Let us fix some $H \in \mathcal{H}$. The algorithm processes all
elements of $\R'^{-}_{H}$ in rounds. In round $i$, we consider all
positions in the set $Q_i := \{ j \in \R'^{-}_H : \R\text{-exp}(j) = i
\}$. We execute rounds in increasing order, skipping $i$ for which
$Q_i = \emptyset$.  The algorithm maintains an array $C[0 \dd |H|)$
that satisfies the following invariant: at the end of round $i$,
$C[t] = | \{ j \in \R_{H,t}^{-} : \R\text{-exp}(j) \leq i \} |$,
where $ \R_{H,t}^{-} := \{ j \in \R_H^{-} : \R\text{-sig}(j) \in \{
t \} \times \mathbb{N}^2 \}$. Round $i$ proceeds as follows.
\begin{enumerate}
\item First, we look at $|Q_{i-1}|$ and if $Q_{i-1}=\emptyset$, we
  increment all $C[0 \dd |H|)$ by $(i - i_{\rm prev} - 1) \cdot
  \sum_{i' \geq i}|Q_{i'}|$, where $i_{\rm prev} = \max \{ i' < i :
  Q_{i'} \neq \emptyset \}$.  This accounts for skipped
  exponents. After this, it holds $C[t] = | \{ j \in \R_{H,t}^{-} :
  \R\text{-exp}(j) \leq i - 1 \} |$ for all $t \in [0 \dd |H|)$.
\item Now, iterate through $X = \T^{\infty}[j \dd j + 2\ell) \in
  \mathcal{F}'^{-}_{2\ell}$ satisfying $\R\text{-root}(j) = H$ and
  $\R\text{-exp}(j) = i$. For each such $j$, we answer $r_{X}^{<} =
  C[t] + \ldots + C[|H| - 1]$, where $\R\text{-sig}(j) \in \{ t \}
  \times \mathbb{N}^2$ (note that $\R\text{-sig}(j)$ does not depend
  on the exact choice of $j$). Since all considered $j$ satisfy $j \in
  Q_i$, there is nothing to do in this step when $Q_i =
  \emptyset$. Hence, skipping such $Q_i$ is correct.
\item Then, for $j \in Q_i$ add one to all counters in the range to
  $C[0 \dd t]$, where $\R\text{-sig}(j) \in \{ t \} \times
  \mathbb{N}^2$. This accounts for $j \in \R'^{-}_{H}$ having
  $\R\text{-exp}(j) = i$.
\item Finally, increment $C[0 \dd |H|)$ by $\sum_{i' > i}|Q_{i'}|$,
  accounting $j \in \R_{H}^{-} \setminus \R'^{-}_{H}$ having
  $\R\text{-exp}(j) = i$.
\end{enumerate}

If we represent $C$ using a balanced BST, each update in the algorithm
takes $\bigO(\log n)$ time.

The above algorithm processes each element $j \in \R'^{-}_{H}$
separately. To turn it into an efficient algorithm for compressible
strings (obtaining a runtime of the form $\bigO(z \, {\rm polylog} \,
n)$), we first note that the only used information for $j \in
\R'^{-}_H$ is $\R\text{-sig}(j)$.  Moreover, two elements with equal
$\R$-signatures are processed in the same way. We can thus process
them together, i.e., in step 3, rather than by one, we increment $C[0
\dd t]$ by the frequency of a given signature. Finally, we note that
all $X = \T^{\infty}[j \dd j + 2\ell)$ considered in step 2 satisfy
$\rend{j} - j < 2\ell$, thus we only need to execute rounds up to
$i_{\max} := \left \lceil 2\ell/|H| \right \rceil$. Consequently, the
algorithm remains correct if instead of using $\R\text{-sig}(j)$ for
all $j \in \R'^{-}$, we use \emph{truncated $\R$-signatures}
defined as:
\[
  \R\text{-sig}'(j) :=
  \begin{dcases*}
    \left ( |H'|, k, |H''| \right)
      & if $k < \left\lceil\tfrac{2\ell}{|H|}\right\rceil$,\\
    \left ( 0, \left \lceil \tfrac{2\ell}{|H|} \right \rceil,
        |H''| \right)
      & otherwise,
  \end{dcases*}
\]
where $\R\text{-sig}(j) = (|H'|, k, |H''|)$. The following lemma
ensures that this (assuming truncating signatures is combined with group
processing of equal signatures) significantly improves the runtime.
Note that we need to count separately for each $H \in \mathcal{H}$, as
(truncated) $\R$-signatures for different $\R$-roots can be
equal.

\begin{lemma}\label{lm:sig}
  \[
    \sum_{H \in \mathcal{H}} \left | \left \{ \R\text{-}{\rm
      sig}'(j) : j \in \R_{H}'^{-} \right \} \right | =
    \bigO(z).
  \]
\end{lemma}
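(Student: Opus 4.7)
The plan is to injectively map each distinct pair $(H,\sigma')$ contributing to the sum into the set $\mathcal{F}'_{3\ell}$ of length-$3\ell$ prefixes starting at positions in $\mathsf{R}'$, and then to conclude via a direct adaptation of Lemma~\ref{lm:periodic}(1) that $|\mathcal{F}'_{3\ell}|=\bigO(z)$. For each pair, fix a representative $j\in\mathsf{R}_H'$ and consider $X:=\T[j\dd j+3\ell)$. The key step is to show that $X$ determines $(H,\sigma')$ in both the truncated and non-truncated cases.

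For the truncated case $\sigma'=(0,\lceil 2\ell/|H|\rceil,|H''|)$, we have $\rend{j}-j\geq 2\ell$, so the length-$2\ell$ prefix of $X$ is purely periodic with period $|H|$. By Lemma~\ref{lm:root} together with the remark that $|H''|$ in the $\mathsf{R}$-signature equals the offset $\Delta$ it computes, $|H''|$ is exactly the start offset of the canonical $\mathsf{R}$-root within the run. Since two positions in $\mathsf{R}'$ whose length-$\tau$ prefixes coincide must lie in the same $\mathsf{R}$-equivalence class, the length-$\tau$ prefix of $X$ fixes both $H$ and $|H''|$; the shortest period of the length-$2\ell$ prefix (well defined by primitivity of $H$ and $|H|\leq\tau/3$) then gives $|H|$. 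Hence distinct truncated pairs map to distinct $X\in\mathcal{F}'_{2\ell}\subseteq\mathcal{F}'_{3\ell}$.

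For the non-truncated case $\sigma'=(|H'|,k,|H''|)$ with $k<\lceil 2\ell/|H|\rceil$, we have $\rend{j}-j = |H'|+k|H|+|H''| < 2\ell+3|H|\leq 3\ell$, so $X$ contains the entire run $\T[j\dd\rend{j})$ together with the period-breaking character $\T[\rend{j}]$. As above, the length-$\tau$ prefix of $X$ fixes $H$ and $|H''|$; the exact position of the period-break inside $X$ then determines $\rend{j}-j$, from which $|H'|$ and $k$ are read off as the unique decomposition of $\rend{j}-j-|H''|$ in the form $|H'|+k|H|$ with $0\leq|H'|<|H|$. Hence $(H,\sigma')$ again injects into $X\in\mathcal{F}'_{3\ell}$.

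It remains to show $|\mathcal{F}'_{3\ell}|=\bigO(z)$, which follows by an essentially verbatim adaptation of the proof of Lemma~\ref{lm:periodic}(1) from scale $2\ell$ to scale $3\ell$: for each $X\in\mathcal{F}'_{3\ell}$ with representative $j\in\mathsf{R}'$ satisfying $j-1\in S$, the starting position $i_{\rm left}$ of the leftmost occurrence of $X$ in $\T$ satisfies $i_{\rm left}-1\in S$ by consistency of $S$, and since this leftmost occurrence overlaps some LZ77 phrase boundary, $i_{\rm left}-1$ lies in $\text{\rm comp}_{k'}(S)$ for a slightly larger constant $k'$ (reflecting the longer $3\ell$ window). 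Thus $|\mathcal{F}'_{3\ell}|\leq|\text{\rm comp}_{k'}(S)|=\bigO(z)$, completing the proof. The main obstacle is the identification $|H''|=\Delta$, which is what allows even the truncated signatures, whose runs can extend arbitrarily far beyond position $j+3\ell$, to be encoded by the local length-$3\ell$ prefix rather than requiring a separate anchor near $\rend{j}$.
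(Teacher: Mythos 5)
Your proposal attempts a genuinely different strategy from the paper: instead of mapping each pair $(H,\sigma')$ directly to an element of ${\rm comp}_6(S)$ with a case-dependent anchor (as the paper does), you first map it to a start-anchored window $X=\T[j\dd j+3\ell)\in\mathcal{F}'_{3\ell}$, then bound $|\mathcal{F}'_{3\ell}|$ via an analogue of Lemma~\ref{lm:periodic}(1). The non-truncated branch of this mapping is sound: the break position $\rend{j}-j<3\ell$ is visible in $X$, which together with $H$ and the head offset recovers the full signature.

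The gap is in the truncated branch, and it is exactly the point you flag as the ``main obstacle'' at the end. You rely on the paper's sentence ``$|H''|$ in the ${\sf R}$-signature is exactly the value $\Delta$'' to argue that the truncated signature $(0,\lceil 2\ell/|H|\rceil,|H''|)$ is a local invariant of the run's \emph{start}. But that sentence is inconsistent with Lemma~\ref{lm:root}, which defines $\Delta$ by ${\sf R}\text{-root}(j)=\T[j+\Delta\dd j+\Delta+p)$ — i.e., $\Delta$ is the offset of the first full copy of $H$ from $j$, which is the \emph{head} length $|H'|$, not $|H''|$. The third component $|H''|$ is the length of the incomplete copy of $H$ at the run's \emph{end} $\rend{j}$ (this is why the paper bundles it into $\ell_{\rm tail}$ and sorts by $|H''|$ together with $\T[\rend{j}\dd n)$). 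For a truncated signature whose run extends past $j+3\ell$, the break is not visible in $X$, so $\rend{j}-j\bmod |H|$ — and hence $|H''|$ — is simply not determined by $X$. Concretely, two positions $j_1,j_2\in{\sf R}'_H$ with identical length-$3\ell$ prefixes but runs of lengths differing by an amount not divisible by $|H|$ yield the same $X$ yet distinct truncated signatures $(0,\lceil 2\ell/|H|\rceil,|H_1''|)\neq(0,\lceil 2\ell/|H|\rceil,|H_2''|)$; your map $(H,\sigma')\mapsto X$ is therefore not injective. This is precisely the difficulty the paper's proof circumvents by anchoring the truncated case at $\rend{j}$ — taking the leftmost occurrence of $\T[\rend{j}-\ell_{\rm tail}\dd\rend{j}+1)$ — so that $|H''|$, a tail quantity, \emph{is} encoded in the chosen window. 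A start-anchored argument cannot avoid this separate anchor, so the uniform reduction to $\mathcal{F}'_{3\ell}$ does not go through as written.
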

\begin{proof}
  Let us fix $H \in \mathcal{H}$.

  Consider first $j \in \R_{H}'^{-}$ such that $\R\text{-sig}(j) \in
  \mathbb{N} \times \{ k \} \times \mathbb{N}$, where $k < \left\lceil
  2\ell/|H| \right\rceil$. We injectively assign $\R\text{-sig}'(j) =
  \R\text{-sig}(j)$ to an element of ${\rm comp}_8(\S)$. Let $i_{\rm
  left} = \lpos(\T[j - 1 \dd \rend{j} + 1))$ (taking one symbol past
  the periodic region ensures that the mapping is injective).  By the
  consistency of $\S$, it holds $i_{\rm left} \in \S$. Furthermore, by
  $k < \left\lceil 2\ell/|H| \right\rceil$, it holds $\rend{j} - j + 2
  \leq (k + 2)|H| \leq 2\ell + 2|H| \leq 8\tau$ (since $2\ell \leq
  7\tau$ and $|H| \leq \tfrac13 \tau$).  Thus, for some $t \in [1 \dd
  z]$, $i_{\rm left} \in \S \cap (e_t - 8\tau \dd e_t] \subseteq
  {\rm comp}_8(\S)$.  This map remains injective even considering all
  $H \in \mathcal{H}$.

  Consider now $j \in \R_{H}'^{-}$ such that $\R\text{-sig}(j) \in
  \mathbb{N} \times \{ k \} \times \{ |H''| \}$, where $k \geq
  \left\lceil 2\ell/|H| \right\rceil$.  We again construct an
  injective map of $\R\text{-sig}'(j)$ to ${\rm comp}_8(\S)$.  Denote
  $\ell_{\rm tail} = |H| \cdot \left \lceil 2\ell/|H| \right \rceil +
  |H''|$. Let $i_{\rm left} = \lpos(\T[\rend{j} - \ell_{\rm tail} \dd
  \rend{j} + 1))$. By the consistency condition of $\S$, for every $j
  \in \R$ it holds $\rend{j} - 2\tau + 1 \in \S$. Thus, since $i_{\rm
  left} \in \R$ and $\rend{i_{\rm left}} = i_{\rm left} + \ell_{\rm
  tail}$, it holds $i_{\rm left} + \ell_{\rm tail} - 2\tau + 1 \in
  \S$. We have $6\tau \leq 2\ell \leq \ell_{\rm tail} + 1 \leq 2\ell +
  2|H| \leq 8\tau$.  Thus, for some $t \in [1 \dd z]$, we have $e_t -
  8\tau < i_{\rm left} \leq e_t$ and so $i_{\rm left} + \ell_{\rm
  tail} - 2\tau + 1 \in \S \cap (e_t - 4\tau \dd e_t + 6\tau]
  \subseteq {\rm comp}_8(\S)$. The map remains injective considering
  all $H \in \mathcal{H}$.
\end{proof}\vspace{1ex}

Using the above lemma, the set of distinct truncated $\R$-signatures,
along with their frequencies, can be constructed from the LZ77 parsing
and ${\rm comp}_{9}(\S)$ using \cref{th:index-lce,th:index-count}.

\begin{lemma}\label{lm:local-rank-2}
  Let $\T$ be a length-$n$ text. Given the LZ77 parsing and a string
  synchronizing set $\S$ of $\T$ satisfying $|{\rm comp}_{9}(\S)| =
  \bigO(z)$, the set $\{r_X^{<}\}_{X \in \mathcal{F}'^{-}_{2\ell}}$
  can be computed in $\bigO(z \log^4 n)$ time.
\end{lemma}

\paragraph{Putting Things Together}

It remains to explain how to compute $r_X$ for $X = \T^{\infty}[j \dd j +
2\ell) \in \mathcal{F}_{2\ell}'$ where $j \in \R'$ and
$\rend{j} - j \geq 2\ell$.  We observe that for such $X$, $r_X^{=}$
and $r_X^{<}$ can be obtained during the computation of $\{r_X^{=} : X
\in \mathcal{F}'^{-}_{2\ell}\}$ and $\{r_X^{<} : X \in
\mathcal{F}'^{-}_{2\ell}\}$.

More precisely, after completing the last round during the processing
of some contiguous subsequence of all $j' \in \R'^{-}_H$, we can
set, for every $X = \T^{\infty}[j \dd j + 2\ell) \in \mathcal{F}_{2\ell}'$,  
such that $\R\text{-root}(j) = H$ and $\rend{j}-j \geq 2\ell$,
$r_X^{<} := C[t] + \ldots + C[|H| - 1]$, where $\R\text{-sig}(j)
\in \{ t \} \times \mathbb{N}^2$.  Similarly, computing $r_X^{=}$ for
each $X = \T^{\infty}[j \dd j + 2\ell) \in \mathcal{F}_{2\ell}'$ with $e_j -
j \geq 2\ell$ only requires one query on the 2D orthogonal range 
counting data structure.  Thus, handling all $X \in
\mathcal{F}_{2\ell}' \setminus \left ( \mathcal{F}'^{-}_{2\ell} \cup
\mathcal{F}_{2\ell}'^{+} \right)$ is not more expensive than handling
$\mathcal{F}_{2\ell}'^{-}$ and $\mathcal{F}'^{+}_{2\ell}$.

By combining \cref{pr:bwt-nonperiodic} with
\cref{lm:local-rank-1,lm:local-rank-2}, we therefore
obtain the following result and consequently the main result of this
section.

\begin{proposition}\label{pr:bwt}
  Let $\T$ be a string of length $n$, and let $\ell = 2^q$ be such
  that $q \in [4 \dd \lceil{\log n}\rceil)$. Then, given ${\rm
  RL}(\BWT_{\ell})$ and the LZ77 parsing of $\T$, the sequence
  ${\rm RL}(\BWT_{2\ell})$ can be constructed in $\bigO((r + z)
  \log^5 n)$ time.
\end{proposition}

By \cref{pr:bwt-base} we can compute $\BWT_{\ell}$ for $\ell=2^q$ and
$q<4$ in $\bigO(z \log^4 n)$ time. For $q \geq 4$, we use \cref{pr:bwt}.
Thus, by the upper bound $r = \bigO(z \log^2 n)$ from
\cref{th:basic-upper-bound}, we obtain the main result of this
section.

\begin{theorem}\label{th:bwt}
  There exists a Las-Vegas randomized algorithm that, given the LZ77
  parsing of a text $\T$ of length~$n$, computes its run-length
  compressed Burrows--Wheeler transform in $\bigO((r + z) \log^6 n) =
  \bigO(z \log^8 n)$ time.
\end{theorem}

\section{Auxiliary Recompression-Based Data Structures}\label{sec:aux}

\newcommand{\N}{\mathcal{N}}
\renewcommand{\S}{\mathcal{S}}
\newcommand{\G}{\mathcal{G}}
\newcommand{\Tr}{\mathcal{T}}
\newcommand{\rhs}{\mathsf{rhs}}
\newcommand{\height}{\mathsf{height}}

For a context-free grammar $\G$, we denote by $\Sigma_\G$ and $\N_\G$,
the set of non-terminals and the set of terminals, respectively. The
set of \emph{symbols} is $\S_\G:=\Sigma_\G\cup \N_\G$.  If the grammar
$\G$ is clear from context, the subscripts $\G$ might be omitted.

A \emph{straight-line grammar} (SLG) is a context-free grammar $\G$
such that:
\begin{itemize}
\item each non-terminal $A\in \N$ has a unique production $A\to
  \rhs(A)$, where $\rhs(A)\in \S^*$,
\item the set of symbols $\S$ admits a partial order $\prec$ such that
  $B \prec A$ if $B$ appears in $\rhs(A)$.
\end{itemize}
A simple inductive argument shows that, for each symbol $A\in \S$, the
language $L(A)$ consists of a unique string, which we call the
\emph{expansion} of $A$ and denote $\exp(A)$.  In particular, the
expansion $\exp(S)$ of a starting symbol $S\in \S$ is the unique
string \emph{represented} by $\G$.

The \emph{parse tree} $\Tr(A)$ of a symbol $A\in \S$ is a rooted
ordered tree with each node $\nu$ associated to a symbol $s(\nu)\in
\S$.  The root of $\Tr(A)$ is a node $\rho$ with $s(\rho)=A$.  If $A
\in \Sigma$, then $\rho$ has no children.  If $A\in \N$ and
$\rhs(A)=B_1\cdots B_k$, then $\rho$ has $k$ children, and the subtree
rooted at the $i$th child is (a copy of) $\Tr(B_i)$.  The
\emph{height} $\height(A)$ of a symbol $A\in \S$ is defined as the
height of its parse tree $\Tr(A)$.  In other words, $\height(A)=0$ if
$A\in \Sigma$ and $\height(A)=1+\max_{i=1}^k \height(B_i)$ if
$\rhs(A)=B_1\cdots B_k$. The parse tree $\Tr_\G$ of an SLG $\G$ is
defined as the parse tree $\Tr(S)$ of the starting symbol $S$, and the
height of $\G$ is defined as the height of $S$.

Each node $\nu$ of $\Tr(A)$ is associated with a fragment $\exp(\nu)$
of $\exp(A)$ matching $\exp(s(\nu))$.  For the root $\rho$, we define
$\exp(\rho)=\exp(A)[1\dd |\exp(A)|]$ to be the whole $\exp(A)$.
Moreover, if $\exp(\nu)=\exp(A)[\ell\dd r)$, $\rhs(s(\nu))=B_1\cdots
B_k$, and $\nu_1,\ldots,\nu_k$ are the children of $\nu$, then
$\exp(\nu_i)=\exp(A)[r_{i-1}\dd r_{i})$, where $r_i = \sum_{j=1}^{i}
|\exp(B_j)|$ for $0\le i \le k$.  This way, the fragments
$\exp(\nu_i)$ form a partition of $\exp(\nu)$, and $\exp(\nu_i)$
matches $\exp(s(\nu_i))$ (as claimed).

Without loss of generality, we assume that each symbol $A\in \S$
appears as $s(\nu)$ for a node $\nu$ of $\Tr_\G$; the remaining
symbols can be removed from $\G$ without affecting the string
generated by~$\G$.

\paragraph*{Straight-Line Programs}
We say that a straight-line grammar $\G$ is in \emph{Chomsky normal
form} (CNF) if $|\rhs(A)|=2$ holds for each $A\in \N$. A
straight-line grammar in CNF is called a \emph{straight-line program}
(SLP).  An SLP $\G$ of size $g$ (with $g$ symbols) representing a text
$T$ of length $n$ can be stored $\Oh(g)$ space ($\Oh(g \log n)$ bits)
with each non-terminal $A\in \N$ storing $\rhs(A)$ and $|\exp(A)|$.
This representation allows for efficiently traversing the parse
tree~$\Tr_\G$: given a node $\nu$ represented as a pair
$(s(\nu),\exp(\nu))$, it is possible to retrieve in constant time an
analogous representation of a child $\nu_i$ of $\nu$ given its index
$i\in\{1,2\}$ (among the children of $\nu$) or an arbitrary position
$T[j]$ contained in $\exp(\nu_i)$.

Rytter~\cite{Rytter03} provided an efficient algorithm that converts
the LZ77 representation of a string into an SLP generating it.
Unfortunately, he assumed a weaker \emph{non-self-referential} variant
of LZ77, where $T[i\dd i+\ell)$ is a previous factor only if there
exists $i'\in [1\dd i-\ell]$ with $\LCE(i,i')\ge \ell$.  In the
following theorem, we adapt his methods to the self-referential
variant allowing $i'\in [1\dd i)$.
\begin{theorem}\label{thm:rytter}
  Given an LZ77-like parsing of a string $T[1\dd n]$ into $f$ phrases,
  an SLP $\G$ of size $\Oh(f \log n)$ and height $\Oh(\log n)$
  generating $T$ can be constructed in $\Oh(f\log n)$ time.
\end{theorem}
\begin{proof}
  Rytter~\cite[Section 3]{Rytter03} defined \emph{AVL grammars} as
  SLPs satisfying the following extra condition: if $\rhs(A)=BC$ for
  $A\in \N$, then $|\height(B)-\height(C)|\le 1$.  This
  guarantees~\cite[Lemma 1]{Rytter03} that $\height(A) = \Oh(\log
  |\exp(A)|)$ holds for every $A\in \S$.

  The algorithm of~\cite{Rytter03} builds $\G$ incrementally: each
  step involves adding a symbol $A$ with a desired expansion
  $\exp(A)$, as well as a bounded number of auxiliary symbols. In the
  last step, the starting symbol $S$ with $\exp(S)=T$ is added.  Final
  post-processing includes pruning the grammar to remove symbols not
  occurring in $\Tr_{\G}$.  Each step is of one of three kinds:
  \begin{enumerate}[label={\rm(\alph*)}]
  \item\label{it:new} A new terminal symbol can be added to $\G$ in
    $\Oh(1)$ time (along with no auxiliary symbols).
  \item\label{it:concat} Given two symbols $B,C\in \S$, a new symbol
    $A$ with $\exp(A)=\exp(B)\exp(C)$ can be added to $\G$ in
    $\Oh(1+|\height(B)-\height(C)|)$ time along with
    $\Oh(|\height(B)-\height(C)|)$ auxiliary symbols~\cite[Lemma
    2]{Rytter03}.
  \item\label{it:extract} Given a symbol $A\in \S$ and two positions
    $1\le i \le j \le |\exp(A)|$, a new symbol $B$ with
    $\exp(B)=\exp(A)[i\dd j]$ can be added to $\G$ in $\Oh(1+\log
    |\exp(A)|)$ time along with $\Oh(\log |\exp(A)|)$ auxiliary
    symbols~\cite[Lemma 3 and Theorem 2]{Rytter03}.
  \end{enumerate}

  Now, a non-self-referential LZ77-like parsing $T=F_1\cdots F_f$ can
  be processed by iteratively constructing symbols $A_j$ with
  $\exp(A_j)=F_1\cdots F_j$ for $j\in [1\dd f]$.  At each iteration
  $j$, a symbol $B_j$ with $\exp(B_j)=F_j$ is first constructed.  If
  $F_j=T[i\dd i+\ell)$ is a previous fragment represented by
  $(i',\ell)$ such that $i'\in [1\dd i-\ell]$ and $\LCE(i,i')\ge
  \ell$, then $F_j = (F_1\cdots F_{j-1})[i'\dd
  i'+\ell)=\exp(A_{j-1})[i'\dd i'+\ell-1]$, so $B_j$ can be
  constructed using operation~\ref{it:extract}.  Otherwise $F_j=T[i]$
  is a single character (not occurring in $F_1\cdots F_{j-1}$), so
  $B_j$ can be constructed using operation~\ref{it:new}.  Finally,
  $A_{j}$ is obtained based on $A_{j-1}$ and $B_j$ using
  operation~\ref{it:concat}.  Consequently iteration $j$ involves
  $\Oh(\log |F_1\cdots F_j|)=\Oh(\log n)$ new symbols and can be
  implemented in $\Oh(\log n)$ time, for a total of $\Oh(f \log n)$
  symbols and $\Oh(f\log n)$ time across all iterations.

  In order to apply the same scheme for a (potentially)
  self-referential LZ77-like parsing, we need to specify the
  construction of $B_j$ for a previous fragment $F_j=T[i\dd i+\ell)$
  represented with $i'\in (i-\ell\dd i)$ such that $\LCE(i,i')\ge
  \ell$.  Note that $P:= T[i'\dd i)$ is then a string period of $F_j$
  and, consequently, $F_j = P^{\infty}[1\dd \ell]$.  Thus, we first
  use operation~\ref{it:extract} to construct a symbol $P_{j,0}$
  representing $P$; this costs $\Oh(1+\log |F_1\cdots
  F_{j-1}|)=\Oh(\log n)$ time and auxiliary symbols.  Next, for $k\in
  [1\dd \lceil \log \frac{\ell}{|P|}\rceil]$, we use
  operation~\ref{it:concat} to construct symbols $P_{j,k}$
  representing $P^{2^k}$.  Note that each application of
  operation~\ref{it:concat} involves $\Oh(1)$ time and no auxiliary
  symbols, for a total of $\Oh(\log \frac{\ell}{|P|})=\Oh(\log
  \ell)=\Oh(\log n)$ time and symbols.  Finally, we derive $B_j$ using
  operation~\ref{it:extract} based on $F_j = P^\infty[1\dd
  \ell]=\exp(P_{j,k'})[1\dd \ell]$, where $k' = \lceil \log
  \frac{\ell}{|P|} \rceil$; this costs $\Oh(\log |P^{2^{k'}}|)=\Oh(k'+
  \log |P|) =\Oh(\log \ell)=\Oh(\log n)$ time and symbols.  Overall,
  iteration $j$ still costs $\Oh(\log n)$ time and symbols.
\end{proof}

\paragraph*{Run-Length Straight-Line Programs}

A \emph{run-length straight-line program} (RLSLP) is a straight-line
grammar $\G$ whose non-terminals can be classified into \emph{pairs}
$A \to BC$, where $B,C\in \S$ and $B \ne C$, and \emph{powers} $A\to
B^k$, where $B\in \S$ and $k \ge 2$ is an integer.  Analogously to an
SLP, an RLSLP of size $g$ (with $g$ symbols) representing a text $T$
of length $n$ can be stored in $\Oh(g)$ space ($\Oh(g \log n)$ bits)
allowing efficient traversal of the parse tree $\Tr_\G$.

\paragraph*{Recompression}

\emph{Recompression}~\cite{Jez2016} is a technique of computing a
small and \emph{locally consistent} RLSLP $\G$ representing a given
text $T$. Recompression is based on a sequence of strings
$T_1,\ldots,T_r\in \S^*$ such that $T_r=S$, where $S$ is the starting
symbol, the string $T_{j-1}$ for $2 \le j \le r$ can be obtained from
$T_j$ by replacing some non-terminals $A$ in $T_j$ with $\rhs(A)$, and
$T_1 = T$.

Recompression proceeds iteratively starting from $T_1=T$.  As long as
$|T_j|>1$, the algorithm partitions $T_j$ into blocks using one of the
following two schemes:
\begin{itemize}
  \item \textbf{Run-length encoding:} If $j$ is odd, then $T_j$ is
    partitioned into maximal blocks consisting of equal symbols
    (runs).
  \item \textbf{Alphabet partitioning:} If $j$ is even, then the set
    of symbols occurring in $T_j$ is decomposed into \emph{left}
    symbols and \emph{right} symbols, respectively (implementation of
    this process differs between variants of recompression).  Whenever
    $T_j[i]$ is a left symbol and $T_j[i+1]$ is a right symbol,
    $T_j[i\dd i+1]$ forms a length-2 block. The remaining positions
    form length-1 blocks.
\end{itemize}
Then, for blocks $T_j[\ell \dd r]$ of length at least 2, new
non-terminals $A$ with $\rhs(A)=T_j[\ell \dd r]$ are created. This
process is implemented so that matching blocks get the same
non-terminal.  Finally, $T_{j+1}$ is obtained from $T_j$ by
\emph{collapsing} each block $T_j[\ell \dd r]$ of length at least 2 to
the corresponding non-terminal.  When $|T_j|=1$, the procedure
terminates (resulting in $r:=j$), and the only symbol of $T_j$ is
declared to be the starting symbol of $\G$.  We then say that $\G$ is
an \emph{$r$-round recompression RLSLP}, noting that the height of
$\G$ does not exceed $r-1$.

A recompression RLSLP can be constructed efficiently based on any
LZ77-like representation.

\begin{theorem}\label{thm:recompression}
  There exists an algorithm that, given an LZ77-like parsing of a
  string $T[1\dd n]$ into $f$ phrases, constructs in $\Oh(f\log^2 n)$
  time an $\Oh(\log n)$-round recompression RLSLP $\G$ of size
  $\Oh(f\log^2 n)$ generating $T$.
\end{theorem}
\begin{proof}
  Given the LZ77-like parsing of $\Tr$, the algorithm first uses
  \cref{thm:rytter} to represent $T$ as an SLP $\hat{\G}$ of size
  $|\S_{\hat{\G}}|=\Oh(f \log n)$.  Then, using the algorithm of
  Jeż~\cite{Jez2015,Jez2016a} (see also~\cite[Section
  4]{tomohiro-lce}), $\hat{\G}$ is converted into an $\Oh(\log
  n)$-round recompression RLSLP $\G$ of size $\Oh(|\S_{\hat{\G}}| \log
  n)=\Oh(f\log^2 n)$.  The running time of the two steps is $\Oh(f
  \log n)$ and $\Oh(f\log^2 n)$, respectively.
\end{proof}

\paragraph*{LCE Queries}
I~\cite{tomohiro-lce} showed that $\LCE$ queries can be answered in
$\Oh(r)$ time in a string $T$ represented with an $r$-round
recompression RLSLP.  Since a recompression RLSLP generating the
reversed string $\bar{T}$ can be obtained by reversing $\rhs(A)$ for
each non-terminal $A$, the same holds for $\LCE$ queries in $\bar{T}$.
Due to \cref{thm:recompression}, this yields the following result.
\begin{theorem}\label{th:index-lce}
  Given an LZ77-like parsing of a string $T[1 \dd n]$ into $f$
  phrases, we can in $\bigO(f \log^2 n)$ time construct a data
  structure that supports $\LCE$ queries in $T$ and in its reverse
  $\bar{T}$ in $\bigO(\log n)$ time.
\end{theorem}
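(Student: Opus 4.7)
The plan is to combine two ingredients that the paper has already set up: the construction of a recompression RLSLP from the LZ77 parsing (\cref{thm:recompression}) and the known fact, attributed to I~\cite{tomohiro-lce}, that $\LCE$ queries on a string represented by a recompression RLSLP of parsing height $h$ can be answered in $\Oh(h)$ time. Concretely, I would first invoke \cref{thm:recompression} on the input LZ77 parsing of $T$ to obtain, in $\Oh(z \log^2 n)$ time, a recompression RLSLP $\G$ generating $T$ with size $\Oh(z \log^2 n)$ and parsing height $h = \Oh(\log n)$. Then I would attach the $\LCE$ data structure of~\cite{tomohiro-lce} on top of $\G$; this yields $\LCE$ queries on $T$ in time $\Oh(h) = \Oh(\log n)$, matching the claimed bound. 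The preprocessing cost of the $\LCE$ structure is linear in $|\G|$, which fits the stated $\Oh(z\log^2 n)$ construction budget.

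For queries on the reverse $\bar{T}$, I would use the observation made just before the theorem statement: reversing the right-hand side $\rhs(A)$ of every non-terminal $A$ of $\G$ produces an RLSLP $\bar{\G}$ generating $\bar{T}$. The transformation preserves the RLSLP structure (pairs stay pairs and powers stay powers), the set of symbols, the sizes $|\exp(A)|$, and crucially the parsing height, because the reversal is a purely local operation on each production and commutes with the block-collapsing process underlying recompression. Hence $\bar{\G}$ is again a recompression RLSLP of size $\Oh(z\log^2 n)$ and height $\Oh(\log n)$, so the same $\LCE$ machinery from~\cite{tomohiro-lce} applied to $\bar{\G}$ gives $\LCE$ queries on $\bar{T}$ in $\Oh(\log n)$ time within the same preprocessing budget.

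There is no genuine obstacle here, since both components are cited as black boxes; the only thing worth double-checking is that reversing right-hand sides really does preserve the ``recompression'' property needed by~\cite{tomohiro-lce}. That property is, however, purely structural (pair/power classification plus the partial order $\prec$ on symbols), and the partial order, the pair/power dichotomy, and the constraint $B \ne C$ for pair productions are all invariant under reversing $\rhs(A)$. Thus the two applications of~\cite{tomohiro-lce}, together with \cref{thm:recompression}, deliver exactly the construction and query bounds of the theorem.
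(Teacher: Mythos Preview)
Your proposal is correct and matches the paper's own argument essentially verbatim: the paper derives the theorem in the paragraph immediately preceding it by invoking \cref{thm:recompression} to build a recompression RLSLP of height $\Oh(\log n)$, citing~\cite{tomohiro-lce} for $\Oh(h)$-time $\LCE$ queries, and noting that reversing $\rhs(A)$ for each non-terminal yields a recompression RLSLP for $\bar{T}$. Your additional remarks verifying that reversal preserves the pair/power structure and the partial order are a welcome elaboration but add nothing new methodologically.
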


\subsection{Pattern Matching in RLSLPs}

Given a pattern $P$ and a text $T$, a fragment $T[j\dd j+|P|)$
matching $P$ is called an \emph{occurrence} of $P$ in $T$.  We denote
the set of starting positions of occurrences of $P$ in $T$ by
$\Occ(P,T)=\{j \in [1\dd |T|-|P|+1] : P=T[j\dd j+|P|)\}$.  In this
section, we characterize the occurrences of $P$ in $T$ based on the
parse tree $\Tr$ of an RLSLP $\G$ representing $T$.  The underlying
concepts originate from multiple works on pattern matching in
compressed and dynamic
strings~\cite{Alstrup2000,Gawrychowski2015,NishimotoDAM,Christiansen2019}.

Let $x=T[\ell \dd r)$ be a non-empty fragment of $T$.  The \emph{hook}
of $x$, denoted $\hook(x)$, is the lowest node $\nu$ in $\Tr$ such
that $\exp(\nu)$ contains $x$.  Equivalently, $\hook(x)$ is the lowest
common ancestor of the leaf representing $T[\ell]$ and the leaf
representing $T[r-1]$ in $\Tr$.  For $|x|>1$, the \emph{anchor} of
$x$, denoted $\anch(x)$, is the length $a$ of the longest prefix $y$
of $x$ such that $\hook(y)\ne \hook(x)$.  For $|x|=1$, we define
$\anch(x)=0$.  For a node $\nu$ of $\Tr$ and an integer $a\in [0\dd
|P|)$, let $\Occ(P,\nu,a) = \{j \in \Occ(P,T) : \hook(T[j\dd
j+|P|))=\nu \text{ and } \anch(T[j\dd j+|P|)) = a\}$. The following
observation characterizes this set.
\begin{observation}\label{obs:occ}
  Let $\nu$ be node of $\Tr$ with $s(\nu)=A$ and $\exp(\nu)=T[\ell\dd
  r)$, and let $P=P_L\cdot P_R$ be a non-empty pattern. Then
  $\Occ(P,\nu,|P_L|)\ne \emptyset$ if and only if one of the
  following holds:
  \begin{enumerate}
  \item $A\in \Sigma$, $P_L=\eps$, and $P_R=A$. Then
    $\Occ(P,\nu,|P_L|)=\{\ell\}$.
  \item $A\to BC$, $P_L \ne \eps$ is a suffix of $\exp(B)$, and
    $P_R\ne \eps $ is a prefix of $\exp(C)$.  Then
    $\Occ(P,\nu,|P_L|)=\{\ell+|\exp(B)|-|P_L|\}$.
  \item $A\to B^k$, $P_L \ne \eps$ is a suffix of $\exp(B)$, and $P_R
    \ne \eps$ is a prefix of $\exp(B)^{k-1}$.  Then $\Occ(P,\nu,
    |P_L|)=\big\{\ell+i|\exp(B)|-|P_L| : i \in \big[1\dd
      k-\big\lceil{\frac{|P_R|}{|\exp(B)|}\big\rceil}\big]\big\}$.
  \end{enumerate}
\end{observation}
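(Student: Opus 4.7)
The plan is to unpack the definitions of $\hook$ and $\anch$ and then perform a case analysis on the production rule of $s(\nu)=A$. Having $\hook(T[j\dd j+|P|))=\nu$ means that $\exp(\nu)=T[\ell\dd r)$ contains $T[j\dd j+|P|)$ while no proper descendant of $\nu$ does. When $A\in\Sigma$, the node $\nu$ is a leaf, so $|P|=1$ and $P=A$; the convention $\anch(x)=0$ for length-$1$ fragments then forces $P_L=\eps$ and $P_R=A$, yielding case~1 with the unique position $j=\ell$. When $A\notin\Sigma$, the children of $\nu$ partition $\exp(\nu)$, and the hook of any prefix $y$ of $T[j\dd j+|P|)$ is either strictly below $\nu$ (if $y$ lies inside a single child's expansion) or equal to $\nu$ (if $y$ crosses a boundary between two consecutive children). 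Since extending a prefix can only move its hook upward, the condition $\anch(T[j\dd j+|P|))=|P_L|$ is equivalent to saying that $j+|P_L|$ coincides with a boundary between two consecutive children of $\nu$, with $T[j\dd j+|P_L|)$ lying inside the child ending at that boundary and $T[j+|P_L|\dd j+|P|)$ starting in the next child. In particular, $P_L\ne\eps$ and $P_R\ne\eps$.

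For $A\to BC$, the unique child boundary sits at $\ell+|\exp(B)|$, pinning down $j=\ell+|\exp(B)|-|P_L|$. Matching $P$ at this position is equivalent to $P_L$ being the length-$|P_L|$ suffix of $\exp(B)$ and $P_R$ being the length-$|P_R|$ prefix of $\exp(C)$, which is exactly case~2. For $A\to B^k$, the $k-1$ child boundaries sit at positions $\ell+i|\exp(B)|$ with $i\in[1\dd k-1]$, so candidate starting positions are $j=\ell+i|\exp(B)|-|P_L|$. All children share the expansion $\exp(B)$, so the condition on $P_L$ is simply that it is a suffix of $\exp(B)$, while matching $P_R$ amounts to its being a prefix of $\exp(B)^{k-i}$; the latter splits into the shape condition that $P_R$ is a prefix of $\exp(B)^{k-1}$ (the largest instance, $i=1$) together with the length constraint $|P_R|\le(k-i)|\exp(B)|$, equivalently $i\le k-\lceil|P_R|/|\exp(B)|\rceil$, which yield case~3.

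What remains is the forward direction: each candidate $j$ must really have $\hook=\nu$ and $\anch=|P_L|$. This is automatic from the construction, since $T[j\dd j+|P|)$ crosses the chosen child boundary (both $P_L$ and $P_R$ are non-empty), so its hook cannot be a proper descendant of $\nu$ and cannot escape $\exp(\nu)$; and the anchor is exactly $|P_L|$ because the prefix $T[j\dd j+|P_L|)$ sits inside one child while extending it by a single symbol crosses into the next child, making the hook jump to $\nu$. I expect the only delicate part of the write-up to be the boundary-length edge cases (for instance $|P_L|>|\exp(B)|$ or $|P_R|>|\exp(C)|$ in case~2, or $|P_R|>(k-1)|\exp(B)|$ in case~3), where the characterization returns an empty set of occurrences consistent with the premises of the respective case being unsatisfiable.
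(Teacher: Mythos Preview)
Your proposal is correct and follows the natural route: unpacking the definitions of $\hook$ and $\anch$, then case-splitting on the form of $A$. The paper states this result as an Observation with no accompanying proof, so your argument is precisely the justification the paper leaves implicit; there is nothing further to compare.
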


\newcommand{\A}{\mathsf{Anch}}

Local consistency of recompression RLSLPs implies that the occurrences
of any pattern have just a few different anchors and that a set of few
\emph{potential anchors} can be computed efficiently.
\begin{lemma}[{\cite[Corollary 10.4]{Gawrychowski2015},~\cite[Lemma 10]{tomohiro-lce}}]\label{lem:anch}
  For a pattern $P$ and a text $T$ represented by an RLSLP $\G$,
  define $\anch(P,T) = \{\anch(T[j\dd j+|P|)) : j \in \Occ(P,T)\}$. If
  $\G$ is an $r$-round recompression RLSLP, then $|\anch(P,T)| =
  \Oh(r)$. Moreover, given any $j\in \Occ(P,T)$, a superset
  $\A(P)\supseteq \anch(P,T)$ of size $|\A(P)|= \Oh(r)$ can be
  computed in $\Oh(r)$ time.
\end{lemma}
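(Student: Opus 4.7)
The plan is to exploit the local consistency of recompression. For any occurrence $T[j\dd j+|P|)$, the anchor $\anch(T[j\dd j+|P|))$ equals the length of the prefix of the occurrence that lies inside the leftmost child of the hook $\nu=\hook(T[j\dd j+|P|))$ that contains position $j$; equivalently, it is the offset within $\exp(\nu)$ from $j$ to the boundary between the first two children of $\nu$ that intersect the occurrence. Since each internal node of $\Tr$ corresponds to a block introduced at some recompression level $\ell\in[1,h]$, the anchor is determined by $j$ together with the level at which $\nu$ is introduced.

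For the bound $|\anch(P,T)|=\Oh(h)$, I would invoke the local consistency of recompression~\cite{Jez2016}, which asserts that any two occurrences of the same pattern $P$ in $T$ induce matching block partitions at every recompression level, except within $\Oh(1)$ positions of the occurrence's two endpoints. Consequently, for each level $\ell\in[1,h]$, the position (relative to the start of an occurrence of $P$) of the first block boundary that separates the leaf for $T[j]$ from the leaf for $T[j+|P|-1]$ takes only $\Oh(1)$ distinct values across all occurrences of $P$. Summing $\Oh(1)$ candidate anchors across the $h$ levels yields $|\anch(P,T)|=\Oh(h)$.

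For the algorithmic claim, starting from the known position $j_0$, I would traverse the parse tree from the root down to the leaf representing $T[j_0]$, recording at each ancestor $\mu$ encountered (with $\exp(\mu)=T[\ell_\mu\dd r_\mu)$) the candidate value $r_\mu - j_0$ together with the $\Oh(1)$ local-consistency shifts at that level; all of these are added to $\A(P)$. The RLSLP representation allows deriving each pair $(s(\mu),\exp(\mu))$ from that of its parent in $\Oh(1)$ time, so the walk visits $\Oh(h)$ ancestors in $\Oh(h)$ total time and produces $|\A(P)|=\Oh(h)$ candidates. By the previous paragraph, for every occurrence $T[j\dd j+|P|)$ with hook at level $\ell$, the true anchor differs from one of the candidates contributed at level $\ell$ by at most an $\Oh(1)$ shift already enumerated; hence $\anch(P,T)\subseteq\A(P)$.

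The main obstacle is carefully formalizing local consistency for the specific quantity $\anch$: recompression guarantees block-level consistency only up to bounded boundary effects, and translating this into a clean $\Oh(1)$-per-level bound on anchors requires case analysis over both recompression phases (run-length encoding and alphabet partitioning), and in particular handling the possibility that a run is truncated differently at the beginning of different occurrences of $P$ depending on the symbols preceding them in $T$.
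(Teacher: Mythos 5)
The paper quotes this lemma from \cite{Gawrychowski2015} and \cite{tomohiro-lce} without reproducing a proof, so there is no in-paper argument to compare against. The overall strategy you describe --- charging $\Oh(1)$ candidate anchors to each of the $h$ recompression levels --- is indeed the one used in those references, and your identification of the anchor with the offset to the boundary between the first two intersecting children of the hook is correct.

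However, the formulation of local consistency you rely on is wrong in a way that is load-bearing. You write that two occurrences of $P$ ``induce matching block partitions at every recompression level, except within $\Oh(1)$ \emph{positions} of the occurrence's two endpoints.'' Recompression only guarantees agreement except for $\Oh(1)$ \emph{symbols of $T_q$} at each end of the occurrence at level $q$, and a single level-$q$ symbol can expand to a long fragment of $T$, so the uncertainty is not $\Oh(1)$ in absolute positions. Consequently, the first level-$(q-1)$ block boundary strictly inside two different occurrences of $P$ can sit at offsets that differ by far more than a constant, and the ``$\Oh(1)$ local-consistency shifts'' you propose to add to each recorded value $r_\mu - j_0$ will not recover the anchors realized at other occurrences. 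What the cited proofs actually establish is an $\Oh(1)$ bound, per level, on the \emph{number} of distinct anchor values (not on their spread), obtained by tracking the $\Oh(1)$ ways the uncertain boundary symbols near the start of the occurrence can resolve and reading off the associated offsets from the parse structure of the pattern side --- information that is not recoverable from the single root-to-leaf walk towards $T[j_0]$ plus constant-range shifts. Your final paragraph already names the formalization of local consistency as the obstacle; I would stress that this is not a routine case analysis to be filled in but the core content of the lemma, and the specific shift-based repair mechanism you sketch does not work as stated.
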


\paragraph*{Internal Pattern Matching Queries in LZ77-Compressed Strings}
For a (static) text $T$, internal pattern matching queries (IPM
queries)~\cite{Kociumaka2015} given two fragments $x,y$ of $T$ with
$|y|<2|x|$, ask for the occurrences of $x$ contained within $y$.  Due
to the assumption $|y|<2|x|$, the output $\Occ(x,y)$ can be
represented using a single arithmetic progression~\cite[Lemma
2.1]{Kociumaka2015}; see also~\cite{Breslauer1995,Plandowski1998}.
If $T$ is uncompressed, then IPM queries can be answered in $\Oh(1)$
time after $\Oh(n)$-time preprocessing~\cite{Kociumaka2015,phdtomek}.
Combining \cref{thm:recompression,th:index-lce,obs:occ,lem:anch}, we
can answer IPM queries in LZ77-compressed strings in $\Oh(\log^3 n)$
time.

\begin{theorem}\label{thm:ipm}
  Given an LZ77-like parsing of a string $T[1\dd n]$ into $f$ phrases,
  we can in $\Oh(f\log^2 n)$ time construct a data structure that
  supports IPM queries in $T$ in $\Oh(\log^3 n)$ time.
\end{theorem}
\begin{proof}
  Our data structure consists of a recompression RLSLP $\G$ generating
  $T$ (constructed using \cref{thm:recompression}), and a component
  for $\LCE$ queries in $T$ and in its reverse $\bar{T}$ (built using
  \cref{th:index-lce}).  Thus, the construction time is $\Oh(f \log^2
  n)$.

  Given $x=T[\ell_x\dd r_x)$ and $y=T[\ell_y \dd r_y)$, the query
  algorithm works as follows.  First, the algorithm applies
  \cref{lem:anch} to obtain a set $\A(x)$ of potential anchors.  Since
  $\G$ is an $\Oh(\log n)$-round recompression RLSLP, we have
  $|\A(x)|=\Oh(\log n)$, and this step takes $\Oh(\log n)$ time.
  Next, the algorithm identifies all nodes $\nu$ in the parse tree
  $\Tr$ for which $\exp(\nu)$ intersects $y=T[\ell_y\dd r_y)$ on at
  least $|x|$ positions.  Due to $|y|<2|x|$, for each of these nodes
  $\nu$, the fragment $\exp(\nu)$ contains position $T[\ell_y+|x|-1]$.
  Therefore, it suffices to traverse the path from the root of $\T$
  towards the leaf representing $T[\ell_y+|x|-1]$ as long as the
  intersection is at least $|x|$ positions.  Consequently, this step
  takes $\Oh(\log n)$ time, and it results in $\Oh(\log n)$ nodes
  $\nu$.  We call them \emph{potential hooks} because the hook any
  occurrence of $x$ contained in $y$ must be one of these nodes.

  For each potential anchor $a\in \A(x)$ and for each potential hook
  $\nu$, the algorithm constructs $\Occ(x,\nu,a)$ using
  \cref{obs:occ}.  Each of the three cases depending on $A:= s(\nu)$
  is easy to implement since $\exp(\nu)= T[\ell\dd r)$, $P_L =
  T[\ell_x \dd \ell_x+a)$, and $P_R = T[\ell_x +a\dd r_x)$ are all
  given as fragments of $T$.  The case of $A\in \Sigma$ is
  particularly simple and costs $\Oh(1)$ time.  The case of $A \to BC$
  requires computing the longest common suffix of $P_L$ and
  $\exp(B)=T[\ell\dd \ell+|\exp(B)|)$, and the longest common prefix
  of $P_R$ and $\exp(C)=T[\ell+|\exp(B)|\dd r)$, which reduces to an
  $\LCE$ query in $\bar{T}$ and $T$, respectively.  Similarly, the
  case of $A \to B^k$ requires computing the longest common suffix of
  $P_L$ and $\exp(B)=T[\ell\dd \ell+|\exp(B)|)$, and the longest
  common prefix of $P_R$ and $\exp(B)^{k-1}=T[\ell+|\exp(B)|\dd r)$.
  Summing up, retrieving $\Occ(x,\nu,a)$ for a potential anchor
  $a\in \A$ and potential hook $\nu$ costs $\Oh(\log n)$ time.

  Finally, the algorithm filters occurrences contained in $y$, that
  is, starting in $[\ell_y\dd r_y-|x|]$, shifts the starting positions
  by $\ell_y-1$ (so that they become relative to $y$), and takes the
  union across potential anchors $a\in \A(x)$ and potential hooks
  $\nu$.  This post-processing takes $\Oh(\log^2 n)$ time; the
  resulting set $\Occ(x,y)$ must form an arithmetic
  progression~\cite[Lemma 2.1]{Kociumaka2015}.
\end{proof}

Next, we show that 2-period queries~\cite{Kociumaka2012} can be
answered efficiently in grammar-compressed strings.  A 2-period query,
given a fragment $x$ of $T$, asks to return $\per(x)$ or to report
that $x$ is not periodic, that is $\per(x)>\frac12|x|$.  If $T$ is
uncompressed, then 2-period queries can be answered in $\Oh(1)$ time
after $\Oh(n)$-time
preprocessing~\cite{Kociumaka2015,phdtomek,Bannai2017} using
relatively simple tools.  Nevertheless, in grammar-compressed strings,
it is convenient to derive 2-period queries from IPM queries.

\begin{theorem}\label{thm:period}
  Given an LZ77-like parsing of a string $T[1\dd n]$ into $f$ phrases,
  we can in $\Oh(f \log^2 n)$ time construct a data structure that
  supports 2-period queries in $T$ in $\Oh(\log^3 n)$ time.
\end{theorem}
\begin{proof}
  The data structure consists of components for $\LCE$ queries
  (\cref{th:index-lce}) and IPM queries (\cref{thm:ipm}), both of
  which take $\Oh(f \log^2 n)$ to construct.  The query algorithm,
  given $x=T[\ell \dd r)$, makes an IPM query asking for the
  occurrences of $x[1\dd \lceil{\frac12|x|\rceil}]=T[\ell \dd
  \lceil{\frac{\ell+r}{2}\rceil})$ in $x[2\dd |x|]=T[\ell+1\dd r)$.
  If there are no occurrences, then the algorithm reports that $x$ is
  not periodic.  Otherwise, it retrieves the starting position $p$ of
  the leftmost occurrence and checks if $p$ is a period of $x$, that
  is whether $x[1+p\dd |x|]=x[1\dd |x|-p]$, using an
  $\LCE(\ell,\ell+p)$ query in $T$.  If this test succeeds, then the
  algorithm returns $p$; if the test fails, the algorithm reports that
  $x$ is not periodic. This procedure costs $\Oh(\log^3 n)$ time,
  dominated by the IPM query.

  The correctness follows from the fact that if $\per(x)\le
  \frac12|x|$, then the first two occurrences of $x[1\dd
  \lceil{\frac12|x|\rceil}]$ in $x$ start at positions $1$ and
  $\per(x)+1$.  The lack of the intermediate occurrences is due to the
  primitivity of $x[1\dd \per(x)]$, which occurs in $x[1\dd 2\per(x)]$
  just twice, as a prefix and as a suffix, by the synchronization
  property of primitive strings.
\end{proof}

\subsection{Indexing LZ77-Compressed Texts}

In this section, we describe an efficient index that, after
preprocessing an LZ77-compressed text $T$, given a pattern $P$,
represented by its arbitrary occurrence in $T$, efficiently reports
all the occurrences of $P$ in $T$ (returns $\Occ(P,T)$), finds the
leftmost occurrence (returns $\min \Occ(P,T)$), or counts the
occurrences (returns $|\Occ(P,T)|$).  While numerous indexes for
dictionary-compressed strings have been developed
(see~\cite{Gagie2020} for a recent overview), ours satisfies two rare
properties: it can be constructed efficiently without decompressing
$T$, and it lets the pattern $P$ to be given by its occurrence in $T$
(rather than in the plain representation). In particular, we are not
aware of any index answering counting queries and satisfying either
property. On the other hand, in this version of the manuscript we do
not optimize the $\polylog n$ factors in the running times.

\newcommand{\first}{\mathsf{first}}
\newcommand{\cnt}{\mathsf{count}}
\newcommand{\depth}{\mathsf{depth}}
\newcommand{\nodes}{\mathsf{nodes}}

We start with two auxiliary results, and then we proceed to describing
our index.  For an RLSLP $\G$ with parse tree $\Tr$ and a symbol $A\in
\S$, let $\nodes(A)$ denote the set of nodes $\nu$ in $T$ with
$s(\nu)=A$.  Moreover, $\depth(\nu)$ denotes the number of nodes on
the path from $\nu$ to the root.

\begin{lemma}\label{lem:rlslp}
  Given an RLSLP $\G$, we can in $\Oh(|\S|)$ time compute, for any
  symbol $A\in \S$, the leftmost node $\first(A)\in \nodes(A)$ and the
  number of nodes $\cnt(A)=|\nodes(A)|$.  Moreover, we can in
  $\Oh(|\S|)$ time construct a data structure that, for any symbol
  $A\in \S$, enumerates $\nodes(A)$ in $\Oh(\depth(\nu))$ time per
  each $\nu \in \nodes(A)$.
\end{lemma}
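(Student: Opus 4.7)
The plan is to compute $\first$ and $\cnt$ by processing all symbols in a reverse topological order consistent with $\prec$, so that the starting symbol $S$ is handled first. Initialize $\cnt(S)=1$ and let $\first(S)$ be the root of $\Tr_\G$, with starting position $1$; for every other symbol $A$, set $\cnt(A)=0$ and mark $\first(A)$ undefined. When processing a non-terminal $A'$ (by which point $\cnt(A')$ and $\first(A')$, say with starting position $\ell'$, are already finalized), do the following. For a pair production $A'\to BC$, add $\cnt(A')$ to each of $\cnt(B)$ and $\cnt(C)$, and update $\first(B)$ (resp.\ $\first(C)$) by the child of $\first(A')$ starting at $\ell'$ (resp.\ $\ell'+|\exp(B)|$) whenever this position is smaller than the starting position currently recorded for $\first(B)$ (resp.\ $\first(C)$). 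For a power production $A'\to B^k$, add $k\cdot \cnt(A')$ to $\cnt(B)$ and update $\first(B)$ with the leftmost $B$-child of $\first(A')$, which starts at $\ell'$. Each non-terminal contributes $O(1)$ work, so the total running time is $O(|\S|)$. Correctness rests on the observation that every $B$-node in $\Tr_\G$ is the $i$-th child of some $A'$-node with $\rhs(A')[i]=B$, and that among all $A'$-nodes the leftmost $B$-child arises as the $i^{*}$-th child of $\first(A')$ for the smallest $i^{*}$ with $\rhs(A')[i^{*}]=B$.

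For the enumeration data structure, I would build for every symbol $B$ a list of predecessor records describing how $B$ occurs in other right-hand sides: for each pair production $A'\to B_1 B_2$, append $(A',1)$ to the list of $B_1$ and $(A',2)$ to the list of $B_2$; for each power production $A'\to B^k$, append a single record $(A',k)$ to the list of $B$ that implicitly represents the $k$ occurrences. The total size of these lists is $O(|\S|)$ and they can be constructed in $O(|\S|)$ time alongside the first phase. To enumerate $\nodes(A)$ I would then run a DFS in this reverse graph starting from $A$, yielding a node whenever the current path reaches $S$. The DFS maintains a stack whose elements identify the current symbol, which predecessor record is currently being explored, and (for power records) the position within $[1,k]$, together with the accumulated starting-position offset along the path; a yielded node is represented by its symbol and its starting position in $T$, which is maintained incrementally.

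The main point to verify carefully is that each yielded $\nu$ is emitted in $O(\depth(\nu))$ time. This follows from two facts: (i) the reverse-graph paths from $A$ to $S$ are in bijection with $\nodes(A)$ (different predecessor records correspond to different child positions in the parse tree, and together they exhaust all ways a node with symbol $A$ can be attached to its parent), so no node is produced twice; and (ii) each stack push/pop performs only $O(1)$ arithmetic on the accumulated offset, while the stack depth never exceeds $\depth(\nu)\le h$. Handling power productions implicitly—as a single record $(A',k)$ advanced through $k$ positions rather than as $k$ separate records—is what simultaneously keeps the data-structure size at $O(|\S|)$ and the per-yield cost at $O(\depth(\nu))$; this is the only mildly nontrivial aspect of the construction.
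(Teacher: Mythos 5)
Your proof is correct, and both $\cnt$ and the enumeration structure match the paper's approach. For $\cnt$, both propagate counts top-down in topological order of $\prec$; for enumeration, both build the list of predecessor records (``which symbols have $A$ in their right-hand side, and at which positions'') and walk from $A$ up to the start symbol, amortizing to $\Oh(\depth(\nu))$ per yielded node, with power productions handled by a single record advanced through its exponent range. The one place you genuinely diverge is in computing $\first$: the paper performs a pruned pre-order traversal of the parse tree, abandoning a subtree as soon as its root symbol already has a $\first$ value and descending only into the first child of a power node, so that each symbol's subtree is explored at most once; you instead reuse the same top-down topological pass as for $\cnt$, proposing the leftmost appropriate child of $\first(A')$ as a candidate for $\first(B)$ for each production $A'$ whose right-hand side contains $B$, and keeping the smallest starting position. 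Both variants run in $\Oh(|\S|)$ time, and both hinge on the same structural fact---worth spelling out---that distinct parse-tree nodes carrying the same symbol have disjoint expansions (equal length and not nestable), which is why the leftmost $B$-node must be a child of $\first(A')$ for the symbol $A'$ of its parent. Your version has the virtue of being uniform with the $\cnt$ computation; the paper's avoids comparing positions explicitly. Either is acceptable.
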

\begin{proof}
  To compute $\first(A)$ for each $A\in \S$, the algorithm performs a
  pre-order traversal of the parse tree $\Tr$.  Upon entering a node
  $\nu$, the algorithm retrieves $A=s(\nu)$ and checks if $\first(A)$
  has already been computed.  If so, then the algorithm ignores the
  subtree of $\nu$.  Otherwise, $\first(A)$ is set to $\nu$, and the
  algorithm continues traversing the subtree of $\nu$.  If $A\to B^k$,
  the algorithm visit only the first child of $\nu$, so that the total
  running time is $\Oh(|\S|)$.

  To compute $\cnt(A)$ for each $A\in \S$, the algorithm processes
  symbols in the $\succ$ order (if $B$ appears in $\rhs(A)$, then $B$
  is processed after $A$).  At the beginning, $\cnt(A)$ is initialized
  to 1 if $A$ is the starting symbol and to 0 otherwise.  Processing a
  non-terminal $A\in \Sigma$ is void.  Processing a terminal $A\to
  BC$, the algorithm adds $\cnt(A)$ to both $\cnt(B)$ and $\cnt(C)$.
  Processing a terminal $A\to B^k$, the algorithm adds $k\cdot
  \cnt(A)$ to $\cnt(B)$.  This procedure guarantees that the value
  $\cnt(A)$ becomes correct before $A$ is processed.  The running time
  is $\Oh(|\S|)$.

  Preprocessing for enumerating $\nodes(A)$ consists in listing, for
  each $A\in \S$, all the symbols $B$ for which $A$ appears in
  $\rhs(B)$.  The query algorithm, given $A\in \S$, recursively
  enumerates $\nodes(B)$ for each such symbol.  Next, for each node
  $\nu\in \nodes(B)$ and for each position $i$ where $A$ occurs in
  $\rhs(B)$, the algorithm adds the $i$th child $\nu_i$ of $\nu$ to
  $\nodes(A)$.  Since $\depth(\nu_i)=\depth(\nu)+1$, the amortized
  cost of retrieving $\nu_i\in \nodes(A)$ is $\Oh(\depth(\nu_i))$.
\end{proof}

\newcommand{\Pts}{\mathcal{P}}
\newcommand{\Xs}{\mathcal{X}}
\newcommand{\Ys}{\mathcal{Y}}
\newcommand{\Ws}{\mathcal{W}}
\newcommand{\rank}{\mathsf{rank}}
\begin{lemma}\label{lem:2drange}
  Suppose that $\LCE$ queries in a text $T[1\dd n]$ and in its reverse
  $\bar{T}$ can be answered in $\Oh(\log n)$ time.  Given a multiset
  $\Pts$ of $n^{\Oh(1)}$ triples $(x,y,w)$, where $x$ and $y$ are
  fragments of $T$ and $w$ is an integer, we can in $\Oh(|\Pts| \log^2
  n)$ time construct a data structure that, given a pair $(x',y')$ of
  fragments of $T$ answers the following queries regarding the
  multiset $\Ws(x',y'):=\{w : (x,y,w)\in \Pts\text{, }x'\text{ is a
  suffix of }x\text{, and }y'\text{ is a prefix of }y\}$: enumerate
  $\Ws(x',y')$ in $\Oh(\log^2 n + |\Ws(x',y')|)$ time, compute
  $\min\Ws(x',y')$ in $\Oh(\log^2 n)$ time, and compute $\sum
  \Ws(x',y')$ in $\Oh(\log^2 n)$ time.
\end{lemma}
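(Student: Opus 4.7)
\textbf{Proof proposal for Lemma~\ref{lem:2drange}.}
The plan is to reduce the stringological conditions to a two-dimensional integer range problem by ranking the stored fragments. First, I would collect the multisets $\Xs=\{x : (x,y,w)\in\Pts\}$ and $\Ys=\{y : (x,y,w)\in\Pts\}$, and sort $\Xs$ according to the lexicographic order of the reversed fragments $\bar x$, and $\Ys$ according to the lexicographic order of $y$. Comparing two fragments of $T$ (resp.\ $\bar T$) lexicographically reduces to a single $\LCE$ query in $T$ (resp.\ $\bar T$) followed by reading one character, and hence costs $\Oh(\log n)$. Therefore both sorts take $\Oh(|\Pts|\log|\Pts|\log n)=\Oh(|\Pts|\log^2 n)$ time, since $|\Pts|=n^{\Oh(1)}$. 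Let $\rank_\Xs(x)$ and $\rank_\Ys(y)$ be the positions of $x$ and $y$ in the respective sorted orders. The condition ``$x'$ is a suffix of $x$'' is equivalent to ``$\bar{x'}$ is a prefix of $\bar x$'', which, by the standard property of lexicographic order, holds exactly for $x$ whose rank $\rank_\Xs(x)$ lies in a contiguous range $[\ell_1\dd r_1]$ determined by $x'$. Analogously, ``$y'$ is a prefix of $y$'' corresponds to a contiguous range $[\ell_2\dd r_2]$ of $\rank_\Ys(y)$.

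Given these rankings, I would map each stored triple $(x,y,w)\in\Pts$ to a weighted point $(\rank_\Xs(x),\rank_\Ys(y))$ with weight $w$, and build a standard two-dimensional range tree on these $|\Pts|$ points, augmented at every secondary-structure node with (i) a sorted list for reporting, (ii) a prefix-minimum table, and (iii) a prefix-sum table. The construction of the range tree is $\Oh(|\Pts|\log|\Pts|)=\Oh(|\Pts|\log n)$, well within the desired $\Oh(|\Pts|\log^2 n)$ bound. The three query types on an axis-aligned rectangle $[\ell_1\dd r_1]\times[\ell_2\dd r_2]$ then cost $\Oh(\log^2 n + |\Ws(x',y')|)$ for enumeration and $\Oh(\log^2 n)$ for $\min$ and $\sum$, which matches the statement.

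At query time, given $(x',y')$, I would first compute $[\ell_1\dd r_1]$ and $[\ell_2\dd r_2]$ by two binary searches in the sorted arrays of $\Xs$ and $\Ys$. Each binary search performs $\Oh(\log|\Pts|)=\Oh(\log n)$ comparisons, each of which is one $\LCE$ query costing $\Oh(\log n)$, for a total of $\Oh(\log^2 n)$ per side. (A small care point: the binary search looks for the leftmost $x$ whose reverse has $\bar{x'}$ as a prefix and the rightmost one; at the endpoints the LCP is truncated by the fragment lengths, so $\bar{x'}$ being a prefix of $\bar x$ is decided by comparing the truncated $\LCE$ against $|x'|$.) With the two ranges in hand, I would issue the appropriate aggregate query to the range tree, obtaining the claimed running times; the output rectangle is $\Oh(\log n)$ canonical secondary-structure nodes, on each of which $\min$/$\sum$ take $\Oh(\log n)$ via the prefix tables, and reporting takes $\Oh(\log n + k_v)$ where $\sum_v k_v=|\Ws(x',y')|$.

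The main obstacle is the interplay between the compressed representation of $T$ and the need to sort and binary-search fragments by their full content: we never have plain access to the fragments and must rely solely on $\LCE$ queries. Once the observation is made that ``suffix of'' and ``prefix of'' translate to prefix conditions in the sorted order of $\bar\Xs$ and $\Ys$ respectively, the rest is standard; the technical care is in showing that each comparison and each binary-search step costs only $\Oh(\log n)$ via \cref{th:index-lce}, so that both preprocessing and querying stay within the claimed bounds.
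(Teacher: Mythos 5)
Your proposal is correct and follows essentially the same route as the paper's proof: rank the reversed fragments $\bar{x}$ and the fragments $y$ lexicographically via $\LCE$ queries in $\bar T$ and $T$, observe that the suffix/prefix conditions become contiguous rank ranges, and reduce to weighted 2D orthogonal range searching on the rank points. The only cosmetic difference is that you spell out a range tree with prefix-min/prefix-sum secondary structures, whereas the paper cites Chazelle's range-searching structure; both give the same construction and query bounds.
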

\begin{proof}
  Let $\Xs = \{x : (x,y,w)\in \Pts\}$ and $\Ys = \{y : (x,y,w)\in
  \Pts\}$.  At construction time, the algorithm orders $\Ys$
  lexicographically and $\Xs$ according to the lexicographic order of
  the reversed strings $\bar{x}$ for $x\in \Xs$.  As a result, each
  $y\in \Ys$ and each $x\in \Xs$ is associated to its rank
  $\rank_{\Ys}(y)$ and $\rank_{\Xs}(x)$, respectively.  Using $\LCE$
  queries in $T$ and in $\bar{T}$, this step can be implemented in
  $\Oh(|\Pts| \log^2 n)$ time.  Next, a range searching data structure
  of Chazelle~\cite{chazelle} is constructed with a point
  $(\rank_{\Xs}(x),\rank_{\Ys}(y))$ of weight $w$ associated to each
  $(x,y,w)\in \Pts$.  This data structure costs $\Oh(|\Pts| \log
  |\Pts| )$ time to build, and it answers range reporting queries in
  $\Oh(\log |\Pts| +k)$ time (where $k$ is the number of reported
  points) and semigroup range searching queries in $\Oh(\log^2 |\Pts|
  )$ time. In particular, if the semigroup operation is $+$ or $\min$,
  we obtain (weighted) range counting and range minimum queries.

  Given a query $(x',y')$, the algorithm first finds all $x\in \Xs$
  with suffix $x'$ and all $y\in \Ys$ with prefix $y'$.  Such
  fragments appear consecutively in $\Xs$ and $\Ys$, respectively (due
  to the way these multisets are ordered), and the underlying ranges
  $\Xs[\ell_\Xs\dd r_\Xs]$ and $\Ys[\ell_\Ys\dd r_\Ys]$ can be found
  in $\Oh(\log^2 n)$ using binary search (with $\LCE$ queries in
  $\bar{T}$ and $T$ applied in the comparisons).  The multiset
  $\Ws(x',y')$ is precisely the multiset of weights of points in the
  range $[\ell_\Xs\dd r_\Xs]\times[\ell_\Ys\dd r_\Ys]$ in the data
  structure of~\cite{chazelle}.  Hence, the algorithm makes a range
  query to this data structure with this range.  Depending on whether
  the task is to enumerate $\Ws(x',y')$, compute $\min\Ws(x',y')$, or
  compute $\sum \Ws(x',y')$, this is a range reporting query, a range
  minimum query, or a range counting query, respectively.  In the
  latter two cases, the answer is forwarded to the output, while in
  the first case, the algorithm transforms the multiset of (weighted)
  points to a multiset of weights.  The overall query time is
  $\Oh(\log ^2 n)$ plus $\Oh(|\Ws(x',y')|)$ in case of enumeration
  queries.
\end{proof}

Next, we describe indexes for reporting all the occurrences and
finding the leftmost occurrence.

\begin{theorem}\label{thm:index-reporting}
  Given an LZ77-like parsing of a string $\T[1 \dd n]$ into $f$
  phrases, we can in $\Oh(f \log^4 n)$ time construct a data structure
  that, for any pattern $P$ represented by its arbitrary occurrence in
  $T$, reports all the occurrences of $P$ in $T$ in $\Oh(\log^3 n +
  |\Occ(P,T)|\log n)$ time.
\end{theorem}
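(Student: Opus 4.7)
The plan is to build, on top of a recompression RLSLP $\G$ for $\T$, a geometric data structure that indexes each non-terminal by the pair of fragments flanking the natural split of its right-hand side. At preprocessing time, first construct $\G$ via Theorem~\ref{thm:recompression}, obtaining $|\S_{\G}|=\Oh(z\log^2 n)$ symbols, parsing height $\Oh(\log n)$, and running time $\Oh(z\log^2 n)$; then equip $\T$ and $\revstr{\T}$ with the $\LCE$ structures of Theorem~\ref{th:index-lce}. For every non-terminal $A$, locate its first occurrence $\nu=\first(A)$ via Lemma~\ref{lem:rlslp}; writing $\exp(\nu)=\T[\ell\dd r)$, insert into the point set $\Pts$ the triple $(x,y,A)$ where $x=\T[\ell\dd \ell+|\exp(B)|)$ realises $\exp(B)$ as a fragment of $\T$ and $y=\T[\ell+|\exp(B)|\dd r)$ realises $\exp(C)$ (if $A\to BC$) or $\exp(B)^{k-1}$ (if $A\to B^k$). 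Feeding $\Pts$ into Lemma~\ref{lem:2drange} costs $\Oh(|\Pts|\log^2 n)=\Oh(z\log^4 n)$, which dominates the preprocessing.

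Given a query $P=\T[\ell_P\dd r_P)$, the algorithm will first invoke Lemma~\ref{lem:anch} with the occurrence $\ell_P$ to obtain a superset $\A(P)\supseteq\anch(P,\T)$ of size $\Oh(\log n)$ in $\Oh(\log n)$ time. For each $a\in\A(P)$, set $P_L=\T[\ell_P\dd \ell_P+a)$ and $P_R=\T[\ell_P+a\dd r_P)$ and issue an enumeration query to the 2D structure with $(P_L,P_R)$: by the construction of $\Pts$ and by Observation~\ref{obs:occ}, a non-terminal $A$ is returned iff some $\nu\in\nodes(A)$ hosts an occurrence of $P$ with hook $\nu$ and anchor $a$. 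For every returned $A$, enumerate $\nodes(A)$ via Lemma~\ref{lem:rlslp} and convert each $\nu$ with $\exp(\nu)=\T[\ell\dd r)$ into either the single position $\ell+|\exp(B)|-|P_L|$ (pair case) or the arithmetic progression $\{\ell+i|\exp(B)|-|P_L| : i\in[1\dd k-\lceil|P_R|/|\exp(B)|\rceil]\}$ (power case) prescribed by Observation~\ref{obs:occ}. The degenerate case $|P|=1$, corresponding to terminal hooks, is dispatched separately by enumerating $\nodes(P[1])$.

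For the running time, computing $\A(P)$ costs $\Oh(\log n)$, and for each of the $\Oh(\log n)$ anchors one enumeration query to Lemma~\ref{lem:2drange} spends $\Oh(\log^2 n)$ overhead, which accounts for the $\Oh(\log^3 n)$ additive term. Every reported non-terminal $A$ produces at least one node in $\nodes(A)$ that yields an occurrence of $P$, so enumeration via Lemma~\ref{lem:rlslp} costs $\Oh(\log n)$ amortized per node, and emission of each arithmetic progression costs $\Oh(1)$ per entry; these amounts sum to $\Oh(|\Occ(P,\T)|\log n)$. Correctness is immediate once one observes that every $j\in\Occ(P,\T)$ induces a unique pair $\big(\hook(\T[j\dd j+|P|)),\anch(\T[j\dd j+|P|))\big)$ with anchor in $\A(P)$, matching exactly one point of $\Pts$ and exactly one node returned by $\nodes(\cdot)$.

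The main obstacle is that local consistency of the recompression RLSLP must simultaneously control two parameters: it must cap $|\S_{\G}|$ at $\Oh(z\log^2 n)$, so that a one-point-per-non-terminal design keeps $|\Pts|$ small and the $\Oh(z\log^4 n)$ preprocessing fits within the stated budget; and it must cap the parsing height at $\Oh(\log n)$, so that Lemma~\ref{lem:anch} yields only $\Oh(\log n)$ anchor candidates and $\nodes(A)$ enumeration runs in $\Oh(\log n)$ amortized per node. The nontrivial verification is that, with only a single representative point per non-terminal obtained via $\first(A)$, Observation~\ref{obs:occ} really captures every occurrence of $P$ exactly once — which follows from the fact that each occurrence has a unique hook node and thus a unique hook symbol.
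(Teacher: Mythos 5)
Your proposal is correct and follows essentially the same route as the paper: build a recompression RLSLP, set up the 2D range-reporting structure of Lemma~\ref{lem:2drange} with one representative point $(\exp(B),\exp(C)\text{ or }\exp(B)^{k-1},A)$ per non-terminal located via $\first(A)$, query it for each of the $\Oh(\log n)$ potential anchors from Lemma~\ref{lem:anch}, and expand the returned symbols via $\nodes(\cdot)$ and Observation~\ref{obs:occ}. The correctness and cost analyses match the paper's.
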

\begin{proof}
  Our data structure consists of a recompression RLSLP $\G$ generating
  $T$ (constructed using \cref{thm:recompression}), and a component
  for $\LCE$ queries in $T$ and in its reverse $\bar{T}$ (built using
  \cref{th:index-lce}).  Moreover, the data structure of
  \cref{lem:rlslp} is constructed on top of $\G$, and the data
  structure of \cref{lem:2drange} is built with $\Pts =
  \{(\exp(B),\exp(C),A) : A\in \N\text{ and }{A\to
    BC}\}\cup\{(\exp(B),\exp(B)^{k-1},A):A\in \N\text{ and }A\to
  B^k\}$.  Here, $A$ on the third coordinate is technically
  represented by an integer identifier of $A$.  Moreover, $\exp(B)$ on
  the first coordinate and $\exp(C)$ or $\exp(B)^{k-1}$ on the second
  coordinate are represented as fragments of $T$.  Such fragments can
  be retrieved based on $\nu=\first(A)$: if $\exp(\nu)=T[\ell \dd r)$,
  then $T[\ell\dd \ell+|\exp(B)|)$ matches $\exp(B)$, and
  $T[\ell+|\exp(B)|\dd r)$ matches $\exp(C)$ or $\exp(B)^{k-1}$.
  Since $|\N|=\Oh(f\log^2 n)$, the overall construction time is
  $\Oh(f\log^4 n)$.

  If the pattern $P$ consists of a single letter $A\in \Sigma$, then
  the query algorithm simply retrieves $\nodes(A)$ and reports
  $\exp(\nu)$ as an occurrence of $\nu$ for each $\nu\in \nodes(A)$.
  Since the parse tree $\Tr$ is of height $\Oh(\log n)$, this takes
  $\Oh(|\Occ(P,T)|\log n)$ time.

  In the following, we assume that $|P|\ge 2$.  The algorithm first
  uses \cref{lem:anch} to construct a set $\anch(P)$ of $\Oh(\log n)$
  potential anchors.  For each potential anchor $a\in \anch(P)$, the
  occurrences with anchor $a$ are reported independently. Due to
  $|P|\ge 2$, we may assume that $\anch(P)\sub [1\dd |P|)$, so
  $P_L=P[1\dd a]$ and $P_R=P[a+1\dd |P|]$ are both non-empty.
  Moreover, both $P_L$ and $P_R$ are represented as fragments of $T$.
  The algorithm makes a reporting query with $(x',y')=(P_L,P_R)$ to
  the data structure of \cref{lem:2drange}.  Due to the
  characterization of \cref{obs:occ}, this results in a set of
  non-terminals $\mathcal{A}\sub \N$ such that $\Occ(P,a,\nu)\ne
  \emptyset$ if and only if $s(\nu)\in \mathcal{A}$.  Next, for every
  $A\in \mathcal{A}$, the algorithm uses \cref{lem:rlslp} to enumerate
  $\nodes(A)$, and for each $\nu \in \nodes(A)$, it appends
  $\Occ(P,a,\nu)$ to the constructed set $\Occ(P,T)$.  Since
  $\Occ(P,a,\nu)\ne \emptyset$ and since no occurrence is reported
  multiple times, the overall running time is $\Oh(\log^3 n)$
  (dominated by $\Oh(\log n)$ queries to the data structure of
  \cref{lem:2drange}) plus $\Oh(|\Occ(P,T)|\log n)$ (dominated by
  enumerating $\nodes(A)$ for each $A\in \mathcal{A}$).
\end{proof}

\begin{theorem}\label{th:index-leftmost-occ}
  Given an LZ77-like parsing of a string $\T[1 \dd n]$ into $f$
  phrases, we can in $\bigO(f \log^4 n)$ time construct a data
  structure that, for any pattern $P$ represented by its arbitrary
  occurrence in $T$, returns the leftmost occurrence of $P$ in $T$ in
  $\Oh(\log^3 n)$ time.
\end{theorem}
\begin{proof}
  The index is the same as in the proof of \cref{thm:index-reporting},
  except for the weights $w$ in triples $(x,y,w)\in\Pts$: for each
  $A\in \N$ with $A\to BC$ or $A\to B^k$, the corresponding weight is
  set to $\ell+|\exp(B)|$, where $T[\ell\dd r)=\exp(\first(A))$.
  Given that \cref{lem:rlslp} provides $\first(A)$ for each $A\in \S$,
  the set $\Pts$ can still be constructed in $\Oh(f\log^2 n)$ time,
  and the whole index can be constructed in $\Oh(f\log^4 n)$ time.

  If the pattern $P$ consists of a single letter $A\in \Sigma$, then
  the query algorithm simply returns $\exp(\first(A))$; this costs
  $\Oh(1)$ time.  In the following, we assume that $|P|\ge 2$.  The
  algorithm first uses \cref{lem:anch} to construct a set $\anch(P)$
  of $\Oh(\log n)$ potential anchors.  For each potential anchor $a\in
  \anch(P)$, the algorithm makes a minimum query with $(x',y')=(P[1\dd
  a],P[a+1\dd |P|))$ to the data structure of \cref{lem:2drange}.
  By \cref{obs:occ}, for every $A\in \N$ and $\nu\in\nodes(A)$, we
  have that $\Occ(P,a,\nu)\ne \emptyset$ if and only if $x'$ is a
  suffix of $x$ and $y'$ is a prefix of $y$ for the triple $(x,y,w)\in
  \Pts$ corresponding to $A$.  Moreover, $w-a$ is then the starting
  position of the leftmost occurrence of $P$ with anchor $a$ and hook
  $\nu\in\nodes(A)$.  Consequently, if the query to the data structure
  of \cref{lem:2drange} yields a value $p$, then the leftmost
  occurrence of $P$ with anchor $a$ starts at position $p-a$.  (If
  $p=\infty$, then there are no occurrences with anchor $a$.)
  Therefore, the algorithm reports the minimum among the values $p-a$
  obtained for various potential anchors $a\in \anch(P)$ as the
  starting position of the leftmost occurrence of $P$ in $T$.  The
  total query time is $\Oh(\log^3 n)$, dominated by
  $|\anch(P)|=\Oh(\log n)$ queries to the data structure of
  \cref{lem:2drange}.
\end{proof}

Similarly, as in \cref{th:index-lce}, since a recompression RLSLP
generating the reversed string $\bar{T}$ can be obtained by reversing
$\rhs(A)$ for each non-terminal $A$, the above construction yields
also an index for finding the rightmost occurrences.
\begin{theorem}\label{th:index-rightmost-occ}
  Given an LZ77-like parsing of a string $\T[1 \dd n]$ into $f$
  phrases, we can in $\bigO(f \log^4 n)$ time construct a data
  structure that, for any pattern $P$ represented by its arbitrary
  occurrence in $T$, returns the rightmost occurrence of $P$ in $T$ in
  $\Oh(\log^3 n)$ time.
\end{theorem}

Counting the occurrences is more challenging, because the size
$|\Occ(P,a,\nu)|$ may vary depending on $|P|$: if $s(\nu)=A$, $A\to
B^k$, $P[1\dd a]$ is a suffix of $\exp(B)$, and $P[a+1\dd |P|)$ is a
prefix of $\exp(B)^{k-1}$, then
$|\Occ(P,a,\nu)|=k-\big\lceil{\frac{|P|-a}{|\exp(B)|}}\big\rceil$
according to \cref{obs:occ}.  In other words, we have one occurrence
for each exponent $k'\in [1\dd k)$ such that $P[a+1\dd |P|)$ is a
prefix of $\exp(B)^{k'}$.  A simple solution would be to add to $\Pts$
a triple $(\exp(B),\exp(B)^{k'},\cnt(A))$ for each $A\to B^k$ and each
$k'\in [1\dd k)$. However, this may result in $|\Pts|=\Theta(n)$ even
if $f=\Oh(1)$.  Hence, we apply a more sophisticated solution based
on~\cite[Section 7]{Christiansen2019}.  The main idea is to classify
the fragments of $T$ (and thus also the occurrences of $P$ in $T$)
into \emph{regular} and \emph{special}.

\begin{definition}
  Consider an RLSLP $\G$ representing $T$.  We call a fragment $x$ of
  $T$ \emph{regular} if $|x|=1$ or $x$ overlaps $\exp(\nu')$ for at
  most three children $\nu'$ of $\hook(x)$, and \emph{special}
  otherwise.
\end{definition}

We design separate data structures for counting the regular and
special occurrences of $P$ in $T$.  For this, we partition $\Occ(P,T)$
into $\Occ^r(P,T)$ and $\Occ^s(P,T)$, which contain the starting
positions of regular and special occurrences of $P$ in $T$,
respectively.  Similarly, we partition $\Occ(P,a,\nu)$ for any
potential anchor $a$ and any node $\nu$ in $\Tr$ into
$\Occ^r(P,a,\nu)$ and $\Occ^s(P,a,\nu)$.

\paragraph*{Counting Regular Occurrences.}
Regular occurrences are counted using an index similar to that in
\cref{thm:index-reporting,th:index-leftmost-occ}, based on the
following variant of \cref{obs:occ}.

\begin{observation}\label{obs:rocc}
  Let $\nu$ be node of $\Tr$ with $s(\nu)=A$, and let $P=P_L\cdot P_R$
  be a non-empty pattern.  Then $\Occ^r(P,\nu,|P_L|)\ne \emptyset$
  only if one of the following holds:
  \begin{enumerate}
  \item $A\in \Sigma$, $P_L=\eps$, and $P_R=A$. Then
    $|\Occ^r(P,\nu,|P_L|)|=1$.
  \item $A\to BC$, $P_L \ne \eps$ is a suffix of $\exp(B)$, and
    $P_R\ne \eps $ is a prefix of $\exp(C)$.  Then
    $|\Occ^r(P,\nu,|P_L|)|=1$.
  \item $A\to B^k$, $P_L \ne \eps$ is a suffix of $\exp(B)$, and $P_R
    \ne \eps$ is a prefix of $\exp(B)^2$.  Then
    $|\Occ^r(P,\nu,|P_L|)|=k-1$ if $P_R$ is a prefix of $\exp(B)$ and
    $|\Occ^r(P,\nu,|P_L|)|=k-2$ otherwise.
  \end{enumerate}
\end{observation}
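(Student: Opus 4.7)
\medskip
\noindent\textit{Proof proposal.}
The plan is to derive Observation~\ref{obs:rocc} directly from Observation~\ref{obs:occ} by inspecting, in each of the three structural cases for $s(\nu)=A$, how the regularity condition restricts the set of occurrences with hook $\nu$ and anchor $|P_L|$. Recall that a fragment $x$ is regular iff $|x|=1$ or $x$ overlaps at most three children of $\hook(x)$; note also that if $\hook(T[j\dd j+|P|)) = \nu$, then $T[j\dd j+|P|)$ overlaps at least two children of $\nu$ (otherwise the hook would lie strictly below $\nu$), except in the terminal case when $\nu$ has no children. So for every case the condition on $(P_L,P_R)$ inherited from Observation~\ref{obs:occ} is necessary, and what remains is to quantify $|\Occ^r(P,\nu,|P_L|)|$.

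For Case~1 ($A\in\Sigma$, $P_L=\eps$, $P_R=A$), the matching fragment has length $|P|=1$, hence is regular by definition; Observation~\ref{obs:occ} then gives exactly one occurrence. For Case~2 ($A\to BC$), Observation~\ref{obs:occ} yields a single occurrence which, by construction, overlaps exactly the two children of $\nu$ (the one labeled $B$ and the one labeled $C$), so it is regular; hence $|\Occ^r|=1$.

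The main work is in Case~3 ($A\to B^k$). By Observation~\ref{obs:occ}, occurrences in $\Occ(P,\nu,|P_L|)$ are parametrized by indices $i\in\bigl[1\dd k-\bigl\lceil\tfrac{|P_R|}{|\exp(B)|}\bigr\rceil\bigr]$, and the $i$th occurrence begins inside the $i$th child of $\nu$ and ends inside the $(i+\lceil |P_R|/|\exp(B)|\rceil)$th child, thus overlapping exactly $1+\lceil |P_R|/|\exp(B)|\rceil$ consecutive children. Regularity is therefore equivalent to $\lceil |P_R|/|\exp(B)|\rceil\le 2$, i.e.\ to $P_R$ being a prefix of $\exp(B)^2$. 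Splitting on whether $|P_R|\le|\exp(B)|$ or not gives $\lceil|P_R|/|\exp(B)|\rceil\in\{1,2\}$, and substituting into the index range yields $k-1$ and $k-2$ regular occurrences, respectively. This matches the stated counts.

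The only subtle point is the borderline when $|P_R|=|\exp(B)|$ exactly: then $P_R$ is a prefix of $\exp(B)$ (in fact equals $\exp(B)$), the right endpoint of the fragment falls on the boundary between two children of $\nu$, and the fragment overlaps only the two children starting at the $i$th; the $\lceil\cdot\rceil$ count correctly produces $k-1$ occurrences, consistent with the first subcase. No other obstacle arises; the argument is essentially a case analysis on how many children of $\nu$ a fragment with hook $\nu$ can span, combined with Observation~\ref{obs:occ}.
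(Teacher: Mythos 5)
Your proof is correct and follows the natural route that the paper implicitly takes: it derives \cref{obs:rocc} from \cref{obs:occ} by counting how many children of $\nu$ the candidate occurrence spans and intersecting with the definition of a regular fragment. The paper states this as an observation without an explicit proof, and the argument you spell out is the expected one. The only imprecision worth flagging (it does not affect correctness) is the phrasing of the borderline case $|P_R|=|\exp(B)|$: the right endpoint then lies at the last position of the $(i{+}1)$th child, not on the boundary between children, but the fragment still spans exactly two children and the ceiling count is $1$, so the conclusion stands.
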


\begin{proposition}\label{prp:regular}
  Given an $\Oh(\log n)$-round recompression RLSLP $\G$ generating a
  string $T[1\dd n]$, we can in $\Oh(|\S|\log^2 n)$ time construct a
  data structure that, for any pattern $P$ represented by its
  arbitrary occurrence in $T$, returns the number $|\Occ^r(P,T)|$ of
  regular occurrences of $P$ in $T$ in $\Oh(\log^3 n)$ time.
\end{proposition}
\begin{proof}
  Recall that $\G$ allows answering $\LCE$ queries in $T$ and its
  reverse $\bar{T}$ in $\Oh(\log n)$ time~\cite{tomohiro-lce}.  Our
  index applies \cref{lem:rlslp} on top of $\G$, and builds the data
  structure of \cref{lem:2drange} with
  \begin{align*}
    \Pts =&\; \{(\exp(B),\exp(C),\cnt(A)) : A\in \N\text{ and }{A\to
      BC}\}\\ \cup&\;\{(\exp(B),\exp(B),\cnt(A)):A\in \N\text{ and
    }A\to B^k\} \\ \cup&\;\{(\exp(B),\exp(B)^{2},(k-2)\cnt(A)):A\in
    \N\text{ and }A\to B^k\}.
  \end{align*}
  As in the proof of \cref{thm:recompression}, the fragments on the
  first two coordinates can be retrieved based on $\exp(\first(A))$
  for each $A\in \N$.  The overall construction time is
  $\Oh(|\S|+|\N|\log^2 n)=\Oh(|\S|\log^2 n)$.
  
  If the pattern $P$ consists of a single letter $A\in \Sigma$, then
  the query algorithm simply returns $\cnt(A)$, which takes $\Oh(1)$
  time.  In the following, we assume that $|P|\ge 2$.  The algorithm
  first uses \cref{lem:anch} to construct a set $\anch(P)$ of
  $\Oh(\log n)$ potential anchors.  For each potential anchor $a\in
  \anch(P)$, the algorithm makes a counting query with
  $(x',y')=(P[1\dd a],P[a+1\dd |P|))$ to the data structure of
  \cref{lem:2drange}.  Due to the characterization of \cref{obs:rocc},
  this results in the total number of regular occurrences of $P$ with
  anchor $a$.  Finally, the algorithm sums up the results obtained for
  all potential anchors $a\in \anch(P)$.  The overall running time is
  $\Oh(\log^3 n)$ (dominated by $|\anch(P)|=\Oh(\log n)$ queries to
  the data structure of \cref{lem:2drange}).
\end{proof}

\paragraph*{Counting Special Occurrences.}

First, we need to take a detour and prove two properties of
recompression RLSLPs.  For this, let us extend the expansion function
$\exp$ so that $\exp(X)=\exp(X[0])\cdots \exp(X[|X|-1])$ for $X\in
\S^*$.

\begin{lemma}\label{lem:uniq}
  If $\exp(A)=\exp(A')$ for $A,A\in \S$ in a recompression RLSLP $\G$,
  then $A=A'$.
\end{lemma}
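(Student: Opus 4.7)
The approach is strong induction on $|\exp(A)|$. The base case is $|\exp(A)|=1$: since every non-terminal $X$ of an RLSLP satisfies $|\rhs(X)|\ge 2$ and hence $|\exp(X)|\ge 2$, both $A$ and $A'$ must be terminals; distinct terminals name distinct characters, so $\exp(A)=\exp(A')$ gives $A=A'$.

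For the inductive step, let $j(X)$ be the iteration of the recompression at which $X$ is introduced ($j(X)=0$ for terminals, $j(X)\ge 1$ for non-terminals). My first sub-goal is to force $j(A)=j(A')$. Suppose, for contradiction, that $j:=j(A)<j(A')$. Then $A$ is already a single symbol occurring in $T_{j(A')}$, while $A'$ arises at iteration $j(A')$ by collapsing a block $\rhs(A')$ of length at least $2$ in $T_{j(A')}$. Both occurrences cover the same text substring $\exp(A)=\exp(A')$, so within $T_{j(A')}$ the same range of $T$ is represented in one place by a single symbol and in another by a multi-symbol block. This contradicts the local-consistency property of recompression, which guarantees that two identical substrings of $T$ induce the same level-$j(A')$ parse up to boundary. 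Making this rigorous requires a subsidiary induction on $j$ showing that $T_j[i]$ is determined by a bounded window of $T$ surrounding $\exp(T_j[i])$; this locality guarantee is the standard ingredient of Jeż-style recompression and is the main technical obstacle.

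Once $j(A)=j(A')=:j\ge 1$ has been established, both $\rhs(A)$ and $\rhs(A')$ are blocks in $T_j$. If $j$ is odd (run-length), $\rhs(A)=B^k$ and $\rhs(A')=B'^{k'}$ with $k,k'\ge 2$; the equation $\exp(B)^k=\exp(B')^{k'}$, combined with the maximality of the runs at iteration $j$ (which forces both $\exp(B)$ and $\exp(B')$ to equal the primitive period of the common expansion), yields $\exp(B)=\exp(B')$ and $k=k'$, and then $B=B'$ by the inductive hypothesis applied to the strictly shorter expansion $|\exp(B)|<|\exp(A)|$. If $j$ is even (alphabet partitioning), $\rhs(A)=BC$ and $\rhs(A')=B'C'$ with $B,B'$ left-type and $C,C'$ right-type; since the left/right classification depends only on the symbol, local consistency pins down the split point within $\exp(A)=\exp(A')$ uniquely, forcing $|\exp(B)|=|\exp(B')|$, whence $\exp(B)=\exp(B')$ and $\exp(C)=\exp(C')$, and then $B=B'$, $C=C'$ by induction. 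In either case $\rhs(A)=\rhs(A')$, and since the recompression creates exactly one non-terminal per distinct block at each iteration, $A=A'$.
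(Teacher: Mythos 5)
Your overall strategy — strong induction on $|\exp(A)|$, forcing $j(A)=j(A')$ via a locality argument, then comparing $\rhs(A)$ and $\rhs(A')$ — is a legitimate alternative route, but it is longer than the paper's and contains one genuine gap, in addition to the obstacle you explicitly flag.

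The gap is in the run-length case. You claim that ``the maximality of the runs at iteration $j$ forces both $\exp(B)$ and $\exp(B')$ to equal the primitive period of the common expansion.'' Maximality of a run $B^k$ in $T_j$ is a statement about the symbols of $T_j$ adjacent to the run (namely, that they differ from $B$); it says nothing at all about whether the string $\exp(B)$ is primitive. For example, nothing about maximality of a single run rules out $\exp(B)=W^2$ with $W$ primitive. The fact that $\exp(B)$ \emph{is} primitive whenever $A\to B^k$ in a recompression RLSLP is true, but it is the content of the paper's \cref{lem:primitive}, and its proof requires its own induction over all levels $1,\ldots,j$ of the recompression; it cannot be read off from the structure of $T_j$ alone. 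So as written, this step does not go through. (Once the primitivity fact is available, Fine--Wilf closes the argument as you intend.)

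You also correctly identify the ``locality'' claim as the main technical obstacle, but then punt it with a vague appeal to a bounded-window property. The paper proves exactly the clean statement you need as a nested claim: if two fragments of $T_j$ have equal expansions, then the fragments are equal as strings over $\S$. The proof is a direct induction on $j$ using only the fact that block boundaries between consecutive symbols of $T_{j-1}$ depend solely on those two symbols. Once you have this claim, both of your remaining hurdles dissolve simultaneously and without case analysis: after establishing $j(A)=j(A')=j$, the blocks $\rhs(A)$ and $\rhs(A')$ are fragments of $T_j$ with equal expansions, hence equal as symbol strings by the claim, hence $A=A'$ because matching blocks receive the same non-terminal. There is no need for the primitivity detour or for the separate treatment of odd and even $j$. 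In fact, the paper's own finish is even shorter: WLOG $A$ is created no later than $A'$ and first appears in $T_j$; then every node in $\nodes(A)\cup\nodes(A')$ collapses in $T_j$ to a fragment with the same expansion, one of which is the single symbol $A$, so by the claim \emph{all} of them are the single symbol $A$, and since length-$1$ blocks are never collapsed, $A=A'$. This avoids having to argue $j(A)=j(A')$ at all. I would encourage you to prove the locality claim outright and then route the whole lemma through it.
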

\begin{proof}
  Recall the strings $T_1,\ldots,T_h$ in the definition of $\G$.  We
  shall inductively prove the following claim: if $\exp(x)=\exp(x')$
  for two fragments of $T_j$, then $x=x'$.  The claim holds trivially
  for $j=1$ due to $x=\exp(x)=\exp(x')=x'$.  Thus, consider $j>1$ and
  suppose that the claim holds for $j-1$.  Let $\bar{x}$ and
  $\bar{x}'$ be the fragments of $T_{j-1}$ corresponding to $x$ and
  $x'$ in $T_j$, respectively.  Due to
  $\exp(\bar{x})=\exp(x)=\exp(x')=\exp(\bar{x}')$, the inductive
  assumption yields $\bar{x}=\bar{x}'$.  Notice that block boundaries
  between two symbols of $T_{j-1}$ are placed solely based on the
  values of these symbols (a boundary is \emph{not} placed between
  matching symbols in run-length encoding and between a left symbol
  and a right symbol in alphabet partitioning).  Hence, block
  boundaries within $\bar{x}$ and $\bar{x}'$ are placed in the same
  way. Moreover, block boundaries are placed at the endpoints of
  $\bar{x}$ and $\bar{x}'$ (since they are collapsed to $x$ and $x'$,
  respectively).  Thus, the partition of $T_{j-1}$ into blocks
  partitions $\bar{x}$ and $\bar{x}'$ into full blocks in the same
  way.  As matching blocks are replaced by the same non-terminal, we
  conclude that $x=x'$.  This completes the inductive proof.

  Without loss of generality suppose that the recompression algorithm
  creates $A$ before $A'$, and suppose that $A$ appears for the first
  time in $T_j$ (so that neither $A$ nor $A'$ appears in $T_{j'}$ for
  $j'<j$).  Observe that, for each node $\nu \in \nodes(A)\cup
  \nodes(A')$, the fragment $\exp(\nu)$ is collapsed to a fragment of
  $T_j$ (which gets collapsed to $A$ or $A'$ at some iteration $j'\ge
  j$).  As $A$ appears in $T_j$, one of these fragments of $T_j$
  consists of a single symbol $A$, and by the claim above, this means
  that all the fragments consist of a single symbol $A$.  We conclude
  that $A=A'$ because blocks of size $1$ are never collapsed to new
  non-terminals
\end{proof}

\begin{lemma}\label{lem:primitive}
  If $A\to B^k$ in a recompression RLSLP $\G$, then
  $\per(\exp(B)^2)=|\exp(B)|$.
\end{lemma}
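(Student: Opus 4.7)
I would argue by contradiction, supposing $\exp(B)=Y^m$ for some primitive $Y$ and $m\ge 2$. Let $j_B$ be the iteration at which $B$ is introduced and $j$ the iteration at which $A\to B^k$ is introduced. Since $k\ge 2$, the rule $A\to B^k$ is introduced by run-length encoding at an odd $j$, where $T_j$ contains a maximal run $B^k$. A key preliminary observation is that adjacent copies of $B$ can never be newly created in passing from $T_{j'-1}$ to $T_{j'}$ for $j'>j_B$: any $B$ in $T_{j'}$ either comes from a length-$1$ block of $B$ in $T_{j'-1}$, or from a length-$\ge 2$ block that collapses into a \emph{fresh} non-terminal at iteration $j'-1$, which cannot equal $B$. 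Iterating down, the run $B^k$ must already be present in $T_{j_B+1}$.

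This immediately excludes $B\to C^{k'}$: two adjacent $B$'s in $T_{j_B+1}$ would require two adjacent maximal $C^{k'}$-runs in $T_{j_B}$, impossible because adjacent maximal runs have distinct underlying symbols. Hence either $B\in\Sigma$ (whence $|\exp(B)|=1$, trivially primitive) or $B\to CD$ with $C\ne D$ was created by alphabet partitioning at an even $j_B$. In the second case, parity considerations force $j=j_B+1$, so $T_{j_B}$ contains $(CD)^k$ with $T$-expansion $\exp(B)^k=Y^{mk}$.

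It remains to contradict the assumption $\exp(B)=Y^m$ in the case $B\to CD$. The alphabet partitioning at $j_B$ is deterministic: $C$ is a left symbol, $D$ is a right symbol, and the greedy left-to-right scan collapses every occurrence of $CD$ but never of $DC$. Under the assumption $\exp(B)=Y^m$, the cyclic rotation $\exp(D)\exp(C)$ is itself an $m$-th power of a primitive conjugate of $Y$, and this periodicity would force a phase-shifted alternative partition of the expansion $Y^{mk}$ that is incompatible with the actual one. The transparent boundary case is when $|\exp(C)|$ is a multiple of $|Y|$: then $\exp(C)=Y^\alpha$ and $\exp(D)=Y^{m-\alpha}$, and an induction on $|\exp(B)|$ (invoking the preliminary observation one level deeper in the grammar) forces $\alpha=m-\alpha=1$, i.e., $\exp(C)=\exp(D)=Y$, contradicting $C\ne D$ via Lemma~\ref{lem:uniq}. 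The general case would reduce to this after peeling off the common period from $\exp(C)$ and $\exp(D)$. The main obstacle is making that reduction fully rigorous when the period $|Y|$ straddles the $CD$ boundary: one must argue carefully, using Lemma~\ref{lem:uniq} together with the earlier analysis applied at the level of $C,D$, that no valid recompression can assemble $\exp(B)=Y^m$ out of $C$ and $D$ without $Y$ itself being materialized earlier as the expansion of some non-terminal.
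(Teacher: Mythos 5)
Your bottom-up argument by contradiction --- tracing the run $B^k$ back to $T_{j_B+1}$ and then casing on how $B$ was created --- is a genuinely different strategy from the paper's, and the preliminary pieces (backward persistence of adjacent copies of $B$, ruling out $B\to C^{k'}$ via the impossibility of two adjacent maximal runs of the same symbol, the parity argument forcing $j=j_B+1$) are sound. However, the case $B\to CD$ is where all the content of the lemma lives, and your treatment of it has a genuine gap that you yourself flag. Even your ``transparent'' sub-case has a hole: the lemma asserts primitivity of $\exp(B)$ only when $B$ is the base of a power rule $A\to B^k$, so you cannot invoke it inductively for $C$ or $D$ without first exhibiting power rules with those bases --- which you are not given. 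More seriously, the reduction of the general case (period $|Y|$ straddling the $C$/$D$ boundary) is left as a hope, and the guiding claim that ``no valid recompression can assemble $\exp(B)=Y^m$ out of $C$ and $D$ without $Y$ itself being materialized as a non-terminal'' is neither justified nor obviously true: block boundaries in recompression are chosen based only on pairs of adjacent symbols, never on periods, so there is no a priori reason $Y$ should ever arise as the expansion of any symbol.

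The paper avoids all casing with a single downward induction: if $\per(x)=\frac12|x|$ for a fragment $x$ of some $T_j$, then $\per(\exp(x))=\frac12|\exp(x)|$. The inductive step uses exactly the local determinism you invoke --- block boundaries depend only on adjacent symbol pairs --- to argue that the period-$\frac{1}{2k}|\bar{x}|$ decomposition of the fragment $\bar{x}$ of $T_{j-1}$ corresponding to $x$ (guaranteed by the Periodicity Lemma) is respected by the block partition, hence pushes forward to $x$, forcing $k=1$. Applying this claim to a fragment $BB$ of any $T_{j'}$ that contains $B^2$ (which trivially has $\per(BB)=1=\frac12|BB|$) yields $\per(\exp(B)^2)=|\exp(B)|$ in one stroke, with no need to distinguish how $B$ was created. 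Completing your $B\to CD$ case would essentially force you to prove such a periodicity-propagation statement anyway, so the paper's formulation is the more economical route.
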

\begin{proof}
  Recall the strings $T_1,\ldots,T_h$ in the definition of $\G$.  We
  shall inductively prove the following claim: if $\per(x)=\frac12|x|$
  for a fragment $x$ of $T_j$, then $\per(\exp(x))=\frac12|\exp(x)|$.
  The claim holds trivially for $j=1$ due to $x=\exp(x)$.  Thus,
  consider $j>1$ and suppose that the claim holds for $j-1$.  Let
  $\bar{x}$ be the fragment of $T_{j-1}$ corresponding to $x$ in
  $T_j$.  Note that $\frac12|\bar{x}|$ is a period of $\bar{x}$.  By
  periodicity lemma, we therefore have
  $\per(\bar{x})=\frac{1}{2k}|\bar{x}|$ for some integer $k\ge
  1$. Consequently, $\bar{x}$ can be decomposed into $2k$ matching
  fragments $\bar{y}_1,\ldots,\bar{y}_{2k}$.  Now, consider the block
  boundaries that the recompression algorithm places within $T_{j-1}$.
  By definition of $\bar{x}$, there are block boundaries at the
  endpoints of $\bar{x}$, that is, before $\bar{y}_1$ and after
  $\bar{y}_{2k}$.  Moreover, since $\frac12|x|$ is a period of $x$,
  there is a block boundary in the midpoint of $\bar{x}$, that is,
  between $\bar{y}_k$ and $\bar{y}_{k+1}$.  However, since the block
  boundaries between two symbols of $T_{j-1}$ are placed solely based
  on the values of these two symbols, we conclude that block
  boundaries are placed between $\bar{y}_i$ and $\bar{y}_{i+1}$ for
  each $i\in [1\dd 2k)$.  Thus, the partition of $T_{j-1}$ into blocks
  partitions each fragment $\bar{y}_i$ (for $i\in [1\dd 2k]$) into
  full blocks the same way.  As matching blocks are replaced by the
  same non-terminal, we conclude that $x$ can be partitioned into $2k$
  matching fragments $y_1,\ldots,y_{2k}$, that is, $\frac{1}{2k}|x|$
  is a period of $x$.  Hence, $k=1$, and thus
  $\per(\bar{x})=\frac12|\bar{x}|$.  The inductive assumption yields
  $\per(\exp(x))=\per(\exp(\bar{x}))=\frac12|\exp(\bar{x})|=\frac12|\exp(x)|$,
  which completes the inductive proof.

  Suppose that $A$ appears for the first time in $T_j$ (so that it
  does not appear in $T_{j'}$ for $j'<j$).  Let us fix any occurrence
  of $A$ in $T_j$. The corresponding fragment of $T_{j-1}$ matches
  $B^k$.  Applying the claim to its prefix matching $B^2$, we conclude
  that $\per(\exp(B)^2)=|\exp(B)|$.
\end{proof}

We are now ready to characterize special occurrences just like regular
occurrences were characterized in \cref{obs:rocc}.

\begin{lemma}\label{lem:socc}
  Let $P$ be a pattern, $\nu$ be a node of $\Tr$, and $a$ be an
  integer.  Then $\Occ^s(P,\nu,a)\ne \emptyset$ if and only if
  $A=s(\nu)$ is of the form $A\to B^k$, $a\in [1\dd |P|)$,
  $\max\big(a,\frac{|P|-a}{k-1}\big)\le \per(P)< \frac{|P|-a}{2}$,
  and $\exp(B)=P[a+1\dd a+\per(P)]$. In this case
  $|\Occ^s(P,a,\nu)|=k-\big\lceil{\frac{|P|-a}{\per(P)}}\big\rceil$.
\end{lemma}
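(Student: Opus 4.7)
The plan is to reduce the lemma to the characterization already provided by \cref{obs:occ}, carving out precisely when case (3) produces special occurrences. I would first observe that cases (1) and (2) of \cref{obs:occ} can never yield special occurrences: if $A\in\Sigma$, then $|\exp(\nu)|=1$ and so any matching fragment has length $1$ and is regular by definition, while if $A\to BC$, the hook $\nu$ has only two children, so any fragment at $\nu$ overlaps at most two children and is also regular. Hence $\Occ^s(P,\nu,a)\neq\emptyset$ forces $A\to B^k$, which immediately gives the first asserted condition.

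Next I would analyze case (3). Under \cref{obs:occ}(3), the $i$th occurrence starts at position $\ell+i|\exp(B)|-|P_L|$ inside the $i$th child, with $P_L$ occupying a suffix of the $i$th child's expansion and $P_R$ filling children $i+1,i+2,\dots,i+\lceil|P_R|/|\exp(B)|\rceil$. Assuming $|P_L|\ge 1$ (i.e.\ $a\ge 1$), this occurrence overlaps exactly $1+\lceil|P_R|/|\exp(B)|\rceil$ children. Thus the occurrence is special iff $\lceil|P_R|/|\exp(B)|\rceil\ge 3$, i.e.\ $|\exp(B)|<\tfrac{|P|-a}{2}$. The containment constraints of \cref{obs:occ}(3) additionally force $|P_L|\le|\exp(B)|$ and $|P_R|\le(k-1)|\exp(B)|$, giving $\max(a,\tfrac{|P|-a}{k-1})\le|\exp(B)|$. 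So after I show $|\exp(B)|=\per(P)$, the middle inequality in the statement drops out, and $a\in[1\dd|P|)$ follows from $a\ge 1$ and $a\le|\exp(B)|<|P|$.

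The main obstacle is identifying $|\exp(B)|$ with $\per(P)$. Since $P$ embeds in $\exp(B)^k$, the value $|\exp(B)|$ is \emph{a} period of $P$, so $\per(P)\le|\exp(B)|$. To rule out $\per(P)<|\exp(B)|$, I would invoke \cref{lem:primitive}, which tells us $\per(\exp(B)^2)=|\exp(B)|$ and hence $\exp(B)$ is primitive. The length-$|\exp(B)|$ prefix of $P$ is a conjugate (cyclic rotation) of $\exp(B)$, and conjugates of primitive strings remain primitive. Now if $p:=\per(P)<|\exp(B)|$, the inequality $|P|\ge 2|\exp(B)|+2\ge p+|\exp(B)|$ lets me apply \cref{lm:per} to conclude that $q:=\gcd(p,|\exp(B)|)$ is also a period of $P$. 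Since $q\mid|\exp(B)|$ and $q\le p<|\exp(B)|$, the length-$|\exp(B)|$ prefix of $P$ would equal $(P[1\dd q])^{|\exp(B)|/q}$, contradicting its primitivity. Thus $\per(P)=|\exp(B)|$. The identity $\exp(B)=P[a+1\dd a+\per(P)]$ then follows directly from the way $P$ sits inside $\exp(B)^k$: $P_R$ begins exactly at the start of the $(i+1)$th copy of $\exp(B)$. Plugging $|\exp(B)|=\per(P)$ into the count from \cref{obs:occ}(3) yields $|\Occ^s(P,a,\nu)|=k-\lceil(|P|-a)/\per(P)\rceil$.

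Finally, for the converse I would assume the four conditions of the lemma and verify \cref{obs:occ}(3). Writing $p=\per(P)$, the identity $\exp(B)=P[a+1\dd a+p]$ together with $a\le p$ and the fact that $p$ is a period of $P$ gives $P[1\dd a]=P[p+1\dd p+a]$, which is the length-$a$ suffix of $\exp(B)$; hence $P_L$ is a suffix of $\exp(B)$. Likewise, since $P$ has period $p$ and $|P_R|=|P|-a\le(k-1)p$, iterating the period shows that $P_R$ is a prefix of $\exp(B)^{\lceil|P_R|/p\rceil}\subseteq\exp(B)^{k-1}$. Then \cref{obs:occ}(3) applies, producing $k-\lceil(|P|-a)/p\rceil\ge 1$ occurrences, each of which satisfies $|P_R|>2|\exp(B)|$ and so overlaps at least four children of $\nu$---i.e.\ each is special. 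This completes both directions of the equivalence and confirms the claimed cardinality.
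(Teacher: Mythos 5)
Your proof is correct and follows the same overall strategy as the paper: eliminate cases (1) and (2) of \cref{obs:occ}, reduce to case (3), argue that $|\exp(B)|=\per(P)$ via \cref{lem:primitive}, and then read off the inequalities and the cardinality formula, handling the converse by re-running \cref{obs:occ}(3). The one localized difference is in the step identifying $|\exp(B)|$ with $\per(P)$. The paper's implicit argument is: $\exp(B)^2$ is a substring of $P$, and since $\per(P)\le|\exp(B)|<|\exp(B)^2|$, every period of $P$ is also a period of the substring $\exp(B)^2$; since $\per(\exp(B)^2)=|\exp(B)|$, this forces $\per(P)\ge|\exp(B)|$ and hence equality. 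You instead derive primitivity of $\exp(B)$ from $\per(\exp(B)^2)=|\exp(B)|$, observe that $P[1\dd|\exp(B)|]$ is a conjugate of $\exp(B)$ and therefore primitive, and then apply \cref{lm:per} to the hypothetical shorter period $p<|\exp(B)|$ to contradict that primitivity. Both are valid; yours is a touch longer but also makes the Fine--Wilf condition explicit, whereas the paper's substring-period argument avoids any invocation of \cref{lm:per} for this step.
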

\begin{proof}
  First, consider a special occurrence $x$ of $P$ with anchor $a$ and
  hook $\nu$.  Since $x$ overlaps $\exp(\nu')$ for at least 4 children
  $\nu'$ of $\nu$, the symbol $A=s(\nu)$ must be of the form $A\to
  B^k$ for $k\ge 4$.  \cref{obs:occ} implies that $P_L:=P[1\dd a]$ is
  a non-empty suffix of $\exp(B)$ and $P_R:=P[a+1\dd |P|)$ is a
  non-empty prefix of $\exp(B)^{k-1}$.  In particular, $a\in [1\dd
  |P|)$ and $|\exp(B)|$ is a period of $P$.  Moreover, since $x$
  overlaps $\exp(\nu')$ for at least 4 children $\nu'$ of $\nu$, the
  string $\exp(B)^2$ must be a proper prefix of $P_R$, and hence a
  substring of $P$.  Due to \cref{lem:primitive}, we have
  $\per(\exp(B)^2)=|\exp(B)|$, and thus
  $\per(P)=|\exp(B)|$. Consequently, $\exp(B)=P_R[1\dd
  \per(P)]=P[a+1\dd a+\per(P)]$.  Finally, $\per(P)\ge a$ since
  $P_L$ is a suffix of $\exp(B)$, $\per(P)\ge \frac{|P|-a}{k-1}$ since
  $P_R$ is a prefix of $\exp(B)^{k-1}$, and $\per(P)< \frac{|P|-a}{2}$
  since $\exp(B)^2$ is a proper prefix of $P_R$.  Therefore, all the
  conditions in the lemma statement are satisfied.

  It remains to prove the converse implication: if $P$, $\nu$, and $a$
  satisfy the conditions in the lemma statement, then
  $|\Occ^s(P,\nu,a)|=k-\big\lceil{\frac{|P|-a}{\per(P)}}\big\rceil>0$.
  Again, let $P_L := P[1\dd a]$ and $P_R:=P[a+1\dd |P|)$; these
  fragments are non-empty due to $a\in[1\dd |P|)$.  As
  $\exp(B)=P_R[1\dd \per(P)]$, we conclude that $P_L$ is a suffix and
  $P_R$ is a prefix of a sufficiently large power $\exp(B)^{k'}$.  Due
  to $|\exp(B)|\ge \max(|P_L|,\frac{|P_R|}{k-1})$, in fact, $P_L$ is a
  suffix of $\exp(B)$ and $P_R$ is a prefix of $\exp(B)^{k-1}$.
  Consequently, \cref{obs:occ} yields $\Occ(P,\nu,
  |P_L|)=\big\{\ell+i|\exp(B)|-|P_L| : i \in \big[1\dd
  k-\big\lceil{\frac{|P_R|}{|\exp(B)|}\big\rceil}\big]\big\}$, where
  $\exp(\nu)=T[\ell \dd r)$.  Since $|\exp(B)|<\frac12|P_R|$, the
  occurrence of $P$ at position $\ell+i|\exp(B)|-|P_L|$ is special as
  it overlaps the fragments $\exp(\nu')$ for at least four children of
  $\nu$: these are $T[\ell+j|\exp(B)|\dd \ell+(j+1)|\exp(B))$ for
  $j\in \{i-1,i,i+1,i+2\}$.  Thus,
  $|\Occ^s(P,\nu,a)|=|\Occ(P,\nu,a)|=k-\big\lceil{\frac{|P_R|}{|\exp(B)|}\big\rceil}=
  k-\big\lceil{\frac{|P|-a}{\per(P)}\big\rceil}>0$.
\end{proof}

Next, we extend \cref{lem:rlslp} so that generalized $\cnt(B)$ queries
can be answered for $B\in \S$.

\begin{lemma}\label{lem:rlslps}
  Given an RLSLP $\G$, we can in $\Oh(|\S|\log |\S|)$ time construct a
  data structure that, given a symbol $B$ and a positive integer $m$,
  computes the following value in $\Oh(\log |\S|)$ time:
  \[
    \cnt(B,m) := \sum_{A\in \S\::\:\rhs(A)=B^{k}\text{ \rm for }k> m}
    (k-m)\cdot \cnt(A).
  \]
\end{lemma}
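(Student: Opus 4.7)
The plan is to reduce the query to two independent suffix-sum queries on a small sorted list associated with each symbol $B$. Rewriting
\[
  \cnt(B,m) = \sum_{A\::\:\rhs(A)=B^k,\,k>m}(k-m)\cdot\cnt(A)
  = \Big(\sum_{A\::\:\rhs(A)=B^k,\,k>m} k\cdot\cnt(A)\Big) - m\cdot\Big(\sum_{A\::\:\rhs(A)=B^k,\,k>m}\cnt(A)\Big),
\]
we see that $\cnt(B,m)$ is determined by two suffix aggregates over the multiset $L_B := \{(k,\cnt(A)) : A\to B^k\}$, thresholded at exponent $m$.

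At preprocessing time, I would first invoke \cref{lem:rlslp} to obtain $\cnt(A)$ for every $A\in\S$ in $\Oh(|\S|)$ total time. Then I would scan the productions once, appending $(k,\cnt(A))$ to the appropriate list $L_B$ whenever $A\to B^k$ is a power rule; the lists are indexed by the integer identifier of $B$ so that they can be addressed in $\Oh(1)$ time. Since each power rule contributes exactly one entry, $\sum_{B\in\S}|L_B|\le|\N|\le|\S|$. For every symbol $B$, I would then sort $L_B$ by $k$ and, scanning from right to left, precompute arrays storing the two suffix sums $\Sigma^{(1)}_B[i]=\sum_{j\ge i}w_j$ and $\Sigma^{(2)}_B[i]=\sum_{j\ge i}k_jw_j$, where $(k_1,w_1),(k_2,w_2),\ldots$ denote the entries of $L_B$ in the sorted order. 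The total cost of sorting and building the prefix-sum arrays across all $B$ is $\Oh(|\S|\log|\S|)$ by the master-list size bound.

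To answer a query $(B,m)$, I would locate $L_B$ in $\Oh(1)$ time and binary-search for the smallest index $i$ with $k_i>m$ in $\Oh(\log|L_B|)=\Oh(\log|\S|)$ time. Then
\[
  \cnt(B,m) = \Sigma^{(2)}_B[i] - m\cdot\Sigma^{(1)}_B[i],
\]
which is a constant-time arithmetic step once the binary search terminates, for a total query cost of $\Oh(\log|\S|)$, as required.

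There is no real obstacle here: the whole construction is a standard sort-plus-suffix-sum pattern, and the only care needed is to observe that while $B$ ranges over all of $\S$, the total work is governed by the number of power rules (at most $|\N|$) rather than by $|\S|^2$, and that we must use the values $\cnt(A)$ supplied by \cref{lem:rlslp} rather than treating $A$ as appearing once. If an $\Oh(|\S|)$-time rather than $\Oh(|\S|\log|\S|)$-time construction were desired, one could replace the comparison sort by a radix sort since every $k$ fits in $\Oh(\log n)$ bits, but the stated bound is already sufficient for the callers.
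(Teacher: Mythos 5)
Your proof is correct and essentially matches the paper's: both decompose $\cnt(B,m)$ into the difference of two suffix aggregates over the power rules targeting $B$, precompute sorted per-symbol lists with suffix sums of $\cnt(A)$ and $k\cdot\cnt(A)$ after obtaining $\cnt(A)$ via \cref{lem:rlslp}, and answer a query by a binary search followed by constant arithmetic. The only cosmetic difference is that the paper stores a predecessor-indexed array $K(B)$ seeded with the sentinel $1$, whereas you binary-search directly for the least index with exponent exceeding $m$; the two are interchangeable.
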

\begin{proof}
  For each symbol $B\in \S$, let us define a set $K(B)$ containing $1$
  and all the exponents $k$ such that $\rhs(A)=B^k$ for some $A\in
  \S$.  The construction algorithm builds sets $K(B)$ represented as
  sorted arrays.  Each element $m\in K(B)$ is augmented with two
  values
  \[
    \sum_{A\in \S\;:\;\rhs(A)=B^{k}\text{ \rm for }k> m} k\cdot
    \cnt(A)\quad\text{and}\quad \sum_{A\in \S\;:\;\rhs(A)=B^{k}\text{
        \rm for }k> m} \cnt(A),
  \]
  computed using a right-to-left scan of $K(B)$, after the counts
  $\cnt(A)$ are obtained using \cref{lem:rlslp}.  The overall
  construction time is $\Oh(|\S|\log |\S|)$, dominated by sorting
  $K(B)$.

  The query algorithm, given a symbol $B$ and an integer $m\ge 1$,
  finds the predecessor $m'\in K(B)$ of $m$ (using binary search) and
  computes $\cnt(B,m)$ based on the values stored with $m'$:
  \[
    \cnt(B,m) = \sum_{A\in \S\;:\;\rhs(A)=B^{k}\text{ \rm for }k> m'}
    k\cdot \cnt(A) \quad - \quad m\cdot \sum_{A\in
      \S\;:\;\rhs(A)=B^{k}\text{ \rm for }k> m'} \cnt(A).
  \]
  This equality holds because, by definition of $m'$ as the
  predecessor of $m$ in $K(B)$, if $\rhs(A)=B^k$, then $k>m$ is
  equivalent to $k>m'$.  The query time is $\Oh(\log |\S|)$ dominated
  by finding the predecessor.
\end{proof}

Finally, we describe our index for counting special occurrences.

\begin{proposition}\label{prp:special}
  Given an $\Oh(\log n)$-round recompression RLSLP $\G$ generating a
  string $T[1\dd n]$, we can in $\Oh(|\S|\log^2 n)$ time construct a
  data structure that, for any pattern $P$ represented by its
  arbitrary occurrence in $T$, returns the number $|\Occ^s(P,T)|$ of
  special occurrences of $P$ in $T$ in $\Oh(\log^3 n)$ time.
\end{proposition}
\begin{proof}
  Recall that $\G$ allows answering $\LCE$ queries in $T$ in $\Oh(\log
  n)$ time~\cite{tomohiro-lce}.  The construction algorithm uses these
  queries, combined with \cref{lem:rlslp}, which yields $\first(A)$
  for each $A\in \S$, to sort the symbols $A\in \S$ according to the
  lexicographic order of their expansions $\exp(A)$.  Additionally,
  the index contains a component for 2-Period queries
  (\cref{thm:period}) and a component for computing the values
  $\cnt(B,m)$ (\cref{lem:rlslps}).  The overall construction time is
  $\Oh(|\S|\log^2 n)$, dominated by sorting $\S$, which involves
  $\Oh(|\S|\log |\S|)$ comparisons in $\Oh(\log n)$ time each.

  The query algorithm first makes a 2-period query on the pattern $P$.
  If $\per(P)>\frac12|P|$, then the algorithm returns $0$, which is
  correct due to the condition $\per(P)<\frac{|P|-a}{2}<\frac12|P|$ in
  \cref{lem:socc}.  Otherwise, the algorithm uses \cref{lem:anch} to
  construct a set $\anch(P)$ of $\Oh(\log n)$ potential anchors.  For
  each potential anchor $a\in \anch(P)\cap [1\dd |P|)$, the algorithm
  checks whether $a \le \per(P)<\frac{|P|-a}{2}$.  If this test fails,
  then $P$ has no special occurrences according to \cref{lem:socc}.
  If the test succeeds, then the algorithm finds a symbol $B\in \S$
  such that $\exp(B)=P[a+1\dd a+\per(P)]$.  For this, the algorithm
  performs a binary search on the list of symbols (sorted by
  expansions) with an $\LCE$ query applied to implement each
  comparison.  If there is no such symbol $B$, then $P$ has no special
  occurrences with anchor $a$ according to \cref{lem:socc}.  If a
  symbol $B$ exists, it is unique by \cref{lem:uniq}.  In this case,
  the algorithm adds
  $\cnt\big(B,\big\lceil{\frac{|P|-a}{\per(P)}\big\rceil}\big)$
  (obtained using \cref{lem:rlslps}) to the count of special
  occurrences of $P$.  This term represents the number of special
  occurrences with anchor $a$, because each node $\nu$ with
  $\rhs(s(\nu))=B^k$ has a non-empty set $\Occ^s(P,\nu,a)$ only if
  $\frac{|P|-a}{k-1}\le \per(P)$, which is equivalent to $k-1 \ge
  \frac{|P|-a}{\per(P)}$ and to $k >
  \big\lceil{\frac{|P|-a}{\per(P)}\big\rceil}$.  Moreover, the size of
  $\Occ^s(P,\nu,a)$ is then equal to
  $k-\big\lceil{\frac{|P|-a}{\per(P)}\big\rceil}$ by \cref{lem:socc}.
  Consequently, the algorithm correctly counts special occurrences of
  $P$ in $T$.  The query time is $\Oh(\log^3 n)$ dominated by
  $\Oh(\log n)$ $\LCE$ queries made for each potential anchor.
\end{proof}

Combining \cref{thm:recompression} (for transforming an LZ77-like
parsing into a recompression RLSLP) with
\cref{prp:regular,prp:special} (for counting regular and special
occurrences, respectively), we obtain our counting index.

\begin{theorem}\label{th:index-count}
  Given an LZ77-like parsing of a string $\T[1 \dd n]$ into $f$
  phrases, we can in $\bigO(f \log^4 n)$ time construct a data
  structure that, for any pattern $P$ represented by its arbitrary
  occurrence in $T$, returns the number of occurrences of $P$ in $T$
  in $\Oh(\log^3 n)$ time.
\end{theorem}

\paragraph{Acknowledgment.}
The authors would like to thank Barna Saha for helpful discussions.

\bibliographystyle{plainurl}
\bibliography{paper}

\begin{thebibliography}{10}

\bibitem{bwtbook}
Donald Adjeroh, Tim Bell, and Amar Mukherjee.
\newblock {\em The {B}urrows-{W}heeler Transform: Data Compression, Suffix
  Arrays, and Pattern Matching}.
\newblock Springer, Boston, MA, USA, 2008.
\newblock \href {http://dx.doi.org/10.1007/978-0-387-78909-5}
  {\path{doi:10.1007/978-0-387-78909-5}}.

\bibitem{Alstrup2000}
Stephen Alstrup, Gerth~St{\o}lting Brodal, and Theis Rauhe.
\newblock Pattern matching in dynamic texts.
\newblock In {\em SODA}, pages 819--828, 2000.

\bibitem{circfactor}
Mai Alzamel, Maxime Crochemore, Costas~S. Iliopoulos, Tomasz Kociumaka, Jakub
  Radoszewski, Wojciech Rytter, Juliusz Straszynski, Tomasz Wale{{\'n}}, and
  Wiktor Zuba.
\newblock Quasi-linear-time algorithm for longest common circular factor.
\newblock In {\em CPM}, pages 25:1--25:14, 2019.
\newblock \href {http://dx.doi.org/10.4230/LIPIcs.CPM.2019.25}
  {\path{doi:10.4230/LIPIcs.CPM.2019.25}}.

\bibitem{Amir2017}
Amihood Amir, Costas~S. Iliopoulos, and Jakub Radoszewski.
\newblock Two strings at {H}amming distance 1 cannot be both quasiperiodic.
\newblock {\em Inf. Process. Lett.}, 128:54--57, 2017.
\newblock \href {http://dx.doi.org/10.1016/j.ipl.2017.08.005}
  {\path{doi:10.1016/j.ipl.2017.08.005}}.

\bibitem{ArroyueloNS12}
Diego Arroyuelo, Gonzalo Navarro, and Kunihiko Sadakane.
\newblock Stronger {L}empel-{Z}iv based compressed text indexing.
\newblock {\em Algorithmica}, 62(1-2):54--101, 2012.
\newblock \href {http://dx.doi.org/10.1007/s00453-010-9443-8}
  {\path{doi:10.1007/s00453-010-9443-8}}.

\bibitem{BannaiGI20}
Hideo Bannai, Travis Gagie, and Tomohiro I.
\newblock Refining the \emph{r}-index.
\newblock {\em Theor. Comput. Sci.}, 812:96--108, 2020.
\newblock \href {http://dx.doi.org/10.1016/j.tcs.2019.08.005}
  {\path{doi:10.1016/j.tcs.2019.08.005}}.

\bibitem{Bannai2017}
Hideo Bannai, Tomohiro I, Shunsuke Inenaga, Yuto Nakashima, Masayuki Takeda,
  and Kazuya Tsuruta.
\newblock The ``runs'' theorem.
\newblock {\em {SIAM} J. Comput.}, 46(5):1501--1514, 2017.
\newblock \href {http://dx.doi.org/10.1137/15M1011032}
  {\path{doi:10.1137/15M1011032}}.

\bibitem{DjamalCGPR15}
Djamal Belazzougui, Fabio Cunial, Travis Gagie, Nicola Prezza, and Mathieu
  Raffinot.
\newblock Composite repetition-aware data structures.
\newblock In {\em CPM}, pages 26--39, 2015.
\newblock \href {http://dx.doi.org/10.1007/978-3-319-19929-0_3}
  {\path{doi:10.1007/978-3-319-19929-0_3}}.

\bibitem{DjamalCGPR17}
Djamal Belazzougui, Fabio Cunial, Travis Gagie, Nicola Prezza, and Mathieu
  Raffinot.
\newblock Flexible indexing of repetitive collections.
\newblock In {\em CiE}, pages 162--174, 2017.
\newblock \href {http://dx.doi.org/10.1007/978-3-319-58741-7_17}
  {\path{doi:10.1007/978-3-319-58741-7_17}}.

\bibitem{DCC2015}
Djamal Belazzougui, Travis Gagie, Paweł Gawrychowski, Juha
  K{\"{a}}rkk{\"{a}}inen, Alberto~Ord{\'{o}}{\~{n}}ez Pereira, Simon~J.
  Puglisi, and Yasuo Tabei.
\newblock Queries on {LZ}-bounded encodings.
\newblock In {\em DCC}, pages 83--92, 2015.
\newblock \href {http://dx.doi.org/10.1109/DCC.2015.69}
  {\path{doi:10.1109/DCC.2015.69}}.

\bibitem{BilleEGV18}
Philip Bille, Mikko~Berggren Ettienne, Inge~Li G{\o}rtz, and Hjalte~Wedel
  Vildh{\o}j.
\newblock Time-space trade-offs for {L}empel-{Z}iv compressed indexing.
\newblock {\em Theor. Comput. Sci.}, 713:66--77, 2018.
\newblock \href {http://dx.doi.org/10.1016/j.tcs.2017.12.021}
  {\path{doi:10.1016/j.tcs.2017.12.021}}.

\bibitem{BLRSRW15}
Philip Bille, Gad~M. Landau, Rajeev Raman, Kunihiko Sadakane, Srinivasa~Rao
  Satti, and Oren Weimann.
\newblock Random access to grammar-compressed strings and trees.
\newblock {\em {SIAM} J. Comput.}, 44(3):513--539, 2015.
\newblock \href {http://dx.doi.org/10.1137/130936889}
  {\path{doi:10.1137/130936889}}.

\bibitem{blumer1987complete}
Anselm Blumer, Janet~A. Blumer, David Haussler, Ross~M. McConnell, and Andrzej
  Ehrenfeucht.
\newblock Complete inverted files for efficient text retrieval and analysis.
\newblock {\em J. {ACM}}, 34(3):578--595, 1987.
\newblock \href {http://dx.doi.org/10.1145/28869.28873}
  {\path{doi:10.1145/28869.28873}}.

\bibitem{Breslauer1995}
Dany Breslauer and Zvi Galil.
\newblock Finding all periods and initial palindromes of a string in parallel.
\newblock {\em Algorithmica}, 14(4):355--366, 1995.
\newblock \href {http://dx.doi.org/10.1007/BF01294132}
  {\path{doi:10.1007/BF01294132}}.

\bibitem{sampling}
Karl Bringmann and Tobias Friedrich.
\newblock Exact and efficient generation of geometric random variates and
  random graphs.
\newblock In {\em ICALP}, pages 267--278, 2013.
\newblock \href {http://dx.doi.org/10.1007/978-3-642-39206-1_23}
  {\path{doi:10.1007/978-3-642-39206-1_23}}.

\bibitem{sampling2}
Karl Bringmann and Konstantinos Panagiotou.
\newblock Efficient sampling methods for discrete distributions.
\newblock {\em Algorithmica}, 79(2):484--508, 2017.
\newblock \href {http://dx.doi.org/10.1007/s00453-016-0205-0}
  {\path{doi:10.1007/s00453-016-0205-0}}.

\bibitem{BWT}
Michael Burrows and David~J. Wheeler.
\newblock A block-sorting lossless data compression algorithm.
\newblock Technical Report 124, Digital Equipment Corporation, Palo Alto,
  California, 1994.

\bibitem{charikar}
Moses Charikar, Eric Lehman, Ding Liu, Rina Panigrahy, Manoj Prabhakaran, Amit
  Sahai, and Abhi Shelat.
\newblock The smallest grammar problem.
\newblock {\em Trans. Inf. Theory}, 51(7):2554--2576, 2005.
\newblock \href {http://dx.doi.org/10.1109/TIT.2005.850116}
  {\path{doi:10.1109/TIT.2005.850116}}.

\bibitem{chazelle}
Bernard Chazelle.
\newblock A functional approach to data structures and its use in
  multidimensional searching.
\newblock {\em {SIAM} J. Comput.}, 17(3):427--462, 1988.
\newblock \href {http://dx.doi.org/10.1137/0217026}
  {\path{doi:10.1137/0217026}}.

\bibitem{Christiansen2019}
Anders~Roy Christiansen, Mikko~Berggren Ettienne, Tomasz Kociumaka, Gonzalo
  Navarro, and Nicola Prezza.
\newblock Optimal-time dictionary-compressed indexes, 2019.
\newblock \href {http://arxiv.org/abs/1811.12779} {\path{arXiv:1811.12779}}.

\bibitem{ClearyW84}
John~G. Cleary and Ian~H. Witten.
\newblock Data compression using adaptive coding and partial string matching.
\newblock {\em {IEEE} Trans. Commun.}, 32(4):396--402, 1984.
\newblock \href {http://dx.doi.org/10.1109/TCOM.1984.1096090}
  {\path{doi:10.1109/TCOM.1984.1096090}}.

\bibitem{CormackH87}
Gordon~V. Cormack and R.~Nigel Horspool.
\newblock Data compression using dynamic {M}arkov modelling.
\newblock {\em Comput. J.}, 30(6):541--550, 1987.
\newblock \href {http://dx.doi.org/10.1093/comjnl/30.6.541}
  {\path{doi:10.1093/comjnl/30.6.541}}.

\bibitem{entropy}
Thomas~M. Cover and Joy~A. Thomas.
\newblock {\em Elements of Information Theory 2nd Edition}.
\newblock Wiley, 2006.

\bibitem{fine1965uniqueness}
Nathan~J. Fine and Herbert~S. Wilf.
\newblock Uniqueness theorems for periodic functions.
\newblock {\em Proc. Am. Math. Soc.}, 16(1):109--114, 1965.
\newblock \href {http://dx.doi.org/10.2307/2034009}
  {\path{doi:10.2307/2034009}}.

\bibitem{gagie2006large}
Travis Gagie.
\newblock Large alphabets and incompressibility.
\newblock {\em Inf. Process. Lett.}, 99(6):246--251, 2006.

\bibitem{GagieGKNP12}
Travis Gagie, Paweł Gawrychowski, Juha K{\"{a}}rkk{\"{a}}inen, Yakov Nekrich,
  and Simon~J. Puglisi.
\newblock A faster grammar-based self-index.
\newblock In {\em LATA}, pages 240--251, 2012.
\newblock \href {http://dx.doi.org/10.1007/978-3-642-28332-1_21}
  {\path{doi:10.1007/978-3-642-28332-1_21}}.

\bibitem{dagstuhl}
Travis Gagie, Giovanni Manzini, Gonzalo Navarro, and Jens Stoye.
\newblock {25 Years of the Burrows-Wheeler Transform (Dagstuhl Seminar 19241)}.
\newblock {\em Dagstuhl Reports}, 9(6):55--68, 2019.
\newblock \href {http://dx.doi.org/10.4230/DagRep.9.6.55}
  {\path{doi:10.4230/DagRep.9.6.55}}.

\bibitem{RLCST}
Travis Gagie, Gonzalo Navarro, and Nicola Prezza.
\newblock Fully-functional suffix trees and optimal text searching in
  {BWT}-runs bounded space, 2018.
\newblock \href {http://arxiv.org/abs/1809.02792} {\path{arXiv:1809.02792}}.

\bibitem{GNPlatin18}
Travis Gagie, Gonzalo Navarro, and Nicola Prezza.
\newblock On the approximation ratio of {L}empel-{Z}iv parsing.
\newblock In {\em LATIN}, pages 490--503, 2018.
\newblock \href {http://dx.doi.org/10.1007/978-3-319-77404-6_36}
  {\path{doi:10.1007/978-3-319-77404-6_36}}.

\bibitem{Gagie2020}
Travis Gagie, Gonzalo Navarro, and Nicola Prezza.
\newblock Fully functional suffix trees and optimal text searching in
  {BWT}-runs bounded space.
\newblock {\em J. {ACM}}, 67(1):1--54, apr 2020.
\newblock \href {http://dx.doi.org/10.1145/3375890}
  {\path{doi:10.1145/3375890}}.

\bibitem{gzip}
Jean{-}{l}oup Gailly and Mark Adler.
\newblock gzip {H}omepage.
\newblock \url{www.gzip.org}.
\newblock Accessed: 2019-10-19.

\bibitem{gallant1982string}
John~Kenneth Gallant.
\newblock {\em String compression algorithms}.
\newblock PhD thesis, Princeton University, 1982.

\bibitem{Gawrychowski2015}
Paweł Gawrychowski, Adam Karczmarz, Tomasz Kociumaka, Jakub Łącki, and Piotr
  Sankowski.
\newblock Optimal dynamic strings, 2015.
\newblock \href {http://arxiv.org/abs/1511.02612} {\path{arXiv:1511.02612}}.

\bibitem{100k}
{Genomics England}.
\newblock {T}he 100000 {G}enomes {P}roject.
\newblock
  \url{www.genomicsengland.co.uk/about-genomics-england/the-100000-genomes-project}.
\newblock Accessed: 2019-04-13.

\bibitem{wt}
Roberto Grossi, Ankur Gupta, and Jeffrey~Scott Vitter.
\newblock High-order entropy-compressed text indexes.
\newblock In {\em SODA}, pages 841--850, 2003.

\bibitem{HarelT84}
Dov Harel and Robert~Endre Tarjan.
\newblock Fast algorithms for finding nearest common ancestors.
\newblock {\em {SIAM} J. Comput.}, 13(2):338--355, 1984.
\newblock \href {http://dx.doi.org/10.1137/0213024}
  {\path{doi:10.1137/0213024}}.

\bibitem{tomohiro-lce}
Tomohiro I.
\newblock Longest common extensions with recompression.
\newblock In {\em CPM}, pages 18:1--18:15, 2017.
\newblock \href {http://dx.doi.org/10.4230/LIPIcs.CPM.2017.18}
  {\path{doi:10.4230/LIPIcs.CPM.2017.18}}.

\bibitem{Jez2015}
Artur Jeż.
\newblock Faster fully compressed pattern matching by recompression.
\newblock {\em {ACM} Trans. Algorithms}, 11(3):20:1--20:43, 2015.
\newblock \href {http://dx.doi.org/10.1145/2631920}
  {\path{doi:10.1145/2631920}}.

\bibitem{Jez2016a}
Artur Jeż.
\newblock A really simple approximation of smallest grammar.
\newblock {\em Theor. Comput. Sci.}, 616:141--150, 2016.
\newblock \href {http://dx.doi.org/10.1016/j.tcs.2015.12.032}
  {\path{doi:10.1016/j.tcs.2015.12.032}}.

\bibitem{Jez2016}
Artur Jeż.
\newblock Recompression: {A} simple and powerful technique for word equations.
\newblock {\em J. {ACM}}, 63(1):4:1--4:51, 2016.
\newblock \href {http://dx.doi.org/10.1145/2743014}
  {\path{doi:10.1145/2743014}}.

\bibitem{KarkkainenKP16}
Juha K{\"{a}}rkk{\"{a}}inen, Dominik Kempa, and Marcin Piątkowski.
\newblock Tighter bounds for the sum of irreducible {LCP} values.
\newblock {\em Theor. Comput. Sci.}, 656:265--278, 2016.
\newblock \href {http://dx.doi.org/10.1016/j.tcs.2015.12.009}
  {\path{doi:10.1016/j.tcs.2015.12.009}}.

\bibitem{kkp-jea}
Juha K{\"{a}}rkk{\"{a}}inen, Dominik Kempa, and Simon~J. Puglisi.
\newblock Lazy {L}empel-{Z}iv factorization algorithms.
\newblock {\em {ACM} J. Exp. Algor.}, 21(1):2.4:1--2.4:19, 2016.
\newblock \href {http://dx.doi.org/10.1145/2699876}
  {\path{doi:10.1145/2699876}}.

\bibitem{klaap2001}
Toru Kasai, Gunho Lee, Hiroki Arimura, Setsuo Arikawa, and Kunsoo Park.
\newblock Linear-time longest-common-prefix computation in suffix arrays and
  its applications.
\newblock In {\em CPM}, pages 181--192, 2001.
\newblock \href {http://dx.doi.org/10.1007/3-540-48194-X_17}
  {\path{doi:10.1007/3-540-48194-X_17}}.

\bibitem{Kempa19}
Dominik Kempa.
\newblock Optimal construction of compressed indexes for highly repetitive
  texts.
\newblock In {\em SODA}, pages 1344--1357, 2019.
\newblock \href {http://dx.doi.org/10.1137/1.9781611975482.82}
  {\path{doi:10.1137/1.9781611975482.82}}.

\bibitem{sss}
Dominik Kempa and Tomasz Kociumaka.
\newblock String synchronizing sets: Sublinear-time {BWT} construction and
  optimal {LCE} data structure.
\newblock In {\em STOC}, pages 756--767, 2019.
\newblock \href {http://dx.doi.org/10.1145/3313276.3316368}
  {\path{doi:10.1145/3313276.3316368}}.

\bibitem{attractors}
Dominik Kempa and Nicola Prezza.
\newblock At the roots of dictionary compression: String attractors.
\newblock In {\em STOC}, pages 827--840, 2018.
\newblock \href {http://dx.doi.org/10.1145/3188745.3188814}
  {\path{doi:10.1145/3188745.3188814}}.

\bibitem{KidaMSTSA03}
Takuya Kida, Tetsuya Matsumoto, Yusuke Shibata, Masayuki Takeda, Ayumi
  Shinohara, and Setsuo Arikawa.
\newblock Collage system: A unifying framework for compressed pattern matching.
\newblock {\em Theor. Comput. Sci.}, 298(1):253--272, 2003.
\newblock \href {http://dx.doi.org/10.1016/S0304-3975(02)00426-7}
  {\path{doi:10.1016/S0304-3975(02)00426-7}}.

\bibitem{phdtomek}
Tomasz Kociumaka.
\newblock {\em Efficient Data Structures for Internal Queries in Texts}.
\newblock PhD thesis, University of Warsaw, 2018.

\bibitem{delta}
Tomasz Kociumaka, Gonzalo Navarro, and Nicola Prezza.
\newblock Towards a definitive measure of repetitiveness, 2019.
\newblock \href {http://arxiv.org/abs/1910.02151} {\path{arXiv:1910.02151}}.

\bibitem{Kociumaka2012}
Tomasz Kociumaka, Jakub Radoszewski, Wojciech Rytter, and Tomasz Wale{{\'n}}.
\newblock Efficient data structures for the factor periodicity problem.
\newblock In {\em SPIRE}, pages 284--294, 2012.
\newblock \href {http://dx.doi.org/10.1007/978-3-642-34109-0_30}
  {\path{doi:10.1007/978-3-642-34109-0_30}}.

\bibitem{Kociumaka2015}
Tomasz Kociumaka, Jakub Radoszewski, Wojciech Rytter, and Tomasz Wale{{\'n}}.
\newblock Internal pattern matching queries in a text and applications.
\newblock In {\em SODA}, pages 532--551, 2015.
\newblock \href {http://dx.doi.org/10.1137/1.9781611973730.36}
  {\path{doi:10.1137/1.9781611973730.36}}.

\bibitem{kreft2013compressing}
Sebastian Kreft and Gonzalo Navarro.
\newblock On compressing and indexing repetitive sequences.
\newblock {\em Theor. Comput. Sci.}, 483:115--133, 2013.

\bibitem{bowtie}
Ben Langmead, Cole Trapnell, Mihai Pop, and Steven~L Salzberg.
\newblock Ultrafast and memory-efficient alignment of short {DNA} sequences to
  the human genome.
\newblock {\em Genome Biol.}, 10(3):R25, 2009.

\bibitem{bwa}
Heng Li and Richard Durbin.
\newblock Fast and accurate short read alignment with {B}urrows-{W}heeler
  transform.
\newblock {\em Bioinform.}, 25(14):1754--1760, 2009.
\newblock \href {http://dx.doi.org/10.1093/bioinformatics/btp324}
  {\path{doi:10.1093/bioinformatics/btp324}}.

\bibitem{soap2}
Ruiqiang Li, Chang Yu, Yingrui Li, Tak~Wah Lam, Siu{-}Ming Yiu, Karsten
  Kristiansen, and Jun Wang.
\newblock {SOAP2:} an improved ultrafast tool for short read alignment.
\newblock {\em Bioinform.}, 25(15):1966--1967, 2009.
\newblock \href {http://dx.doi.org/10.1093/bioinformatics/btp336}
  {\path{doi:10.1093/bioinformatics/btp336}}.

\bibitem{ltcb}
Matt Mahoney.
\newblock {L}arge {T}ext {C}ompression {B}enchmark.
\newblock \url{http://mattmahoney.net/dc/text.html}.
\newblock Accessed: 2020-09-07.

\bibitem{CM}
Matt Mahoney.
\newblock Adaptive weighing of context models for lossless data compression.
\newblock Technical Report CS-2005-16, Florida Institute of Technology,
  Melbourne, Florida, 2005.

\bibitem{MBCT2015}
Veli M{\"{a}}kinen, Djamal Belazzougui, Fabio Cunial, and Alexandru~I. Tomescu.
\newblock {\em Genome-scale algorithm design: Biological sequence analysis in
  the era of high-throughput sequencing}.
\newblock Cambridge University Press, Cambridge, UK, 2015.
\newblock \href {http://dx.doi.org/10.1017/cbo9781139940023}
  {\path{doi:10.1017/cbo9781139940023}}.

\bibitem{mm1993}
Udi Manber and Eugene~W. Myers.
\newblock Suffix arrays: A new method for on-line string searches.
\newblock {\em {SIAM} J. Comput.}, 22(5):935--948, 1993.
\newblock \href {http://dx.doi.org/10.1137/0222058}
  {\path{doi:10.1137/0222058}}.

\bibitem{MantaciRRRS19}
Sabrina Mantaci, Antonio Restivo, Giuseppe Romana, Giovanna Rosone, and
  Marinella Sciortino.
\newblock String attractors and combinatorics on words.
\newblock In {\em ICTCS}, pages 57--71, 2019.

\bibitem{Manzini01}
Giovanni Manzini.
\newblock An analysis of the {B}urrows-{W}heeler transform.
\newblock {\em J. {ACM}}, 48(3):407--430, 2001.
\newblock \href {http://dx.doi.org/10.1145/382780.382782}
  {\path{doi:10.1145/382780.382782}}.

\bibitem{navarrobook}
Gonzalo Navarro.
\newblock {\em Compact data structures: A practical approach}.
\newblock Cambridge University Press, Cambridge, UK, 2016.
\newblock \href {http://dx.doi.org/10.1017/cbo9781316588284}
  {\path{doi:10.1017/cbo9781316588284}}.

\bibitem{Navarro2020}
Gonzalo Navarro.
\newblock Indexing highly repetitive string collections, 2020.
\newblock \href {http://arxiv.org/abs/2004.02781} {\path{arXiv:2004.02781}}.

\bibitem{navarro201941}
Gonzalo Navarro and Nicola Prezza.
\newblock Universal compressed text indexing.
\newblock {\em Theor. Comput. Sci.}, 762:41--50, 2019.
\newblock \href {http://dx.doi.org/10.1016/j.tcs.2018.09.007}
  {\path{doi:10.1016/j.tcs.2018.09.007}}.

\bibitem{NishimotoDAM}
Takaaki Nishimoto, Tomohiro I, Shunsuke Inenaga, Hideo Bannai, and Masayuki
  Takeda.
\newblock Dynamic index and {LZ} factorization in compressed space.
\newblock {\em Discret. Appl. Math.}, 274:116--129, 2020.
\newblock \href {http://dx.doi.org/10.1016/j.dam.2019.01.014}
  {\path{doi:10.1016/j.dam.2019.01.014}}.

\bibitem{ohl2013}
Enno Ohlebusch.
\newblock {\em Bioinformatics algorithms: Sequence analysis, genome
  rearrangements, and phylogenetic reconstruction}.
\newblock Oldenbusch Verlag, Ulm, Germany, 2013.

\bibitem{OhlebuschBA14}
Enno Ohlebusch, Timo Beller, and Mohamed~Ibrahim Abouelhoda.
\newblock Computing the {B}urrows-{W}heeler transform of a string and its
  reverse in parallel.
\newblock {\em J. Discrete Alg.}, 25:21--33, 2014.
\newblock \href {http://dx.doi.org/10.1016/j.jda.2013.06.002}
  {\path{doi:10.1016/j.jda.2013.06.002}}.

\bibitem{OhnoSTIS18}
Tatsuya Ohno, Kensuke Sakai, Yoshimasa Takabatake, Tomohiro I, and Hiroshi
  Sakamoto.
\newblock A faster implementation of online {RLBWT} and its application to
  {LZ77} parsing.
\newblock {\em J. Discrete Alg.}, 52-53:18--28, 2018.
\newblock \href {http://dx.doi.org/10.1016/j.jda.2018.11.002}
  {\path{doi:10.1016/j.jda.2018.11.002}}.

\bibitem{7zip}
Igor Pavlov.
\newblock 7-zip {H}omepage.
\newblock \url{https://www.7-zip.org/}.
\newblock Accessed: 2019-10-19.

\bibitem{Plandowski1998}
Wojciech Plandowski and Wojciech Rytter.
\newblock Application of {L}empel-{Z}iv encodings to the solution of words
  equations.
\newblock In {\em ICALP}, pages 731--742, 1998.
\newblock \href {http://dx.doi.org/10.1007/BFb0055097}
  {\path{doi:10.1007/BFb0055097}}.

\bibitem{PolicritiP17}
Alberto Policriti and Nicola Prezza.
\newblock From {LZ77} to the run-length encoded {B}urrows-{W}heeler transform,
  and back.
\newblock In {\em CPM}, pages 17:1--17:10, 2017.
\newblock \href {http://dx.doi.org/10.4230/LIPIcs.CPM.2017.17}
  {\path{doi:10.4230/LIPIcs.CPM.2017.17}}.

\bibitem{PolicritiP18}
Alberto Policriti and Nicola Prezza.
\newblock {LZ77} computation based on the run-length encoded {BWT}.
\newblock {\em Algorithmica}, 80(7):1986--2011, 2018.
\newblock \href {http://dx.doi.org/10.1007/s00453-017-0327-z}
  {\path{doi:10.1007/s00453-017-0327-z}}.

\bibitem{prezzaopenproblem}
Nicola Prezza.
\newblock Can {L}empel-{Z}iv and {B}urrows-{W}heeler compression be
  asymptotically compared?
\newblock
  \url{https://nms.kcl.ac.uk/iwoca/problems/Prezza2016_updated2019.pdf}, 2016.
\newblock {IWOCA} 2016 Open Problems.

\bibitem{Prezza19}
Nicola Prezza.
\newblock Optimal rank and select queries on dictionary-compressed text.
\newblock In {\em CPM}, pages 4:1--4:12, 2019.
\newblock \href {http://dx.doi.org/10.4230/LIPIcs.CPM.2019.4}
  {\path{doi:10.4230/LIPIcs.CPM.2019.4}}.

\bibitem{Przeworski2000}
Molly Przeworski, Richard~R. Hudson, and Anna~Di Rienzo.
\newblock Adjusting the focus on human variation.
\newblock {\em Trends Genet.}, 16(7):296--302, 2000.
\newblock \href {http://dx.doi.org/10.1016/S0168-9525(00)02030-8}
  {\path{doi:10.1016/S0168-9525(00)02030-8}}.

\bibitem{Rytter03}
Wojciech Rytter.
\newblock Application of {L}empel--{Z}iv factorization to the approximation of
  grammar-based compression.
\newblock {\em Theor. Comput. Sci.}, 302(1--3):211--222, 2003.
\newblock \href {http://dx.doi.org/10.1016/S0304-3975(02)00777-6}
  {\path{doi:10.1016/S0304-3975(02)00777-6}}.

\bibitem{bzip2}
Julian Seward.
\newblock bzip2 {H}omepage.
\newblock \url{www.sourceware.org/bzip2}.
\newblock Accessed: 2019-10-19.

\bibitem{SinhaW19}
Sandip Sinha and Omri Weinstein.
\newblock Local decodability of the {B}urrows-{W}heeler transform.
\newblock In {\em STOC}, pages 744--755, 2019.
\newblock \href {http://dx.doi.org/10.1145/3313276.3316317}
  {\path{doi:10.1145/3313276.3316317}}.

\bibitem{SirenVMN08}
Jouni Sir{\'{e}}n, Niko V{\"{a}}lim{\"{a}}ki, Veli M{\"{a}}kinen, and Gonzalo
  Navarro.
\newblock Run-length compressed indexes are superior for highly repetitive
  sequence collections.
\newblock In {\em SPIRE}, pages 164--175, 2008.
\newblock \href {http://dx.doi.org/10.1007/978-3-540-89097-3_17}
  {\path{doi:10.1007/978-3-540-89097-3_17}}.

\bibitem{SleatorT83}
Daniel~Dominic Sleator and Robert~Endre Tarjan.
\newblock A data structure for dynamic trees.
\newblock {\em J. Comput. Syst. Sci.}, 26(3):362--391, 1983.
\newblock \href {http://dx.doi.org/10.1016/0022-0000(83)90006-5}
  {\path{doi:10.1016/0022-0000(83)90006-5}}.

\bibitem{storer1978macro}
James~A. Storer and Thomas~G. Szymanski.
\newblock The macro model for data compression.
\newblock In {\em STOC}, pages 30--39, 1978.
\newblock \href {http://dx.doi.org/10.1145/800133.804329}
  {\path{doi:10.1145/800133.804329}}.

\bibitem{thue1}
Axel Thue.
\newblock {\"U}ber unendliche zeichenreihen.
\newblock {\em Norsk. Vid. Selsk. Skr.I, Mat.-Nat.Kl. Nr.7}, pages 1--22, 1906.

\bibitem{Weiner73}
Peter Weiner.
\newblock Linear pattern matching algorithms.
\newblock In {\em SWAT/FOCS}, pages 1--11, 1973.
\newblock \href {http://dx.doi.org/10.1109/SWAT.1973.13}
  {\path{doi:10.1109/SWAT.1973.13}}.

\bibitem{LZ77}
Jacob Ziv and Abraham Lempel.
\newblock A universal algorithm for sequential data compression.
\newblock {\em Trans. Inf. Theory}, 23(3):337--343, 1977.
\newblock \href {http://dx.doi.org/10.1109/TIT.1977.1055714}
  {\path{doi:10.1109/TIT.1977.1055714}}.

\bibitem{LZ78}
Jacob Ziv and Abraham Lempel.
\newblock Compression of individual sequences via variable-rate coding.
\newblock {\em Trans. Inf. Theory}, 24(5):530--536, 1978.
\newblock \href {http://dx.doi.org/10.1109/TIT.1978.1055934}
  {\path{doi:10.1109/TIT.1978.1055934}}.

\end{thebibliography}

\end{document}